\documentclass[11pt]{article}
\usepackage{fullpage}
\usepackage[utf8]{inputenc}
\usepackage{booktabs} % For formal tables
\usepackage[linesnumbered,noend,ruled,noline]{algorithm2e} % For algorithms
\usepackage{newpxtext}
\usepackage{amsmath,amsthm,graphicx,url}
\usepackage{amssymb}
\usepackage{threeparttable}
\usepackage{tikz}
\usetikzlibrary{shapes.geometric, arrows.meta, positioning}
\usetikzlibrary{calc}
\usepackage{subcaption}
\usepackage{mathtools}
\usepackage[normalem]{ulem}
\usepackage{soul}
\usepackage{graphicx}
\usepackage{booktabs}
\usepackage{tablefootnote}
\usepackage[numbers,sort&compress]{natbib}
\usepackage{enumitem}
\usepackage{comment}
\usepackage[dvipsnames]{xcolor}
\DeclareMathOperator*{\argmax}{arg\,max}

\usepackage{bm}
\usepackage{footnote}
\usepackage{hyperref}
\usepackage[nameinlink, capitalise]{cleveref-usedon}
\usepackage{color-edits}
\DeclareUnicodeCharacter{0302}{}
\definecolor{darkgreen}{rgb}{0.0,0.4,0.0}
\addauthor{jack}{red}

\addauthor{xt}{darkgreen}

\addauthor{aviad}{blue}

\newcommand{\are}[1]{\aviadedit{#1}}

 \usepackage[ruled, noline]{algorithm2e} % For algorithms

 \SetKwBlock{Initialize}{Initialize:}{} \SetKwFor{From}{from}{do}{}
 \SetKwFor{ForAllInParallel}{for all}{in parallel do}{}

\SetCommentSty{mycommfont}
\SetKw{Return}{return}

\renewcommand\citep[1]{\citealp{#1}}
\renewcommand\citet[1]{\citeauthor{#1} [\citealp{#1}]}

\newtheorem{theorem}{Theorem}[section]
\newtheorem{lemma}[theorem]{Lemma}

\newtheorem{corollary}[theorem]{Corollary}
\newtheorem{proposition}[theorem]{Proposition}

\newtheorem{observation}[theorem]{Observation}
\newtheorem*{theorem*}{Theorem}
\newtheorem*{corollary*}{Corollary}

\theoremstyle{definition}
\newtheorem{definition}[theorem]{Definition}

\newtheorem{remark}[theorem]{Remark}

\newtheorem{assumption}[theorem]{Assumption}

\AddToHook{env/lemma/begin}{\crefalias{theorem}{lemma}}
\AddToHook{env/conjecture/begin}{\crefalias{theorem}{conjecture}}
\AddToHook{env/corollary/begin}{\crefalias{theorem}{corollary}}
\AddToHook{env/proposition/begin}{\crefalias{theorem}{proposition}}
\AddToHook{env/claim/begin}{\crefalias{theorem}{claim}}
\AddToHook{env/observation/begin}{\crefalias{theorem}{observation}}
\AddToHook{env/definition/begin}{\crefalias{theorem}{definition}}
\AddToHook{env/example/begin}{\crefalias{theorem}{example}}
\AddToHook{env/remark/begin}{\crefalias{theorem}{remark}}
\AddToHook{env/question/begin}{\crefalias{theorem}{question}}
\AddToHook{env/condition/begin}{\crefalias{theorem}{condition}}
\AddToHook{env/assumption/begin}{\crefalias{theorem}{assumption}}

\crefname{theorem}{Theorem}{Theorems}
\crefname{lemma}{Lemma}{Lemmas}
\crefname{conjecture}{Conjecture}{Conjectures}
\crefname{corollary}{Corollary}{Corollaries}
\crefname{proposition}{Proposition}{Propositions}
\crefname{claim}{Claim}{Claims}
\crefname{observation}{Observation}{Observations}
\crefname{definition}{Definition}{Definitions}
\crefname{example}{Example}{Examples}
\crefname{remark}{Remark}{Remarks}
\crefname{question}{Question}{Questions}
\crefname{condition}{Condition}{Conditions}
\crefname{assumption}{Assumption}{Assumptions}

\DeclareMathOperator*{\E}{\mathbb{E}}
\newcommand{\R}{\mathbb{R}}
\newcommand{\gft}{\mathtt{GFT}}
\newcommand{\GFT}{\mathtt{GFT}}

\newcommand{\cM}{\mathcal M}

\numberwithin{equation}{section}

\newcommand{\btprice}{\ifmmode\mathrm{BT}\else\textrm{BT}\fi}

\newcommand{\sdma}{\ifmmode\mathrm{MA}\else\textrm{MA}\fi}

\SetAlFnt{\small}
\SetAlCapFnt{\small}
\SetAlCapNameFnt{\small}
\SetAlCapHSkip{0pt}
\IncMargin{-\parindent}

\definecolor{commentc}{rgb}{0.1, 0.3, 0.9}

\SetArgSty{textnormal}
\crefname{algorithm}{Auction}{Auctions}

\definecolor{linkc}{rgb}{0.7, 0.2, 0.3}
\definecolor{citec}{rgb}{0.2, 0.3, 0.7}
\definecolor{urlc}{rgb}{0.2, 0.4, 0.3}
\hypersetup{
    colorlinks=true,
    linkcolor=linkc,
    citecolor=citec,
    urlcolor=urlc
}

\newcommand{\Var}{\text{Var}}

\newcommand{\mainresultBT}{
 Consider any Bayesian multidimensional bilateral trade with an XOS buyer and a additive seller over independent items. The mechanism that randomizes between the seller-optimal mechanism and buyer-optimal mechanism with equal probability satisfies Bayesian Incentive Compatible (BIC), Interim Individually Rational (IIR), and ex-ante Weakly Budget Balanced (WBB), and its expected gains-from-trade is at least a $1/44$  fraction of the first-best expected gains-from-trade. 
}

\newcommand{\mainresultBTsub}{
Let $\cM_E$ be the cost-price two-part tariff with entry fee \cref{def:warmup_2pt}. Let $\cM_S^{\mathrm{tail}}$ and $\cM_B^{\mathrm{copy}}$ be the simultaneous one-item posted-price menus derived in \cref{thm:warmup_tail_seller_menu,thm:warmup_tail_buyer_menu}. Consider the simple randomized protocol $\cM_{\mathrm{rand}}^{\mathrm{simple}}$: with probability $1/2$, delegate pricing power to the seller, who \emph{can} randomize uniformly between $\cM_E$ and $\cM_S^{\mathrm{tail}}$. With probability $1/2$, delegate pricing power to the buyer, who \emph{can} run $\cM_B^{\mathrm{copy}}$. This protocol achieves expected GFT is at least $1/88$ fraction of the first-best expected gains-from-trade.
}

\newcommand{\mainresult}{
Under Assumption~\ref{ass:standing}, consider any Bayesian multidimensional instance with $n$ XOS buyers and one additive
seller over independent items. The mechanism $\mathcal \cM^{\mathrm{final}}$ is BIC, interim IR, and ex-ante weakly budget balanced. Moreover, it satisfies:$$\gft(\mathcal \cM^{\mathrm{final}}) \ge \alpha\,\gft^*,$$where $\alpha \approx 1/1033$.
}
\newcommand{\ASPE}{\text{ASPE}}

\usepackage{amsmath,amssymb,amsthm,mathtools,bbm,mathrsfs}
\usepackage{mathpazo}
\title{Approximately Efficient Multidimensional Bilateral Trade}
\author{%
\begin{tabular}{ccc}
\begin{minipage}[t]{0.3\textwidth}\centering
Aviad Rubinstein\\
\normalsize Stanford University\\
\normalsize\href{mailto:aviad@stanford.edu}{aviad@stanford.edu}
\end{minipage}
&
\begin{minipage}[t]{0.3\textwidth}\centering
Xizhi Tan\\
\normalsize Stanford University\\
\normalsize\href{mailto:xizhi@stanford.edu}{xizhi@stanford.edu}
\end{minipage}
&
\begin{minipage}[t]{0.3\textwidth}\centering
Zixin Zhou\\
\normalsize Stanford University\\
\normalsize\href{mailto:jackzhou@stanford.edu}{jackzhou@stanford.edu}
\end{minipage}
\end{tabular}%
}
\date{}

\begin{document}

\maketitle

\thispagestyle{empty}
\setcounter{page}{0}

\begin{abstract}
A central challenge in mechanism design is to develop truthful trade mechanisms that maximize the expected gains-from-trade (GFT) in two-sided markets. Because achieving the full GFT is generally impossible, the literature has focused on constant-factor approximations—a notoriously difficult problem even in simple settings. It was only recently that a breakthrough result by \cite{dmsw22} achieved a constant-factor approximation for single-item bilateral trade. The same guarantee was later extended to single-dimensional matching markets with general downward-closed constraints \cite{BRTW26}. 

Most existing results, however, are limited to single-dimensional agents. A notable multi-dimensional exception is \cite{CaiGMZ21}. They considered a market with one constrained-additive buyer and $n$ single-dimensional sellers and provided a mechanism that achieves a $\log^2(n)$ approximation to the second-best GFT, i.e., the maximum expected GFT theoretically achievable by any mechanism satisfying Bayesian Incentive Compatibility (BIC), Interim Individual Rationality (IIR), and ex-ante Weak Budget Balance (WBB).

In this paper, we study multi-dimensional bilateral trade problem where both sides of the market are multi-dimensional. We start with one buyer with XOS valuation and one seller with an additive cost function. We then generalize to a market with $n$ XOS buyers and one additive seller. Assuming independent items' values and costs, in both settings we propose simple mechanisms that are BIC, IIR, and ex-ante WBB, while achieving a constant fraction of the optimal (first-best) expected GFT.

\end{abstract}

\newpage
\tableofcontents

\newpage

\setcounter{page}{1}

\begin{quote}
    ``The bilateral trade problem is the 'hydrogen atom' of mechanism design. It is the simplest setting where the conflict between efficiency and incentive compatibility manifests itself, and yet, even here, the optimal solutions are surprisingly complex.''
\end{quote}
\hfill --- Hallucinated quote attributed by Gemini to Noam Nisan

\section{Introduction}

In this paper we introduce a model at the intersection of two of the most central and notoriously challenging problems in mechanism design:
\begin{itemize}
    \item {\bf Multi-dimensional Bayesian mechanism design:} While the problem of auctioning a single item has been well-understood since~\cite{Myerson81}, there is a  vast literature on trying to extend the theory to auctioning multiple items, including characterizations of very special cases~\cite{Pavlov11}, computational impossibility results~\cite{HN13,DDT14,CDPSY18,PPPR22}, and bizarre phenomena such as strict optimality of randomized, non-monotone, and even infinite mechanisms~\cite{HR15}.

    \item {\bf Bilateral trade:} In the bilateral trade problem, the mechanism designer faces Bayesian uncertainty over both the buyer's valuation and the seller's cost. This problem was introduced over 40 years ago by~\cite{MS83}, who also proved a strong impossibility result against exactly optimal mechanism. (This is a very influential paper in Economics that was, for example, highlighted in 2007 Myerson's Nobel prize citation~\footnote{\url{https://www.nobelprize.org/prizes/economic-sciences/2007/popular-information/}}.)
\end{itemize}

The study of multidimensional bilateral trade has very recently received attention by prominent economists~\cite{JSXTA25}. While the technical focus is completely different (see Related Work for detail), their motivating examples are inspiring: Consider  a negotiation between a firm and a union: the eventual agreement would likely have many dimensions such as pay, working hours, health benefits, etc. Similarly, an agreement between countries could include dimensions such as tariffs, immigration policy, development of shared natural resources, etc.

Given the strong impossibility results, both mechanism design problems have been studied extensively in the Algorithmic Game Theory community from the lens of approximation algorithms. Notable examples include~\cite{BabaioffILW20} (FOCS 2014 / JACM 2020) who showed that a seller auctioning to a single multi-dimensional additive buyer can always obtain a $\Omega(1)$-approximation to the revenue-optimal (but possibly randomized and complex) mechanism using a simple deterministic mechanism; or the breakthrough result of~\cite{dmsw22} (STOC 2022) who proved that a mechanism designer facing a (single-dimensional) uncertainty over both buyer and seller can still guarantee that the optimal expected {\em gains from trade (GFT)}%
\footnote{Informally, gains from trade is the difference between the buyer's value and seller's cost whenever there's a trade. Note that for multiplicative approximation, GFT is notoriously more challenging than approximating the social welfare.}
is $\Omega(1)$-approximated by a simple random-offerer mechanism.

In this paper we use approximations to tackle both of these challenging problems simultaneously, together with a third central and notoriously challenging modeling desideratum in modern mechanism design: 
\begin{itemize}
    \item {\bf Combinatorial valuations}: in the combinatorial valuations model, the buyer's preferences over outcomes are defined by an arbitrarily complex function mapping each subset of items to a value. General combinatorial valuations are often hopelessly intractable, so most of the literature has focused on a hierarchy of classes of combinatorial valuations. In this paper we specifically focus on XOS
valuations%
\footnote{XOS valuations have a very simple max-of-sum definition $ v(S) = \max_{k\in [K]} \sum_{j\in S} v_{j}^{(k)}$. See also \cref{sec:prelim}.}. The class of XOS valuations sits between submodular and subadditive in the generality hierarchy, and has emerged in recent years as one of the most natural benchmarks for combinatorial mechanism design.
\end{itemize}

\subsection{Our Results}

Our first result considers a simple model of $n$ items, a single seller with an additive cost function, and a single buyer with an XOS valuation. 
We assume that all item valuations and costs are independent (in particular, buyer's valuation is sampled from ``XOS over independent items''~\cite{RW18}); note that with correlations, even for the single bilateral trade problem no constant factor approximation is possible~\cite{RW-correlated}.
We prove that a natural adaptation of the random offerer mechanism~\cite{BCWZ17,dmsw22,BRTW26} guarantees a constant factor approximation of the GFT.

\begin{theorem*}[Informal]
    Consider any Bayesian multidimensional bilateral trade with an XOS buyer and an additive seller over independent items. The mechanism that randomizes between the seller-optimal mechanism and buyer-optimal mechanism with equal probability satisfies Bayesian Incentive Compatible (BIC), Interim Individually Rational (IIR), and ex-ante Weakly Budget Balanced (WBB), and its expected gains-from-trade is at least a $1/44$  fraction of the first-best expected gains-from-trade. 
\end{theorem*}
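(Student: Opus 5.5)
The plan is to reduce to explicit simple mechanisms, as the macro \texttt{mainresultBTsub} suggests. The seller-optimal mechanism earns the seller at least as much profit as any IIR/BIC-for-the-buyer mechanism the seller could run, and the buyer-optimal mechanism likewise dominates any mechanism the buyer could run. In the $1/2$–$1/2$ randomization, the GFT of each branch is at least the proposer's surplus. Hence it suffices to show that
\[
\mathrm{Profit}_S(\text{some simple seller menu}) + \mathrm{Surplus}_B(\text{some simple buyer menu}) \ \ge\ \Omega(\gft^*).
\]
BIC and IIR hold branch by branch, since each party faces an optimal mechanism it designs for itself against the other's prior. Ex-ante WBB holds because each branch is a posted menu with payments flowing between the two agents.

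To bound $\gft^*$, I would use the XOS structure with independent items. Take the supporting additive clause at the efficient set, giving per-item contributions $(v_j^{(k)} - c_j)^+$. Then split $\gft^*$ into a \emph{core} part, where each item's contribution is below a threshold $\tau_j$, and a \emph{tail} part, where it exceeds $\tau_j$, in the style of \cite{BabaioffILW20,RW18}. The threshold is chosen so that the probability of exceeding it is small, with total tail mass at most a constant.

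The tail is handled item by item, using the single-dimensional random-offerer analysis of \cite{dmsw22}. On the seller side, the seller posts per-item take-it-or-leave-it prices (menu $\cM_S^{\mathrm{tail}}$). On the buyer side, the buyer posts per-item prices for copies of the items (menu $\cM_B^{\mathrm{copy}}$). Because tail events are rare and independent, the XOS interaction across items costs only a constant factor: with constant probability at most one item is in its tail.

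The core is handled by the cost-price two-part tariff with entry fee, $\cM_E$. The seller offers every item at price equal to its realized cost, and the buyer pays an entry fee set at a constant quantile of her surplus. The key tool is concentration of the buyer's utility $\sum_j(\cdot)$ for XOS over independent items with bounded per-item contributions (the Schechtman / self-bounding concentration used in \cite{RW18}). This shows the surplus is at least its median with constant probability and that the median is $\Omega$ of its expectation minus $O(\tau)$. The seller can charge this as an entry fee while still knowing her own cost, so interim IR is preserved.

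The main obstacle is exactly this core step. The seller's cost is private, so the entry fee must be set without conditioning on costs. The buyer's surplus $\max_k\sum_j (v_j^{(k)}-c_j)^+$ depends on both sides, so concentration must hold jointly over independent values and costs. I would first prove the self-bounding property of $f(v,c)=\max_{S}(v(S)-c(S))$ under independence, then apply a Talagrand-type inequality. Summing the three pieces with constant bookkeeping yields the $1/44$ factor.
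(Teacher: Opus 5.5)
There is a genuine gap in your tail step. The random-offerer theorem of \cite{dmsw22} is about a single bilateral trade, and per-item guarantees do not add up for an XOS buyer.

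\emph{Seller side.} A unit-demand buyer (a special case of XOS) facing per-item seller-offer prices buys at most one item. The seller's profit is then of max type, not the sum of single-item profits.

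\emph{Buyer side.} When the buyer procures from the additive seller with per-item prices, she gains only her subadditive value $v(S)$ of the accepted set $S$. Again the per-item profits do not add.

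Sparsity of tail events does not fix this. The single-item offer prices in \cite{dmsw22} are not tail-targeted: they are accepted on ordinary realizations too. So competition among items is not controlled by $\sum_j\Pr[\text{item } j \text{ is in its tail}]$.

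What sparsity does give, stated correctly, is \cref{lem:tail_expectation_bound}. Let $Y_j=(v(\{j\})-c_j)^+$ and $X_j=(Y_j-\tau)^+$. If $\sum_j\Pr[X_j>0]\le\ln 2$, then
$\E[\gft^T]\le\E[\sum_jX_j]\le\frac{1}{1-\ln 2}\E[\max_jX_j]\le\frac{1}{1-\ln 2}\gft^{*,\mathrm{ud}}_{\mathcal F^{1}}$.
So the tail is charged to the one-trade benchmark $\E[\max_j Y_j]$. That benchmark is the first-best of a single-dimensional market with an ``at most one trade'' constraint over $m$ items, not of a bilateral trade, so \cite{dmsw22} does not cover it.

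The missing ingredient has two parts:
\begin{itemize}
\item The random-offerer guarantee for downward-closed single-dimensional matching markets (\cref{thm:sd_profit_approx}, from \cite{BRTW26}), applied to the copies instance. This gives $\frac12\Pi^*_S(\mathcal I^{\mathrm{copies}}_1)+\frac12\Pi^*_B(\mathcal I^{\mathrm{copies}}_1)\ge\gft^{*,\mathrm{ud}}_{\mathcal F^{1}}/3.15$.
\item A way to realize each side's copies virtual-surplus benchmark in the original instance. The paper uses a Samuel-Cahn threshold on ironed virtual surplus \cite{Cahn84} and the reduction of \cite{CHMS10}. This yields one-item menus for the seller and for the buyer, each earning at least half of its copies benchmark.
\end{itemize}

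Relatedly, the threshold must be tied to that same benchmark, and your probability-calibrated, item-specific $\tau_j$ are not. Concentration only extracts $\Omega(\mu)$ when $\mu$ is a large multiple of the per-item cap. This holds whether you use Efron--Stein plus Paley--Zygmund (\cref{lem:warmup_core_concentration,lem:warmup_core_entry_fee}) or your self-bounding/Talagrand route, which is an equally valid way to get it. In the complementary case, the tail menus must pay for the whole core, so the cap must be $O(\text{what those menus earn})$. With item-specific caps, the relevant constant becomes $\max_j\tau_j$, and your proposal never relates it to any mechanism's profit.

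The paper instead takes a single global threshold $\tau=2\gft^{*,\mathrm{ud}}_{\mathcal F^{1}}$.
\begin{itemize}
\item Markov's inequality plus independence give $\sum_j\Pr[Y_j>\tau]\le\ln 2$.
\item Hence $\gft^*\le\mu+\gamma\,\gft^{*,\mathrm{ud}}_{\mathcal F^{1}}$ with $\gamma=1/(1-\ln 2)$.
\item Splitting at $\mu=3.5\,\gft^{*,\mathrm{ud}}_{\mathcal F^{1}}$ gives $1/44$: the entry fee $\mu/4$ handles one case and the one-item menus handle the other.
\end{itemize}

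Finally, your stated main obstacle is not one. In the seller-offer branch the seller designs the mechanism knowing $c$; the item prices already equal $c_j$. The cost-independent fee is only a convenience. Joint concentration over values and costs is immediate, because each block $(a_j^1,\dots,a_j^K,c_j)$ is an independent coordinate.
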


To the best of our knowledge, this is the first constant factor approximation of first-best GFT in multidimensional bilateral trade.

\paragraph{Extension to many buyers}

We take our result a (big) step further to consider multiple XOS buyers. A natural approach in previous works on approximately optimal Bayesian mechanisms extending from bilateral trade to two-sided markets is to replace the single buyer's buyer-optimal mechanism with a ``buyers' lawyer'' who solicits true valuations from the buyers, and runs a seller-facing mechanism maximizing the total buyers' profit; the profit is then shared between buyers in a truthful way using a VCG mechanism. Unfortunately, we show (See \cref{obs:multi_positive}) that with a multidimensional seller, the buyers have positive externalities on each other. This means that for the buyers' lawyer running VCG among the buyers would not be budget-balanced.

Nevertheless, our main result for multiple XOS buyers is positive: we show that a mechanism that restricts the buyers' lawyer to item pricing (with probability $16/43$, and otherwise allows the seller to run an arbitrary optimal buyers-facing mechanism) is budget-balanced and achieves a constant factor approximation even in this more general setting.

\begin{theorem*}[Informal]
  Consider any Bayesian multidimensional instance with n XOS buyers and one additive seller
over independent items. There exists a simple mechanism that is Bayesian Incentive Compatible (BIC), Interim Individually Rational (IIR), and ex-ante Weakly Budget Balanced (WBB), and its expected gains-from-trade is at least a $\Omega(1)$ fraction of the first-best expected gains-from-trade.   
\end{theorem*}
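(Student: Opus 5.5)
Since the statement is existential, I will exhibit one mechanism $\cM^{\mathrm{final}}$ and bound its gains-from-trade. With probability $16/43$ the mechanism is a \emph{buyers' side} mechanism, in which a buyers' lawyer posts per-item prices to the seller. With the remaining probability it is a \emph{seller side} mechanism, in which the seller runs a revenue-optimal (or near-optimal, simple) buyers-facing mechanism given her realized costs.

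Incentive properties come first.
\begin{itemize}
\item \textbf{Seller side.} The seller posts a menu to the buyers. She is a residual claimant, so she is truthful, and the buyers face a fixed BIC/IIR mechanism. Budget balance holds by construction.
\item \textbf{Buyers' side.} We avoid VCG among buyers, which fails because of the positive externalities of \cref{obs:multi_positive}. Instead the lawyer offers the seller item prices $p_j$ computed from the buyers' \emph{prior}, for example $p_j$ a median or quantile of the $j$-th marginal of the max-bid. The seller sells item $j$ iff $c_j\le p_j$. The purchased items are then resold to the buyers by a sequential posted-price / anonymous item pricing that is truthful for XOS buyers over independent items. Ex-ante weak budget balance requires that the expected resale revenue cover $\sum_j p_j\Pr[c_j\le p_j]$. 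I would ensure this by choosing $p_j$ as the balanced prices of Feldman–Gravin–Lucier for XOS valuations, which give expected welfare at least half the optimum on the bought set and revenue comparable to $\sum p_j$. The prices are scaled down if necessary, only paying to the seller what is collected ex-ante.
\end{itemize}

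Next, decompose first-best GFT. For XOS valuations over independent items, fix the supporting additive clause of the optimal allocation; this gives $\gft^*\le \E[\sum_j (v_j^{*}-c_j)^+]$, where $v_j^*$ is buyer contributions. Following the single-item analysis \cite{dmsw22,BRTW26}, split each item's contribution into three parts:
\begin{itemize}
\item a ``core'' part where the surplus $v-c$ is at most a threshold, captured by the seller's cost-price two-part tariff;
\item a seller-tail part, where the buyers' value exceeds the seller's price quantile, captured by the seller's revenue-optimal menu via a prophet/posted-price bound;
\item a buyer-tail part where costs are unusually low, captured by the lawyer's item prices.
\end{itemize}
Subadditivity across items for XOS, together with a core–tail decomposition à la \cite{BabaioffILW20,RW18}, lets each part be bounded by a constant times the relevant side's profit. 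Summing and optimizing the mixing probability, choosing $16/43$ to balance the constants, gives $\alpha\approx 1/1033$.

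The main obstacle is the buyers' side with multiple buyers. Restricting the lawyer to item prices must still capture the buyer-tail GFT while staying budget-balanced despite positive externalities. I would first try to handle it by reducing to per-item single-dimensional bilateral trade, with max-buyer value against $c_j$. The XOS coupling would then be controlled by the fractional-subadditivity / supporting-price argument, losing only constant factors.
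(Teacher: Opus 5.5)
\textbf{The decisive gap is your buyers' side.} Under any decomposition, the buyers' side has to cover the buyer-offering benchmark $\Pi_B^*(\mathcal I^{\mathrm{copies}})=\E_{t,c}\bigl[\max_M\sum_{(i,j)\in M}(v_{ij}-\tilde\psi_j(c_j))^+\bigr]$. This is the half of \cref{thm:sd_profit_approx} that the seller cannot supply, and covering it requires procurement prices tailored to the buyers' \emph{realized} values. Prices computed from the prior cannot do this.

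Take $n=m=1$. Your buyers' side becomes a report-independent two-sided posted price: buy at $p$, resell at $r$. Ex-ante WBB forces $p\le r$, so its GFT is at most $\E[v\,\mathbf 1\{v\ge p\}]\Pr[c\le p]$ for a prior-determined $p$. That is no better than a fixed price, and the seller side does not rescue it. Consider the instance with $k\le K$ and $M\gg H/\varepsilon$:
costs equal to $\varepsilon M^k$ with probability $\propto\rho^k$;
values locally equal-revenue on $[M^k,HM^k]$ with probability $\propto(\rho M)^{-k}$.
In this instance the seller's optimal (monopoly) price captures only an $O(1/\ln H+1/\rho)$ fraction of $\gft^*$, and every report-independent price captures only $O(1/K)$. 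Buyer-offering, by contrast, is nearly efficient. Letting $\rho,K,\ln H\to\infty$ defeats your mechanism.

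Your WBB argument also fails as stated. Feldman--Gravin--Lucier balanced prices guarantee welfare; they do not guarantee that resale revenue covers $\sum_j p_j\Pr[c_j\le p_j]$. Scaling prices down to restore balance voids the GFT bound.

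The missing idea is in \cref{thm:main_tailbuyer}: \emph{keep} VCG, but optimize only over procurement menus. In such a menu each item is designated to at most one buyer at a (two-point) posted price, and the seller accepts at most one profitable designated item per buyer. Deleting a buyer's designations leaves every other buyer's utility unchanged (\cref{lem:buyer-exclusion}). Hence VCG charges are nonnegative and the mechanism is ex-post WBB, while the chosen menu still depends on reports. A half-capacity relaxation plus independence across items then gives $\E[\gft]\ge\frac14\Pi_B^*(\mathcal I^{\mathrm{copies}})$.

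\textbf{The other steps are asserted rather than proved.}

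\emph{Core.} A static cost-price two-part tariff does not carry over to $n$ buyers. At cost prices an early buyer takes every item with positive surplus, so the inventory later buyers face is random. The truncation at $\tau$ only gives $\mathrm{Var}\le\tau\mu_i$ for buyer $i$'s share $\mu_i$, which says nothing useful when $\mu_i\ll\tau$, and $\tau$ scales with the whole market. The paper instead runs Cai--Zhao's ASPE:
markups $Q_j$ come from supporting prices of the truncated core-optimal allocation;
entry fees are medians of a truncated residual surplus on the current inventory;
all truncation and pricing losses are charged to $\gft^{\mathrm{ud}}_{\mathcal F^m}(c)$.
This gives $\Pi(c)\ge\frac18\mu(c)-\frac{57}{8}\gft^{\mathrm{ud}}_{\mathcal F^m}(c)$ (\cref{prop:multi-core-aspe}).

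\emph{Tail.} No per-item three-way split is needed. With $\tau=2\gft^{*,\mathrm{ud}}_{\mathcal F^m}$, Markov's inequality plus independence of the $Y_j$ gives $\sum_j\Pr[Y_j>\tau]\le\ln 2$. Hence $\E[\gft^T]\le\frac{1}{1-\ln 2}\gft^{*,\mathrm{ud}}_{\mathcal F^m}$ (\cref{lem:global_tail_bound}). The copies instance and \cref{thm:sd_profit_approx} then split this bound between $\Pi_S^*$, realized by the sequential virtual-quantile prices (\cref{thm:sellertailmain}), and $\Pi_B^*$.

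\emph{Constants.} The values $16/43$ and $1/1033$ come from the paper's specific losses: $1/6.75$, $1/4$, $57/8$, $3.15$ and $(1-\ln 2)^{-1}$. Your proposal does not derive them, and they would not follow from your components.
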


\subsection{Related Work}

\paragraph{Multi-dimensional Two-Sided Market.}
To the best of our knowledge, \cite{CaiGMZ21} is the first and so far the only work that aims to approximate GFT with multi-dimensional agents. They consider a single constraint-additive buyer with n single-dimensional sellers, and proposed a mechanism that achieves $\log(n)^2$ to the \emph{second-best} GFT, the optimal GFT achievable by any BIC, IIR and WBB mechanism.

We note a moral parallel between our approach and the recent work of~\cite{JSXTA25}. They study bilateral negotiations over many goods with two-sided asymmetric information and show that, as the number of negotiated dimensions grows, total surplus becomes approximately known. This allows agents who bargain over integrated deals (the grand bundle), rather than pricing goods one-by-one, to mitigate the usual Myerson--Satterthwaite inefficiency and achieve high efficiency. Although their technical focus is very different from ours—they analyze decentralized bargaining rather than Bayesian mechanism design—the underlying intuition is similar. In our setting, the core isolates the part of multidimensional surplus that concentrates around its expectation, and our mechanism extracts this concentrated surplus at the bundle level through an entry fee while pricing items individually at cost. In both cases, aggregate multidimensional surplus is easier to predict and clear than item-by-item margins.

\paragraph{Gains-from-Trade in Bilateral Trade.} 
The Myerson-Satterthwaite impossibility theorem \cite{MS83} establishes that fully efficient mechanisms satisfying IC, IR, and BB constraints do not exist, even in simple bilateral trade. This foundational result motivated an extensive literature focused on developing simple mechanisms that offer approximations to the optimal gains-from-trade (GFT). Initial research often relied on distributional assumptions. For instance, McAfee \cite{mcafee2008gains} analyzed a fixed-price mechanism posting an identical price to both parties, proving it yields a $1/2$-approximation to the first-best GFT when the buyer's median value exceeds the seller's. Assuming a monotone hazard rate for the buyer's distribution, Blumrosen and Mizrahi \cite{BM16} demonstrated that a seller-offering mechanism attains a $1/e$-approximation; this ratio was later improved to $1/(e-1)$ by \cite{DBLP:conf/wine/Fei22}.

Removing such distributional assumptions (maintaining only independence), \cite{BCWZ17} established that a randomized choice between buyer-offering and seller-offering mechanisms secures a $1/2$-approximation to the \emph{second-best} GFT (the maximum GFT possible under IC, IR, and BB) in bilateral trade. As discussed later, their findings also extend to matching markets. Obtaining a constant-factor approximation to the first-best GFT for arbitrary independent distributions remained an open problem until Deng, Mao, Sivan, and Wang \cite{dmsw22} resolved it for bilateral trade. They proved that the mechanism from \cite{BCWZ17} actually ensures at least a $1/8.23$ fraction of the first-best GFT. By refining the analysis of this mechanism, \cite{DBLP:conf/wine/Fei22} improved the approximation ratio to $1/3.15$. Subsequently, Hartline and Wang \cite{hw25} offered geometric proofs yielding a $1/4$-approximation and a $1/3.15$-approximation, matching the bound in \cite{DBLP:conf/wine/Fei22}.

Regarding hardness bounds, \cite{BM16} showed that no mechanism exceeds a $2/e$-approximation relative to the first-best. Furthermore, an equal-probability randomization between buyer- and seller-offering mechanisms fails to achieve a $1/2$-approximation against the first-best \cite{DBLP:journals/corr/abs-2111-07790,CaiGMZ21,DBLP:journals/corr/abs-2603-08679}. Recent literature has also explored GFT approximations involving brokers \cite{DBLP:conf/soda/HajiaghayiHPS25} and sampling \cite{DBLP:conf/sigecom/DengMS0W25}.

\paragraph{Welfare Approximation in Bilateral Trade.}
Beyond GFT, parallel research efforts have studied approximations for first-best social welfare (e.g., \cite{BlumrosenD21,KangPV22,CaiW23,DBLP:conf/stoc/LiuR023,DBLP:conf/stoc/DobzinskiS24,DBLP:conf/sigecom/DobzinskiEGST25}). Because expected social welfare equals the sum of expected GFT and the seller's expected cost (which is non-negative), an $\alpha$-approximation for first-best GFT implies at least an $\alpha$-approximation for first-best welfare, though the reverse does not hold.

\paragraph{Two-Sided Markets.}
The literature also extends beyond standard bilateral trade to double auctions and broader two-sided markets. McAfee \cite{MCAFEE} analyzed the trade reduction mechanism for double auctions, demonstrating that sacrificing the trade with the lowest gains yields a mechanism that is IC, IR, and BB while maintaining high GFT. Subsequent studies provided further approximation guarantees and structural insights for two-sided markets \cite{DuttingRT14, Colini-Baldeschi16, BCWZ17, BabaioffCGZ18, colini2020approximately, CaiGMZ21}. Additional recent work investigates how market competition enhances the efficiency of simple protocols like trade reduction \cite{BabaioffGG20,cai2024power}.

A separate emerging literature examines bilateral trade through the lens of online learning and regret minimization \cite{cesa2021regret, DBLP:conf/nips/AzarFF22, cesa2023repeated, DBLP:conf/atal/BolicCC24,bernasconi2024no, cesa2024bilateral,DBLP:conf/nips/BachocCCC24,DBLP:journals/corr/abs-2504-04349,DBLP:journals/corr/abs-2601-16412}. In these frameworks, algorithms make repeated online pricing decisions against unknown distributions or adversarial inputs, with the primary objective of minimizing regret over time.

\section{Preliminaries}\label{sec:prelim}

We consider a one-seller, $n$-buyer trade setting with $m$ heterogeneous items, indexed by $[m]$. The seller initially owns all items. The seller has a private cost $c_j \ge 0$ for each item $j\in [m]$, and is additive across items: for every set $T\subseteq [m]$, $c(T)=\sum_{j\in T} c_j$. The seller's cost profile is $c=(c_1,\dots,c_m)$, drawn from a public product distribution $F = \times_{j=1}^m F_j$.

Each buyer $i\in [n]$ has a private type $t_i = \langle t_{ij} \rangle_{j=1}^m$, where each $t_{ij}$ may itself be multi-dimensional. Buyer $i$'s type is drawn from a public product distribution $D_i = \times_{j=1}^m D_{ij}$, and the full type profile is $t=(t_1,\dots,t_n) \sim D = \times_{i=1}^n D_i$.
We assume $D$ and $F$ are mutually independent. Buyer $i$'s valuation for a bundle $S\subseteq [m]$ is denoted by $v_i(t_i,S)\ge 0$. Following \cite{RW18}, we assume each buyer's valuation distribution is XOS over independent items. We also assume that each buyer type space is standard Borel, and that for every bundle $S\subseteq[m]$, the map $ t_i\mapsto v_i(t_i,S) $ is measurable.

\begin{definition}[XOS over independent items \cite{RW18}]\label{def:xos_ind}
For buyer $i$, the valuation distribution $D_i$ over valuation functions $v_i(t_i,\cdot)$ is \emph{XOS over independent items} if:
\begin{enumerate}
    \item \textbf{No externalities.} For every $S\subseteq [m]$, the value $v_i(t_i,S)$ depends only on the item-specific types $\langle t_{ij} \rangle_{j\in S}$.
    
    \item \textbf{Monotonicity.} For every $t_i$ and every $U\subseteq V\subseteq [m]$,$ v_i(t_i,U) \le v_i(t_i,V).$
    
    \item \textbf{XOS.} There exist a finite $K_i$ and nonnegative item-level functions
   $
        \{v_{ij}^{(k)}(\cdot)\}_{j\in [m],\,k\in [K_i]}
    $
    such that, for every $t_i$ and $S\subseteq [m]$,
    $
        v_i(t_i,S) = \max_{k\in [K_i]} \sum_{j\in S} v_{ij}^{(k)}(t_{ij}).
    $
\end{enumerate}
\end{definition}

For a fixed buyer-item pair $(i,j)$, the clause values $\{v_{ij}^{(k)}(t_{ij})\}_{k\in [K_i]}$ may be arbitrarily correlated. Independence is only required across different buyers and different items. The seller's costs are also independent of all buyers' types.

This class includes several standard models as special cases:
\begin{itemize}
    \item \textbf{Additive:} $t_{ij}$ is a single real value and $v_i(t_i,S) = \sum_{j\in S} t_{ij}$.
    \item \textbf{Unit-demand:} $t_{ij}$ is a single real value and $v_i(t_i,S) = \max_{j\in S} t_{ij}$.
    \item \textbf{Constrained additive:} $t_{ij}$ is a single real value and $v_i(t_i,S) = \max_{R\subseteq S,\,R\in \mathcal I_i} \sum_{j\in R} t_{ij}$, where $\mathcal I_i$ is a downward-closed set system.
\end{itemize}

\paragraph{Allocations and gains from trade.}
A deterministic allocation is a feasible bundle profile $S=(S_1,\dots,S_n)$, where $S_i\subseteq [m]$ is the bundle assigned to buyer $i$, and $S_i\cap S_{i'}=\emptyset$ for all $i\neq i'$. Items not allocated remain with the seller. A randomized allocation $\mathcal X$ is a distribution over feasible bundle profiles. Let $x_{ij}(\mathcal X) = \Pr_{S\sim \mathcal X}[j\in S_i]$ denote the probability that item $j$ is allocated to buyer $i$, and let $y_j(\mathcal X) = \sum_{i=1}^n x_{ij}(\mathcal X)$ denote the probability that item $j$ is transferred from the seller.

For a deterministic allocation $S$ and realization $(t,c)$, the gains from trade are
$
     \gft(S,t,c) = \sum_{i=1}^n v_i(t_i,S_i) - \sum_{j\in \cup_{i=1}^n S_i} c_j.
$
For a randomized allocation $\mathcal X$, the expected gains from trade are
$
    \gft(\mathcal X,t,c)
    =
    \mathbb E_{S\sim \mathcal X}\!\left[\sum_{i=1}^n v_i(t_i,S_i)\right]
    - \sum_{j=1}^m c_j y_j(\mathcal X).
$
The expected optimal gains from trade (the \emph{first-best}) are
\[
    \gft^* = \mathbb E_{t,c}\!\left[\max_{S\text{ feasible}} \gft(S,t,c)\right].
\]

\paragraph{Mechanism notation.}
A direct-revelation mechanism takes as input the reported buyer types $\tilde t=(\tilde t_1,\dots,\tilde t_n)$ and the seller's reported cost profile $\tilde c$. It outputs a randomized feasible allocation $\mathcal X(\tilde t,\tilde c)$, a payment $p_i^b(\tilde t,\tilde c)\ge 0$ charged to each buyer $i$, and a payment $p^s(\tilde t,\tilde c)\ge 0$ made to the seller. For notational convenience, define $y_j(\tilde t,\tilde c) = y_j\bigl(\mathcal X(\tilde t,\tilde c)\bigr)$.

For true profiles $(t,c)$ and reported profiles $(\tilde t,\tilde c)$, buyer $i$'s utility is
$
    u_i(t_i;\tilde t,\tilde c)
    =
    \mathbb E_{S\sim \mathcal X(\tilde t,\tilde c)}\![v_i(t_i,S_i)] - p_i^b(\tilde t,\tilde c),
$
and the seller's utility is
$
    u^s(c;\tilde t,\tilde c)
    =
    p^s(\tilde t,\tilde c) - \sum_{j=1}^m c_j y_j(\tilde t,\tilde c).
$

A mechanism is \emph{dominant-strategy incentive-compatible} (DSIC) for the seller if, for all $c,\hat c,\tilde t$,
\begin{align}
    u^s(c;\tilde t,c) \ge u^s(c;\tilde t,\hat c). \tag{DSIC-S}\label{eq:seller_dsic}
\end{align}
It is DSIC for buyer $i$ if, for all $t_i,\hat t_i,\tilde t_{-i},\tilde c$,
\begin{align}
    u_i\bigl(t_i;(t_i,\tilde t_{-i}),\tilde c\bigr)
    \ge
    u_i\bigl(t_i;(\hat t_i,\tilde t_{-i}),\tilde c\bigr).
    \tag{DSIC-B}\label{eq:buyer_dsic}
\end{align}
A mechanism is DSIC if it is DSIC for the seller and for every buyer $i\in [n]$.

A mechanism is \emph{Bayesian incentive-compatible} (BIC) for the seller if, for all $c,\hat c$,
\begin{align}
    \mathbb E_t\bigl[u^s(c;t,c)\bigr]
    \ge
    \mathbb E_t\bigl[u^s(c;t,\hat c)\bigr].
    \tag{BIC-S}\label{eq:seller_bic}
\end{align}
It is BIC for buyer $i$ if, for all $t_i,\hat t_i$,
\begin{align}
    \mathbb E_{t_{-i},c}\bigl[u_i(t_i;(t_i,t_{-i}),c)\bigr]
    \ge
    \mathbb E_{t_{-i},c}\bigl[u_i(t_i;(\hat t_i,t_{-i}),c)\bigr].
    \tag{BIC-B}\label{eq:buyer_bic}
\end{align}
A mechanism is BIC if it is BIC for the seller and for every buyer $i\in [n]$.

A mechanism satisfies \emph{interim individual rationality} (IIR) for the seller if, for all $c$,
\begin{align}
    \mathbb E_t\bigl[u^s(c;t,c)\bigr] \ge 0. \tag{IIR-S}\label{eq:seller_ir}
\end{align}
It satisfies IIR for buyer $i$ if, for all $t_i$,
\begin{align}
    \mathbb E_{t_{-i},c}\bigl[u_i(t_i;(t_i,t_{-i}),c)\bigr] \ge 0. \tag{IIR-B}\label{eq:buyer_ir}
\end{align}
If these inequalities hold pointwise for every realization $(t,c)$, the mechanism is \emph{ex-post IR}.

A mechanism is \emph{ex-ante weakly budget-balanced} (WBB) if, under truthful reporting,
\begin{align}
    \mathbb E_{t,c}\!\left[\sum_{i=1}^n p_i^b(t,c)\right]
    \ge
    \mathbb E_{t,c}[p^s(t,c)].
    \tag{WBB}\label{eq:wbb}
\end{align}
It is \emph{ex-post strongly budget-balanced} (SBB) if, under truthful reporting,
\[
    \sum_{i=1}^n p_i^b(t,c) = p^s(t,c)
    \qquad \text{for every realization } (t,c).
\]

The \emph{second-best} GFT is the maximum expected gains from trade achievable by any mechanism satisfying BIC, IIR, and ex-ante WBB.

\subsection{Unit-Demand Reduction to Single-Dimensional Markets}
In this subsection we focus on \emph{unit-demand} buyers with some downward-closed feasibility constraint $\mathcal{F}$. For each buyer $i \in [n]$, let $v_{ij} = v_i(t_i, \{j\})$ denote buyer $i$'s value for item $j \in [m]$. Under the item-independence assumption, each value $v_{ij}$ is drawn independently from its respective distribution $D_{ij}$. For any bundle $S \subseteq [m]$, the buyer's valuation is $v_i(t_i, S) = \max_{j \in S} v_{ij}$, with $v_i(t_i, \emptyset) = 0$.

\begin{definition}[Optimal Unit-Demand GFT]\label{def:optimal_ud_gft}
Let $\gft^{*,\mathrm{ud}}_{\mathcal{F}}$ denote the first-best expected gains from trade for a unit-demand market over a downward-closed feasible family $\mathcal{F} \subseteq 2^{[n] \times [m]}$, defined as:
\[
    \gft^{*,\mathrm{ud}}_{\mathcal{F}} = \mathbb{E}_{v, c} \left[ \max_{A \in \mathcal{F} \cap \mathcal{U}} \sum_{(i,j) \in A} (v_{ij} - c_j) \right],
\]
where $\mathcal{U} = \{A \subseteq [n] \times [m] : |A \cap (\{i\} \times [m])| \le 1 \text{ for all } i \in [n]\}$ is the partition constraint enforcing the unit-demand restriction.
\end{definition}

We primarily use the following feasibility constraints and the corresponding optimal GFT.
\begin{itemize}
    \item Define
$
\mathcal F^{1}= \bigl\{A\subseteq \{1\}\times [m] : |A|\le 1\bigr\}
$
to be the feasible family that allows at most one trade in total. 
By \cref{def:optimal_ud_gft},
\begin{equation}\label{eq:warmup_gft_ud_star_def}
\gft^{*,\mathrm{ud}}_{\mathcal F^{1}}
=
\E_{v,c}\!\left[\max_{A\in \mathcal F^{1}} \sum_{(1,j)\in A} \bigl(v(\{j\})-c_j\bigr)\right]
=
\E_{v,c}\!\left[\max_{j\in[m]} \bigl(v(\{j\})-c_j\bigr)^+\right].
\end{equation}

\item Define the matching feasible family:
\[
\mathcal{F}^m = \bigl\{A \subseteq [n] \times [m] : |A \cap ([n] \times \{j\})| \le 1 \text{ for all } j \in [m]\bigr\}.
\]
This constraint ensures that each buyer is allocated at most one item. 
By \cref{def:optimal_ud_gft},
\begin{equation}\label{eq:multi_gft_ud_star_def}
\gft^{*,\mathrm{ud}}_{\mathcal{F}^m} = \E_{t,c}\!\left[\max_{M \in \mathcal{F}^m} \sum_{(i,j)\in M} \bigl(v_i(t_i, \{j\})-c_j\bigr)^+\right].
\end{equation}
\end{itemize}

\begin{definition}[Copies instance \cite{CHMS10, CDW21}]\label{def:copies_instance}
The copies instance $\mathcal{I}^{\mathrm{copies}}$ is a single-dimensional market comprising $nm$ pseudo-buyers. Each pseudo-buyer $(i, j)$ is interested only in item $j$ with value $v_{ij}$ and faces a seller cost $c_j$. Feasible allocations are matchings $A \in \mathcal{F} \cap \mathcal{U}$. The gains from trade for a matching $A$ are:
\[
    \gft^{\mathrm{copies}}(A, v, c) = \sum_{(i,j) \in A} (v_{ij} - c_j).
\]
\end{definition}

\begin{proposition}[First-Best Equivalence]\label{prop:copies_equivalence}
The first-best unit-demand GFT is equivalent to the first-best GFT of the copies instance:
\[
    \gft^{*,\mathrm{ud}}_{\mathcal{F}} = \gft^*(\mathcal{I}^{\mathrm{copies}}) = \mathbb{E}_{v, c} \left[ \max_{A \in \mathcal{F} \cap \mathcal{U}} \sum_{(i,j) \in A} (v_{ij} - c_j) \right].
\]
\end{proposition}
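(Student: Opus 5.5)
This equivalence is essentially a matter of unfolding definitions, and I would prove it by showing that the two realization-wise optimization problems coincide for every fixed $(v,c)$, then taking expectations.

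First, fix a realization $(v,c)$. In the unit-demand market, a feasible set of trades is any $A\in\mathcal F\cap\mathcal U$. The constraint $\mathcal U$ ensures each buyer $i$ receives at most one item, so buyer $i$'s realized value is $v_i(t_i,\{j\})=v_{ij}$ when $(i,j)\in A$ and $0$ otherwise. Hence the unit-demand GFT of $A$ is $\sum_{(i,j)\in A}(v_{ij}-c_j)$. This is exactly $\gft^{\mathrm{copies}}(A,v,c)$ from \cref{def:copies_instance}, where pseudo-buyer $(i,j)$ contributes $v_{ij}-c_j$ when served. Both instances range over the same feasible family $\mathcal F\cap\mathcal U$, so the pointwise maxima agree.

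One subtlety needs attention. A unit-demand allocation in the original market is a bundle profile $(S_1,\dots,S_n)$, and it need not satisfy $|S_i|\le 1$. I will argue that restricting to $|S_i|\le 1$ is without loss for the first-best. Since $v_i(t_i,S_i)=\max_{j\in S_i}v_{ij}$, buyer $i$'s value is already attained by a single item $j^*\in S_i$. Dropping the other items of $S_i$ leaves the value unchanged and removes only nonnegative costs $c_j\ge 0$. It also preserves feasibility, because $\mathcal F$ is downward-closed. So an optimum always exists inside $\mathcal F\cap\mathcal U$, matching \cref{def:optimal_ud_gft}. Finally, $\emptyset$ is feasible, so both maxima are nonnegative and finite; there are finitely many candidate sets, so the max is attained and measurable.

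The last step is to take $\mathbb E_{v,c}$ of the pointwise equality. I see no real obstacle here; the only point needing care is the reduction from general bundles to the partition constraint $\mathcal U$ described above.
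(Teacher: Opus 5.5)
Your proposal is correct and follows essentially the same route as the paper. On $\mathcal F\cap\mathcal U$ a buyer's unit-demand value is just the value of her single matched item, and any $A\in\mathcal F$ can be pruned to one item per buyer without lowering GFT. The only difference is cosmetic: you keep each buyer's highest-value item and use $c_j\ge 0$ together with downward-closedness (which the paper uses implicitly), whereas the paper keeps the item maximizing $v_{ij}-c_j$; both choices work.
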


\begin{proof}
Since $\mathcal{F} \cap \mathcal{U}$ restricts each buyer to at most one item, $\max_{j: (i,j) \in A} v_{ij}$ is simply the value $v_{ij}$ of the matched pair. Furthermore, for any $A \in \mathcal{F}$, a matching $A' \in \mathcal{F} \cap \mathcal{U}$ that retains only the item $j$ for each buyer that maximizes $(v_{ij} - c_{j})$ achieves at least the same GFT.
\end{proof}

\paragraph{Single-Dimensional Profit Benchmarks}
Let $\tilde\varphi_{ij}$ and $\tilde\psi_j$ denote the ironed virtual value and ironed virtual cost, respectively; see \cref{app:virtualvalue} for formal definitions.

\begin{definition}[Single-Dimensional Profit Benchmarks]\label{def:copy_optimal_profits}
For a single-dimensional matching market $\mathcal{I}$ with feasible matchings $\mathcal{H}$, the seller-side and buyer-side optimal expected profit benchmarks are:
\begin{align*}
    \Pi^*_S(\mathcal{I}) &= \mathbb{E}_{v, c} \left[ \max_{A \in \mathcal{H}} \sum_{(i,j) \in A} (\tilde\varphi_{ij}(v_{ij}) - c_j) \right], \\
    \Pi^*_B(\mathcal{I}) &= \mathbb{E}_{v, c} \left[ \max_{A \in \mathcal{H}} \sum_{(i,j) \in A} (v_{ij} - \tilde\psi_j(c_j)) \right].
\end{align*}
\end{definition}

We use the following GFT approximation result in single-dimensional two-sided markets.

\begin{theorem}[\cite{BRTW26}]\label{thm:sd_profit_approx}
Consider any Bayesian single-dimensional matching market with independent distributions that is constrained by a downward-closed family of feasible matchings:
\[
    \frac{1}{2}\Pi^*_S(\mathcal{I}) + \frac{1}{2}\Pi^*_B(\mathcal{I}) \ge \frac{1}{3.15} \gft^*(\mathcal{I}).
\]
\end{theorem}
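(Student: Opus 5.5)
Since \cref{thm:sd_profit_approx} is taken from \cite{BRTW26}, the plan below reconstructs its proof. The idea is to reduce the matching-market inequality to a \emph{bilateral-trade inequality with a shifted outside option}, applied pair by pair, and then recombine using the fact that $\Pi^*_S$ and $\Pi^*_B$ are maxima over $\mathcal H$. Two properties of the maxima make this work. First, they dominate the profit of \emph{any} measurable selection rule $A(v,c)\in\mathcal H$. Second, the ironed virtual value (resp.\ cost) of a pair depends only on that pair's own value (resp.\ the item's own cost). Hence, for each fixed rule $A$,
\[
\Pi^*_S \ge \mathbb E\Big[\sum_{(i,j)\in A(v,c)}\big(\tilde\varphi_{ij}(v_{ij})-c_j\big)\Big],\qquad
\Pi^*_B \ge \mathbb E\Big[\sum_{(i,j)\in A(v,c)}\big(v_{ij}-\tilde\psi_j(c_j)\big)\Big].
\]
The task is therefore to choose, for each side, a rule whose expected virtual surplus is comparable, pair by pair, to the first-best contribution.

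The first step is a threshold decomposition of the first-best. Let $A^*(v,c)$ maximize $\sum_{(i,j)\in A}(v_{ij}-c_j)$ over $\mathcal H$. Fix a pair $(i,j)$ and the other coordinates. Because $\mathcal H$ is downward-closed, there is a critical threshold $\tau_{ij}$ such that $(i,j)\in A^*$ iff $v_{ij}-c_j\ge\tau_{ij}$. Here $\tau_{ij}$ depends only on $v_{-ij}$ and on the costs of the other items. The item-$j$ cost also enters through competing pairs $(i',j)$, but I would handle this by conditioning on all values except $v_{ij}$ and exploiting independence across items. I would then write $\gft^*$ as the sum over pairs of $\mathbb E\big[(v_{ij}-c_j)\mathbf 1\{v_{ij}-c_j\ge \tau_{ij}\}\big]$. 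Each summand is a bilateral-trade GFT between a buyer with value $v_{ij}$ and a seller with effective cost $c_j+\tau_{ij}$.

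The second step is a shifted single-pair lemma. For independent $v\sim D$, $c\sim F$ and a constant $\tau\ge 0$, I would prove
\[
\tfrac12\,\mathbb E\big[(\tilde\varphi(v)-c-\tau)^+\big]+\tfrac12\,\mathbb E\big[(v-\tilde\psi(c)-\tau)^+\big]+\tau\cdot(\text{trade probabilities})\ \ge\ \tfrac{1}{3.15}\,\mathbb E\big[(v-c)\mathbf 1\{v-c\ge\tau\}\big].
\]
This would follow by rerunning the Fei / Hartline--Wang quantile argument with the seller's cost shifted by $\tau$. The ironed virtual value of $v$ is unaffected by the shift, and the $\tau\Pr[\cdot]$ terms account for the part of the surplus that is "paid" to competing pairs.

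The third step, which I expect to be the main obstacle, is recombination. The per-pair lemma implicitly selects $(i,j)$ whenever $\tilde\varphi_{ij}(v_{ij})-c_j\ge\tau_{ij}$. These selections must be assembled into a single feasible rule in $\mathcal H$, and the $\tau$-terms must telescope back into the virtual surplus of the other selected pairs. My first attempt would use a VCG-style accounting. I would take the rule $A_S=\arg\max_{A\in\mathcal H}\sum_{(i,j)\in A}(\tilde\varphi_{ij}-c_j)$ itself and show that its critical thresholds, measured in virtual units, dominate the $\tau_{ij}$ in expectation. I would charge the displaced first-best surplus to the pairs that displaced it, via the exchange property of the maximizer, and do the same for $A_B$ with virtual costs. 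If a direct exchange argument fails for general downward-closed $\mathcal H$, the fallback is a weaker constant obtained through an online contention-resolution or prophet-style rounding. The constant $3.15$ itself suggests, however, that the cited proof obtains an exact pair-by-pair reduction to Fei's bilateral-trade bound.
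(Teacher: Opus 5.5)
The paper gives no proof of this statement. It imports it from \cite{BRTW26}, so your reconstruction has to stand on its own, and as written it does not: the step that carries the constant $3.15$ is missing.

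\textbf{The recombination step is unsupported.} Your single-pair lemma credits a first-best pair only with the shifted virtual surpluses $(\tilde\varphi(v)-c-\tau)^+$ and $(v-\tilde\psi(c)-\tau)^+$. It then adds a term $\tau\cdot(\text{trade probability})$ that belongs to neither $\Pi^*_S$ nor $\Pi^*_B$. Everything therefore hinges on charging these $\tau$-terms to the virtual surplus of other pairs. For that you offer only the hope that the critical thresholds of $A_S$, measured in virtual units, dominate the thresholds $\tau_{ij}$ of $A^*$, measured in true units. Nothing supports this: virtual weights are pointwise below true weights, and the two maximizers need not be related. Your stated fallback explicitly gives up $3.15$.

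\textbf{Step 1 is also wrong as stated.} Even conditioning on everything except $(v_{ij},c_j)$, membership of $(i,j)$ in $A^*$ is not a threshold on $v_{ij}-c_j$. With just two pseudo-buyers competing for item $j$, $(1,j)\in A^*$ iff $v_{1j}\ge c_j$ and $v_{1j}\ge v_{2j}$. In general the region has the form $\{v-c\ge\tau_1,\ v\ge\tau_2,\ c\le\tau_3\}$. The extra constraints arise from alternatives that reuse the same seller (resp.\ buyer) with another partner: there the cost (resp.\ value) cancels.

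\textbf{The cross-pair telescoping is unnecessary, but this does not close the gap.} Since $\mathcal H$ is downward closed, the positive-virtual-surplus part of $A^*(v,c)$ is feasible. Hence $\Pi^*_S\ge\E\bigl[\sum_{(i,j)\in A^*(v,c)}(\tilde\varphi_{ij}(v_{ij})-c_j)^+\bigr]$, and symmetrically for $\Pi^*_B$. Each first-best pair is thus credited with its own \emph{unshifted} virtual surplus. By conditioning and linearity, the theorem then follows (as a sufficient condition) from a single-pair inequality on such monotone regions $R$:
$\tfrac12\E[(\tilde\varphi(v)-c)^+\mathbf 1_R]+\tfrac12\E[(v-\tilde\psi(c))^+\mathbf 1_R]\ge\tfrac1{3.15}\E[(v-c)\mathbf 1_R]$.

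That inequality is where the real work lies. For $R=\{v-c\ge\tau\}$, shifting the cost by $\tau$ only shows that the restricted seller term is at least $\max\{\mathrm{SO}',\tau P\}$. Here $\mathrm{SO}'$ is the seller-offering profit of the shifted instance and $P=\Pr[v-c\ge\tau]$; the buyer side is analogous. Combined with Fei's bound on the shifted instance, this yields only $\tfrac1{6.3}$. Reaching $\tfrac1{3.15}$ needs either a bilateral-trade analysis valid on such restricted regions or a genuinely global argument, and your proposal supplies neither.
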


\begin{corollary}\label{cor:copies_profit_approx}
For the unit-demand market over feasible family $\mathcal{F}$:
\[
    \frac{1}{2}\Pi^*_S(\mathcal{I}^{\mathrm{copies}}) + \frac{1}{2}\Pi^*_B(\mathcal{I}^{\mathrm{copies}}) \ge \frac{1}{3.15} \gft^{*,\mathrm{ud}}_{\mathcal{F}}.
\]
\end{corollary}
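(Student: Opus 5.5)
The plan is to derive \cref{cor:copies_profit_approx} directly from \cref{thm:sd_profit_approx} by applying it to the copies instance $\mathcal{I}^{\mathrm{copies}}$. Then I will convert its right-hand side via \cref{prop:copies_equivalence}.

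\emph{Step 1: the copies instance satisfies the hypotheses of \cref{thm:sd_profit_approx}.}
\begin{itemize}
\item \textbf{Single-dimensional agents.} Each pseudo-buyer $(i,j)$ has a scalar private value $v_{ij}\sim D_{ij}$, and each item's seller side has a scalar cost $c_j\sim F_j$.
\item \textbf{Independence.} Under the item-independence assumption, the $v_{ij}$ are mutually independent across all pairs $(i,j)$: across buyers by the product structure of $D$, and across items by the product structure of $D_i$. The $c_j$ are independent of one another and of $v$. So the instance has independent distributions.
\item \textbf{Downward-closed feasibility.} The feasible matchings are $\mathcal{H}=\mathcal{F}\cap\mathcal{U}$. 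Both $\mathcal{F}$ (by assumption) and $\mathcal{U}$ (a partition constraint) are downward-closed, and the intersection of downward-closed families is downward-closed.
\item \textbf{Matching-market structure.} Each pseudo-buyer is interested in only the single item $j$. So a feasible set $A\subseteq[n]\times[m]$ is exactly a set of buyer--item trades, as in the single-dimensional matching-market model of \cite{BRTW26}. One should check that "matching" there permits several pseudo-buyers to compete for the same item subject to $\mathcal{H}$. This holds because any per-item capacity (e.g.\ in $\mathcal{F}^m$) is encoded in $\mathcal{F}$ itself.
\end{itemize}

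\emph{Step 2: apply the theorem.} \cref{thm:sd_profit_approx} applied to $\mathcal{I}^{\mathrm{copies}}$ gives
\[
\tfrac12\Pi^*_S(\mathcal{I}^{\mathrm{copies}})+\tfrac12\Pi^*_B(\mathcal{I}^{\mathrm{copies}})\ge \tfrac{1}{3.15}\gft^*(\mathcal{I}^{\mathrm{copies}}).
\]
Here $\Pi^*_S$ and $\Pi^*_B$ are the benchmarks of \cref{def:copy_optimal_profits} with $\mathcal{H}=\mathcal{F}\cap\mathcal{U}$. The ironed virtual values $\tilde\varphi_{ij}$ are taken with respect to $D_{ij}$, and the ironed virtual costs $\tilde\psi_j$ with respect to $F_j$.

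\emph{Step 3: conclude.} Replace $\gft^*(\mathcal{I}^{\mathrm{copies}})$ by $\gft^{*,\mathrm{ud}}_{\mathcal{F}}$ using \cref{prop:copies_equivalence}, which yields the claim.

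The only potential obstacle is the hypothesis check in Step 1, namely that the copies instance genuinely fits the model of \cite{BRTW26}. Two points need confirming:
\begin{itemize}
\item \textbf{Shared item cost.} In the copies instance, the single cost $c_j$ is shared by all pseudo-buyers for item $j$. This matches a one-seller-per-item matching market in which seller $j$ has cost $c_j$, rather than giving each copy an independent cost.
\item \textbf{Exceptional case.} One might worry that in \cite{BRTW26}'s setting the seller side is a single seller owning several items. Here, however, each item's seller is an independent single-dimensional agent, so there is no conflict with their model.
\end{itemize}
Once this identification is made, the rest is immediate.
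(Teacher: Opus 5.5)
Your proposal is correct and matches the paper's argument. The paper states this corollary without proof, as an immediate instantiation of \cref{thm:sd_profit_approx} on $\mathcal{I}^{\mathrm{copies}}$, with $\gft^*(\mathcal{I}^{\mathrm{copies}})$ rewritten as $\gft^{*,\mathrm{ud}}_{\mathcal{F}}$ via \cref{prop:copies_equivalence}. Your hypothesis checks are what that instantiation tacitly requires: independence of the singleton values $v_{ij}$ across pairs, downward-closedness of $\mathcal{F}\cap\mathcal{U}$, and each item acting as a single-dimensional seller with cost $c_j$.
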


\section{Technical Overview}\label{sec:technical}
Our main result—achieving a constant-factor approximation to the first-best gains from trade (GFT) for multi-dimensional XOS buyers and an additive seller—is established through
decompositions and mechanism design reductions.

The proof proceeds in three steps:
(1) a pointwise core-tail decomposition of the GFT organized around a carefully chosen unit-demand benchmark, (2) extracting the concentrated "core" via the profit of a seller-run (dynamic) entry fees auction, and (3) a reduction of the tail to single-dimensional copies instances, where we apply single-dimensional GFT approximation results.

\paragraph{Step 1: The Pointwise Core-Tail Decomposition of GFT.}
We begin by decomposing the realized first-best GFT into a \emph{core} and a \emph{tail}. The decomposition is applied for every realized type/cost profile.
For each buyer $i$, item $j$, and XOS clause $k$, we define the realized clause-wise surplus
\[
w_{ij}^{(k)} = \bigl(v_{ij}^{(k)}(t_{ij})-c_j\bigr)^+.
\]
We then truncate each realized surplus at a global threshold $\tau$
\[
w_{ij}^{(k)} = \min\{w_{ij}^{(k)},\tau\} + (w_{ij}^{(k)}-\tau)^+,
\]
and correspondingly define an ex-post core objective and an ex-post tail objective.

The choice of $\tau$ is critical.
We set $\tau$ to the optimal expected GFT under a unit-demand restriction: for bilateral-trade this is the one-trade benchmark $\gft^{*,\mathrm{ud}}_{\mathcal F^1}$ (\cref{eq:warmup_gft_ud_star_def}), while in the general $n$-buyer setting it is the unit-demand matching benchmark $\gft^{*,\mathrm{ud}}_{\mathcal F^m}$ (\cref{eq:multi_gft_ud_star_def}). Our specific choice of threshold allows us to simultaneously guarantee the following three properties:

\begin{enumerate}\item 
\textbf{Core Concentration:} Truncating at $\tau$ ensures the core GFT satisfies a bounded-differences property, which guarantees that the buyers' utilities concentrate. 

\item \textbf{Tail Bounding:} The expected value of the excess tail GFT can be upper-bounded by a constant factor ($\Theta$) of the truncation threshold $\tau$ itself.
\end{enumerate}

If $\tau$ were chosen much larger, this concentration would become too weak to support entry fees; if it were chosen much smaller, too much mass would move into the tail and the tail benchmark would become harder to cover. Our specific choice of $\tau$ nicely balances these two asks.

\begin{enumerate}[start=3]
    \item \textbf{Dimensionality Reduction and Benchmark Identification:}
    Under the unit-demand restriction, only singleton trades matter. Thus, for each buyer $i$, the relevant value for item $j$ is the singleton value $v_i(t_i,\{j\})$, so an XOS valuation over independent items collapses to a unit-demand valuation with independent singleton values. Although the seller is additive rather than single-dimensional, under this restriction each feasible trade is determined by a single buyer--item pair, and the seller's contribution is simply the item cost $c_j$. This identifies a standard two-sided unit-demand market. We then use the associated copies reduction as a \emph{benchmark} device: it yields the corresponding single-dimensional copies instance $\mathcal I^{\mathrm{copies}}$ and the seller-side and buyer-side profit benchmarks $\Pi_S(\mathcal I^{\mathrm{copies}})$ and $\Pi_B(\mathcal I^{\mathrm{copies}})$ that our tail mechanisms must approximate. This is the bridge that lets us invoke the single-dimensional matching guarantee from \cite{BRTW26} (\cref{thm:sd_profit_approx}).
\end{enumerate}

\paragraph{Step 2: Approximating the Core via Seller-Profit Entry Fees.}
 We approximate the core GFT using the seller's profit. Inspired by the literature on multi-dimensional revenue maximization \cite{CDW21,CZ17, RW18, MS21,CZ19}, 
 we show that the seller can capture a constant fraction of the core GFT by deploying a cost-price two-part tariff (or, in the multi-buyer case, an Anonymous Sequential Posted Price with Entry Fee, ASPE). 

In the single-buyer case, this mechanism is a simple cost-price two-part tariff. The seller prices items at their realized costs, transferring the GFT directly into the buyer's utility. Because Step 1 guarantees this utility concentrates, the seller can charge a single upfront entry fee (constant fraction of the expected core GFT) that the buyer will accept with high probability. 
At cost prices, the buyer’s
utility is exactly the realized gains from trade, so the concentration from Step 1 implies that an entry fee that is a constant fraction of the expected buyer utility extracts a constant fraction of the expected core GFT
(\cref{lem:warmup_core_entry_fee}).

In the multi-buyer setting, however, a static entry fee no longer works: early buyers may
select many items and leave later buyers with too little residual surplus to justify paying the
same fee.
To address this, we adopt the Anonymous Sequential Posted Price with Entry Fee (ASPE) framework from \cite{CZ17}: 
items are priced according to ``balanced prices \cite{DFKL20}'' to guarantee enough items will be available to every buyer, buyers arrive
sequentially, and the entry fee offered to buyer
$i$ is calibrated to the expected residual core GFT at her arrival. (\cref{prop:multi-core-aspe})

Our use of this framework is
conceptually simpler than in revenue maximization. Fixing $c$ turns the capped core GFT directly into welfare in a one-sided XOS instance, avoiding their duality-based reduction from optimal revenue. The additional work is to implement this auxiliary ASPE in the original two-sided market—interpreting item prices as markups above cost, showing that the surplus used to set entry fees is attainable, and bounding the remaining losses by the fixed-cost unit-demand matching benchmark.

\paragraph{Step 3: Approximating the Tail via a Chain of Reductions.}
To approximate the tail part of the GFT, we leverage the dimensionality reduction established in Step 1. We construct a chain of inequalities that bridges our multi-dimensional market to well-understood single-dimensional results:
\begin{enumerate}
\item We upper-bound the expected tail GFT by a constant multiple of the same unit-demand benchmark that defines $\tau$.
Intuitively, the tail consists of realizations in which one item contributes unusually large GFT, so the one-trade or unit-demand-matching benchmark is the correct scale for measuring it.

\item Under the unit-demand restriction, the XOS buyer reduces to a unit-demand buyer with independent singleton values, while the additive seller already decomposes item-by-item through the costs $c_j$. We then use the standard copies reduction only as a benchmark device to identify the relevant single-dimensional profit benchmarks. In the resulting copies instance $\mathcal{I}^{\mathrm{copies}}$, each buyer--item pair $(i,j)$ is represented by a single-dimensional pseudo-buyer, and feasibility is governed by the same downward-closed matching constraints $\mathcal{F}^m$.

\item Third, we apply the single-dimensional matching result \cite{BRTW26}:
for any downward-closed single-dimensional matching market,
\[
\frac{1}{2}\Pi^*_S(\mathcal I) + \frac{1}{2}\Pi^*_B(\mathcal I) \ge \frac{1}{3.15}\,\gft^*(\mathcal I).
\]
Applying this to $\mathcal I^{\mathrm{copies}}$, together with the equivalence between the copies first-best and the unit-demand benchmark, shows that the two one-sided copy benchmarks $\Pi^*_S(\mathcal I^{\mathrm{copies}})$ and $\Pi^*_B(\mathcal I^{\mathrm{copies}})$ suffice to cover the tail up to a constant factor.
\end{enumerate}
\paragraph{Implementing the tail benchmarks in the original market.}
At this point, the remaining question is mechanism design:
how do we realize these single-dimensional copy profit benchmarks in the original multi-dimensional market?

{When the mechanism-running side of the market is a single agent, this is relatively straightforward. This includes the
bilateral-trade warm-up and the seller side of the multi-buyer market. 
Using existing techniques from the revenue maximization literature \cite{CHMS10}, we can explicitly construct a posted-price mechanism that approximates the target benchmark. By design, a posted-price mechanism is strictly budget-balanced, as well as truthful and individually rational for the responding side. Because the single agent running the mechanism simply maximizes their own expected utility, whatever optimal mechanism they ultimately deploy will be truthful for themselves and will yield a profit at least as high as our posted-price construction. Therefore, this posted-price mechanism serves as a valid lower bound for the optimal truthful auction. (\cref{thm:warmup_tail_buyer_menu}, \cref{thm:warmup_tail_seller_menu}, \cref{prop:multi-core-aspe} and \cref{thm:sellertailmain})

Approximating the optimal buyers' profit ($\Pi^*_B$) when multiple buyers collectively run the mechanism against an additive seller is significantly more demanding. Prior works \are{\cite{BCWZ17,CaiGMZ21,BRTW26}} on single-dimensional two-sided markets typically handle it as follows: Consider a buyers' ``lawyer'' that maximizes the total expected buyers' profit and then charges VCG payments to guarantee truthfulness. This is a valid auction to run in two-sided market setting due to the fact that this process is Weakly Budget Balanced (WBB). However, we observe that this standard recipe breaks down in our setting; an auction that maximizes the unrestricted sum of the buyers' profits, paired with VCG, fails to be budget-balanced.

\begin{observation} \label{obs:vcg-deficit}In the multiple-buyer bipartite matching setting, the mechanism that maximizes the total expected buyer profit, when paired with VCG payments to guarantee truthfulness, is not always ex-ante Weakly Budget Balanced (WBB).\end{observation}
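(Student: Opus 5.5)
The plan is to refute WBB with an explicit two-buyer, two-item instance in which the seller's multidimensionality makes the buyers \emph{complements} from the buyers' lawyer's point of view. The VCG (Clarke-pivot) payments then sum to less than what the lawyer must pay the seller. The structural fact I will exploit is this: facing an additive seller with independent costs, the lawyer can profit from bundled offers. Such offers are only available when both buyers are present, so each buyer exerts a positive externality on the other, and Clarke-pivot payments subsidize rather than charge.

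\textbf{Instance.} Take $m=n=2$. Buyer $1$ is unit-demand with deterministic value $1$ for item $1$ and $0$ for item $2$; buyer $2$ symmetrically values only item $2$ at $1$. The seller's costs $c_1,c_2$ are i.i.d.\ $U[0,1]$. The buyers' types are deterministic, so they trivially satisfy the independence assumptions. Let $W_{S}$ denote the maximum expected buyers' profit (buyer values of items obtained minus expected payment to the seller) that the lawyer can get from a seller-facing BIC/IIR mechanism when only the buyers in $S$ are present.

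\textbf{Step 1: single-buyer benchmarks.} With buyer $i$ alone, only one item matters and the problem is the single-dimensional buyer-optimal problem against a $U[0,1]$ cost. Here a posted price $p$ is optimal (the ironed virtual cost is $2c$), giving profit $\max_p p(1-p)=1/4$. Hence $W_{\{1\}}=W_{\{2\}}=1/4$.

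\textbf{Step 2: the bundling lower bound.} For both buyers together, consider the single bundle offer ``I pay $q$ for both items.'' Writing $s=2-q\in[0,1]$, the seller accepts iff $c_1+c_2\le q$, which has probability $1-s^2/2$. The lawyer's profit is therefore $s(1-s^2/2)$. At $s=\sqrt{2/3}$ this equals $\tfrac{2}{3}\sqrt{2/3}\approx 0.544>1/2$. So $W_{\{1,2\}}>1/2=W_{\{1\}}+W_{\{2\}}$: the two buyers are strict complements.

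\textbf{Step 3: the budget computation.} The lawyer runs the $W_{\{1,2\}}$-optimal mechanism against the seller, and buyer $i$ pays the Clarke-pivot amount
\[
p_i=(\text{value of $i$'s allocation})-\bigl(W_{\{1,2\}}-W_{\{-i\}}\bigr)
\]
in expectation. The expected seller payment is $\E[\text{total buyer value}]-W_{\{1,2\}}$. Summing,
\[
\E\Big[\sum_i p_i\Big]-\E[p^s]=W_{\{1\}}+W_{\{2\}}-W_{\{1,2\}}<0 .
\]
This violates ex-ante WBB, which proves \cref{obs:vcg-deficit}.

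\textbf{Main obstacle.} The delicate point is Step 3: the statement must match how ``VCG payments'' is operationalized in the single-dimensional recipe \cite{BCWZ17,BRTW26}. I would phrase it at the level of expected (interim) quantities, using the lawyer's objective as the VCG welfare and the pivot terms $W_{\{-i\}}$ as the optimal profits when buyer $i$ is absent. Since buyer types here are deterministic, interim and ex-ante coincide, so no extra care is needed about which expectation is taken.

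I do not need the exact value of $W_{\{1,2\}}$, only the lower bound from Step 2. Step 1 is the one place an optimality claim is needed, and it is standard Myerson theory for a single regular cost distribution.
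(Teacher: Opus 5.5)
Your proposal is correct and follows essentially the same route as the paper's proof: the same instance (two single-minded unit-demand buyers with value $1$ and i.i.d.\ $U[0,1]$ costs), the single-buyer benchmark $1/4$, the bundle offer giving $\tfrac{2}{3}\sqrt{2/3}\approx 0.544$, and the same Clarke-pivot budget identity $W_{\{1\}}+W_{\{2\}}-W_{\{1,2\}}<0$. Your Step~1 claim that ``only one item matters'' is asserted at the same level of rigor as in the paper. To make it airtight, one line suffices. Let $X_1(c)$ be the interim probability that item $1$ is procured and $U(c)$ the seller's interim utility; the envelope bound $U(c)\ge\int_{c_1}^{1}X_1(s,c_2)\,ds$ then shows the lawyer's profit is at most $\E[(1-2c_1)^+]=1/4$, even if the mechanism also procures the worthless item $2$.
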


\begin{proof}[Proof Sketch]
    Consider a market with two unit-demand buyers and an additive seller owning two items. The seller's costs $c_1$ and $c_2$ are drawn independently from a uniform distribution $U[0,1]$. Buyer 1 desires only item 1 with a deterministic value $v_{11} = 1$, and Buyer 2 desires only item 2 with a deterministic value $v_{22} = 1$.

    We show that the buyers' total profit is \emph{superadditive} across buyers: the collective bundle of buyers can extract strictly more expected profit from the seller than the sum of what the buyers can extract individually. This makes the externality of the buyers positive, which in turn causes the VCG payments to exceed the total profit.

    See \cref{app:overview} for the complete argument.
\end{proof}

To circumvent this issue, we abandon the unrestricted optimal-profit approach and instead optimize within a restricted family of mechanisms. This family operates by randomly designating each item to at most one buyer and posting prices, after which the seller independently selects at most one profitable item per buyer. The key feature of the choice of this restricted class is how it isolates buyer utilities: for any fixed outcome in this family, completely removing one buyer (zeroing out their designations) has no effect on the expected utilities of the remaining buyers. Consequently, a buyer's presence in the global optimization space can only force the mechanism to reduce the others' utilities. This structural guarantee ensures that buyers only ever impose      ``negative externalities'' on one another. Thus, when we wrap this mechanism in a VCG payment rule computed over these buyer utilities, the charged payments are strictly non-negative, making the mechanism weakly budget balanced (WBB). Finally, by optimizing this mechanism subject to a "half-capacity" relaxation, which ensures that designated items survive seller-side contention with a constant probability, we guarantee this valid auction successfully captures a constant fraction of the optimal single-dimensional buyer profit ($\Pi^*_B$) (\cref{thm:main_tailbuyer}).

\paragraph{Putting It Together: Approximating the First-Best GFT.}
Having bounded both the core and tail components, our final step is to translate these explicit mechanisms into our main approximation guarantees. For the single-agent sides of the market (the bilateral setting and the seller's side of the multi-buyer setting), we use a delegation argument. By granting pricing power to a specific agent, they naturally run the optimal, utility-maximizing Bayesian Incentive Compatible (BIC) auction. The simple entry-fee and posted-price menus we identified serve as lower bounds, ``advice auctions'', for these optimal mechanisms. Because the profits of our advice auctions approximate the respective GFT benchmarks, the profit of the actual optimal auctions must perform at least as well. For the buyers' side in the multi-buyer setting, we directly deploy our explicitly constructed, sub-optimal tail auction, which is Dominant Strategy Incentive Compatibility (DSIC), Weak Budget Balance (WBB) and Interim Individually Rational (IIR). Because the sum of the seller's extracted core profit, the seller's tail profit, and the buyers' tail profit covers a constant fraction of the core benchmark
plus the tail benchmark, they imply a constant-factor approximation to the unconstrained first-best
GFT.

\section{The Multi-Dimensional Bilateral Trade}\label{sec:warmup_xos_bilateral}

In this section, we study the bilateral-trade special case of our general model: a single buyer with an XOS valuation over independent items and a single additive seller with independent item costs. We establish that delegating pricing power entirely to either the buyer or the seller is sufficient to achieve a constant-factor approximation to the first-best GFT.

\begin{theorem}\label{thm:BTmainoptimal}
\mainresultBT
    % Consider any Bayesian multidimensional bilateral trade with an xos buyer and an additive seller over independent items. The mechanism that randomizes between the seller-optimal mechanism and buyer-optimal mechanism with equal probability satisfies Bayesian Incentive Compatible (BIC), Interim Individually Rational (IIR), and ex-ante Weakly Budget Balanced (WBB), and its expected gains-from-trade is at least a $1/44$ fraction of the first-best expected gains-from-trade. 
\end{theorem}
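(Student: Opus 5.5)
The plan follows the three-step template of the Technical Overview, specialized to one buyer. Set $\tau=\gft^{*,\mathrm{ud}}_{\mathcal F^1}$ from \cref{eq:warmup_gft_ud_star_def}. For every realization $(t,c)$, take the optimal bundle $S^*$ and the XOS clause $k^*$ supporting it. Since items with $v^{(k^*)}_{j}(t_j)<c_j$ can be dropped, the realized first-best GFT equals $\max_k\sum_j w_j^{(k)}$ with $w_j^{(k)}=(v_j^{(k)}(t_j)-c_j)^+$. Split each $w_j^{(k)}=\min\{w_j^{(k)},\tau\}+(w_j^{(k)}-\tau)^+$ and define
\[
\mathrm{Core}(t,c)=\max_k\sum_j \min\{w_j^{(k)},\tau\},\qquad \mathrm{Tail}(t,c)=\max_k\sum_j (w_j^{(k)}-\tau)^+ .
\]
Then $\gft^*\le \E[\mathrm{Core}]+\E[\mathrm{Tail}]$. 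It suffices to show that the seller-optimal profit is at least $a\,\E[\mathrm{Core}]-b\tau$ and covers part of $\tau$, and that the seller-optimal and buyer-optimal profits together cover a constant times $\tau$. The GFT of the random-offerer mechanism is at least the sum of the offerer's profit and the responder's utility, hence at least $\tfrac12(\text{seller-opt profit})+\tfrac12(\text{buyer-opt profit})$.

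\textbf{Tail step.} Note $(w_j^{(k)}-\tau)^+\le (v(\{j\})-c_j-\tau)^+$, so
\[
\E[\mathrm{Tail}]\le \sum_j \E\bigl[(v(\{j\})-c_j-\tau)^+\bigr].
\]
The standard prophet-type argument then applies: the random variables $X_j=v(\{j\})-c_j$ are independent across $j$, and $\E[\max_j X_j^+]=\tau$. The bound $\sum_j\E[(X_j-\tau)^+]\le \E[\max_j X_j^+]$ follows because $\Pr[\max_j X_j>\tau]\le 1$ and events in different coordinates exceeding $\tau$ are independent. The cleanest route is the usual threshold inequality $\E[\max_j X_j^+]\ge \tau\,\Pr[\max>\tau]+\sum_j\E[(X_j-\tau)^+]\Pr[\text{no other } X_{j'}>\tau]$, combined with a Markov bound on $\Pr[\max>\tau]$. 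This gives $\E[\mathrm{Tail}]\le O(\tau)$. Next, \cref{prop:copies_equivalence} and \cref{cor:copies_profit_approx} give $\tau\le 3.15\bigl(\tfrac12\Pi_S^*+\tfrac12\Pi_B^*\bigr)$ for the copies instance. I would then cite the posted-price constructions \cref{thm:warmup_tail_seller_menu,thm:warmup_tail_buyer_menu}, which realize these benchmarks up to constants: a CHMS-style single-item posted price for the seller against a unit-demand projection, and item-by-item price offers for the buyer. These are feasible truthful menus for the respective optimizer, so each optimal mechanism's profit is at least the corresponding benchmark up to a constant.

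\textbf{Core step; this is the main obstacle.} Consider the cost-price two-part tariff $\cM_E$. Fix $c$ and post price $c_j$ on each item $j$. The buyer's utility from buying is then exactly $\mathrm{Core}'(t,c)\ge\mathrm{Core}(t,c)$, the untruncated surplus. As a function of the independent $t_j$, the truncated core is XOS/self-bounding with increments at most $\tau$. The Talagrand-type concentration for subadditive/XOS functions (as in \cite{RW18,CZ17}) therefore gives that, conditionally on $c$, $\mathrm{Core}\ge \tfrac12\E_t[\mathrm{Core}\mid c]-O(\tau)$ with constant probability. The seller, knowing $c$, charges an entry fee equal to a constant fraction of $\E_t[\mathrm{Core}\mid c]-O(\tau)$. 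This yields seller profit at least $a\,\E[\mathrm{Core}]-b\tau$; the delicate part is tracking these constants. Since $\cM_E$ is a valid IR menu for the seller, the seller-optimal mechanism earns at least this much. Combining the core and tail bounds gives a constant approximation, and the arithmetic should produce $1/44$.

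\textbf{Incentives and budget balance.} These are immediate. Each delegated mechanism is the optimizer's own BIC/IIR optimal mechanism facing a truthful responder, and payments pass directly between the two parties, so the mixture is BIC, IIR, and budget balanced.
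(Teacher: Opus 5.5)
Your outline is structurally sound, but as written it does not establish the statement, whose content is the explicit factor $1/44$. The gap is the core step. You invoke an unspecified Talagrand/Schechtman-type bound giving ``$\mathrm{Core}\ge\tfrac12\E_t[\mathrm{Core}\mid c]-O(\tau)$ with constant probability,'' and then assert ``the arithmetic should produce $1/44$.'' That yields only \emph{some} constant. You never fix the entry fee, the acceptance probability, the constants $a,b$, or how $a\mu-b\tau$ is traded off against the tail, and the final factor depends entirely on those choices.

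The missing ingredient is a sharp second-moment bound built on the self-bounding structure you already noticed.
\begin{itemize}
\item Let $Z$ be the truncated core and $Z^{(-j)}$ its value after deleting item $j$. The maximizing clause survives deletion, so $0\le Z-Z^{(-j)}\le\tau$ and $\sum_j(Z-Z^{(-j)})\le Z$.
\item A resampled copy $Z_j'$ still satisfies $Z_j'\ge Z^{(-j)}$. The one-sided Efron--Stein inequality therefore gives $\Var(Z)\le\tau\,\E[Z]$.
\item Charge entry fee $\mu/4$ in the cost-price tariff; it is accepted whenever $Z\ge\mu/4$. Paley--Zygmund gives acceptance probability at least $\tfrac{9}{16}\cdot\tfrac{\mu}{\mu+\tau}$, hence seller profit at least $\tfrac{9}{64}\mu^2/(\mu+\tau)$.
\item This works jointly over $(t,c)$, as in the paper, or conditionally on $c$, as you propose. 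By convexity of $x\mapsto x^2/(x+\tau)$, the conditional version is no worse after averaging over $c$.
\item Finally, split on $\mu\ge\beta U$ versus $\mu<\beta U$, where $U=\gft^{*,\mathrm{ud}}_{\mathcal F^1}$. Use the tariff in the first case and \cref{thm:warmup_profit_to_ud_gft} in the second.
\end{itemize}
The paper does exactly this with $\tau=2U$ and split point $\mu\ge 3.5U$, obtaining $1/44$.

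Your tail step genuinely differs from the paper's and is sharper, but its justification needs one fix. At $\tau=U=\E[\max_jX_j^+]$ the Markov bound you mention is vacuous. What your threshold inequality actually needs is $p=\Pr[\max_jX_j>\tau]<1$. This holds because $\max_jX_j^+>\E[\max_jX_j^+]$ cannot hold almost surely. Then $(1-p)\tau\ge(1-p)\sum_j\E[(X_j-\tau)^+]$ gives $\E[\mathrm{Tail}]\le U$.

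The paper instead takes $\tau=2U$, so that Markov yields $\sum_j\Pr[X_j>\tau]\le\ln 2$. \Cref{lem:tail_expectation_bound} then gives $\E[\gft^T]\le U/(1-\ln 2)$.

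Combine your bound $\gft^*\le\mu+U$ with the core bound above at $\tau=U$, and split at $\beta=3$. This gives a ratio of at least $\min\{81/2048,\,1/25.2\}\approx 1/25.3\ge 1/44$. So your route closes, with a better constant, once the quantitative core lemma is supplied.
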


% \begin{theorem}
% Let $M^{\mathrm{opt}}_{\mathrm{rand}}$ be the mechanism that runs the seller-optimal mechanism for $\mathcal{I}$ with probability $1/2$ and the buyer-optimal mechanism for $\mathcal{I}$ with probability $1/2$. This mechanism is Bayesian Incentive Compatible (BIC), ex-post Individually Rational (IR), and Weakly Budget Balanced (WBB), and achieves an expected GFT of at least
% \[
% \frac{1}{44}\,\gft^*.
% \]
% \end{theorem}

% In fact, our result is slightly stronger. 
Since the optimal mechanisms for multi-dimensional agents are highly complex—typically requiring exponential communication and suffering from computational intractability—we provide both the buyer and the seller with simple, explicit candidate mechanisms. These simple mechanisms establish a lower bound on the optimal profit each side can achieve. Specifically, we show that if the seller runs a mechanism that randomizes between a simple entry fee and a posted-price menu, and the buyer runs a posted-price procurement menu, the sum of their expected profits already constitutes a constant-factor approximation of the optimal GFT.

\begin{theorem}\label{thm:btmainsuboptimal}
\mainresultBTsub
% Let $M_E$ be the cost-price two-part tariff with entry fee \cref{def:warmup_2pt}. Let $\cM_S^{\mathrm{tail}}$ and $\cM_B^{\mathrm{tail}}$ be the simultaneous one-item posted-price menus derived in \cref{thm:warmup_tail_seller_menu,thm:warmup_tail_buyer_menu}. Consider the simple randomized protocol $M_{\mathrm{rand}}^{\mathrm{simple}}$: with probability $1/2$, delegate pricing power to the seller, who \emph{can} randomize uniformly between $M_E$ and $\cM_S^{\mathrm{tail}}$. With probability $1/2$, delegate pricing power to the buyer, who \emph{can} run $\cM_B^{\mathrm{tail}}$. This protocol achieves an expected GFT is at least $1/88$ fraction of the first-best expected gains-from-trade.
\end{theorem}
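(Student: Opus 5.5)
We will prove \cref{thm:btmainsuboptimal} by splitting the first-best GFT into a core and a tail. Each piece is then covered by the profit of one of the three simple menus. Fix the threshold $\tau := \gft^{*,\mathrm{ud}}_{\mathcal F^1}$ from \cref{eq:warmup_gft_ud_star_def}. For each realization $(t,c)$ let $k^*$ be the clause achieving the optimum and $S^*$ the optimal bundle. Then
\[
\gft(S^*,t,c)\le \sum_{j} w^{(k^*)}_j = \sum_j \min\{w_j^{(k^*)},\tau\} + \sum_j (w_j^{(k^*)}-\tau)^+ .
\]
Taking the max over clauses in each part gives $\gft^*\le \mathrm{Core}+\mathrm{Tail}$. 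Here $\mathrm{Core}=\E[\max_k\sum_j\min\{w_j^{(k)},\tau\}]$ and $\mathrm{Tail}=\E[\max_k\sum_j (w_j^{(k)}-\tau)^+]$. The GFT of the protocol is at least the sum of the agents' expected utilities. The seller earns profit (not just nonnegative utility) from running her menu, and the buyer earns profit from running his. So it suffices to show
\[
\tfrac12\cdot\tfrac12\bigl(\mathrm{Profit}(\cM_E)+\mathrm{Profit}(\cM_S^{\mathrm{tail}})\bigr)+\tfrac12\,\mathrm{Profit}(\cM_B^{\mathrm{copy}})\ge \tfrac1{88}\gft^*.
\]
Each menu is a posted price or a two-part tariff. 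It is therefore DSIC and ex-post IR for the responding side, strongly budget balanced, and IR for the proposer. This handles the incentive properties, and it also justifies that an optimizing proposer would do at least as well, which is used for \cref{thm:BTmainoptimal}.

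\emph{Core.} For fixed $c$, the capped core function $S\mapsto\max_k\sum_{j\in S}\min\{w_j^{(k)},\tau\}$ is XOS with per-item contributions at most $\tau$ and independent items. The buyer's utility under cost prices dominates this function on the bundle he chooses. So by the standard self-bounding / Talagrand-type concentration for XOS over independent items (as in \cite{RW18,CZ17}), the core value $X$ satisfies $\Pr[X\le \E X - a]\le \exp(-a^2/(c_0\tau\E X))$. The entry fee is set at a constant fraction of $\E[\text{core}]$ minus $O(\tau)$. When $\E[\mathrm{Core}]\gtrsim\tau$, the buyer accepts with constant probability, so $\mathrm{Profit}(\cM_E)\ge a_1\mathrm{Core}-a_2\tau$; this is \cref{lem:warmup_core_entry_fee}. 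When $\mathrm{Core}=O(\tau)$, we simply charge it to $\tau$.

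\emph{Tail.} Let $Y_j=\max_k (w_j^{(k)}-\tau)^+$, which are independent across $j$. Then $\mathrm{Tail}\le\E[\sum_j Y_j]$. By Markov, $\Pr[\max_j \max_k w_j^{(k)}\ge\tau]\le 1$ is useless directly. Instead we use the usual tail argument: $\Pr[\exists j: \max_k w^{(k)}_j>\tau]\le 1$, and by independence together with $\E[\max_j(v(\{j\})-c_j)^+]=\tau$ we get $\sum_j\Pr[Y_j>0]\le O(1)$. We also get $\E[\sum_j Y_j]\le O(1)\cdot\E[\max_j Y_j]\le O(\tau)$; the first inequality holds since the events $\{Y_j>0\}$ are independent with bounded total probability. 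Thus both $\mathrm{Tail}$ and the leftover $O(\tau)$ are $O(\gft^{*,\mathrm{ud}}_{\mathcal F^1})$. By \cref{prop:copies_equivalence} and \cref{cor:copies_profit_approx}, $\tau\le 3.15\bigl(\tfrac12\Pi_S^*+\tfrac12\Pi_B^*\bigr)$ for the copies instance. Then \cref{thm:warmup_tail_seller_menu,thm:warmup_tail_buyer_menu} say that $\cM_S^{\mathrm{tail}}$ and $\cM_B^{\mathrm{copy}}$, as one-item posted-price menus in the style of \cite{CHMS10}, recover constant fractions of $\Pi_S^*$ and $\Pi_B^*$ respectively.

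\emph{Combining.} We form the weighted sum $\gft^*\le \mathrm{Core}+\mathrm{Tail}\le \frac1{a_1}\mathrm{Profit}(\cM_E)+C\tau$ and bound $\tau$ by the two tail profits. We then choose the bookkeeping constants so that the total comes to $88$; the factor-2 gap with \cref{thm:BTmainoptimal} comes from the seller's internal $1/2$ randomization. The main obstacle is the core step. We need a concentration bound strong enough that the entry fee captures a constant fraction of Core with only an additive $O(\tau)$ loss, and the constants there dominate the final ratio. I would first try the Schechtman/Talagrand bound for $\tau$-Lipschitz XOS functions with median in place of mean, as in \cite{CZ17}.
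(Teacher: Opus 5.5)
Your architecture matches the paper's: a clause-wise core/tail split, a cost-price entry fee for the core, copies benchmarks plus one-item menus for the tail, and expected profit as a lower bound on GFT. However, the tail step as written fails.

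Write $s_j=(v(\{j\})-c_j)^+$, so your threshold is $\tau=\gft^{*,\mathrm{ud}}_{\mathcal F^1}=\E[\max_j s_j]$ and $Y_j=(s_j-\tau)^+$. Nothing forces $\sum_j\Pr[Y_j>0]=O(1)$: Markov at the mean is vacuous, and independence does not rescue it. For instance, take an additive buyer with i.i.d.\ $v(\{j\})\in\{0,1\}$, $\Pr[v(\{j\})=1]=p\in(0,1)$, and zero costs. Then $\tau=1-(1-p)^m<1$, so $\Pr[Y_j>0]=p$ for every $j$ and $\sum_j\Pr[Y_j>0]=mp\to\infty$. Your justification of $\E[\sum_j Y_j]\le O(1)\,\E[\max_j Y_j]$ therefore collapses; note also that \cref{lem:tail_expectation_bound} only yields a bound after rearranging when $\sum_j\Pr[Y_j>0]<1$.

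The paper avoids this by truncating at twice the benchmark, $\tau=2\gft^{*,\mathrm{ud}}_{\mathcal F^1}$. Markov then gives $\Pr[\max_j s_j>\tau]\le\frac12$, hence $1-e^{-\sum_j p_j}\le\frac12$ and $\sum_j p_j\le\ln 2$. \Cref{lem:tail_expectation_bound} then yields $\E[\gft^T]\le\frac{1}{1-\ln 2}\gft^{*,\mathrm{ud}}_{\mathcal F^1}$. Your undoubled threshold can be salvaged, but only by a different argument. The prophet-style bound $\E[\max_j s_j]\ge\lambda q+(1-q)\sum_j\E[(s_j-\lambda)^+]$, with $q=\Pr[\max_j s_j>\lambda]$, taken at $\lambda=\E[\max_j s_j]$ (where necessarily $q<1$), gives $\E[\gft^T]\le\sum_j\E[(s_j-\tau)^+]\le\tau$.

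Second, the statement asserts the specific constant $1/88$, and you never derive it. ``Choose the bookkeeping constants so that the total comes to $88$'' is not an argument, and with an unspecified $c_0$ from a Talagrand/median-type bound there is no reason the ratio would come out at $88$.

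Your core step is qualitatively fine and is not the real obstacle. With $Z=\gft^C(w)$, the facts $0\le Z-Z^{(-j)}\le\tau$ and $\sum_j(Z-Z^{(-j)})\le Z$ make $Z/\tau$ self-bounding. The paper converts exactly these facts into explicit constants. Efron--Stein gives $\Var(Z)\le\tau\mu$, and Paley--Zygmund with fee $\mu/4$ gives acceptance probability at least $\frac{9}{16}\cdot\frac{\mu}{\mu+\tau}$.

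The paper then splits at $\mu\ge 1.75\tau$. In that case the tariff earns at least $\frac{63}{704}\mu$, the protocol at least $\frac{63}{2816}\mu$, and $\gft^*\le(1+\gamma/3.5)\mu$ with $\gamma=(1-\ln 2)^{-1}$. Otherwise $\gft^*\le(3.5+\gamma)\gft^{*,\mathrm{ud}}_{\mathcal F^1}$, while $\frac14\Pi_S(\cM_S^{\mathrm{tail}})+\frac12\Pi_B(\cM_B^{\mathrm{tail}})\ge\frac{1}{12.6}\gft^{*,\mathrm{ud}}_{\mathcal F^1}$ by \cref{thm:warmup_profit_to_ud_gft}. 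Both cases give at least $\gft^*/88$. Without this explicit accounting, your proposal only yields an unspecified constant factor.
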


This special case demonstrates the high-level approach we build upon to achieve the general result. The proof relies on a three-step method that leverages mechanism profits to bound the underlying GFT:

\begin{itemize}
    \item Core-Tail Decomposition: We first decompose the first-best GFT into a core part, $\gft^C$, and a tail part, $\gft^T$. The cutoff threshold separating these parts is defined precisely by the optimal single-item trading benchmark, $\gft^{*,\mathrm{ud}}_{\mathcal F^{1}}$
    \item Approximating the Core via seller's profit: The core part of the GFT is highly concentrated. Building on prior work in revenue maximization, we demonstrate that the seller can run a simple two-part tariff mechanism —specifically, a cost-price entry fee mechanism. The expected profit from this mechanism captures a constant fraction of the core GFT.
    \item Approximating the Tail via both sides' profits: Bounding the tail requires bridging multi-dimensional auction design with single-dimensional profit benchmarks. We first upper bound the tail GFT using the separating benchmark, $\gft^{*,\mathrm{ud}}_{\mathcal F^{1}}$. We then utilize prophet inequality frameworks to design posted-price menus for both the buyer and the seller. These menus approximate their respective single-dimensional optimal profit benchmarks in the corresponding "copies" setting (\cref{def:copies_instance}). Finally, using \cref{thm:sd_profit_approx}, we bound the tail GFT by the sum of the profits from these proposed posted-price menus.
\end{itemize}

We begin by formally defining the setting for the warm-up case. In \cref{subsec:warmup_core_tail}, we establish the core-tail decomposition of the GFT. In \cref{subsec:warmup_core_approx}, we introduce the two-part tariff mechanism and prove its profit approximates the core GFT. In \cref{subsec:warmup_tail_approx}, we propose the two posted-price menus—one for the buyer and one for the seller—and prove that their average profit bounds the tail GFT. Finally, \cref{subsec:warm_up_everything} synthesizes these bounds to prove the main theorem of this section.

\paragraph{Setting.}
There are $m$ heterogeneous items indexed by $[m]$.
The seller's private cost vector is $c=(c_1,\dots,c_m)$, and the seller is additive, so $c(S)=\sum_{j\in S} c_j$ for every bundle $S\subseteq [m]$.
The buyer's realized valuation is XOS:
\begin{equation}\label{eq:warmup_xos_realized}
v(S)=\max_{k\in[K]} \sum_{j\in S} a_j^k,
\end{equation}
where each $k$ is a clause and $a_j^k\ge 0$ is the item-$j$ value in clause $k$.
As in \cref{def:xos_ind}, for each item $j$ the vector $(a_j^1,\dots,a_j^K)$ may be arbitrarily correlated across clauses, but different items are mutually independent; the seller's costs are also independent across items and independent of the buyer's values.

For a fixed realization $(v,c)$, the first-best gains from trade are
\[
\gft(v,c)= \max_{S\subseteq [m]} \bigl(v(S)-c(S)\bigr),
\]
and $\gft^*= \E_{v,c}[\gft(v,c)]$ denotes the expected first-best GFT.

\subsection{The Core--Tail Decomposition}\label{subsec:warmup_core_tail}

In this section, we decompose the first-best GFT for every realization into the core part and the tail part, we show that the expected first-best GFT is upper bounded by the sum of the expected core GFT and expected tail GFT.
We first observe that XOS representation lets us rewrite the first-best objective in a more convenient form. 

\begin{lemma}\label{lem:warmup_xos_clause_form}
For each clause $k\in[K]$ and item $j\in[m]$, define the clause-wise surplus
\[
w_j^k = (a_j^k-c_j)^+,
\qquad\text{where } (x)^+ = \max\{x,0\}.
\]
For every realization $(v,c)$,
\[
\gft(v,c)=\max_{k\in[K]} \sum_{j=1}^m w_j^k.
\]
\end{lemma}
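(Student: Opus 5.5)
The identity follows by swapping the two maximizations in the first-best objective and then solving the inner problem item by item. Fix a realization $(v,c)$. By \cref{eq:warmup_xos_realized} and additivity of the seller's cost,
\[
\gft(v,c)=\max_{S\subseteq[m]}\Bigl(\max_{k\in[K]}\sum_{j\in S}a_j^k-\sum_{j\in S}c_j\Bigr)
=\max_{S\subseteq[m]}\max_{k\in[K]}\sum_{j\in S}\bigl(a_j^k-c_j\bigr),
\]
because $\sum_{j\in S}c_j$ does not depend on $k$ and can be moved inside the maximum over clauses. Both maxima range over finite sets, so they commute:
\[
\gft(v,c)=\max_{k\in[K]}\max_{S\subseteq[m]}\sum_{j\in S}\bigl(a_j^k-c_j\bigr).
\]

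For a fixed clause $k$, the inner objective is additive over items. It is therefore maximized by $S_k=\{j: a_j^k>c_j\}$, i.e., by including exactly the items with positive margin. This gives
\[
\max_{S\subseteq[m]}\sum_{j\in S}\bigl(a_j^k-c_j\bigr)=\sum_{j=1}^m\bigl(a_j^k-c_j\bigr)^+=\sum_{j=1}^m w_j^k .
\]
Two quick checks confirm this. First, any $S$ gives $\sum_{j\in S}(a_j^k-c_j)\le\sum_{j\in S}(a_j^k-c_j)^+\le\sum_{j=1}^m w_j^k$, since $w_j^k\ge 0$. Second, $S_k$ attains the bound. The case $S=\emptyset$, with value $0$, is covered because all $w_j^k\ge0$. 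Substituting back yields $\gft(v,c)=\max_{k}\sum_j w_j^k$.

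There is no real obstacle here. The only point needing care is the interchange of maxima, which is legitimate because the first-best value is the maximum over all pairs $(S,k)$ of $\sum_{j\in S}(a_j^k-c_j)$. Monotonicity and nonnegativity of the $a_j^k$ are not needed.
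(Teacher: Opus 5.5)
Your proposal is correct and follows essentially the same route as the paper: fix a clause, observe that the optimal bundle for that clause consists of the items with positive margin (giving $\sum_j w_j^k$), then maximize over clauses. You also spell out the interchange of the maxima over $S$ and $k$, which the paper's ``commit to clause $k$'' step leaves implicit.
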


\begin{proof}
Fix a clause $k$.
If we commit to clause $k$, then
\[
\max_{S\subseteq [m]} \Bigl(\sum_{j\in S} a_j^k - c(S)\Bigr)
=
\max_{S\subseteq [m]} \sum_{j\in S} (a_j^k-c_j)
=
\sum_{j=1}^m (a_j^k-c_j)^+
=
\sum_{j=1}^m w_j^k,
\]
since the maximizing set consists exactly of the items with positive clause-wise surplus.
Taking the maximum over clauses proves the claim.
\end{proof}

\paragraph{The one-trade benchmark and the threshold.}
The decomposition is organized around the benchmark in which at most one item can trade.
Intuitively, the tail consists of those rare realizations in which a single item already carries a large amount of surplus. Recall
\[
\mathcal F^{1}= \bigl\{A\subseteq \{1\}\times [m] : |A|\le 1\bigr\}
\]
to be the feasible family that allows at most one trade in total.
Under the one-trade feasibility constraint, the XOS valuation with independent items collapse to a unit-demand valuation with independent items\footnote{Under the one-trade family $\mathcal{F}^1$, at most one item can be traded, meaning the buyer's feasible allocations are exactly the singletons and the empty set. For any singleton $\{j\}$, the XOS valuation simplifies to $v(\{j\}) = \max_{k \in [K]} a_j^k$. Because the item-wise private information is independent across items, these singleton values $v(\{j\})$ are mutually independent. Thus, under $\mathcal{F}^1$, the buyer behaves exactly as a standard unit-demand buyer with independent item values.}. 
By \cref{def:optimal_ud_gft},
\begin{equation}
\gft^{*,\mathrm{ud}}_{\mathcal F^{1}}
=
\E_{v,c}\!\left[\max_{A\in \mathcal F^{1}} \sum_{(1,j)\in A} \bigl(v(\{j\})-c_j\bigr)\right]
=
\E_{v,c}\!\left[\max_{j\in[m]} \bigl(v(\{j\})-c_j\bigr)^+\right].
\end{equation}
We will set the truncation threshold to
\[
\tau = 2\,\gft^{*,\mathrm{ud}}_{\mathcal F^{1}}.
\]

\begin{definition}[Core and Tail] Let $\tau = 2\,\gft^{*,\mathrm{ud}}_{\mathcal F^{1}}.$
For each item $j$ and clause $k$, define the truncated core and the excess tail by
$
(w_j^k)^C = \min\{w_j^k,\tau\},
$ and $
(w_j^k)^T = (w_j^k-\tau)^+,$
so that $w_j^k=(w_j^k)^C+(w_j^k)^T$ pointwise.
We then define
\[
\gft^C(w)= \max_{k\in[K]} \sum_{j=1}^m (w_j^k)^C,
\qquad
\gft^T(w)= \max_{k\in[K]} \sum_{j=1}^m (w_j^k)^T,
\]
and let $\mu = \E[\gft^C(w)]$ for the expected core contribution.
\end{definition}

The next lemma shows that the optimal expected GFT is bounded by the sum of the two induced objectives.

\begin{lemma}\label{lem:warmup_core_tail_upper}
For every realization $(v,c)$, $\gft(v,c)\le \gft^C(w)+\gft^T(w)$.
Consequently,
\[
\gft^* \le \mu + \E[\gft^T(w)].
\]
\end{lemma}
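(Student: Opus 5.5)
The plan is to combine the clause-wise representation of \cref{lem:warmup_xos_clause_form} with the pointwise split $w_j^k=(w_j^k)^C+(w_j^k)^T$, and then use subadditivity of the maximum over clauses. Fix a realization $(v,c)$. By \cref{lem:warmup_xos_clause_form}, $\gft(v,c)=\max_{k\in[K]}\sum_{j}w_j^k$, so let $k^\star$ be a maximizing clause.

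First, for that clause, write
\[
\sum_{j=1}^m w_j^{k^\star}=\sum_{j=1}^m (w_j^{k^\star})^C+\sum_{j=1}^m (w_j^{k^\star})^T .
\]
Next, bound each of the two sums by its maximum over clauses. The first sum is at most $\max_k\sum_j (w_j^k)^C=\gft^C(w)$, and the second is at most $\max_k\sum_j (w_j^k)^T=\gft^T(w)$. This gives $\gft(v,c)\le \gft^C(w)+\gft^T(w)$ pointwise; equivalently, $\max_k(f_k+g_k)\le\max_k f_k+\max_k g_k$.

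Finally, take expectations over $(v,c)$. All quantities are nonnegative and measurable, since they are finite maxima of measurable functions, so linearity and monotonicity of expectation yield $\gft^*\le \mu+\E[\gft^T(w)]$ by the definition of $\mu$.

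I do not expect a real obstacle here. The only point to check is the identity $w=\min\{w,\tau\}+(w-\tau)^+$. If $w\le\tau$ the right side is $w+0$, and otherwise it is $\tau+(w-\tau)$, so the identity holds for $\tau\ge 0$, which is the case because $\tau=2\,\gft^{*,\mathrm{ud}}_{\mathcal F^1}\ge 0$. The inequality is generally strict because the core and tail maxima may be attained at different clauses. That slack is harmless, since only the upper bound is needed.
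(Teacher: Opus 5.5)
Your proposal is correct and matches the paper's proof: you invoke \cref{lem:warmup_xos_clause_form}, split each $w_j^k$ into core and tail, apply $\max_k(x_k+y_k)\le\max_k x_k+\max_k y_k$ (which you realize by fixing the maximizing clause $k^\star$), and take expectations. Your check that the split identity needs $\tau\ge 0$, and your remark on measurability, are small additions the paper leaves implicit.
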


\begin{proof}
By \cref{lem:warmup_xos_clause_form},
\[
\gft(v,c)=\max_{k\in[K]} \sum_{j=1}^m w_j^k
=
\max_{k\in[K]} \sum_{j=1}^m \bigl((w_j^k)^C+(w_j^k)^T\bigr).
\]
Now use the inequality $\max_k (x_k+y_k)\le \max_k x_k+\max_k y_k$ we get
\[
\gft(v,c)
\le
\max_{k\in[K]} \sum_{j=1}^m (w_j^k)^C
+
\max_{k\in[K]} \sum_{j=1}^m (w_j^k)^T
=
\gft^C(w)+\gft^T(w).
\]
Taking expectations over $(v,c)$ yields
\[
\gft^*
=
\E[\gft(v,c)]
\le
\E[\gft^C(w)] + \E[\gft^T(w)]
=
\mu + \E[\gft^T(w)]. \qedhere
\]
\end{proof}

\subsection{Approximating the Core via an Entry Fee}\label{subsec:warmup_core_approx}

We now turn to approximating the expected core GFT, $\mu = \E[\gft^C(w)]$. We show that the seller deploys a cost-price two-part tariff to extract a constant fraction of this core GFT. This mechanism charges an upfront entry fee and subsequently prices every item exactly at its realized cost. This design aims to align the buyer's utility with the realized gains from trade: because the item payments exactly covers the seller's cost, the buyer's utility upon entering is exactly $\gft(v,c)$ minus the entry fee. Consequently, the seller extracts their profit entirely through the entry fee. This argument closely resembles the analysis in \cite{CDW21}.

\begin{definition}[Cost-price two-part tariff]\label{def:warmup_2pt}
Fix an entry fee $E\ge 0$.
The seller first asks the buyer to pay $E$.
If the buyer declines, no trade occurs.
If the buyer accepts, then each item $j$ is offered at price $p_j=c_j$, and the buyer may purchase any bundle $S\subseteq [m]$.
\end{definition}

Under \cref{def:warmup_2pt}, the buyer's utility from entering is
\[
\max_{S\subseteq [m]} \bigl(v(S)-c(S)\bigr)-E
=
\gft(v,c)-E,
\]
so the buyer enters whenever $\gft(v,c)\ge E$. Furthermore, whenever the buyer enters, the seller's profit is exactly $E$.

To extract surplus via a fixed entry fee, the seller must ensure that the realized core GFT exceeds this fee with high probability. This can be guaranteed if the core objective is tightly concentrated around its expectation. Because the item-wise core surpluses are truncated at $\tau$, the marginal contribution of any single item to the total core GFT is strictly bounded. This bounded-difference property allows us to formally bound the variance of the core objective using the Efron--Stein inequality \cite{ES81}.

\begin{lemma}\label{lem:warmup_core_concentration}Let $Z= \gft^C(w)$ and $\mu=\E[Z]$.Then$$\Var(Z)\le \tau\,\mu.$$\end{lemma}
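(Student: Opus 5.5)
We apply the Efron--Stein inequality to $Z=\gft^C(w)$, viewed as a function of the independent item-level random variables $X_j=\bigl((a_j^1,\dots,a_j^K),c_j\bigr)$, $j\in[m]$. By \cref{def:xos_ind} and the independence of costs from values, these $m$ blocks are mutually independent, even though the clause values within one block may be correlated. Let $Z^{(j)}$ denote $\gft^C$ computed after replacing $X_j$ by an independent copy $X_j'$. We will use the one-sided (self-bounding) form of Efron--Stein: if $Z_j$ is any function of $X_{-j}$, then $\Var(Z)\le \E\bigl[\sum_j (Z-Z_j)^2\bigr]$. The goal is to choose $Z_j$ with
\[
0\le Z-Z_j\le \tau
\qquad\text{and}\qquad
\sum_j (Z-Z_j)\le Z .
\]
Then $\sum_j (Z-Z_j)^2\le \tau\sum_j(Z-Z_j)\le \tau Z$, and taking expectations gives $\Var(Z)\le\tau\mu$.

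\textbf{Choice of $Z_j$.} Set $Z_j=\max_k\sum_{\ell\ne j}(w_\ell^k)^C$, i.e., the core objective with item $j$'s surplus removed. This depends only on $X_{-j}$.

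\textbf{Bounded differences.} Since all terms are nonnegative, $Z_j\le Z$. Let $k^*$ be a clause attaining the maximum in $Z$. Then $Z_j\ge Z-(w_j^{k^*})^C\ge Z-\tau$, because each truncated surplus is at most $\tau$. Hence $0\le Z-Z_j\le(w_j^{k^*})^C\le\tau$.

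\textbf{Self-bounding.} Summing the previous bound over $j$ gives $\sum_j (Z-Z_j)\le\sum_j (w_j^{k^*})^C=Z$.

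This uses the XOS max-of-sums structure in the standard self-bounding way, with the single maximizing clause $k^*$ serving as a common witness for every $j$.

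\textbf{Main obstacle.} The only delicate point is justifying the Efron--Stein form $\Var(Z)\le\E\sum_j(Z-Z_j)^2$ for an arbitrary $X_{-j}$-measurable $Z_j$. The standard form is $\Var(Z)\le\E\sum_j(Z-\E^{(j)}Z)^2$, where $\E^{(j)}$ integrates out $X_j$. Because $\E^{(j)}Z$ minimizes the conditional mean-squared deviation among all $X_{-j}$-measurable functions, replacing it by $Z_j$ can only increase the bound, which gives the form we need. Measurability of $Z$ and $Z_j$ follows from the standing measurability assumptions together with finiteness of $K$.

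If one prefers the symmetric form $\tfrac12\sum_j\E[(Z-Z^{(j)})^2]$ instead, one instead bounds $(Z-Z^{(j)})_+^2$ by $\tau\,(w_j^{k^*})^C$, obtaining $\Var(Z)\le\sum_j\E[(Z-Z^{(j)})_+^2]$, which yields the same result.
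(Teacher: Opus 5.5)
Your proof is correct and follows the paper's argument: the same leave-one-out quantity $Z^{(-j)}$, the same bounds $0\le Z-Z^{(-j)}\le (w_j^{k^\star})^C\le\tau$ and $\sum_j (Z-Z^{(-j)})\le Z$ obtained from the single maximizing clause $k^\star$, fed into Efron--Stein. The only difference is which form of Efron--Stein you use. You apply the conditional-variance form $\Var(Z)\le\E\sum_j(Z-Z_j)^2$ directly with $Z_j=Z^{(-j)}$. The paper uses the resampling form, the symmetry $\E[(Z-Z_j')^2]=2\E[(Z-Z_j')_+^2]$, and $Z_j'\ge Z^{(-j)}$, which is exactly the alternative you sketch at the end.
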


\begin{proof}
To bound the variance of $Z$, we analyze how much the objective can drop when a single item is removed. 
For each item $j$, let $Z^{(-j)} = \max_{k\in[K]} \sum_{\ell\neq j} (w_\ell^k)^C$ denote the value of the core objective without item $j$. 
If we fix a tie-breaking rule and let $k^\star \in \arg\max_{k\in[K]} \sum_{\ell=1}^m (w_\ell^k)^C$, we have $Z = \sum_{\ell=1}^m (w_\ell^{k^\star})^C$.

We first establish two simple deterministic bounds on the drop $Z - Z^{(-j)}$. First, because clause $k^\star$ remains a valid choice even after deleting item $j$, the objective decreases by at most the removed item's contribution to this clause. We therefore have 
\begin{equation}\label{eq:differenceupperbound}
    0 \le Z - Z^{(-j)} \le (w_j^{k^\star})^C \le \tau.
\end{equation}

Second, summing these individual drops over all items bounds the total leave-one-out drop by $Z$ itself,
\begin{equation}\label{eq:drop_upper_bound}
    \sum_{j=1}^m (Z - Z^{(-j)}) \le \sum_{j=1}^m (w_j^{k^\star})^C = Z.
\end{equation}
$$$$

We now bound the variance using the Efron--Stein inequality \cite{ES81}. Let $Z_j'$ be the value of $Z$ after independently resampling all random variables associated with item $j$ (namely $a_j^1,\dots,a_j^K, c_j$). The standard Efron--Stein inequality states that$$\mathrm{Var}(Z) \le \frac{1}{2} \sum_{j=1}^m \mathbb{E}[(Z - Z_j')^2].$$

By symmetry, because $Z$ and $Z_j'$ are identically distributed, the expected squared positive difference perfectly equals the expected squared negative difference. This allows us to decompose the total expected squared difference as $\mathbb{E}[(Z - Z_j')^2] = 2\mathbb{E}[(Z - Z_j')_+^2]$. Substituting this identity into the standard inequality cleanly absorbs the factor of $1/2$, yielding the one-sided bound:
$$\mathrm{Var}(Z) \le \sum_{j=1}^m \mathbb{E}[(Z - Z_j')_+^2]$$
Because the resampled item can only add non-negative value to the objective without it, we know $Z_j' \ge Z^{(-j)}$, which implies $(Z - Z_j')_+ \le Z - Z^{(-j)}$. Substituting this bound in \cref{eq:differenceupperbound}, we get that $(Z - Z_j')_+^2 \le (Z - Z^{(-j)})^2 \le \tau(Z - Z^{(-j)})$.

Taking the expectation and summing over all items $j$ allows us to apply \cref{eq:drop_upper_bound} we get:
\[\Var(Z) \le \tau \,\E\!\left[\sum_{j=1}^m \bigl(Z - Z^{(-j)}\bigr)\right] \le \tau\,\E[Z] = \tau\,\mu. \qedhere\]
\end{proof}

With the variance of the core GFT bounded by its mean and the truncation threshold, we can now guarantee that the realized surplus drops significantly below its expectation only rarely. By applying the Paley-Zygmund inequality, we show that as long as the expected core surplus $\mu$ is sufficiently large relative to the threshold $\tau$, an entry fee set to $\mu/4$ will be accepted by the buyer with high probability.

\begin{lemma}\label{lem:warmup_core_entry_fee}
Fix any $\beta>0$.
If $\mu\ge \beta\tau$, then the cost-price two-part tariff with entry fee $E=\mu/4$ earns expected seller profit at least
\[
\frac{9\beta}{64(\beta+1)}\,\mu.
\]
\end{lemma}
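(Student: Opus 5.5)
The plan is to lower-bound the probability that the buyer enters, and then multiply by the fee. Under the cost-price two-part tariff the buyer enters exactly when $\gft(v,c)\ge E$, and each entry earns the seller exactly $E$. By \cref{lem:warmup_core_tail_upper}, or directly because truncation only decreases surpluses, we have $\gft(v,c)\ge \gft^C(w)=Z$ pointwise. Hence the entry probability is at least $\Pr[Z\ge \mu/4]$, and the expected profit is at least $\frac{\mu}{4}\Pr[Z\ge \mu/4]$.

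To bound $\Pr[Z\ge \mu/4]$ I will apply the Paley--Zygmund inequality to the nonnegative variable $Z$ with $\theta=1/4$:
\[
\Pr[Z> \theta\mu]\ \ge\ (1-\theta)^2\,\frac{\mu^2}{\E[Z^2]} \;=\; \frac{9}{16}\cdot\frac{\mu^2}{\Var(Z)+\mu^2}.
\]
\cref{lem:warmup_core_concentration} gives $\Var(Z)\le \tau\mu$, and the hypothesis $\mu\ge\beta\tau$ gives $\tau\le \mu/\beta$. Together these yield $\Var(Z)\le \mu^2/\beta$. Substituting,
\[
\Pr[Z\ge \mu/4]\ \ge\ \frac{9}{16}\cdot\frac{1}{1+1/\beta}\;=\;\frac{9\beta}{16(\beta+1)}.
\]

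Multiplying by $E=\mu/4$ gives expected seller profit at least $\frac{9\beta}{64(\beta+1)}\mu$, which is the claimed bound. The case $\mu=0$ is trivial.

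The only point needing care is the comparison $\gft(v,c)\ge Z$, which justifies that accepting the fee whenever $Z\ge E$ is rational for the buyer. This holds because $w_j^k\ge (w_j^k)^C$ for every clause, so taking maxima preserves the inequality, and $\gft(v,c)=\max_k\sum_j w_j^k$ by \cref{lem:warmup_xos_clause_form}. Beyond that, the argument is a direct second-moment computation with no real obstacle.
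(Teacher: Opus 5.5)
Your proposal is correct and matches the paper's proof: the pointwise bound $\gft(v,c)\ge Z$, Paley--Zygmund with $\theta=1/4$, the variance bound $\Var(Z)\le\tau\mu$, and the hypothesis $\mu\ge\beta\tau$ to get entry probability at least $\frac{9\beta}{16(\beta+1)}$, times the fee $\mu/4$. Writing $\Var(Z)\le\mu^2/\beta$ rather than passing through $\frac{\mu}{\mu+\tau}\ge\frac{\beta}{\beta+1}$ is the same computation.
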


\begin{proof}
Let $Z=\gft^C(w)$.
Since $w_j^k\ge (w_j^k)^C$ for every $j$ and $k$, we have $\gft(v,c)\ge Z$ pointwise. Therefore the buyer certainly enters the tariff whenever $Z\ge \mu/4$.
By the Paley--Zygmund inequality,
\[
\Pr\!\left[Z\ge \frac{\mu}{4}\right]
\ge
\left(1-\frac14\right)^2 \frac{\mu^2}{\E[Z^2]}
=
\frac{9}{16}\cdot \frac{\mu^2}{\Var(Z)+\mu^2}.
\]
Using \cref{lem:warmup_core_concentration}, we obtain
\[
\Pr\!\left[Z\ge \frac{\mu}{4}\right]
\ge
\frac{9}{16}\cdot \frac{\mu^2}{\mu^2+\tau\mu}
=
\frac{9}{16}\cdot \frac{\mu}{\mu+\tau}.
\]
If $\mu\ge \beta\tau$, then $\mu/(\mu+\tau)\ge \beta/(\beta+1)$, so
\[
\Pr\!\left[Z\ge \frac{\mu}{4}\right]
\ge
\frac{9}{16}\cdot \frac{\beta}{\beta+1}.
\]
Whenever the buyer enters, the seller earns exactly the entry fee.
Therefore the expected seller profit is at least
\[
\frac{\mu}{4}\cdot \frac{9}{16}\cdot \frac{\beta}{\beta+1}
=
\frac{9\beta}{64(\beta+1)}\,\mu. \qedhere
\]
\end{proof}

Since the cost-price two-part tariff is itself a feasible seller-side mechanism, \cref{lem:warmup_core_entry_fee} immediately implies that the seller-profit-optimal mechanism earns at least the same profit whenever $\mu\ge \beta\tau$.

\subsection{Approximating the Tail via the Unit-Demand Reduction}\label{subsec:warmup_tail_approx}

We now turn to bounding the expected tail GFT, $\E[\gft^T(w)]$. As established in Section \ref{subsec:warmup_core_tail}, restricting feasible allocations to the one-trade family $\mathcal{F}^1$ forces the multi-dimensional XOS buyer to behave exactly as a standard unit-demand buyer with independent item values.

Our argument proceeds in two phases. First, we upper bound the expected tail GFT using the optimal single-item trading benchmark, $\gft^{*,\mathrm{ud}}_{\mathcal F^{1}}$. Second, we approximate this unit-demand benchmark using truthful posted-price menus in the original bilateral-trade instance, leveraging the corresponding single-dimensional "copies" instance and prophet inequalities.

Let $\mathcal I^{\mathrm{copies}}_{1}$ denote the copies instance associated with the one-trade family $\mathcal F^{1}$, as defined in \cref{def:copies_instance}.
By \cref{cor:copies_profit_approx},
\[
\frac12 \Pi^*_S(\mathcal I^{\mathrm{copies}}_{1})
+
\frac12 \Pi^*_B(\mathcal I^{\mathrm{copies}}_{1})
\ge
\frac{1}{3.15}\,\gft^{*,\mathrm{ud}}_{\mathcal F^{1}}.
\]

To upper bound the tail GFT, we must compare the sum of all large item surpluses to the single largest one. The following lemma provides a general probabilistic tool for this: it shows that for independent non-negative random variables, the expectation of their sum can be tightly bounded by the expectation of their maximum, provided the variables are rarely positive simultaneously. All the missing proofs can be found in \cref{app:warmup}
\begin{lemma} \label{lem:tail_expectation_bound}
Let $X_1, X_2, \dots, X_n$ be independent, non-negative random variables. The following inequality holds:
$$ \mathbb{E}\left[\sum_{i=1}^n X_i\right] \le \mathbb{E}\left[\max_i X_i\right] + \left( \sum_{i=1}^n \Pr[X_i > 0] \right) \cdot \mathbb{E}\left[\sum_{i=1}^n X_i\right] $$
\end{lemma}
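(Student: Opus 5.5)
The plan is to split the sum according to whether the maximum is attained alone or together with other positive coordinates. Pointwise,
\[
\sum_{i=1}^n X_i \le \max_i X_i + \sum_{i=1}^n X_i\,\mathbf{1}\bigl[\exists\, \ell\neq i:\ X_\ell>0\bigr].
\]
To see this, fix a realization. If at most one $X_i$ is positive, the sum equals the maximum and the extra term is nonnegative. If at least two are positive, then every positive $X_i$ has some other positive $X_\ell$, so the extra term already equals the full sum. Since all $X_i\ge 0$, the inequality holds in both cases.

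Next we take expectations and use independence on each summand of the extra term. For fixed $i$, the indicator $\mathbf{1}[\exists \ell\neq i: X_\ell>0]$ depends only on $(X_\ell)_{\ell\neq i}$, so it is independent of $X_i$. Hence
\[
\E\bigl[X_i\,\mathbf{1}[\exists \ell\neq i: X_\ell>0]\bigr]=\E[X_i]\cdot\Pr[\exists \ell\neq i: X_\ell>0].
\]
A union bound then gives
\[
\Pr[\exists \ell\neq i: X_\ell>0]\le \sum_{\ell\neq i}\Pr[X_\ell>0]\le \sum_{\ell=1}^n \Pr[X_\ell>0].
\]
Summing over $i$, the extra term is at most $\bigl(\sum_\ell \Pr[X_\ell>0]\bigr)\cdot \E\bigl[\sum_i X_i\bigr]$, which is exactly the claimed inequality.

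No step here is a real obstacle. The only point needing care is to attach the indicator to the \emph{other} variables, not to the event that $X_i$ is the maximum. That choice is what makes independence apply cleanly, and dropping the $\ell\neq i$ restriction only loosens the bound. Infinite expectations are harmless: if $\E[\sum X_i]=\infty$, the right-hand side is also infinite, because either the probability factor is positive or all $X_i=0$ almost surely.
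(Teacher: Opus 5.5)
Your proof is correct and follows the paper's argument essentially step for step: the same pointwise split using the indicator that some \emph{other} variable is positive, then independence, then the union bound. The only difference is cosmetic. You enlarge $\sum_{\ell\neq i}\Pr[X_\ell>0]$ to the full sum before summing over $i$, while the paper first swaps the order of summation and then enlarges $\E\bigl[\sum_{i\neq j}X_i\bigr]$ to $\E\bigl[\sum_i X_i\bigr]$; both give the same bound.
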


\begin{lemma}\label{lem:warmup_xos_tail_bound}
For the one-trade benchmark $\gft^{*,\mathrm{ud}}_{\mathcal F^{1}}$ and the induced tail objective $\gft^T(w)$, the expected tail gains from trade over the joint distribution of $(v,c)$ satisfies:
\[
\E[\gft^T(w)]
\le
\frac{1}{1-\ln 2}\,\gft^{*,\mathrm{ud}}_{\mathcal F^{1}}.
\]
\end{lemma}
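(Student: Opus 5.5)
The plan is to reduce the tail objective to a sum of independent per-item variables, and then combine \cref{lem:tail_expectation_bound} with a Markov-plus-independence bound on how often any single item's surplus exceeds $\tau$. Write $G=\gft^{*,\mathrm{ud}}_{\mathcal F^{1}}$, so $\tau=2G$. For each item $j$ let
\[
u_j=\max_{k\in[K]} w_j^k=\bigl(v(\{j\})-c_j\bigr)^+ ,
\]
using $v(\{j\})=\max_k a_j^k$ and the monotonicity of $x\mapsto(x-c_j)^+$. By definition of the one-trade benchmark, $G=\E[\max_j u_j]$.

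\textbf{Step 1 (pointwise reduction).} Bound the maximum over clauses by the sum of per-item maxima:
\[
\gft^T(w)=\max_k\sum_j (w_j^k-\tau)^+\le \sum_j \max_k (w_j^k-\tau)^+=\sum_j (u_j-\tau)^+ .
\]
Set $X_j=(u_j-\tau)^+$. Each $X_j$ is a function only of $(a_j^1,\dots,a_j^K,c_j)$, so the $X_j$ are independent and non-negative. Moreover $X_j\le u_j\le \max_\ell u_\ell$, so $\E[X_j]\le G<\infty$. Hence $\E[\sum_j X_j]$ is finite, which the rearrangement in Step 3 requires.

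\textbf{Step 2 (few items exceed $\tau$).} Let $p_j=\Pr[X_j>0]=\Pr[u_j>\tau]$. By Markov's inequality,
\[
\Pr\bigl[\max_j u_j>\tau\bigr]\le G/\tau=\tfrac12 .
\]
Independence then gives $\prod_j(1-p_j)=\Pr[\max_j u_j\le\tau]\ge \tfrac12$. Taking logarithms and using $p\le -\ln(1-p)$, we get
\[
\sum_j p_j\le \sum_j -\ln(1-p_j)\le \ln 2 .
\]
This step is the crux. It is exactly where the factor $2$ in $\tau$ and the constant $\ln 2$ arise, and it is the one place item independence is used beyond the definition of the $X_j$.

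\textbf{Step 3 (conclude).} \cref{lem:tail_expectation_bound} gives
\[
\E\Bigl[\sum_j X_j\Bigr]\Bigl(1-\sum_j p_j\Bigr)\le \E\bigl[\max_j X_j\bigr]\le \E\bigl[\max_j u_j\bigr]=G .
\]
Since $1-\sum_j p_j\ge 1-\ln 2>0$, dividing through and using Step 1 yields
\[
\E[\gft^T(w)]\le \E\Bigl[\sum_j X_j\Bigr]\le \frac{1}{1-\ln 2}\,G,
\]
which is the claimed bound.
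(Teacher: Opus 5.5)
Your proposal is correct and follows the paper's proof essentially step for step. It uses the same pointwise bound $\gft^T(w)\le\sum_j (u_j-\tau)^+$, the same Markov-plus-independence argument giving $\sum_j p_j\le\ln 2$ (your $p\le-\ln(1-p)$ is equivalent to the paper's $1-p\le e^{-p}$), and the same application of \cref{lem:tail_expectation_bound} followed by rearrangement. Your explicit check that $\E[\sum_j X_j]<\infty$ before rearranging is a small rigor point that the paper leaves implicit.
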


\begin{proof}
For each item $j$, let $Y_j = (v_j-c_j)^+$, let $X_j = (Y_j-\tau)^+$, and let $p_j = \Pr[Y_j>\tau]$.
Since $v_j=\max_k a_j^k$, we have $Y_j=\max_{k\in [K]} w_j^k$. Therefore, for every realization,
\[
\gft^T(w)
=
\max_{k\in [K]} \sum_{j=1}^m (w_j^k)^T
\le
\sum_{j=1}^m X_j.
\]
Hence, $\E[\gft^T(w)] \le \E\!\left[\sum_{j=1}^m X_j\right]$.

Next, we bound $\sum_{j=1}^m p_j$.
Since $\tau = 2\gft^{*,\mathrm{ud}}_{\mathcal F^{1}}$, applying Markov's inequality yields
\[
\Pr\!\left[\max_{j\in [m]} Y_j > \tau\right]
\le
\frac{\E[\max_j Y_j]}{\tau}
=
\frac{\gft^{*,\mathrm{ud}}_{\mathcal F^{1}}}{2\gft^{*,\mathrm{ud}}_{\mathcal F^{1}}}
=
\frac12.
\]
On the other hand, because the variables $Y_j$ are independent, we have
\[
\Pr\!\left[\max_{j\in [m]} Y_j > \tau\right]
=
1-\prod_{j=1}^m (1-p_j)
\ge
1-e^{-\sum_{j=1}^m p_j}.
\]
Combining these bounds gives $1-e^{-\sum_j p_j}\le \frac12$, which simplifies to $\sum_{j=1}^m p_j \le \ln 2$.

We now apply \cref{lem:tail_expectation_bound} directly to the independent non-negative random variables $X_1,\dots,X_m$. Because $\Pr[X_j>0] = \Pr[Y_j>\tau] = p_j$, and we established $\sum_{j=1}^m p_j \le \ln 2$ we have:
\[
\E\!\left[\sum_{j=1}^m X_j\right]
\le
\E[\max_j X_j]
+
(\ln 2)\,\E\!\left[\sum_{j=1}^m X_j\right] \qquad \Rightarrow \qquad \E\!\left[\sum_{j=1}^m X_j\right]
\le
\frac{1}{1-\ln 2}\,\E[\max_j X_j].
\]
% Rearranging terms, we obtain
% \[

% \]
Finally, because $X_j\le Y_j$ pointwise for every $j$, we have $\E[\max_j X_j] \le \E[\max_j Y_j] = \gft^{*,\mathrm{ud}}_{\mathcal F^{1}}$. Combining this with our earlier bounds proves the lemma.
\end{proof}

\paragraph{Implementing the copies benchmarks in the original market.}
The previous lemma reduces the tail bounding problem to the one-trade benchmark $\gft^{*,\mathrm{ud}}_{\mathcal F^{1}}$. By \cref{cor:copies_profit_approx}, this benchmark is bounded by the sum of the optimal single-dimensional profits in the copies instance: $\Pi^*_S(\mathcal I^{\mathrm{copies}}_{1})$ and $\Pi^*_B(\mathcal I^{\mathrm{copies}}_{1})$. 

We now show that both of these virtual surplus benchmarks can be approximated up to a factor of $2$ by truthful, one-item menus in the original multi-dimensional market. The core engine for this reduction is Samuel-Cahn's prophet inequality \cite{Cahn84}, which yields a uniform threshold that translates into an online, sequential posted-price mechanism for the single-dimensional copies instance. Crucially, by the equivalence lemma in \cite{CHMS10}, offering this exact same price vector as a simultaneous, take-it-or-leave-it menu to the original multi-dimensional buyer preserves the approximation guarantee.

\paragraph{Seller one-item menu.}
We first design the seller-side mechanism. Our goal is to extract a constant fraction of the seller's optimal single-dimensional profit benchmark, $\Pi^*_S(\mathcal I^{\mathrm{copies}}_{1})$, by constructing a truthful, simultaneous posted-price menu for the original buyer. We achieve this through a two-step reduction. In the first step, we analyze the single-dimensional copies instance and utilize prophet inequalities to compute an optimal virtual threshold. This threshold defines an order-oblivious sequential posted-price mechanism (OPM) that guarantees a constant fraction of the benchmark regardless of the order in which the pseudo-buyers are evaluated. In the second step, we translate this sequential pricing rule into a simultaneous menu for the original multi-dimensional market. Because the OPM's profit guarantee holds against any adversarial arrival sequence, offering the exact same price vector as a simultaneous, take-it-or-leave-it menu preserves the profit guarantee when the unit-demand buyer deterministically selects their utility-maximizing item.

\begin{lemma}\label{lem:warmup_tail_seller_threshold}
For every realized cost vector $c$, there exists a common threshold $\lambda(c)\ge 0$ and a possibly randomized price vector
$\mathbf p(c)=(p_1(c),\dots,p_m(c))$
such that the induced order-oblivious posted-price mechanism in $\mathcal I^{\mathrm{copies}}_{1}$ earns conditional expected seller profit at least
\[
\frac12 \,\E_v\!\left[\max_{j\in [m]} \bigl(\tilde\varphi_j(v_j)-c_j\bigr)^+ \,\middle|\, c\right].
\]
In the regular case one may take
\[
p_j(c)=\inf\{t : \tilde\varphi_j(t)-c_j\ge \lambda(c)\}.
\]
In the general ironed case, if the threshold intersects a flat ironed interval, one randomizes between the two boundary prices exactly as in prior work.
\end{lemma}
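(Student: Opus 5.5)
Fix the realized cost vector $c$. Conditional on $c$, the copies instance $\mathcal I^{\mathrm{copies}}_1$ is a single-dimensional single-item market. The pseudo-buyers $(1,j)$ have independent values $v_j$, and the benchmark is the expected maximum of the independent nonnegative random variables
\[
X_j = \bigl(\tilde\varphi_j(v_j)-c_j\bigr)^+ .
\]
The plan is to run Samuel-Cahn's prophet inequality on these $X_j$. Take $\lambda(c)$ to be the median of $\max_j X_j$, chosen so that $\Pr[\max_j X_j \ge \lambda(c)]$ and $\Pr[\max_j X_j > \lambda(c)]$ sandwich $1/2$. Any order-oblivious rule that accepts the first $j$ with $X_j \ge \lambda$ then obtains at least $\tfrac12\E[\max_j X_j \mid c]$. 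When the threshold is hit on an atom, I would use the standard randomized tie-breaking, accepting at the boundary with a suitable probability; this makes the stopping probability exactly $1/2$.

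The next step converts this threshold rule into prices. In the regular case $\tilde\varphi_j$ is nondecreasing, so the event $\{\tilde\varphi_j(v_j)-c_j\ge\lambda\}$ equals $\{v_j\ge p_j(c)\}$ with $p_j(c)=\inf\{t:\tilde\varphi_j(t)-c_j\ge\lambda(c)\}$. The sequential posted-price mechanism that offers pseudo-buyer $j$ price $p_j$ in any order therefore stops on exactly the same event as the prophet rule.

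It then remains to show that its expected profit (payments minus $c_j$) equals its expected virtual surplus $\E[\sum_j(\tilde\varphi_j(v_j)-c_j)\mathbf 1[\text{$j$ trades}]]$. For this I would invoke Myerson's lemma in its ironed form (\cref{app:virtualvalue}) for each pseudo-buyer. Because of independence, conditioned on the event that $j$ is reached, the allocation to $j$ is a threshold rule in $v_j$, so its revenue equals its expected ironed virtual value. This uses that the threshold is constant on ironed intervals, which is why randomization is needed when $\lambda$ lands inside a flat interval.

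The prophet argument is run in virtual-value space, where each accepted $j$ has $\tilde\varphi_j-c_j\ge\lambda\ge0$. Summing gives profit $=$ virtual surplus of the stopping rule $\ge \tfrac12\E[\max_j X_j\mid c]$.

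The main obstacle is the ironed case. On a flat interval $[a,b]$ of $\tilde\varphi_j$, the set $\{\tilde\varphi_j-c_j\ge\lambda\}$ need not be an interval boundary consistent with a single price. The revenue identity with ironed virtual values also holds only for allocations that are constant across each ironed interval. I would handle this by randomizing the price of item $j$ between the two endpoints $a$ and $b$. The mixing weight is chosen so that (i) the trade probability of $j$ matches the prophet rule with randomized tie-breaking, and (ii) the allocation is constant in expectation across the ironed interval, so that revenue still equals ironed virtual surplus. Checking that a single mixing weight achieves both, while preserving the exact $1/2$ stopping probability for Samuel-Cahn, is the delicate point. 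I would follow the treatment in prior work (Chawla et al.~\cite{CHMS10}) essentially verbatim.
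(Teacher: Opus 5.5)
Your proposal is correct and follows essentially the same route as the paper. The argument is Samuel-Cahn's threshold for the independent variables $X_j=(\tilde\varphi_j(v_j)-c_j)^+$, translated into posted prices (randomized between the endpoints of a flat interval of $\tilde\varphi_j$ in the ironed case), with Myerson's ironed payment identity converting the prophet reward into seller profit; the ironing details are deferred to prior work exactly as the paper does.

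The step you flag as delicate is routine, because the endpoint mixing weight is precisely the tie-acceptance probability of the threshold rule and the stopping probability varies continuously in it. The one boundary case needing care is $\lambda(c)=0$. There you should accept only strictly positive $X_j$ rather than randomize on ties, since ties at $X_j=0$ include negative virtual surplus and the price translation would then no longer match the prophet rule.
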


\begin{proof}
Fix the seller's realized cost vector $c$, and for each item $j$ define $X_j = (\tilde\varphi_j(v_j)-c_j)^+$.
Because the singleton values $v_1,\dots,v_m$ are independent, the random variables $X_1,\dots,X_m$ are independent and nonnegative.
Samuel-Cahn's prophet inequality therefore gives a threshold $\lambda(c)$ such that the rule that accepts the first $X_j$ exceeding $\lambda(c)$ obtains expected reward at least
\[
\frac12 \,\E_v\!\left[\max_{j\in [m]} X_j \,\middle|\, c\right].
\]

We can translate this virtual threshold back into posted prices.
For regular distributions, the price
\[
p_j(c)=\inf\{t : \tilde\varphi_j(t)-c_j\ge \lambda(c)\}
\]
makes buyer $j$ accept exactly when $X_j\ge \lambda(c)$.
For ironed distributions, a flat virtual-value interval may appear at the threshold, and then we randomize between the two boundary prices so that the acceptance probability matches the threshold rule exactly.

In the copies instance, each pseudo-buyer is single-parameter.
Hence, by Myerson's payment identity (with the standard boundary randomization on ironed intervals), the expected seller profit of the resulting posted-price mechanism equals its expected ironed virtual surplus.
Therefore the expected seller profit of this order-oblivious posted-price mechanism is at least
\[
\frac12 \,\E_v\!\left[\max_{j\in [m]} \bigl(\tilde\varphi_j(v_j)-c_j\bigr)^+ \,\middle|\, c\right]. \qedhere
\]
\end{proof}

\begin{proposition}\label{thm:warmup_tail_seller_menu}
There exists a DSIC-B and ex-post IR seller-posted-price mechanism $\cM_S^{\mathrm{tail}}$ for the original bilateral-trade instance such that the buyer may purchase at most one item and
\[
\Pi_S(\cM_S^{\mathrm{tail}})
\ge
\frac12\,\Pi^*_S(\mathcal I^{\mathrm{copies}}_{1}).
\]
\end{proposition}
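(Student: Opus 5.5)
The plan is to take the price vector $\mathbf p(c)$ from \cref{lem:warmup_tail_seller_threshold} and have the seller, after learning her realized cost $c$, post it as a simultaneous one-item menu. Define $\cM_S^{\mathrm{tail}}$ as follows: with realized cost $c$, the seller posts $\mathbf p(c)$, including the boundary randomization on ironed intervals. The buyer may then buy at most one item $j$ at price $p_j(c)$, or nothing. The buyer's choice is a take-it-or-leave-it selection from a menu that does not depend on her report, so the mechanism is DSIC-B. Since she only buys when it gives nonnegative utility, it is ex-post IR for the buyer. Payments flow directly from buyer to seller, so budget balance is automatic. Note that under the one-item restriction the XOS buyer's value for $\{j\}$ is exactly $v_j=v(\{j\})=\max_k a_j^k$. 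She therefore behaves as a unit-demand buyer with independent values $v_1,\dots,v_m$ and picks $\arg\max_j (v_j-p_j(c))$ among items with nonnegative utility.

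The key step is to show that, for each fixed $c$, the seller's profit $\sum_j \Pr[\text{buyer picks } j]\,(p_j(c)-c_j)$ is at least the OPM guarantee $\frac12\E_v[\max_j(\tilde\varphi_j(v_j)-c_j)^+\mid c]$. I would follow the CHMS10 argument. The guarantee in \cref{lem:warmup_tail_seller_threshold} comes from Samuel-Cahn's single threshold $\lambda(c)$, which holds for \emph{every}, even adversarial, arrival order of the pseudo-buyers $(1,j)$. Given a realization $v$, order the pseudo-buyers so that the unit-demand buyer's utility-maximizing item $j^*$ (if $v_{j^*}\ge p_{j^*}$) comes first. The OPM then sells exactly $j^*$ and the menu sells exactly $j^*$. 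This order depends on $v$, so I would invoke the stronger form of the prophet bound: with a uniform threshold, the profit is at least $\lambda\Pr[\text{some }X_j\ge\lambda]+\sum_j\E[(X_j-\lambda)^+]\Pr[\text{no other exceeds}]$. That bound only uses the fact that whenever anyone is above threshold, \emph{some} above-threshold pseudo-buyer is served, which holds for any selection rule among eligible ones, including the buyer's adaptive choice. The proposal step is to redo Samuel-Cahn's accounting with this selection rule.

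The main obstacle is this accounting, because it needs the contribution of the selected item to be measured in virtual surplus, while the seller is actually paid $p_j$. To handle it, I would write the seller's expected revenue from item $j$ as $\E[(p_j-c_j)\mathbf 1\{j \text{ chosen}\}]$. Conditional on $v_{-j}$, the event that $j$ is chosen is a threshold event in $v_j$: buy iff $v_j\ge p_j+\max(0,\max_{\ell\ne j}(v_\ell-p_\ell))$. Myerson's identity then lower-bounds the revenue by ironed virtual surplus, since the effective price is at least $p_j$ and the ironed virtual value above $p_j$ is at least $c_j+\lambda$. This gives revenue at least $\lambda\Pr[\text{sale}]$ plus the excess terms, up to ironing boundary effects that are handled by the randomization. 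Equivalently, I would cite CHMS10's lemma that for independent unit-demand buyers, the menu's revenue dominates the OPM revenue with the same prices. Finally, I would take expectation over $c$, which gives $\Pi_S(\cM_S^{\mathrm{tail}})\ge\frac12\E_{v,c}[\max_j(\tilde\varphi_j(v_j)-c_j)^+]=\frac12\Pi^*_S(\mathcal I^{\mathrm{copies}}_1)$, since in $\mathcal F^1$ the copies optimum selects a single pair.
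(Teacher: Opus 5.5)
Your overall route is the paper's: for each fixed $c$, post the Samuel-Cahn prices of \cref{lem:warmup_tail_seller_threshold} as a simultaneous one-item menu, transfer the order-oblivious guarantee to the menu via \cite{CHMS10}, and average over $c$. If you simply cite that reduction, as the paper does, you are done. The self-contained accounting you offer for the adaptive selection, however, has its key inequality backwards. Under the buyer's choice, item $j$ is bought iff $v_j\ge\theta_j:=p_j+\max\{0,\max_{\ell\ne j}(v_\ell-p_\ell)\}$, yet she pays only $p_j\le\theta_j$. Myerson's identity for this threshold allocation gives (regular case) $\E[\tilde\varphi_j(v_j)\mathbf 1\{j\text{ chosen}\}]=\E[\theta_j\mathbf 1\{j\text{ chosen}\}]\ge\E[p_j\mathbf 1\{j\text{ chosen}\}]$. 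So the virtual surplus of the realized allocation \emph{upper}-bounds the revenue. The fact that the effective price is at least $p_j$ works against you, and you cannot lower-bound revenue by that virtual surplus.

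The repair is to apply Myerson only where the allocation is a genuine posted price. Split the profit as $(p_j-c_j)\Pr[j\text{ chosen}]=\lambda\Pr[j\text{ chosen}]+(p_j-c_j-\lambda)\Pr[j\text{ chosen}]$. The margin $p_j-c_j-\lambda$ is nonnegative, because $t\ge\tilde\varphi_j(t)\ge c_j+\lambda$ for every $t$ above the price. You may therefore shrink the event in the second term to $\{v_j\ge p_j\}\cap\{v_\ell<p_\ell\ \forall \ell\ne j\}$; there $j$ is the only willing copy and is surely bought. Independence and the posted-price form of Myerson's identity then give $(p_j-c_j-\lambda)\Pr[v_j\ge p_j]=\E[(X_j-\lambda)^+]$. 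This holds per realization of the ironing coin, since the boundary prices sit at contact points of the revenue curve with its hull. Summing over $j$ yields $\lambda\Pr[\text{sale}]+\sum_j\E[(X_j-\lambda)^+]\Pr[v_\ell<p_\ell\ \forall \ell\ne j]$, which is exactly Samuel-Cahn's lower bound and hence at least $\frac12\E_v[\max_j X_j\mid c]$. Alternatively, follow \cite{CHMS10} and avoid adaptivity altogether: order the copies by increasing margin $p_\ell-c_\ell$, an order that does not depend on $v$. Pointwise, the margin of the buyer's chosen item is at least that of the first willing copy in this order, so the menu's profit dominates the profit of a fixed-order OPM.
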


\begin{proof}
For each realized cost vector $c$, \cref{lem:warmup_tail_seller_threshold} gives a price vector $\mathbf p(c)$ that is a $2$-approximate order-oblivious posted-price mechanism in the copies instance.
By the approximation-preserving reduction (Theorem 4 in ~\cite{CHMS10}), the same price vector induces a truthful posted-price mechanism for the original unit-demand agent with the same approximation guarantee.
In the present warm-up instance this mechanism is simply a simultaneous menu of prices from which the buyer may choose at most one item.
Truthfulness and individual rationality are immediate: the buyer either declines all items or chooses the item that maximizes her utility.
Taking expectation over $c$ gives
\[
\Pi_S(\cM_S^{\mathrm{tail}})
\ge
\frac12\,\Pi^*_S(\mathcal I^{\mathrm{copies}}_{1}).
\qedhere
\]
\end{proof}

\paragraph{Buyer one-item menu.} 
We now design the buyer-side mechanism. The argument here is the exact procurement analogue of the seller's mechanism design above. Because the original seller is additive across items, the buyer can treat them as a unit-supply agent who chooses at most one item to trade. We defer the argument to \cref{app:warmup}.

\begin{proposition}\label{thm:warmup_tail_buyer_menu}
There exists a DSIC-S and ex-post IR procurement mechanism $\cM_B^{\mathrm{tail}}$ for the original bilateral-trade instance such that the seller may choose at most one item to trade and
\[
\Pi_B(\cM_B^{\mathrm{tail}})
\ge
\frac12\,\Pi_B^*(\mathcal I^{\mathrm{copies}}_{1}).
\]
\end{proposition}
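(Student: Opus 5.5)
The plan is to mirror the proof of \cref{thm:warmup_tail_seller_menu} with the roles of buyer and seller exchanged. We fix the buyer's realized type, which fixes the singleton values $v_j = v(\{j\}) = \max_k a_j^k$. In the copies instance $\mathcal I^{\mathrm{copies}}_{1}$ the seller then acts as a single unit-supply agent with independent single-dimensional costs $c_1,\dots,c_m$. The procurer's profit benchmark is the conditional quantity
\[
\E_c\!\left[\max_{j\in[m]}\bigl(v_j-\tilde\psi_j(c_j)\bigr)^+ \,\middle|\, v\right].
\]
Taking expectation over $v$ gives $\Pi^*_B(\mathcal I^{\mathrm{copies}}_1)$, since $\mathcal F^1$ allows at most one trade.

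\textbf{Procurement threshold.} For fixed $v$, define $X_j=(v_j-\tilde\psi_j(c_j))^+$. These are independent and nonnegative because the costs are independent across items, so Samuel-Cahn's prophet inequality \cite{Cahn84} gives a threshold $\lambda(v)$ whose accept-first rule earns at least half of $\E[\max_j X_j\mid v]$. We convert this into posted procurement prices. In the regular case we set
\[
q_j(v)=\sup\{s : v_j-\tilde\psi_j(s)\ge \lambda(v)\},
\]
so that the seller of item $j$ accepts exactly when $c_j\le q_j(v)$, which is equivalent to $X_j\ge\lambda(v)$ by monotonicity of $\tilde\psi_j$. On a flat ironed interval we randomize between the two boundary prices. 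By the procurement version of Myerson's payment identity, each single-parameter pseudo-seller's expected payment equals the expected ironed virtual cost of the accepted trade. Hence the buyer's expected profit $\E[v_j-q_j]$ equals the expected virtual surplus $\E[v_j-\tilde\psi_j(c_j)]$ over accepted trades, so the order-oblivious posted-price mechanism in the copies instance earns at least $\tfrac12\E[\max_j X_j\mid v]$.

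\textbf{Translating to the original seller.} We offer the vector $\mathbf q(v)$ as a simultaneous take-it-or-leave-it menu. The additive seller may sell at most one item $j$ at price $q_j$ and chooses the one maximizing $q_j-c_j$, or none. This is DSIC-S and ex-post IR: the seller's only choices are to select a menu entry or decline, and the buyer's values are fixed. The step I expect to be the main obstacle is justifying that the seller's utility-maximizing choice still earns the buyer at least the order-oblivious guarantee. I would handle it with the procurement analogue of the \cite{CHMS10} reduction. The OPM guarantee holds for every arrival order, including adversarial ones. The seller's choice among items with $c_j\le q_j$ is exactly one such adversarial selection among acceptable items, and it is independent of the future costs in the relevant sense. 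Therefore the profit bound transfers, exactly as in \cref{thm:warmup_tail_seller_menu}.

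\textbf{Conclusion.} Taking expectation over $v$ yields $\Pi_B(\cM_B^{\mathrm{tail}})\ge\tfrac12\Pi^*_B(\mathcal I^{\mathrm{copies}}_1)$.
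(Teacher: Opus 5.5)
Your proposal follows essentially the same route as the paper: fix $v$, apply Samuel-Cahn's prophet inequality to $X_j=(v_j-\tilde\psi_j(c_j))^+$, convert the threshold into (possibly randomized) procurement prices via Myerson's procurement identity, and transfer the order-oblivious guarantee to the simultaneous menu via the procurement analogue of the \cite{CHMS10} reduction, exactly as in \cref{lem:warmup_tail_buyer_threshold}. Your phrase that the seller's choice is ``independent of the future costs in the relevant sense'' is vague, and it can be made precise as follows: the seller's pick $j^*$ satisfies $v_{j^*}-q_{j^*}\ge \min_{j:\,c_j\le q_j}(v_j-q_j)$, and the right-hand side is the buyer's profit in the copies instance when pseudo-sellers arrive in the fixed, cost-independent order of increasing margin $v_j-q_j(v)$, to which the order-oblivious guarantee applies.
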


% \begin{proof}
% The proof is the exact analogue of \cref{thm:warmup_tail_seller_menu}, with values and costs interchanged.
% For each realized singleton-value vector $v$, \cref{lem:warmup_tail_buyer_threshold} gives a procurement-price vector $\mathbf q(v)$ that is a $2$-approximate order-oblivious posted-price mechanism in the seller copies instance.
% The same reduction argument applies verbatim after viewing the original seller as the unit-supply agent who chooses which item, if any, to trade.
% In the present warm-up instance, the resulting mechanism is simply a simultaneous procurement menu, and it is DSIC and IR because the seller either rejects all offers or chooses the item that maximizes her utility $q_j(v)-c_j$.
% Taking expectation over $v$ yields
% \[
% \Pi_B(\cM_B^{\mathrm{tail}})
% \ge
% \frac12\,\Pi_B^*(\mathcal I^{\mathrm{copies}}_{1}).
% \qedhere
% \]
% \end{proof}

\begin{corollary}[Tail approximation via truthful one-item menus]\label{thm:warmup_profit_to_ud_gft}
Let $\cM_S^{\mathrm{tail}}$ and $\cM_B^{\mathrm{tail}}$ be the mechanisms from
\cref{thm:warmup_tail_seller_menu,thm:warmup_tail_buyer_menu}.
Then
\[
\frac12\,\Pi_S(\cM_S^{\mathrm{tail}})
+
\frac12\,\Pi_B(\cM_B^{\mathrm{tail}})
\ge
\frac{1}{6.3}\,\gft^{*,\mathrm{ud}}_{\mathcal F^{1}}.
\]
\end{corollary}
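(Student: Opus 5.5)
This is a direct chaining of the two menu guarantees with \cref{cor:copies_profit_approx}, specialized to the one-trade family $\mathcal F^{1}$.

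First, average the two propositions. \cref{thm:warmup_tail_seller_menu} gives $\Pi_S(\cM_S^{\mathrm{tail}})\ge \frac12\Pi^*_S(\mathcal I^{\mathrm{copies}}_{1})$. \cref{thm:warmup_tail_buyer_menu} gives $\Pi_B(\cM_B^{\mathrm{tail}})\ge \frac12\Pi^*_B(\mathcal I^{\mathrm{copies}}_{1})$. Taking each with weight $\frac12$ and adding yields
\[
\frac12\,\Pi_S(\cM_S^{\mathrm{tail}})+\frac12\,\Pi_B(\cM_B^{\mathrm{tail}})
\ge
\frac12\Bigl(\frac12\,\Pi^*_S(\mathcal I^{\mathrm{copies}}_{1})+\frac12\,\Pi^*_B(\mathcal I^{\mathrm{copies}}_{1})\Bigr).
\]

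Second, apply \cref{cor:copies_profit_approx} with $\mathcal F=\mathcal F^{1}$. This bounds the parenthesized term below by $\frac{1}{3.15}\gft^{*,\mathrm{ud}}_{\mathcal F^{1}}$. Multiplying by $\frac12$ gives the claimed factor $\frac{1}{6.3}$.

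The only thing to check is that the copies instance $\mathcal I^{\mathrm{copies}}_{1}$ used inside the two propositions is the same one to which \cref{cor:copies_profit_approx} applies. It must also be a legitimate single-dimensional downward-closed matching market with independent values and costs.

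Under $\mathcal F^{1}$, each pseudo-buyer $j$ has value $v(\{j\})=\max_k a_j^k$, and these values are independent across $j$. The seller's costs $c_j$ are independent of each other and of the values. The family $\mathcal F^{1}$ is downward closed. So \cref{thm:sd_profit_approx} applies. \cref{prop:copies_equivalence} then identifies $\gft^*(\mathcal I^{\mathrm{copies}}_{1})$ with $\gft^{*,\mathrm{ud}}_{\mathcal F^{1}}$.

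A minor point to watch is that $\Pi^*_S$ and $\Pi^*_B$ in the propositions use the same ironed virtual values and costs as in \cref{def:copy_optimal_profits}, which they do by construction. I expect no real obstacle beyond keeping these identifications straight.
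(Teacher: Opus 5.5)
Your proposal is correct and follows the paper's proof: you average the two $\frac12$-guarantees from \cref{thm:warmup_tail_seller_menu,thm:warmup_tail_buyer_menu} to get $\frac14\Pi^*_S(\mathcal I^{\mathrm{copies}}_{1})+\frac14\Pi^*_B(\mathcal I^{\mathrm{copies}}_{1})$, then apply \cref{cor:copies_profit_approx} with $\mathcal F=\mathcal F^{1}$ to get the factor $\frac{1}{2\cdot 3.15}=\frac{1}{6.3}$. Your additional checks (independence, downward-closedness of $\mathcal F^{1}$, and identifying $\gft^*(\mathcal I^{\mathrm{copies}}_{1})$ with $\gft^{*,\mathrm{ud}}_{\mathcal F^{1}}$) are sound; the paper simply takes them for granted.
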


\begin{proof}
By \cref{thm:warmup_tail_seller_menu,thm:warmup_tail_buyer_menu},
\[
\frac12\,\Pi_S(\cM_S^{\mathrm{tail}})
+
\frac12\,\Pi_B(\cM_B^{\mathrm{tail}})
\ge
\frac14\,\Pi^*_S(\mathcal I^{\mathrm{copies}}_{1})
+
\frac14\,\Pi^*_B(\mathcal I^{\mathrm{copies}}_{1}).
\]
Applying \cref{cor:copies_profit_approx} to the copies instance associated with $\mathcal F^{1}$ gives
\[
\frac14\,\Pi^*_S(\mathcal I^{\mathrm{copies}}_{1})
+
\frac14\,\Pi^*_B(\mathcal I^{\mathrm{copies}}_{1})
\ge
\frac{1}{2\cdot 3.15}\,\gft^{*,\mathrm{ud}}_{\mathcal F^{1}}
=
\frac{1}{6.3}\,\gft^{*,\mathrm{ud}}_{\mathcal F^{1}}. \qedhere
\]
\end{proof}

\subsection{Putting Everything Together}\label{subsec:warm_up_everything}

We are now ready to combine the core and tail analyses and prove the main result of this section.
Let $\mathcal{I}$ denote the original multi-dimensional bilateral-trade instance. We formally define our optimal benchmarks as follows: let $\Pi^*_S(\mathcal{I})$ and $\Pi^*_B(\mathcal{I})$ denote the expected seller profit and buyer profit achievable by the optimal BIC-B (resp BIC-S) and IIR mechanisms in $\mathcal{I}$, respectively.

Analogous to the single-dimensional storyline (see \cref{thm:sd_profit_approx}), we first state a bound based on the optimal buyer and seller profits. We note that this does not require the mediator (mechanism designer) to compute these complex optimal mechanisms. Instead, we simply delegate all pricing power to either the buyer or the seller with equal probability, relying on the chosen agent to rationally compute and deploy their own profit-maximizing auction.

\begingroup
\renewcommand{\thetheorem}{\ref{thm:BTmainoptimal}}
\begin{theorem}
\mainresultBT
\end{theorem}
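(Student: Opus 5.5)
The plan has two parts: verify the incentive and budget properties of the delegation mechanism, then lower bound its GFT by half the seller's optimal profit plus half the buyer's optimal profit. Those two profits are in turn lower bounded by the simple menus of \cref{subsec:warmup_core_approx,subsec:warmup_tail_approx}.

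\textbf{Properties.} I would model the seller-optimal branch as follows. The seller, knowing $c$, selects a BIC-B and IIR-B buyer-facing mechanism maximizing her expected profit. As a direct mechanism, the seller reports $\tilde c$ and the designer runs the mechanism optimal for $\tilde c$.
\begin{itemize}
\item The seller cannot gain by misreporting: the mechanism chosen for $c$ maximizes her utility under true cost $c$ among all BIC-B/IIR-B mechanisms, and the one chosen for $\tilde c$ is among them. This gives DSIC-S, hence BIC-S.
\item The seller is IR because the null mechanism is available and yields $0$.
\item The buyer is BIC and IIR by the constraints imposed on the chosen mechanism.
\item All payments flow directly from buyer to seller, so the branch is budget balanced.
\end{itemize}
The buyer-optimal branch is symmetric. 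Mixing the two branches with probability $1/2$ each preserves BIC, IIR and ex-ante WBB, since each condition is linear in the mixture.

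\textbf{GFT accounting.} In each branch, the realized GFT equals the buyer's utility plus the seller's utility. In the seller branch the buyer's interim utility is nonnegative, so the expected GFT is at least $\Pi^*_S(\mathcal I)$; symmetrically, the buyer branch has expected GFT at least $\Pi^*_B(\mathcal I)$. Hence
\[
\gft(\text{mech})\ge \tfrac12\Pi^*_S(\mathcal I)+\tfrac12\Pi^*_B(\mathcal I).
\]
Now use the simple menus as advice mechanisms:
\begin{itemize}
\item The uniform mixture of $\cM_E$ and $\cM_S^{\mathrm{tail}}$ is a feasible seller mechanism, so $\Pi^*_S(\mathcal I)\ge \tfrac12\Pi(\cM_E)+\tfrac12\Pi_S(\cM_S^{\mathrm{tail}})$.
\item $\cM_B^{\mathrm{tail}}$ is a feasible buyer mechanism, so $\Pi^*_B(\mathcal I)\ge \Pi_B(\cM_B^{\mathrm{tail}})$.
\end{itemize}
Combining these and applying \cref{thm:warmup_profit_to_ud_gft} gives, with $G=\gft^{*,\mathrm{ud}}_{\mathcal F^1}$,
\[
\gft(\text{mech})\ge \tfrac14\Pi(\cM_E)+\tfrac12\Bigl(\tfrac12\Pi_S(\cM_S^{\mathrm{tail}})+\tfrac12\Pi_B(\cM_B^{\mathrm{tail}})\Bigr)\ge \tfrac14\Pi(\cM_E)+\tfrac{1}{12.6}G .
\]
On the upper-bound side, \cref{lem:warmup_core_tail_upper,lem:warmup_xos_tail_bound} give $\gft^*\le \mu+\tfrac{1}{1-\ln 2}G$.

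\textbf{Case split.} Fix $\beta>0$ and recall $\tau=2G$.
\begin{itemize}
\item If $\mu\ge\beta\tau$, then \cref{lem:warmup_core_entry_fee} gives $\tfrac14\Pi(\cM_E)\ge \tfrac{9\beta}{256(\beta+1)}\mu$. Together with the $G/12.6$ term, this covers $\mu+3.26\,G$ up to a constant.
\item If $\mu<2\beta G$, I drop the core term and use $\gft^*\le (2\beta+\tfrac{1}{1-\ln 2})G$ against $\gft(\text{mech})\ge G/12.6$.
\end{itemize}
The final step is to optimize $\beta$ so that the worse of the two cases gives the ratio $1/44$. In the first case I would not simply take the minimum of the two ratios; instead I would bound $\mu$ and $G$ separately and add the two contributions.

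\textbf{Main obstacle.} The delicate step is obtaining the precise constant $1/44$. This requires tuning $\beta$ and combining the two lower-bound terms carefully in the first case. It may also require using the stronger split (weights $1/4$ and $1/2$) more cleverly than above.

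A conceptual point also needs care. The ``optimal mechanism'' in each branch must be formalized so that the delegating agent's truthfulness and the WBB property hold simultaneously. The mechanism is chosen as a function of the private type, so the IC argument should be the menu-selection argument given in the Properties paragraph.
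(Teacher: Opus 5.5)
\textbf{Gap: averaging the seller's advice mechanisms makes $1/44$ unreachable.} Your treatment of the incentive and budget properties is fine, as is the accounting $\gft(\text{mech})\ge\frac12\Pi^*_S(\mathcal I)+\frac12\Pi^*_B(\mathcal I)$. The failure is the advice step. You lower bound $\Pi^*_S(\mathcal I)$ by the \emph{average} $\frac12\Pi(\cM_E)+\frac12\Pi_S(\cM_S^{\mathrm{tail}})$. This discards a factor of $2$ on the seller side, and the resulting inequalities cannot certify $1/44$, however you tune $\beta$ or combine terms.

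\begin{itemize}
\item Your lower bound is at best $\frac14\Pi(\cM_E)+G/12.6$. The tail part cannot be improved: with weights $\frac14$ and $\frac12$, the worst case consistent with \cref{thm:warmup_profit_to_ud_gft} puts all tail profit on the seller menu.
\item Write $x=\mu/G$ and apply \cref{lem:warmup_core_entry_fee} with the largest admissible $\beta=x/2$. The certified ratio is $\bigl(\frac{9x^2}{256(x+2)}+\frac{1}{12.6}\bigr)/(x+\gamma)$ with $\gamma=1/(1-\ln 2)$. This drops to about $1/47$ near $x\approx 1.2$.
\item Your single-threshold split does worse still. Getting the first case to $1/44$ needs $\beta\gtrsim 1.83$, and then the second case gives only about $1/87$.
\end{itemize}

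What you have written is essentially the accounting for the simple protocol $\cM^{\mathrm{simple}}_{\mathrm{rand}}$. There the seller genuinely randomizes between the two menus, and the paper claims only $1/88$.

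\textbf{The missing step.} The seller-\emph{optimal} mechanism dominates each advice mechanism separately:
\[
\Pi^*_S(\mathcal I)\ge\max\{\Pi(\cM_E),\,\Pi_S(\cM_S^{\mathrm{tail}})\},
\]
so you may invoke a different one in each case. Take the threshold $\mu\ge 1.75\tau=3.5G$, i.e.\ $\beta=7/4$.
\begin{itemize}
\item Core case: $\frac12\Pi^*_S(\mathcal I)\ge\frac12\cdot\frac{63}{704}\mu$, together with $\gft^*\le(1+\gamma/3.5)\mu$, gives about $\gft^*/43.2$.
\item Tail case: $\frac12\Pi^*_S(\mathcal I)+\frac12\Pi^*_B(\mathcal I)\ge\frac12\Pi_S(\cM_S^{\mathrm{tail}})+\frac12\Pi_B(\cM_B^{\mathrm{tail}})\ge G/6.3$, together with $\gft^*\le(3.5+\gamma)G$, gives about $\gft^*/42.6$.
\end{itemize}
No mixing of core and tail contributions within a case is needed. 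This is exactly the paper's argument.
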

\addtocounter{theorem}{-1}
\endgroup

\begin{proof}
We first verify the properties of $\cM^{\mathrm{opt}}_{\mathrm{rand}}$. By definition, the optimal bilateral-trade mechanisms for $\mathcal{I}$ are BIC, ex-post IR, and WBB. Because $\cM^{\mathrm{opt}}_{\mathrm{rand}}$ simply randomizes over these two valid mechanisms using a coin flip that is independent of the agents' types and reports, it inherits all three properties.

Because every mechanism in question is weakly budget balanced, the expected GFT is always strictly lower bounded by the expected profit. Hence, we can lower bound the GFT by evaluating the expected profit of the randomized mechanism: $\frac{1}{2}\Pi^*_S(\mathcal{I}) + \frac{1}{2}\Pi^*_B(\mathcal{I})$.

Let
\[
\gamma = \frac{1}{1-\ln 2} \approx 3.26.
\]
By \cref{lem:warmup_core_tail_upper,lem:warmup_xos_tail_bound}, the first-best gains from trade is upper bounded by:
\begin{equation}\label{eq:warmup_master_upper_bound}
\gft^*
\le
\mu + \gamma\,\gft^{*,\mathrm{ud}}_{\mathcal F^{1}}.
\end{equation}
This inequality dictates our proof template: either the core term $\mu$ is large, in which case the entry-fee mechanism extracts sufficient profit, or the core is small, in which case the one-trade tail benchmark controls the entire instance. We split the analysis into two cases based on the relative size of the core.

\smallskip
\noindent\emph{Case 1: $\mu \ge 1.75\tau = 3.5\,\gft^{*,\mathrm{ud}}_{\mathcal F^{1}}$.}
Applying \cref{lem:warmup_core_entry_fee} with $\beta=7/4$, the cost-price two-part tariff earns expected seller profit at least:
\[
\frac{9(7/4)}{64(7/4+1)}\mu = \frac{63}{704}\mu.
\]
Since this tariff is a feasible seller-side mechanism in $\mathcal{I}$, we have $\Pi^*_S(\mathcal{I}) \ge \frac{63}{704}\mu$. 
Therefore, the expected profit of our randomized mechanism is at least $\frac{1}{2}\left(\frac{63}{704}\mu\right) = \frac{63}{1408}\mu$.
Combining \cref{eq:warmup_master_upper_bound} with the case assumption yields $\gft^* \le (1+\gamma/3.5)\mu$. Substituting this lower bound for $\mu$ provides the final guarantee:
\[
\frac{1}{2}\Pi^*_S(\mathcal{I}) \ge \frac{63}{1408}\mu \ge \frac{63}{1408} \cdot \frac{\gft^*}{1+\gamma/3.5} \ge \frac{\gft^*}{44}.
\]

\smallskip
\noindent\emph{Case 2: $\mu < 1.75\tau = 3.5\,\gft^{*,\mathrm{ud}}_{\mathcal F^{1}}$.}
Because the expected core is small, \cref{eq:warmup_master_upper_bound} implies the tail benchmark bounds the total GFT:
\[
\gft^* \le (3.5+\gamma)\,\gft^{*,\mathrm{ud}}_{\mathcal F^{1}}.
\]
To lower bound the optimal profits in the original instance $\mathcal{I}$, we observe that the one-item menus $\cM_S^{\mathrm{tail}}$ and $\cM_B^{\mathrm{tail}}$ are valid, feasible mechanisms for the original multi-dimensional agents. Therefore, the optimal mechanisms trivially dominate the profits of these explicit menus: $\Pi^*_S(\mathcal{I}) \ge \Pi_S(\cM_S^{\mathrm{tail}})$ and $\Pi^*_B(\mathcal{I}) \ge \Pi_B(\cM_B^{\mathrm{tail}})$. Applying \cref{thm:warmup_profit_to_ud_gft} and substituting the GFT bound yields:
\[
\frac{1}{2}\Pi^*_S(\mathcal{I}) + \frac{1}{2}\Pi^*_B(\mathcal{I})
\ge
\frac{1}{2}\Pi_S(\cM_S^{\mathrm{tail}}) + \frac{1}{2}\Pi_B(\cM_B^{\mathrm{tail}})
\ge
\frac{1}{6.3}\,\gft^{*,\mathrm{ud}}_{\mathcal F^{1}}
\ge
\frac{\gft^*}{6.3(3.5+\gamma)}
\ge
\frac{\gft^*}{44}.
\]

In both cases, uniformly randomizing between the optimal mechanisms guarantees at least a $1/44$ fraction of the first-best GFT.
\end{proof}

The previous theorem establishes a constant-factor approximation assuming the chosen agent can compute and deploy their optimal bilateral-trade mechanism. However, optimal multi-dimensional mechanisms are complex and often computationally intractable. 

The power of our explicitly constructed mechanisms—the cost-price two-part tariff and the one-item posted-price menus—is that they bypass this complexity entirely. They serve as simple, computationally efficient \emph{advice}. When pricing power is delegated, the chosen agent does not need to solve an intractable mechanism design problem; they can simply deploy our recommended candidate auction. Because this explicitly defined advice is always available to them, whatever utility-maximizing mechanism they ultimately decide to run will yield an expected profit at least as large as the profit guaranteed by our advice. By evaluating these simple candidate mechanisms, we obtain a constructive constant-factor approximation.

\begingroup
\renewcommand{\thetheorem}{\ref{thm:btmainsuboptimal}}
\begin{theorem}
\mainresultBTsub
\end{theorem}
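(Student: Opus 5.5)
The plan is to repeat the two-case template from the proof of \cref{thm:BTmainoptimal}. The only change is that the optimal profits $\Pi^*_S(\mathcal I)$ and $\Pi^*_B(\mathcal I)$ are replaced by the profits of the explicit advice mechanisms. The factor-$2$ loss comes from the seller now splitting probability $1/2$ between $\cM_E$ and $\cM_S^{\mathrm{tail}}$.

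\textbf{Feasibility.} First I verify that $\cM_{\mathrm{rand}}^{\mathrm{simple}}$ is a valid mechanism. $\cM_E$ is a menu: the buyer self-selects whether to enter and then which bundle to buy at cost. It is therefore DSIC and ex-post IR for the buyer, and the seller receives $E$ plus $c(S)$, so the seller's IR holds. $\cM_S^{\mathrm{tail}}$ and $\cM_B^{\mathrm{tail}}$ (called $\cM_B^{\mathrm{copy}}$ in the statement) are DSIC and ex-post IR for the responding side by \cref{thm:warmup_tail_seller_menu,thm:warmup_tail_buyer_menu}. All three mechanisms are posted-price mechanisms, so they are ex-post strongly budget balanced. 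The offering agent's own reports do not enter, since the offering agent picks the menu given its own type. The coins are independent of types. Hence the mixture inherits BIC, IIR and WBB, and expected GFT is at least expected total profit. Writing $\Pi(\cdot)$ for the expected profit of the offering side, I get
\[
\gft(\cM_{\mathrm{rand}}^{\mathrm{simple}})\ \ge\ \tfrac14\Pi_S(\cM_E)+\tfrac14\Pi_S(\cM_S^{\mathrm{tail}})+\tfrac12\Pi_B(\cM_B^{\mathrm{tail}}).
\]
Dropping the $1/2$ on the buyer term to $1/4$, the right-hand side is at least
\[
\tfrac14\Pi_S(\cM_E)\quad\text{and at least}\quad\tfrac12\Bigl(\tfrac12\Pi_S(\cM_S^{\mathrm{tail}})+\tfrac12\Pi_B(\cM_B^{\mathrm{tail}})\Bigr).
\]

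\textbf{Case split.} Next I split on $\mu$ versus $1.75\tau$, exactly as before, using the upper bound \eqref{eq:warmup_master_upper_bound}.

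In Case 1, where $\mu\ge 1.75\tau$, \cref{lem:warmup_core_entry_fee} with $\beta=7/4$ and $E=\mu/4$ gives $\Pi_S(\cM_E)\ge\frac{63}{704}\mu$. The GFT is then at least $\frac{63}{2816}\mu\ge \frac{63}{2816}\cdot\frac{\gft^*}{1+\gamma/3.5}$. This is exactly half the Case 1 bound of \cref{thm:BTmainoptimal}, hence at least $\gft^*/88$.

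In Case 2, \cref{thm:warmup_profit_to_ud_gft} gives a GFT of at least $\frac{1}{2}\cdot\frac{1}{6.3}\gft^{*,\mathrm{ud}}_{\mathcal F^1}\ge \frac{\gft^*}{12.6(3.5+\gamma)}\ge\gft^*/88$. Here $12.6\cdot 6.76\approx 85.2$.

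\textbf{Main obstacle.} The main subtlety is conceptual rather than computational. The entry fee $E=\mu/4$ and the prices in $\cM_S^{\mathrm{tail}}$ may depend on the seller's realized cost $c$, and $\cM_B^{\mathrm{tail}}$ prices depend on the buyer's values. I need to argue that because the offering agent knows its own type and the posted mechanism is DSIC for the responder, this posting is a legitimate (truthful) delegated choice. In particular, the seller posting costs as prices does not create an incentive issue: the seller chooses the menu, so there is no report to misreport.

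I also need to check that the randomization over $\cM_E$ and $\cM_S^{\mathrm{tail}}$ is under the seller's control. Since the statement says the seller \emph{can} do this, the bound is for this recommended advice. If the seller instead deviates to a better mechanism, her profit only increases, and WBB still keeps GFT above profit. The remaining arithmetic is routine.
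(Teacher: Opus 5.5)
Your proposal is correct and matches the paper's proof essentially line for line. It uses the same decomposition $\frac14\Pi_S(\cM_E)+\frac14\Pi_S(\cM_S^{\mathrm{tail}})+\frac12\Pi_B(\cM_B^{\mathrm{tail}})$, the same split at $\mu\ge 1.75\tau$ with $\beta=7/4$ in Case 1, and in Case 2 the same bound $\frac14 S_{\mathrm{tail}}+\frac12 B_{\mathrm{tail}}\ge\frac12\bigl(\frac12 S_{\mathrm{tail}}+\frac12 B_{\mathrm{tail}}\bigr)$ fed into \cref{thm:warmup_profit_to_ud_gft}. One harmless inaccuracy: the entry fee $E=\mu/4$ does not depend on $c$, since $\mu$ averages over both $v$ and $c$; only the item prices of $\cM_E$ and $\cM_S^{\mathrm{tail}}$ depend on $c$.
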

\addtocounter{theorem}{-1}
\endgroup

\begin{proof}
By linearity of expectation, the expected profit of the grand mechanism $\cM_{\mathrm{rand}}^{\mathrm{simple}}$ is:$$\Pi(\cM_{\mathrm{rand}}^{\mathrm{simple}}) = \frac{1}{4}\Pi_S(\cM_E) + \frac{1}{4}\Pi_S(\cM_S^{\mathrm{tail}}) + \frac{1}{2}\Pi_B(\cM_B^{\mathrm{tail}}).$$
We use the same decomposition bounds as above.
\smallskip
\noindent\emph{Case 1: $\mu \ge 1.75\tau = 3.5\,\gft^{*,\mathrm{ud}}_{\mathcal F^{1}}$.}
By \cref{lem:warmup_core_entry_fee} with $\beta=7/4$, the entry-fee mechanism $M_E$ earns expected seller profit at least $\frac{63}{704}\mu$.

Ignoring the non-negative tail profits, the protocol guarantees:
$$\Pi(M_{\mathrm{rand}}^{\mathrm{simple}}) \ge \frac{1}{4} \left(\frac{63}{704}\mu\right) = \frac{63}{2816}\mu.$$
Combining \cref{eq:warmup_master_upper_bound} with the case assumption yields $\gft^* \le (1+\gamma/3.5)\mu$. Substituting this lower bound for $\mu$ provides the final guarantee:$$\Pi(M_{\mathrm{rand}}^{\mathrm{simple}}) \ge \frac{63}{2816}\mu \ge \frac{63}{2816} \cdot \frac{\gft^*}{1+\gamma/3.5} \ge \frac{\gft^*}{88}.$$
\smallskip
\noindent\emph{Case 2: $\mu < 1.75\tau = 3.5\,\gft^{*,\mathrm{ud}}_{\mathcal F^{1}}$.}
Because the expected core is small, \cref{eq:warmup_master_upper_bound} implies the tail benchmark bounds the total GFT:
$$\gft^* \le (3.5+\gamma)\,\gft^{*,\mathrm{ud}}_{\mathcal F^{1}}.$$
We now evaluate the expected profit of the candidate mechanisms. Let $S_{\mathrm{tail}} = \Pi_S(\cM_S^{\mathrm{tail}})$ and $B_{\mathrm{tail}} = \Pi_B(\cM_B^{\mathrm{tail}})$. By ignoring the non-negative core profit from $M_E$, we have:
$$\Pi(M_{\mathrm{rand}}^{\mathrm{simple}}) \ge \frac{1}{4}S_{\mathrm{tail}} + \frac{1}{2}B_{\mathrm{tail}}.$$
Because $B_{\mathrm{tail}} \ge 0$, we can cleanly factor out $1/2$ and apply the bound established in \cref{thm:warmup_profit_to_ud_gft}. Chaining this with our GFT upper bound yields:
$$\Pi(M_{\mathrm{rand}}^{\mathrm{simple}}) \ge \frac{1}{2}\left(\frac{1}{2}S_{\mathrm{tail}} + B_{\mathrm{tail}}\right) \ge \frac{1}{2}\left(\frac{1}{2}S_{\mathrm{tail}} + \frac{1}{2}B_{\mathrm{tail}}\right) \ge \frac{1}{12.6}\,\gft^{*,\mathrm{ud}}_{\mathcal F^{1}} \ge \frac{\gft^*}{12.6(3.5+\gamma)} \ge \frac{\gft^*}{88}.$$

In both cases, the randomized protocol guarantees at least a $1/88$ fraction of the first-best GFT.\end{proof}

\section{Multiple XOS Buyers}

In this section, we extend our results from the bilateral-trade warm-up to our general setting: a market with $n$ multi-dimensional buyers, each with an XOS valuation over independent items, and a single additive seller with independent item costs. We establish that by delegating pricing power to either the seller or the buyers, we can still achieve a constant-factor approximation to the first-best gains from trade (GFT).

In particular, we propose a mechanism, $\mathcal M^{\mathrm{final}}$, as a randomized delegation protocol: with probability $p$, we delegate pricing power to the seller; with probability $1-p$, we deploy the randomized buyer-side mechanism, denoted as $\mathcal M_B^{\mathrm{copy}}$, from \cref{thm:main_tailbuyer}. As in the bilateral trade setting, we first state a bound assuming the seller computes their optimal BIC and IIR mechanism.

\begin{theorem}\label{thm:main}
\mainresult
% Consider any Bayesian multidimensional instance with $n$ XOS buyers and one additive
% seller over independent items. the mechanism $\mathcal M^{\mathrm{final}}$ is BIC, interim IR, and ex-ante weakly budget balanced. Moreover, it satisfies:$$\gft(\mathcal M^{\mathrm{final}}) \ge \alpha\,\gft^*,$$where $\alpha \approx 1/1033$.
\end{theorem}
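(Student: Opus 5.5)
The plan mirrors the proof of \cref{thm:BTmainoptimal}, now with the unit-demand matching benchmark $\gft^{*,\mathrm{ud}}_{\mathcal F^m}$ replacing the one-trade benchmark. $\cM^{\mathrm{final}}$ runs the seller-optimal BIC/IIR mechanism with probability $p$ and the buyer-side mechanism $\cM_B^{\mathrm{copy}}$ of \cref{thm:main_tailbuyer} with probability $1-p$.

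\textbf{Properties.} The seller-optimal mechanism is BIC, IIR and WBB by definition. $\cM_B^{\mathrm{copy}}$ is DSIC, IIR and WBB, because its VCG payments are nonnegative by the negative-externality structure. A type-independent coin preserves all three properties. Since both components are WBB, the expected GFT is at least the total expected profit
\[
p\,\Pi^*_S(\mathcal I)+(1-p)\,\Pi_B(\cM_B^{\mathrm{copy}}).
\]

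\textbf{Upper bound on $\gft^*$.} I first redo the core--tail decomposition for $n$ buyers. Set $w_{ij}^{(k)}=(v_{ij}^{(k)}-c_j)^+$ and truncate at $\tau=\Theta(\gft^{*,\mathrm{ud}}_{\mathcal F^m})$. As in \cref{lem:warmup_core_tail_upper}, $\gft^*\le \mu+\E[\gft^T]$, where $\mu$ is the expected capped core welfare. For the tail, let $Y_{ij}=(v_i(t_i,\{j\})-c_j)^+$. Bound $\gft^T$ by a sum over items of $\max_i (Y_{ij}-\tau)^+$, so at most one buyer per item counts. Then I would use Markov plus independence across pairs, as in \cref{lem:warmup_xos_tail_bound}. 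The goal is that the total probability mass above $\tau$ is at most a constant; \cref{lem:tail_expectation_bound} then gives $\E[\gft^T]\le\gamma'\,\gft^{*,\mathrm{ud}}_{\mathcal F^m}$. The tricky point is the comparison object. A maximum over all pairs is dominated by the matching benchmark, and a per-buyer maximum over items is feasible in $\mathcal F^m$ only after conflicts are resolved. To handle this I would compare against a matching that picks, for each item, its single best buyer, and apply the lemma item by item. Assumption~\ref{ass:standing} presumably supplies whatever regularity this step needs.

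\textbf{Lower bounds.} I would use three ingredients.
\begin{enumerate}
\item \emph{Core.} \cref{prop:multi-core-aspe} shows that the seller's ASPE earns a constant fraction of $\mu$ minus a constant times $\gft^{*,\mathrm{ud}}_{\mathcal F^m}$, whenever $\mu\ge\beta\tau$.
\item \emph{Seller tail.} \cref{thm:sellertailmain} shows that a seller posted-price mechanism earns at least a constant times $\Pi^*_S(\mathcal I^{\mathrm{copies}})$.
\item \emph{Buyer tail.} \cref{thm:main_tailbuyer} shows $\Pi_B(\cM_B^{\mathrm{copy}})\ge c_B\,\Pi^*_B(\mathcal I^{\mathrm{copies}})$.
\end{enumerate}
These are advice mechanisms, so $\Pi^*_S(\mathcal I)$ dominates each seller-side one. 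The seller-optimal profit is also at least the average of the two seller profits. Combining \cref{cor:copies_profit_approx} with the seller-tail and buyer-tail bounds gives, for a constant $c_T$,
\[
p\,\Pi^*_S+(1-p)\,\Pi_B\ \ge\ c_T\,\gft^{*,\mathrm{ud}}_{\mathcal F^m}.
\]

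\textbf{Case split and tuning.} Split on whether $\mu\ge\beta\tau$. In the large-core case, the ASPE term dominates after subtracting the unit-demand loss, which is small relative to $\mu$. In the small-core case, $\gft^*\le(\beta'+\gamma')\,\gft^{*,\mathrm{ud}}_{\mathcal F^m}$ and the tail bound applies. I would then choose $p$ (the introduction suggests $p=27/43$) and $\beta$ to balance the two cases, which should give $\alpha\approx1/1033$.

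The main obstacle is the core step, namely making the ASPE entry fees extract residual core surplus in the presence of costs, and its additive loss term. If the tail bound against $\gft^{*,\mathrm{ud}}_{\mathcal F^m}$ resists the item-by-item argument, I would instead bound it by the sum over buyers of their one-trade benchmarks, and relate that sum to $\mathcal F^m$ via a greedy matching.
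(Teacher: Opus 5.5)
Your plan is the paper's proof: a $p$-coin between $\mathcal M_S^\star$ and $\cM_B^{\mathrm{copy}}$, the bound $\gft^*\le\mu+\gamma U$ with $U=\gft^{*,\mathrm{ud}}_{\mathcal F^m}$ from \cref{lem:global_split,lem:global_tail_bound}, and a case split at $\mu\ge\theta U$ with $\theta=57+16\eta_S/3.15\approx 57.75$. The threshold must exceed $57$ because \cref{prop:multi-core-aspe} holds unconditionally and gives $\tfrac18\mu-\tfrac{57}{8}U$. Then $p=27/43$ makes $p\eta_S=(1-p)\eta_B=4/43$, so both cases yield $8/(43\cdot 3.15(\theta+\gamma))\approx 1/1033$.

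One correction to your tail sketch. The only comparison object needed, both for Markov and for $\E[\max_j X_j]\le U$, is the single best pair $\max_{i,j}Y_{ij}$, taken over the item-level variables $\max_i(Y_{ij}-\tau)^+$; these are independent across items but not across pairs, since pairs on the same item share $c_j$. The per-item best buyers do not form a matching, and your fallback via summed per-buyer one-trade benchmarks would lose a factor of $n$.
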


The core--tail decomposition itself extends cleanly from the bilateral-trade warm-up: once we replace the one-trade benchmark by the unit-demand matching benchmark $\gft^{*,\mathrm{ud}}_{\mathcal F^m}$, the same truncation-based decomposition goes through. The main new work in the multiple-buyer case is therefore not the decomposition, but the mechanism design needed to implement the core and tail bounds under truthfulness and budget balance.

There are two additional difficulties.

\begin{itemize}
    \item \textbf{Tail approximation now requires a new buyer-side mechanism.}
    In the single XoS buyer case, once we reduce the tail to the corresponding unit-demand benchmark, the relevant copy-instance profit benchmarks can be implemented by simple one-agent posted-price menus. With multiple buyers, the analogous buyer-side route becomes substantially harder. A natural idea is to maximize total buyer profit and then charge VCG payments to make the buyers truthful. However, as our counterexample (\cref{obs:vcg-deficit}) in the technical overview shows, this approach is not always weakly budget balanced. Thus, for the tail, we cannot rely on an ``optimal auction + VCG'' template. Instead, we must design from scratch a truthful, interim IR, weakly budget-balanced buyer-side mechanism that still captures a constant fraction of the buyer-side copy benchmark.

    \item \textbf{A static entry fee no longer suffices for the core.}
    In the single-buyer warm-up, pricing items at cost turns the buyer's utility from entering into the realized gains from trade, so a single entry fee is enough to extract a constant fraction of the concentrated core. With multiple buyers, this breaks down because buyers arrive sequentially and the set of available items depends on earlier purchases. The continuation utility of buyer $i$ is therefore determined by the remaining items at her arrival, and a static fee cannot simultaneously extract substantial surplus when many attractive items remain and still preserve participation when the remaining items are sparse. To handle this dependence on the remaining items, we adopt the ASPE framework of Cai and Zhao~\cite{CZ17}: anonymous posted prices control contention across buyers, while the entry fee $\delta_i(S)$ is allowed to depend on the currently available set $S$.
\end{itemize}

With these two issues isolated, the rest of the proof follows the same blueprint as in the bilateral-trade case. In \cref{subsec:multi_core_tail} we extend the decomposition using the matching benchmark $\gft^{*,\mathrm{ud}}_{\mathcal F^m}$. We then approximate the core via the ASPE mechanism, and approximate the tail via the copy-instance benchmarks together with the new buyer-side truthful mechanism. Combining these pieces yields the final constant-factor approximation theorem.

\subsection{The Core--Tail Decomposition}\label{subsec:multi_core_tail}

Similar to the bilateral-trade setting, we first decompose the first-best GFT for every realization into a \emph{core} part and a \emph{tail} part.

For a realized type profile $t$, let $a_{ij}^k$ denote buyer $i$'s realized value for item $j$ under clause $k$ (formally, $a_{ij}^k = v_{ij}^{(k)}(t_{ij})$). For every buyer $i$, clause $k$, and item $j$, define the realized clause-wise surplus:
\[
w_{ij}^k = (a_{ij}^k - c_j)^+, \qquad\text{where } (x)^+ = \max\{x,0\}.
\]

Because each buyer's valuation is XOS, the total gains from trade can be represented by choosing the optimal allocation of items and the optimal clause for each buyer.

\begin{observation}\label{obs:multi_positive}
For every realization $(t,c)$, the first-best GFT can be written as:
\[
\max_{(S_1,\dots,S_n)} \gft(t,c;S_1,\dots,S_n) = \max_{(S_1,\dots,S_n)}\ \max_{k_1,\dots,k_n} \sum_{i=1}^n\sum_{j\in S_i} w_{ij}^{k_i}.
\]
\end{observation}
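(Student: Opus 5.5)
The identity is proved pointwise for a fixed realization $(t,c)$, in the same way as \cref{lem:warmup_xos_clause_form}, but now with an outer maximization over disjoint allocations. There are two inequalities to show. For the right-hand side we allow the inner sum to range over $j\in S_i$ for every feasible profile $(S_1,\dots,S_n)$ and every clause choice $(k_1,\dots,k_n)$.

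For the direction ``$\le$'', fix an optimal feasible profile $(S_1,\dots,S_n)$. For each buyer $i$, let $k_i$ be a clause attaining $v_i(t_i,S_i)=\sum_{j\in S_i} a_{ij}^{k_i}$; this is possible by the XOS representation in \cref{def:xos_ind}. The GFT of the profile is
\[
\sum_i \sum_{j\in S_i}\bigl(a_{ij}^{k_i}-c_j\bigr).
\]
Since the bundles are disjoint, every traded item's cost is subtracted exactly once, so the costs can be distributed over the pairs $(i,j)$. Each term $a_{ij}^{k_i}-c_j$ is at most $w_{ij}^{k_i}$. Hence the left side is bounded by the right side evaluated at this particular profile and clause choice.

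For the direction ``$\ge$'', fix any profile $(S_i)_i$ and clauses $(k_i)_i$. Define $S_i'=\{j\in S_i : a_{ij}^{k_i}>c_j\}$. The new profile $(S_1',\dots,S_n')$ is still disjoint, hence feasible. Its GFT satisfies
\[
\sum_i v_i(t_i,S_i')-\sum_i c(S_i')\ \ge\ \sum_i\sum_{j\in S_i'}\bigl(a_{ij}^{k_i}-c_j\bigr)=\sum_i\sum_{j\in S_i} w_{ij}^{k_i}.
\]
The inequality holds because $v_i(t_i,S_i')=\max_k\sum_{j\in S_i'}a_{ij}^k\ge \sum_{j\in S_i'}a_{ij}^{k_i}$. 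The equality holds because the items dropped from $S_i$ are exactly those with $w_{ij}^{k_i}=0$. Taking the maximum over all choices then gives the claim.

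The only step requiring any care is this pruning in the ``$\ge$'' direction. It is needed because the $(\cdot)^+$ in $w_{ij}^{k}$ hides items with negative surplus, which must be removed from the bundle rather than traded. Pruning keeps the bundles disjoint, so feasibility is preserved and there is no real obstacle.
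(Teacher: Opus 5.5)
Your proof is correct and matches the paper's reasoning. The paper states this as an observation without a separate proof, relying on the same clause-wise argument as in \cref{lem:warmup_xos_clause_form}: fix the clauses, then keep exactly the items with positive clause-wise surplus. Your two inequalities make that argument explicit, using disjointness of the bundles so that each traded item's cost is charged to exactly one buyer--item pair.
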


\paragraph{The matching benchmark and the threshold.}
To define the truncation threshold for multiple buyers, we naturally extend the one-trade benchmark to a bipartite matching benchmark. Recall the matching feasible family:
\[
\mathcal{F}^m = \bigl\{A \subseteq [n] \times [m] : |A \cap ([n] \times \{j\})| \le 1 \text{ for all } j \in [m]\bigr\}.
\]
This constraint ensures that each buyer is allocated at most one item. Under $\mathcal{F}^m$, the multi-dimensional XOS setting collapses: the relevant valuation for buyer $i$ and item $j$ simplifies to the singleton value $v_i(\{j\}) = \max_{k \in [K]} a_{ij}^k$. Because item types are drawn independently, these singleton values act exactly as independent item values in a standard unit-demand market.

By \cref{def:optimal_ud_gft}, we define the optimal unit-demand matching GFT benchmark as:
\begin{equation}
\gft^{*,\mathrm{ud}}_{\mathcal{F}^m} = \E_{t,c}\!\left[\max_{M \in \mathcal{F}^m} \sum_{(i,j)\in M} \bigl(v_i(t_i, \{j\})-c_j\bigr)^+\right].
\end{equation}

We set our global truncation threshold relative to this matching benchmark:
\[
\tau := 2\,\gft^{*,\mathrm{ud}}_{\mathcal{F}^m}.
\]

\begin{definition}[Core and tail benchmarks]\label{def:global-core-tail}
Fix the threshold $\tau = 2\,\gft^{*,\mathrm{ud}}_{\mathcal{F}^m}$. For every buyer $i$, clause $k$, and item $j$, define the truncated core and the excess tail by:
\[
(w_{ij}^k)^C = \min\{w_{ij}^k,\tau\}, \qquad \text{and} \qquad (w_{ij}^k)^T = \bigl(w_{ij}^k-\tau\bigr)^+.
\]
So that $w_{ij}^k = (w_{ij}^k)^C + (w_{ij}^k)^T$ pointwise. For every realization $(t,c)$, we define the core and tail objectives as:
\begin{align*}
\gft^C(t,c) &= \max_{(S_1,\dots,S_n)}\ \max_{k_1,\dots,k_n} \sum_{i=1}^n \sum_{j\in S_i} (w_{ij}^{k_i})^C, \\
\gft^T(t,c) &= \max_{(S_1,\dots,S_n)}\ \max_{k_1,\dots,k_n} \sum_{i=1}^n \sum_{j\in S_i} (w_{ij}^{k_i})^T.
\end{align*}
We write the expected core contribution as $\mu=\E[\gft^C]$. For a fixed cost vector $c$, we also write $\mu(c)=\E_t[\gft^C(t,c)]$.
\end{definition}

As in the single buyer case, splitting the clause-wise surpluses directly bounds the total expected GFT.

\begin{lemma}\label{lem:global_split}
For every realization $(t,c)$,
\[
\max_{(S_1,\dots,S_n)} \gft(t,c;S_1,\dots,S_n) \le \gft^C(t,c)+\gft^T(t,c).
\]
Consequently,
\[
\gft^* \le \mu + \E[\gft^T].
\]
\end{lemma}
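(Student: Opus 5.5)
The proof will mirror \cref{lem:warmup_core_tail_upper}, adapted to the joint maximization over allocations and clause profiles. Fix a realization $(t,c)$.

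By \cref{obs:multi_positive}, the first-best GFT equals
\[
\max_{(S_1,\dots,S_n)} \max_{k_1,\dots,k_n} \sum_{i=1}^n \sum_{j\in S_i} w_{ij}^{k_i}.
\]
Let $(S_1^\star,\dots,S_n^\star)$ and $(k_1^\star,\dots,k_n^\star)$ attain this maximum. Both maxima range over finitely many choices, so a maximizer exists.

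Next, use the pointwise identity $w_{ij}^k=(w_{ij}^k)^C+(w_{ij}^k)^T$ from \cref{def:global-core-tail}. It rewrites the optimal value as
\[
\sum_{i}\sum_{j\in S_i^\star}(w_{ij}^{k_i^\star})^C \;+\; \sum_{i}\sum_{j\in S_i^\star}(w_{ij}^{k_i^\star})^T .
\]
The pair $\bigl((S_i^\star),(k_i^\star)\bigr)$ is feasible in the maximization defining $\gft^C(t,c)$: the disjointness constraint on bundles and the free choice of clauses are the same in both problems. Hence the first sum is at most $\gft^C(t,c)$. By the same reasoning, the second sum is at most $\gft^T(t,c)$. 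This is the inequality $\max_x(f(x)+g(x))\le\max_x f(x)+\max_x g(x)$ applied to the joint index $x=(S,k)$.

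For the expectation bound, all terms are nonnegative and measurable, since they are finite maxima of measurable functions of $(t,c)$. Taking $\E_{t,c}$ of both sides and using linearity gives $\gft^*\le \E[\gft^C]+\E[\gft^T]=\mu+\E[\gft^T]$.

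There is no real obstacle here. The only point to check is that the core and tail objectives optimize over exactly the same feasible set of (allocation, clause-profile) pairs as the first-best representation in \cref{obs:multi_positive}. That is immediate from the definitions.
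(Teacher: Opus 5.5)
Your proof is correct and matches the paper's argument essentially step for step. Like the paper, you fix a maximizing allocation and clause profile from \cref{obs:multi_positive}, split each $w_{ij}^{k_i^\star}$ into its core and tail parts, bound each sum by the corresponding maximum, and take expectations; your remarks on existence of the maximizer and on measurability are small details the paper leaves implicit.
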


\begin{proof}
Let $(S_1^*,\dots,S_n^*)$ and $k_1^*,\dots,k_n^*$ be the partition and clauses that attain the maximum first-best GFT in \cref{obs:multi_positive}. Then,
\[
\max_{(S_1,\dots,S_n)} \gft(t,c;S_1,\dots,S_n) = \sum_{i=1}^n \sum_{j\in S_i^*} \bigl((w_{ij}^{k_i^*})^C + (w_{ij}^{k_i^*})^T\bigr).
\]
Because the allocation $(S_1^*,\dots,S_n^*)$ is feasible, the core sum is strictly upper-bounded by the unconstrained maximum $\gft^C(t,c)$, and the tail sum is strictly upper-bounded by $\gft^T(t,c)$. Taking the expectations of both sides proves the second claim.
\end{proof}

\subsection{Proof Overview}

\begin{figure}[h]
    \centering
    \resizebox{\textwidth}{!}{
\begin{tikzpicture}[
    node distance=0.6cm and 0.3cm,
    box/.style={
        rectangle,
        rounded corners,
        % draw=blue!70!black,
        draw=citec,
        % fill=blue!5,
        very thick,
        align=center,
        minimum width=4.5cm,
        minimum height=1.2cm,
        font=\small\sffamily
    },
    arrow/.style={-{Stealth[scale=1.2]}, thick, blue!70!black},
    label_font/.style={font=\scriptsize\itshape, blue!80!black}
]

% Root node
\node (root) [box] {\textbf{First-Best GFT ($\gft^*$)} \\ Decomposed via \cref{lem:global_split}};

% Level 1: Core vs Tail
\node (core) [box, below left=0.8cm and 0.3cm of root] {\textbf{Expected Core GFT ($\mu$)} 
% \\ \cref{lem:global_split
};
\node (tail) [box, below right=0.8cm and 0cm of root] {\textbf{Expected Tail GFT ($\mathbb{E}[\gft^T]$)} 
% \\ \cref{lem:global_split}
};

% Core Path
\node (aspe) [box, below=1.8cm of core] {\textbf{Approximated by ASPE} \\ \textbf{Mechanism} 
\\ \cref{prop:multi-core-aspe}
};

% Tail Path intermediate step
\node (ud_bound) [box, below=1.2cm of tail] {\textbf{Upper Bounded by Unit-Demand} \\ \textbf{Matching GFT ($\gft^{*,\mathrm{ud}}_{\mathcal{F}^m}$)} \\ \cref{lem:global_tail_bound}};

% Final Tail sub-branches
\node (buyer_tail) [box, below left=1.5cm and -1.0cm of ud_bound] {\textbf{Buyer-Side Tail Approximation} \\ $\cM_B^{\mathrm{copy}}$ from \cref{thm:main_tailbuyer}};
\node (seller_tail) [box, below right=1.5cm and -1.0cm of ud_bound] {\textbf{Seller-Side Tail Approximation} \\ $\cM_S^{\mathrm{svq}}$ from \cref{thm:sellertailmain}};

% Drawing arrows
\draw [arrow] (root.south) -- (core.north);
\draw [arrow] (root.south) -- (tail.north);
\draw [arrow] (core.south) -- (aspe.north);
\draw [arrow] (tail.south) -- (ud_bound.north);
\draw [arrow] (ud_bound.south) -- (buyer_tail.north);
\draw [arrow] (ud_bound.south) -- (seller_tail.north);

\end{tikzpicture}
}
 \caption{Proof Overview for Multiple XOS Buyer Case}
    \label{fig:placeholder}
\end{figure}
In \cref{sec:buyercore}, we propose an anonymous sequential posted price with entry fee (ASPE) mechanism, whose profit approximates the core GFT up to constant factor. In \cref{subsec:global_tail}, we approximate tail GFT by mechanisms that approximate the corresponding copies instance profit benchmarks. Finally, in \cref{sec:final}, we aggregate the result and prove our main theorem.
\newcommand{\UD}{\mathrm{UD}}
\newcommand{\FAV}{\mathrm{FAV}}
\newcommand{\NONFAV}{\mathrm{NONFAV}}
\newcommand{\CORE}{\mathrm{CORE}}
\newcommand{\TAIL}{\mathrm{TAIL}}
\newcommand{\POSTREV}{\mathrm{PostRev}}
\newcommand{\supp}{\mathrm{supp}}
\newcommand{\pos}[1]{\left(#1\right)^+}
\newcommand{\cF}{\mathcal F}
\newcommand{\cE}{\mathcal E}
\newcommand{\cA}{\mathcal A}
\newcommand{\1}{\mathbbm 1}
\newcommand{\Prb}{\Pr}
% \section{Multiple XOS buyers}

\section{Approximating the Core via an ASPE Mechanism}\label{sec:buyercore}

In this section, we show how to approximately extract the \emph{core} benchmark in the $n$-buyer setting.
Recall from \cref{def:global-core-tail} that, for a fixed seller cost vector $c$,
\[
\mu(c)=\E_t[\gft^C(t,c)]
\]
denotes the expected contribution of the core part.
For every fixed $c$, we construct a buyer-truthful mechanism whose expected seller profit captures a constant fraction of $\mu(c)$, up to an additive loss controlled by a fixed-cost unit-demand matching benchmark.

At first glance, one might hope to mimic the bilateral-trade case and charge each buyer a single static entry fee before offering items at posted prices.
In the multi-buyer setting, however, such a static fee is too coarse.
The reason is that buyers arrive sequentially, and a buyer's utility depends on the set of remaining items after earlier buyers have chosen their bundles.
A fee that does not depend on the remaining set of items cannot simultaneously extract substantial surplus when many attractive items remain and still allow participation when the remaining inventory is sparse.

To handle this dependence on the remaining inventory, we build on the \emph{Anonymous Sequential Posted-Pricing with Entry Fees} (ASPE) framework of~\cite{CZ17}.
Cai and Zhao use ASPE to extract revenue from a truncated one-sided XOS welfare benchmark.
Once the seller's cost vector is fixed, our core GFT objective can be represented as welfare in an analogous one-sided XOS instance whose item-level values are capped net gains from trade.
This allows us to follow the same high-level mechanism.

The mechanism derives anonymous item markups $Q_j$ from supporting prices for some benchmark allocation and visits the buyers sequentially.
When buyer $i$ is reached and $S$ is the set of remaining items, she is shown the cost-plus-markup prices $c_j+Q_j$ and an entry fee $\delta_i(S)$ calibrated to the median of her residual surplus from $S$.
The markups control competition among buyers and generate profit from the items that sell, while the inventory-dependent entry fees extract part of the surplus that remains after pricing.
The main work is to adopt this one-sided analysis to our original two-sided market.
We show that the fixed-cost core fits the one-sided XOS framework, that the surplus used to set the entry fees is attainable under the buyers' true valuations at the proposed prices, and that the losses in the analysis are controlled by the fixed-cost unit-demand matching benchmark
\[
\gft^{\mathrm{ud}}_{\mathcal F^m}(c)
=
\E_t\!\left[
\max_{\substack{M\subseteq[n]\times[m]\\ M\text{ is a matching}}}
\sum_{(i,j)\in M}\bigl(v_i(t_i,\{j\})-c_j\bigr)^+
\right].
\]
We next define the induced fixed-cost instance, the corresponding ASPE parameters, and the resulting mechanism. The proof of the guarantee is deferred to \cref{app:buyercore}.

\paragraph{The induced fixed-cost core instance.}
Fix a seller cost vector $c$.
For every buyer $i$, define the induced core valuation
\[
v_i^C(t_i,S)=\max_k\sum_{j\in S}(w_{ij}^k)^C.
\]
Because $c$ is fixed and each coefficient $(w_{ij}^k)^C$ depends only on the item-specific type $t_{ij}$, the induced valuation remains XOS over independent items.
By the definition of $\gft^C(t,c)$,
\[
\mu(c)
=
\E_t\!\left[
\max_{(S_1,\dots,S_n)}
\sum_{i=1}^n v_i^C(t_i,S_i)
\right].
\]

Let
\[
\sigma^C(t)=\bigl(S_1^C(t),\dots,S_n^C(t)\bigr)
\]
be an allocation rule maximizing the induced core-welfare objective, and define
\[
q_{ij}=\Pr_t\!\bigl[j\in S_i^C(t)\bigr].
\]
Feasibility of $\sigma^C$ implies that $\sum_i q_{ij}\le 1$ for every item $j$.
For each buyer-item pair $(i,j)$, let
\[
V_{ij}(t_{ij})
=
v_i^C(t_i,\{j\})
=
\min\!\left\{
\bigl(v_i(t_i,\{j\})-c_j\bigr)^+,\tau
\right\}.
\]

\paragraph{From the core benchmark to posted prices.}
The posted prices are derived from a truncated version of the induced core instance.
We first define
\[
\beta_{ij}
=
\inf\Bigl\{
x\ge 0:
\Pr_{t_{ij}}\!\bigl[V_{ij}(t_{ij})\ge x\bigr]
\le \tfrac14 q_{ij}
\Bigr\}.
\]
The cutoff $\beta_{ij}$ ties the probability of an unusually large singleton surplus to the benchmark allocation probability $q_{ij}$.
Together with $\sum_iq_{ij}\le1$, this ensures that competition generated by such large realizations is limited for every item.\footnote{If a distribution has an atom at one of the thresholds used in this section, we adopt the standard threshold-splitting convention, as in~\cite{CZ17}, so that the corresponding tail-probability constraint can be treated as holding with equality.}

For every buyer $i$, define
\[
\chi_i
=
\inf\Bigl\{
x\ge 0:
\sum_{j=1}^m
\Pr_{t_{ij}}\!\bigl[
V_{ij}(t_{ij})\ge \beta_{ij}+x
\bigr]
\le \frac12
\Bigr\},
\]
and let
\[
C_i(t_i)
=
\bigl\{
j\in[m]:
V_{ij}(t_{ij})<\beta_{ij}+\chi_i
\bigr\}.
\]
The threshold $\chi_i$ removes the remaining realizations in which buyer $i$'s induced surplus may be dominated by an exceptional singleton value.
The contribution discarded by this truncation will be charged to the unit-demand matching benchmark.
Define the resulting valuation
\[
v_i^0(t_i,S)
=
v_i^C\bigl(t_i,S\cap C_i(t_i)\bigr).
\]

Since $v_i^0(t_i,\cdot)$ is XOS, it admits exact supporting prices.
For every $t_i$ and every set $S\subseteq[m]$, let
\[
\{\gamma_j^S(t_i)\}_{j\in S}
\]
be exact supporting prices for $v_i^0(t_i,S)$, so that
\[
v_i^0(t_i,S)
=
\sum_{j\in S}\gamma_j^S(t_i)
\]
and
\[
v_i^0(t_i,T)
\ge
\sum_{j\in T}\gamma_j^S(t_i)
\qquad
\text{for every }T\subseteq S.
\]
We use these supporting prices to define the anonymous markups
\[
Q_j
=
\frac12
\sum_{i=1}^n
\E_t\!\left[
\mathbf 1\!\bigl[j\in S_i^C(t)\bigr]\,
\gamma_j^{S_i^C(t)}(t_i)
\right].
\]
Intuitively, $Q_j$ is the amount that the mechanism seeks to extract from item $j$ beyond reimbursing the seller's transfer cost $c_j$.
Accordingly, the actual posted price of item $j$ will be $c_j+Q_j$, and the seller earns the markup $Q_j$ whenever the item sells.

\paragraph{Entry fees.}
Even after the preceding truncation, a buyer's residual surplus at the posted markups may be too sensitive to the realization of a single item.
We therefore apply one final truncation so that this residual surplus concentrates around its median.
For every buyer $i$, define
\[
\eta_i
=
\inf\Bigl\{
x\ge 0:
\sum_{j=1}^m
\Pr_{t_{ij}}\!\bigl[
V_{ij}(t_{ij})
\ge
\max\{\beta_{ij},Q_j+x\}
\bigr]
\le \frac12
\Bigr\},
\]
and let
\[
Y_i(t_i)
=
\bigl\{
j\in[m]:
V_{ij}(t_{ij})<Q_j+\eta_i
\bigr\}.
\]
Define the final truncated valuation
\[
\widehat v_i^C(t_i,S)
=
v_i^0\bigl(t_i,S\cap Y_i(t_i)\bigr).
\]

For every available set $S\subseteq[m]$, define the auxiliary residual surplus
\[
\widetilde u_i(t_i,S)
=
\max_{T\subseteq S}
\left(
\widehat v_i^C(t_i,T)-Q(T)
\right),
\qquad
Q(T)=\sum_{j\in T}Q_j,
\]
and set the entry fee to its median:
\[
\delta_i(S)
=
\operatorname{Median}_{t_i}
\!\left[
\widetilde u_i(t_i,S)
\right].
\]
The final truncation ensures that $\widetilde u_i(t_i,S)$ has sufficiently small sensitivity to any one item, allowing its expectation to be compared with its median.

Although $\widetilde u_i$ is defined through the auxiliary truncated valuation, it is attainable in the original market in the following pointwise sense:
\[
\widetilde u_i(t_i,S)
\le
\max_{T\subseteq S}
\left(
v_i(t_i,T)-c(T)-Q(T)
\right).
\]
Thus, whenever the auxiliary residual surplus covers the entry fee, the buyer can also profitably accept the actual cost-plus-markup menu.
In particular, because $\delta_i(S)$ is the median of $\widetilde u_i(t_i,S)$, the actual buyer accepts the entry fee with probability at least one half.

\paragraph{The mechanism.}
Buyers are visited sequentially in the order $1,2,\dots,n$.
When buyer $i$ is reached, let $S$ denote the set of items that remain available.

\begin{definition}[Fixed-cost ASPE]\label{def:fixed_cost_aspe}
Fix a seller cost vector $c$.
The mechanism $\ASPE(c)$ proceeds as follows.
\begin{enumerate}
    \item When buyer $i$ is reached, the mechanism reveals the available set $S$, the item prices
$
    (c_j+Q_j)_{j\in S},
    $
    and the entry fee $\delta_i(S)$.

    \item If buyer $i$ rejects the entry fee, she receives nothing and the mechanism continues to the next buyer.

    \item If buyer $i$ accepts, she may choose any bundle $T\subseteq S$ and pays
    \[
    \delta_i(S)+\sum_{j\in T}(c_j+Q_j).
    \]
    The selected bundle $T$ is removed from the inventory, and the mechanism continues to the next buyer.
\end{enumerate}
\end{definition}

Conditional on the remaining set $S$, the buyer faces a fixed menu: the entry fee and all item prices are determined before she makes her decision and do not depend on her report.
She may always reject the menu, and, if she accepts, she chooses her utility-maximizing bundle.

\begin{proposition}[Fixed-cost core approximation]\label{prop:multi-core-aspe}
Fix a seller cost vector $c$, and let $\Pi(c)$ denote the seller's expected profit under $\ASPE(c)$.
Then:
\begin{enumerate}
    \item $\ASPE(c)$ is buyer-DSIC and buyer ex-post IR;
    \item its expected profit satisfies
    \[
    \Pi(c)
    \ge
    \frac18\,\mu(c)
    -
    \frac{57}{8}\,
    \gft^{\mathrm{ud}}_{\mathcal F^m}(c).
    \]
\end{enumerate}
\end{proposition}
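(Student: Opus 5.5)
Part 1 is the easy half, and I would dispose of it first. Conditional on the inventory $S$ when buyer $i$ is reached, the entry fee $\delta_i(S)$ and the prices $c_j+Q_j$ depend only on $S$ and $c$, never on buyer $i$'s report. Buyer $i$ therefore faces a fixed take-it-or-leave-it menu. Accepting and picking a utility-maximizing $T\subseteq S$ is a dominant strategy, and rejecting gives utility zero, so the mechanism is buyer-DSIC and ex-post IR.

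For part 2, I would reduce to the Cai--Zhao ASPE analysis~\cite{CZ17} applied to the one-sided XOS instance with valuations $v_i^C(t_i,\cdot)$, which are XOS over independent items with item values capped at $\tau$. The seller's profit splits into two parts: the markup revenue $\sum_j Q_j\Pr[j\text{ sold}]$ and the entry-fee revenue $\sum_i \E[\delta_i(S_i)\mathbf 1[i\text{ accepts}]]$. I would establish three facts.

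\emph{(a) Attainability.} For every $T\subseteq S$,
\[
\widehat v_i^C(t_i,T)-Q(T)\le v_i^C(t_i,T)-Q(T)\le \max_k\sum_{j\in T}(a_{ij}^k-c_j)-Q(T)\le v_i(t_i,T)-c(T)-Q(T),
\]
where the middle step restricts $T$ to the items with positive surplus in the maximizing clause. Hence the actual buyer accepts whenever $\widetilde u_i\ge\delta_i(S)$, which happens with probability at least $1/2$. Since a buyer who accepts pays $\delta_i(S)$ and each sold item yields markup $Q_j$, the seller's profit is at least the ASPE revenue of the auxiliary instance.

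\emph{(b) Markup plus entry fee versus truncated welfare.} I would follow the CZ17 argument. Let $\hat\mu$ denote the truncated welfare $\E[\max\sum_i\widehat v_i^C]$, or the welfare of $\sigma^C$ under $\widehat v^C$. Using the balanced-prices property of the supporting prices $\gamma$ with the halving in $Q_j$, one shows that each item is sold with probability at most $1/2$. On the event that $j$ is unsold, buyer $i$'s residual surplus $\widetilde u_i$ together with $Q$ covers her share of $\hat\mu$, which yields $\sum_j Q_j\Pr[\text{sold}]+\sum_i\E[\widetilde u_i]\gtrsim \tfrac12\hat\mu$. The truncation by $\eta_i$ makes $\widetilde u_i(t_i,S)$ a function of independent items with per-item sensitivity at most $Q_j+\eta_i$ and bounded expected number of large coordinates. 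A Talagrand/self-bounding concentration argument then gives $\E[\widetilde u_i]\le 2\,\mathrm{Median}+O(\eta_i)$, so the entry-fee revenue is at least a constant times $\E[\widetilde u_i]$ minus $O(\sum_i\eta_i)$. I would track the constants so that the final combination gives the $1/8$.

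\emph{(c) Charging the truncation losses to $\gft^{\mathrm{ud}}_{\mathcal F^m}(c)$.} The loss $\mu(c)-\hat\mu$ comes from items with $V_{ij}\ge\beta_{ij}$, $V_{ij}\ge\beta_{ij}+\chi_i$, or $V_{ij}\ge Q_j+\eta_i$, and I would bound each piece separately. The additive terms $\sum_i\chi_i$, $\sum_i\eta_i$, and $\sum_j Q_j$-type terms are bounded by exhibiting matchings. By definition of $\chi_i$ and $\eta_i$, the sum over $j$ of exceedance probabilities is at least $1/2$ at the threshold, so buyer $i$ has, with constant probability, some item of surplus at least $\chi_i$. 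Items are contended with total probability at most $\sum_i q_{ij}/4\le 1/4$ via $\beta_{ij}$, so a greedy matching recovers a constant fraction of $\sum_i\chi_i$, and similarly for $\eta_i$. The excess terms $\E[(V_{ij}-\beta_{ij})^+]$-type losses are charged the same way, through a prophet/contention-resolution argument that uses $\sum_i\Pr[V_{ij}\ge\beta_{ij}]\le 1/4$.

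Summing (b) and (c) gives $\Pi(c)\ge\frac18\mu(c)-\frac{57}{8}\gft^{\mathrm{ud}}_{\mathcal F^m}(c)$. I expect the main obstacle to be step (c) together with the concentration part of (b). Bounding $\sum_i\eta_i$ by the unit-demand matching benchmark requires handling the interaction between the $\max\{\beta_{ij},Q_j+x\}$ thresholds and cross-buyer contention. My first attempt there would be to mirror Cai--Zhao's lemmas bounding their ``TAIL'' and ``$\sum\tau_i$'' terms by a unit-demand revenue, replacing revenue with the matching GFT.
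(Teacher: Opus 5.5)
Your Part 1 and the overall architecture match the paper's. You run the Cai--Zhao ASPE analysis on the fixed-cost XOS instance $v_i^C$, show the proxy $\widetilde u_i$ is attainable at prices $c_j+Q_j$, and charge the losses to $\gft^{\mathrm{ud}}_{\mathcal F^m}(c)$.

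\textbf{The gap.} Step (c), which you flag as the main obstacle, is left to unspecified greedy-matching and contention-resolution arguments. As a result the coefficients $\frac18$ and $\frac{57}{8}$ are asserted, not derived.

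The missing observation is that nothing in Cai--Zhao needs to be re-proved with GFT in place of revenue. Keep their loss bounds exactly as stated, in terms of $\operatorname{PostRev}(c)$. This is the best revenue of a rationed sequential posted-price mechanism in the auxiliary unit-demand instance with values $V_{ij}$. Then note in one line that $\operatorname{PostRev}(c)\le\gft^{\mathrm{ud}}_{\mathcal F^m}(c)$: sold pairs form a matching, and every price paid is at most $V_{ij}\le(v_i(t_i,\{j\})-c_j)^+$.

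You also need the favorite-item split $\mu(c)\le\operatorname{Fav}(c)+\operatorname{NonFav}(c)$. Here $\operatorname{Fav}(c)\le\gft^{\mathrm{ud}}_{\mathcal F^m}(c)$ because the favorites form a matching. The chain then runs as follows.
\begin{itemize}
\item Lemmas 14--18 of Cai--Zhao with $b=\frac14$ give $\operatorname{NonFav}\le\operatorname{Bounded}+\frac83\operatorname{PostRev}$.
\item Exact supporting prices give $\sum_jQ_j=\frac12\operatorname{Bounded}(c)$. This main term is built from $v_i^0$, not from a $\widehat v_i^C$-welfare as in your (b).
\item Their Lemmas 19--28 (in particular Lemma 27), which already account for the second truncation, give $\Pi(c)\ge\frac14\sum_jQ_j-\frac{20}{3}\operatorname{PostRev}(c)$.
\end{itemize}
Combining, $\Pi(c)\ge\frac18\mu(c)-\bigl(\frac{11}{24}+\frac{20}{3}\bigr)\gft^{\mathrm{ud}}_{\mathcal F^m}(c)=\frac18\mu(c)-\frac{57}{8}\gft^{\mathrm{ud}}_{\mathcal F^m}(c)$.

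\textbf{Incorrect intermediate claims.} None of these is needed once you use the chain above, but each would cause trouble if relied on.
\begin{itemize}
\item Items need not sell with probability at most $\frac12$. With many i.i.d.\ single-item buyers, for example, all entry fees are $0$ and the item sells with probability close to $1$. The correct accounting is simply markup $Q_j\Pr[j\text{ sold}]$ plus residual surplus at least $Q_j\Pr[j\text{ unsold}]$.
\item $\sum_jQ_j$ cannot be among the terms you charge to matchings. It is the main term, and for an additive buyer with $m$ i.i.d.\ items it exceeds $\gft^{\mathrm{ud}}_{\mathcal F^m}(c)$ by a factor of order $m$.
\item The per-item sensitivity of $\widetilde u_i$ is below $\eta_i$, not $Q_j+\eta_i$, since every retained coefficient is below $Q_j+\eta_i$. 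A $Q_j$-dependent Lipschitz constant would leave a concentration error you cannot charge to the benchmark.
\item Your attainability chain is false for a fixed $T$; take $T=\{j\}$ with $v_i(t_i,\{j\})<c_j$. It holds only after passing to a subset, i.e.\ for the maximum over $T\subseteq S$, which is all you need.
\item ``The seller's profit is at least the auxiliary ASPE revenue'' is unjustified. Real buyers can take items their truncated proxies would not, which changes later inventories and entry fees. What holds, and what the paper uses, is that the Cai--Zhao lower-bound argument runs directly on the real mechanism. It needs only three facts: acceptance probability at least $\frac12$, an available set independent of $t_i$, and margin exactly $Q_j$ per sold item.
\end{itemize}
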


\paragraph{Proof Overview.}
The supporting-price construction makes $\sum_jQ_j$ equal to one half of the welfare retained after the first truncation.
The mechanism then accounts for this quantity according to whether each item sells.
If item $j$ is sold, the seller directly earns the markup $Q_j$.
If it remains unsold, its supporting-price contribution is reflected in the buyers' residual surplus and can be partially extracted through the inventory-dependent entry fees.
The final truncation ensures that these residual surpluses concentrate around their medians, while the pointwise comparison above guarantees that the entry fees defined in the auxiliary instance are feasible for the buyers in the original market.
Finally, the welfare removed by the truncations, together with the remaining losses in the ASPE analysis, is bounded by the fixed-cost unit-demand matching benchmark.
Combining these bounds gives \cref{prop:multi-core-aspe}; the complete proof is deferred to \cref{app:buyercore}.

Because the seller observes $c$ and can always deploy $\ASPE(c)$, the same guarantee lower-bounds the seller's optimal conditional profit.
Averaging over $c$ gives the seller-side core guarantee used in the final assembly.

\section{Approximating the Tail via the Unit-Demand benchmark}\label{subsec:global_tail}

We now turn to the tail component, which captures the surplus above the
truncation threshold $\tau$.
The goal of this section is to reduce this part of the multidimensional GFT
to the unit-demand matching market.
We first show that
\[
\E[\GFT^T(t,c)]
\le
\frac{1}{1-\ln 2}\,
\gft^{*,\mathrm{ud}}_{\mathcal F^m}.
\]
We then use the copies reduction to relate
$\gft^{*,\mathrm{ud}}_{\mathcal F^m}$
to the buyer-side and seller-side profit benchmarks and construct truthful
mechanisms that approximate each of them.

The first step relies on the observation that the tail is both
\emph{itemwise} and \emph{sparse}.
For each item $j$, let $Y_j$ be the largest singleton GFT that can be
generated by trading that item, and let $X_j$ be its excess above $\tau$:
\[
Y_j
=
\max_{i\in[n]}
\bigl(v_i(t_i,\{j\})-c_j\bigr)^+,
\qquad
X_j
=
(Y_j-\tau)^+.
\]
Also, let
$
p_j=\Pr[Y_j>\tau]=\Pr[X_j>0].
$
Because item $j$ can be allocated at most once, its contribution to the
tail is at most $X_j$, and hence the total tail GFT is at most
$\sum_jX_j$.

The choice
$
\tau
=
2\gft^{*,\mathrm{ud}}_{\mathcal F^m}
$
ensures that items with a nonzero tail contribution are rare.
Moreover, the variables $Y_1,\dots,Y_m$ are mutually independent, since
$Y_j$ depends only on the item-$j$ coordinates of the buyers' types and on
$c_j$.
Thus, simultaneous tail events are sufficiently uncommon that the expected
sum $\E[\sum_jX_j]$ is within a constant factor of the expected largest
excess $\E[\max_jX_j]$.
Finally, the largest such excess is bounded by the best singleton trade,
which is feasible under the unit-demand matching constraint.
The next two lemmas formalize this argument.

\begin{lemma}\label{lem:global_tail_pointwise}
For every realization $(t,c)$,
\[
\GFT^T(t,c)\le \sum_{j=1}^m X_j.
\]
\end{lemma}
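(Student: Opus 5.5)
The plan is a direct pointwise argument, mirroring the first step in the proof of \cref{lem:warmup_xos_tail_bound}. Fix a realization $(t,c)$ and let $(S_1,\dots,S_n)$ and $(k_1,\dots,k_n)$ be the disjoint allocation and clause choices that attain the maximum in the definition of $\GFT^T(t,c)$ (\cref{def:global-core-tail}). Then
\[
\GFT^T(t,c)=\sum_{i=1}^n\sum_{j\in S_i}(w_{ij}^{k_i})^T .
\]
Since the bundles are disjoint, every item $j$ appears in at most one $S_i$. We may therefore regroup the double sum item by item. Each item contributes either zero (if it is unallocated) or the single term $(w_{i(j)j}^{k_{i(j)}})^T$, where $i(j)$ denotes the unique buyer receiving $j$.

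Next we bound each item's term by $X_j$. By definition,
\[
w_{ij}^k=(a_{ij}^k-c_j)^+ \le \Bigl(\max_{k'}a_{ij}^{k'}-c_j\Bigr)^+ = \bigl(v_i(t_i,\{j\})-c_j\bigr)^+ \le Y_j,
\]
where we use that the singleton value of an XOS function is the largest clause coefficient, $v_i(t_i,\{j\})=\max_k a_{ij}^k$, and that $x\mapsto(x-c_j)^+$ is monotone. The map $x\mapsto(x-\tau)^+$ is also monotone, so
\[
(w_{ij}^k)^T=(w_{ij}^k-\tau)^+\le (Y_j-\tau)^+=X_j .
\]
Summing over the allocated items, and using $X_j\ge 0$ for the unallocated ones, gives $\GFT^T(t,c)\le\sum_{j=1}^m X_j$.

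There is no real obstacle here; the argument is bookkeeping. The only points to state carefully are the disjointness of the bundles, which is what lets an item be counted once and hence bounded by a single $X_j$, and the identification of the singleton XOS value with the maximum clause coefficient.
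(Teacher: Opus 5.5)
Your proposal is correct and follows essentially the same argument as the paper: bound each clause-wise tail term by $X_j$ using $w_{ij}^k\le\bigl(v_i(t_i,\{j\})-c_j\bigr)^+\le Y_j$ and the monotonicity of $(\cdot-\tau)^+$, then use that each item is allocated at most once. The only cosmetic difference is that you fix a maximizing allocation and clause choice first, whereas the paper bounds an arbitrary feasible allocation and clause choice and then takes the maximum.
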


\begin{proof}
Fix any realization $(t,c)$.
For every buyer $i$, item $j$, and clause $k$, recall that
\[
w_{ij}^k=(v_{ij}^{(k)}(t_{ij})-c_j)^+,
\qquad
(w_{ij}^k)^T=(w_{ij}^k-\tau)^+.
\]
Since
\[
w_{ij}^k\le \max_{i'\in[n]}\max_{k'} w_{i'j}^{k'} = Y_j,
\]
we have
\[
(w_{ij}^k)^T\le (Y_j-\tau)^+=X_j.
\]
Now consider any feasible allocation and any choice of active clauses.
Because each item can be allocated at most once, the total tail contribution of that allocation is at most $\sum_{j=1}^m X_j$.
Maximizing over feasible allocations and clause choices proves the claim.
\end{proof}

\begin{lemma}\label{lem:global_tail_bound}
\[
\E[\GFT^T(t,c)]
\le
\frac{1}{1-\ln 2}\,\gft^{*,\mathrm{ud}}_{\mathcal F^{m}}.
\]
\end{lemma}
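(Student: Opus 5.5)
The argument mirrors the proof of \cref{lem:warmup_xos_tail_bound}, with the one-trade benchmark replaced by the matching benchmark.

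We start from \cref{lem:global_tail_pointwise}. It gives $\GFT^T(t,c)\le\sum_{j}X_j$ pointwise, so
\[
\E[\GFT^T]\le\E\Bigl[\sum_j X_j\Bigr].
\]
It therefore suffices to bound $\E[\sum_j X_j]$ by $\frac{1}{1-\ln 2}\gft^{*,\mathrm{ud}}_{\mathcal F^m}$.

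The first step is to compare $\E[\max_j Y_j]$ with the benchmark. For any realization, pick the item $j^\star$ attaining $\max_j Y_j$ and a buyer $i^\star$ attaining $Y_{j^\star}=(v_{i^\star}(t_{i^\star},\{j^\star\})-c_{j^\star})^+$. The singleton set $\{(i^\star,j^\star)\}$ is a matching in $\mathcal F^m$, so
\[
\max_j Y_j\le\max_{M\in\mathcal F^m}\sum_{(i,j)\in M}\bigl(v_i(t_i,\{j\})-c_j\bigr)^+ .
\]
Taking expectations gives $\E[\max_j Y_j]\le\gft^{*,\mathrm{ud}}_{\mathcal F^m}=\tau/2$.

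The second step bounds the total tail probability. By Markov's inequality,
\[
\Pr\Bigl[\max_j Y_j>\tau\Bigr]\le\tfrac12 .
\]
The $Y_j$ are mutually independent: $Y_j$ depends only on $(t_{ij})_{i}$ and $c_j$, and these are independent across items by the product structure of $D$ and $F$. Hence
\[
\Pr\Bigl[\max_j Y_j>\tau\Bigr]=1-\prod_j(1-p_j)\ge 1-e^{-\sum_j p_j},
\]
which yields $\sum_j p_j\le\ln 2$.

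The third step applies \cref{lem:tail_expectation_bound} to the independent nonnegative variables $X_j=(Y_j-\tau)^+$, for which $\Pr[X_j>0]=p_j$. It gives
\[
\E\Bigl[\sum_j X_j\Bigr]\le\E\Bigl[\max_j X_j\Bigr]+\ln 2\cdot\E\Bigl[\sum_j X_j\Bigr].
\]
To rearrange we need $\E[\sum_j X_j]<\infty$. We would argue this from $\E[X_j]\le\E[Y_j]\le\E[\max_{j'}Y_{j'}]<\infty$; if the benchmark is infinite, the claim is trivial. Rearranging then gives
\[
\E\Bigl[\sum_j X_j\Bigr]\le\frac{1}{1-\ln 2}\,\E\Bigl[\max_j X_j\Bigr].
\]
Since $X_j\le Y_j$ pointwise, $\E[\max_j X_j]\le\E[\max_j Y_j]\le\gft^{*,\mathrm{ud}}_{\mathcal F^m}$, which finishes the proof.

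Only two points need care. The first is the independence of the $Y_j$ across items, which holds because costs are independent of types and the types are product distributions over items even though each $Y_j$ aggregates over buyers. The second is the observation that a single trade is a feasible matching; everything else is identical to the bilateral-trade case.
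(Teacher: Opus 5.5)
Your proposal is correct and follows the paper's proof step for step: the pointwise bound from \cref{lem:global_tail_pointwise}, Markov's inequality combined with independence of the $Y_j$ to get $\sum_j p_j\le\ln 2$, then \cref{lem:tail_expectation_bound}, and finally $X_j\le Y_j$. Your only additions are two justifications the paper leaves implicit. One is that a single trade is a feasible matching, which gives $\E[\max_j Y_j]\le\gft^{*,\mathrm{ud}}_{\mathcal F^m}$; the other is the finiteness check needed before rearranging.
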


\begin{proof}
By \cref{lem:global_tail_pointwise},
\[
\E[\GFT^T(t,c)]
\le
\E\!\left[\sum_{j=1}^m X_j\right].
\]

Next, we bound $\sum_{j=1}^m p_j$.
Since $\tau = 2\gft^{*,\mathrm{ud}}_{\mathcal F^{m}}$, applying Markov's inequality yields
\[
\Pr\!\left[\max_{j\in [m]} Y_j > \tau\right]
\le
\frac{\E[\max_j Y_j]}{\tau}
\le
\frac{\gft^{*,\mathrm{ud}}_{\mathcal F^{m}}}{2\gft^{*,\mathrm{ud}}_{\mathcal F^{m}}}
=
\frac12.
\]

On the other hand, because the variables $Y_j$ are independent, we have
\[
\Pr\!\left[\max_{j\in [m]} Y_j > \tau\right]
=
1-\prod_{j=1}^m (1-p_j)
\ge
1-e^{-\sum_{j=1}^m p_j}.
\]
Combining these bounds gives $1-e^{-\sum_j p_j}\le \frac12$, which simplifies to $\sum_{j=1}^m p_j \le \ln 2$.

We now apply~\cref{lem:tail_expectation_bound} directly to the independent nonnegative random variables $X_1,\dots,X_m$.
Because $\Pr[X_j>0]=p_j$, and we established $\sum_{j=1}^m p_j\le \ln 2$, we have
\[
\E\!\left[\sum_{j=1}^m X_j\right]
\le
\E[\max_j X_j]
+
(\ln 2)\,\E\!\left[\sum_{j=1}^m X_j\right].
\]
Rearranging terms, we obtain
\[
\E\!\left[\sum_{j=1}^m X_j\right]
\le
\frac{1}{1-\ln 2}\,\E[\max_j X_j].
\]

Finally, because $X_j\le Y_j$ pointwise for every $j$, we have
\[
\E[\max_j X_j]\le \E[\max_j Y_j]\le \gft^{*,\mathrm{ud}}_{\mathcal F^{m}}.
\]
Combining this with the previous bounds proves the lemma.
\end{proof}
\newcommand{\SIDE}{\Pi^*_B}
\newcommand{\BO}{\mathrm{BO}_{1/2}}
\newcommand{\wtR}{\widehat R}

\subsection{Approximating the Buyer-Side Copy Benchmark}
\label{sec:buyertail}

We now construct a truthful and budget-balanced mechanism that approximates
the buyer-side profit benchmark
$\Pi_B^*(\mathcal I^{\mathrm{copies}})$
of the unit-demand copies instance.
Together with the seller-side benchmark, this quantity controls the
unit-demand matching GFT through~\cref{cor:copies_profit_approx}.

An approach in single-dimensional two-sided markets is to use a
``buyers' lawyer'': among all mechanisms offered to the seller, choose one
that maximizes the buyers' total expected utility and then charge the buyers
VCG payments.
In our setting, however, this unrestricted approach need not be budget
balanced.
As shown in~\cref{obs:vcg-deficit}, the presence of one buyer can
increase the utility available to the other buyers by changing the mechanism
offered to the seller.
The resulting VCG payment may therefore be negative, forcing the mechanism
to subsidize the buyers.

We retain the VCG approach but restrict the alternatives over which it
optimizes.
We define randomized \emph{procurement menus}.
A realization of such a menu designates each item to at most one buyer and
post a price for that item.\footnote{At an ironed point of the
seller's procurement-payment curve, the desired acceptance probability and
expected payment may require randomizing between two posted prices.}
After observing the realized menu and her costs, the seller chooses, for
each buyer, at most one designated item that maximizes her nonnegative
profit.
The accepted offers form a matching, and the mechanism is truthful and ex-post IR for the seller. 

The restricted procurement menu family is designed to satisfy two additional properties.
First, it is rich enough to implement a constant fraction of the buyer-side
benchmark.
More importantly, the menus separate the buyers: removing buyer $i$ simply removes
the offers designated to her and leaves the menus faced by all other buyers
unchanged.
Thus, a buyer cannot create a positive externality for the others.
When the mechanism chooses the procurement menu maximizing total reported
buyer utility and applies VCG payments, every buyer's VCG charge is therefore
nonnegative, and the resulting mechanism remains weakly budget balanced.

\begin{assumption}\label{ass:standing}
For each buyer $i$, the type space $T_i$ is a compact metric space, and for each item $j$ the map $t_i\mapsto v_{ij}(t_i)$ is continuous, bounded, and nonnegative.

For each item $j$, the seller's cost $c_j$ is drawn independently from a continuous strictly increasing distribution $F_j$ supported on a bounded interval $[0,\bar c_j]$. The buyer profile $t=(t_1,\dots,t_n)$ is independent of the seller cost vector $c=(c_1,\dots,c_m)$. \footnote{The assumptions in this section are chosen for 
clarity rather than maximal generality. On the buyer side, the proof only
requires standard Borel type spaces and measurable, nonnegative singleton
values. The continuity and bounded-support assumptions simplify the
seller-side ironing and existence arguments. The construction can also be
formulated for arbitrary independent seller-cost distributions using
threshold splitting and an appropriate bound on the prices considered in
the procurement-menu optimization.}
\end{assumption}

\begin{proposition}[Buyer-side copy benchmark]
\label{thm:main_tailbuyer}
Under~\cref{ass:standing}, tThere exists a randomized mechanism that is DSIC and ex-post IR for the
seller, BIC and interim IR for the buyers, and ex-post WBB, such that:
    \[
        \E[\GFT]
        \ge
        \frac14\,\Pi_B^*(\mathcal I^{\mathrm{copies}}).
    \]
\end{proposition}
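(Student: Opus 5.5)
We build the mechanism from the restricted family of randomized procurement menus described above and prove three things in turn: the VCG payments over this family are nonnegative (which gives budget balance), truthfulness and IR hold for both sides, and the optimum over the family is at least $\frac14\Pi_B^*(\mathcal I^{\mathrm{copies}})$. The GFT bound then follows because under WBB the expected GFT is at least the expected buyer profit.

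\textbf{The menu family.} A menu $\rho$ is a distribution over pairs (designation $d:[m]\to[n]\cup\{\bot\}$, prices $(r_j)_j$).
\begin{itemize}
\item Given a realization, the seller picks, for each buyer $i$, at most one item $j$ with $d(j)=i$ that maximizes $r_j-c_j\ge 0$.
\item Since the designated sets are disjoint, the seller's choice decomposes buyer by buyer. The accepted pairs therefore form a matching.
\item The seller's problem is a take-it-or-leave-it choice, so the mechanism is DSIC and ex-post IR for the seller.
\end{itemize}
Given reported types $\tilde t$, buyer $i$'s expected utility under $\rho$ is
\[
U_i(\rho,\tilde t_i)=\E_{\rho,c}\Bigl[\bigl(v_i(\tilde t_i,\{j^*_i\})-r_{j^*_i}\bigr)\mathbf 1[\text{$i$ trades}]\Bigr],
\]
where $j^*_i$ is the item the seller selects for buyer $i$. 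This quantity depends only on the offers designated to $i$.

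\textbf{The VCG rule and budget balance.} The mechanism chooses $\rho^*\in\arg\max_\rho\sum_i U_i(\rho,\tilde t_i)$; existence of a maximizer uses compactness from \cref{ass:standing}. It charges buyer $i$ the VCG payment
\[
\max_\rho\sum_{i'\ne i}U_{i'}(\rho)-\sum_{i'\ne i}U_{i'}(\rho^*).
\]
\begin{itemize}
\item \emph{Key structural fact.} Take any $\rho$ and strip buyer $i$'s designations to get $\rho^{-i}$. Then $U_{i'}(\rho^{-i})=U_{i'}(\rho)$ for all $i'\ne i$, because the seller's choice for $i'$ is unaffected.
\item \emph{Nonnegative payments.} Applying this to $\rho^*$ gives $\max_\rho\sum_{i'\ne i}U_{i'}(\rho)\ge \sum_{i'\ne i}U_{i'}(\rho^*)$, so every VCG charge is nonnegative.
\item \emph{Budget balance.} Buyers collectively pay the seller the posted prices and additionally pay the nonnegative VCG charges to the mechanism, so the mechanism is ex-post WBB. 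Equivalently, each buyer pays the seller's price, while the VCG transfer sits on top of utilities measured net of prices.
\item \emph{Buyer incentives.} Because $U_i$ is buyer $i$'s true expected utility (in expectation over $c$), VCG gives truthfulness and IR for the buyers. This holds in the interim sense, and in fact dominant-strategy over buyers' reports.
\end{itemize}

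\textbf{Approximation, and the main obstacle.} We must exhibit one menu $\rho$ in the family whose total buyer utility is at least $\frac14\Pi_B^*(\mathcal I^{\mathrm{copies}})$.

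First, describe the buyer benchmark.
\begin{itemize}
\item Let $x_{ij}(v)$ be the optimal matching in the copies instance with respect to ironed virtual costs, $\max_A\sum(v_{ij}-\tilde\psi_j(c_j))$.
\item Its ex-ante marginals $\bar x_{ij}=\Pr[(i,j)\in A]$ satisfy $\sum_i\bar x_{ij}\le1$ and $\sum_j\bar x_{ij}\le1$.
\item Separating the objective per item, the benchmark is bounded by an ex-ante relaxation in which item $j$ is procured from the single-dimensional seller with probability $\bar x_{ij}$ on behalf of buyer $i$.
\end{itemize}

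Next, build the candidate menu from halved marginals. It designates item $j$ to buyer $i$ with probability $\bar x_{ij}/2$ and posts the price whose acceptance probability (with randomization on ironed intervals) matches the half-capacity quantile. The buyer side is also coupled with buyer $i$'s value; I would let buyer $i$'s reported type determine $\rho$, which the VCG optimization permits since $\rho$ may depend on $\tilde t$.

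The step I expect to be hardest is this one: showing that halving capacities loses only a factor $4$. A designated offer can fail because the seller takes another offer for the same buyer; this is the "seller-side contention" from the overview. Some care is also needed to ensure the VCG optimum's expected utility dominates this candidate pointwise in $t$. I would first try a contention-resolution argument:
\begin{itemize}
\item Each designated item's offer survives with probability at least $1-\sum_{j'}\bar x_{ij'}/2\ge\frac12$.
\item Halving itself costs another $\frac12$, since concavity of revenue curves gives half the procurement-virtual surplus at half the quantile.
\item Together these give $\frac12\cdot\frac12=\frac14$.
\end{itemize}
Finally, by Myerson's payment identity on the seller side, the buyers' expected utility from accepted offers equals the expected $v_{ij}-\tilde\psi_j(c_j)$ on trades.
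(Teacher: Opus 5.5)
Your mechanism half matches the paper: VCG over a menu family closed under deleting a buyer's designations, nonnegative Clarke charges from buyer-exclusion invariance, seller DSIC from the take-it-or-leave-it structure, and $\E[\GFT\mid t]\ge\sum_iU_i$ because the seller's trade utility is nonnegative. Enlarging the family is harmless. The gap is in the approximation step: the candidate menu, as you specify it, does not deliver the factor $\frac12$.

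\textbf{The designation/acceptance split.} You designate $(i,j)$ with probability $\bar x_{ij}/2$ \emph{and} post a price the seller accepts only with some probability $s_j<1$. The pair is then active with probability $\bar x_{ij}s_j/2$, not $\bar x_{ij}/2$. Item $j$'s contribution becomes $\sum_i\frac{\bar x_{ij}}{2}\bigl(s_jv_{ij}-\widehat R_j(s_j)\bigr)$. For $s_j=q_j/2$ this is only a $q_j/2$ fraction of item $j$'s term in the halved relaxation, and that fraction can be arbitrarily small. The mass must be split as in \cref{lem:menu-implementation}. Designate with conditional probability $y_{ij}/Q_j$, so $\sum_ix_{ij}=1$. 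Post a two-point price lottery that covers the cost with probability $Q_j=\sum_iy_{ij}$. Only then is $(i,j)$ active with probability exactly $y_{ij}$, with $\E[P_j\mathbf 1\{c_j<P_j\}]=\widehat R_j(Q_j)$.

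\textbf{Prices must not exceed values.} This is the more fundamental gap. Your survival argument restricts to the event that $(i,j)$ is buyer $i$'s unique active offer and then replaces $\prod_{k\ne j}(1-y_{ik})$ by $\frac12$. That is valid only if two things hold:
\begin{itemize}
\item each edge term $x_{ij}\bigl(Q_jv_{ij}-\widehat R_j(Q_j)\bigr)$ is nonnegative;
\item the discarded events (several active offers, seller picks one) contribute nonnegatively.
\end{itemize}
Both require every price that can be posted for $j$ to be at most $v_{ij}$ for each buyer $i$ to whom $j$ may be designated. Halved benchmark marginals do not guarantee this. Item $j$'s aggregate quantile may be driven by a high-value buyer. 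A low-value buyer may receive $j$ in the benchmark matching only when $c_j$ is tiny, e.g.\ when her competitor is matched elsewhere. The common price for $j$ then exceeds her value, and her edge term is negative.

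\textbf{How the paper closes it.} It takes $y^*$ to be an \emph{optimizer} of $\mathrm{REL}_{1/2}(t)$ rather than the halved benchmark marginals; your halving argument still shows its value is at least $\frac12\Pi_B^*(t)$. The optimizer's first-order condition gives $v_{ij}\ge\widehat R'_{j,-}(Q_j)$ whenever $y^*_{ij}>0$ (\cref{lem:left-slope}). The two-price lottery can be chosen with both prices below that subgradient (\cref{lem:two-price}).

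\textbf{Drop the Myerson step.} Your closing appeal to Myerson's payment identity does not hold here. Under contention, the seller's cost threshold for trading $j$ is $P_j-\max_k(P_k-c_k)^+$, with $k$ ranging over the other offers to the same buyer, but she is paid $P_j$. This is not a threshold payment, so she keeps rent beyond her virtual cost, and buyer utility can be strictly below the virtual surplus of realized trades. The paper never needs such an identity. It evaluates buyer utility only on the unique-active event, where independence across items gives exactly $x_{ij}\bigl(Q_jv_{ij}-\widehat R_j(Q_j)\bigr)\prod_{k\ne j}(1-y_{ik})$.
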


The approximation loses a factor of two in each of two steps.
First, we scale the ex-ante allocation probabilities of the buyer-side
benchmark so that the total probability incident to each buyer is at most
$1/2$.
Convexity of the seller's procurement-payment curve implies that this
half-capacity relaxation retains at least one half of the benchmark.

Second, we implement a solution to this relaxation through a randomized
procurement menu.
Each item is independently designated to a buyer and offered to the seller
at an appropriate posted price.
The half-capacity condition guarantees that, conditional on a particular
designated item being profitable, it is the only profitable item designated
to that buyer with probability at least $1/2$.
On this event, the seller necessarily selects it, so the resulting menu
captures at least one half of the relaxed objective.

Finally, the VCG-selected menu performs at least as well as this candidate
menu.
Removing one buyer leaves every other buyer's menu unchanged, so all VCG
charges are nonnegative.
The VCG step therefore makes the buyers truthful without causing any
additional approximation loss or violating budget balance.
All the missing proofs can be found in \cref{app:buyertail}.

\paragraph{Seller-side procurement curves.}
For each item $j$, define the quantile-payment curve
\[
r_j(q)
=
qF_j^{-1}(q),
\qquad q\in[0,1].
\]
Posting the deterministic procurement price $F_j^{-1}(q)$ makes the price
cover the seller's cost with probability $q$ and produces expected payment
$r_j(q)$.
Let $\widehat R_j$ be the lower convex envelope of $r_j$.
Equivalently, $\widehat R_j(q)$ is the smallest expected payment obtainable
from a randomized posted price whose probability of covering the seller's
cost is $q$.

Let
\[
g_j(q)
=
\widehat R_{j,+}'(q),
\qquad q\in[0,1),
\]
denote the right derivative of $\widehat R_j$.
The seller's ironed virtual cost can be written as
$
\widetilde\psi_j(c)
=
g_j(F_j(c)).
$
Since $\widehat R_j$ is convex, $g_j$ and hence
$\widetilde\psi_j$ are nondecreasing.

We use two standard consequences of this one-dimensional ironing
construction.
First, if an item's probability of being used is nonincreasing in its cost
and its ex-ante use probability is $q$, then its expected ironed virtual
cost is at least $\widehat R_j(q)$; this is the ironing inequality in
\cref{lem:1d-ironing}.
Second, every point
$(q,\widehat R_j(q))$ can be implemented by a lottery over at most two
posted prices, and these prices can be chosen below an appropriate
subgradient of $\widehat R_j$ at $q$; this is
\cref{lem:two-price}.
The first fact connects the copies benchmark to our relaxation, while the
second implements a relaxation solution as an actual procurement menu.

\subsubsection{The half-capacity relaxation}

Fix a buyer profile $t=(t_1,\dots,t_n)$.
For the analysis, define the buyer-side benchmark conditional on $t$ by
\[
\Pi_B^*(t)
=
\E_c\!\left[
\max_{\substack{M\subseteq[n]\times[m]\\M\text{ is a matching}}}
\sum_{(i,j)\in M}
\bigl(v_{ij}(t_i)-\widetilde\psi_j(c_j)\bigr)^+
\right].
\]
Thus,
\[
\Pi_B^*(\mathcal I^{\mathrm{copies}})
=
\E_t[\Pi_B^*(t)].
\]

For a nonnegative matrix
$y=(y_{ij})_{i\in[n],j\in[m]}$, write
$
q_j(y)=\sum_{i=1}^n y_{ij}.
$
We define the half-capacity relaxation
\begin{equation}
\label{eq:bo-half}
\mathrm{REL}_{1/2}(t)
=
\max_{y\ge0}
\left\{
\sum_{i,j}y_{ij}v_{ij}(t_i)
-
\sum_{j=1}^m
\widehat R_j\!\left(q_j(y)\right)
\ \middle|\
\begin{array}{ll}
\displaystyle\sum_jy_{ij}\le\frac12
    &\text{for every buyer }i,\\[0.5em]
\displaystyle\sum_iy_{ij}\le1
    &\text{for every item }j
\end{array}
\right\}.
\end{equation}

The variable $y_{ij}$ should be viewed as the target probability that the
pair $(i,j)$ becomes active in the procurement menu.
The total mass
$q_j(y)=\sum_i y_{ij}$
is therefore the target probability that item $j$ is active, and
$\widehat R_j(q_j(y))$ is the expected procurement payment required to
generate that probability.
The item constraints impose supply feasibility.
The buyer constraints deliberately reduce each buyer's total active
probability from $1$ to $1/2$.
This reduction is what later ensures that an active item faces no competing
active item for the same buyer with constant probability.

The next lemma compares this relaxation with the original buyer-side
benchmark.

\begin{lemma}[The half-capacity relaxation preserves half the benchmark]
\label{thm:bo-lower-bounds-side}
For every buyer profile $t$,
\[
\mathrm{REL}_{1/2}(t)
\ge
\frac12\,\Pi_B^*(t).
\]
\end{lemma}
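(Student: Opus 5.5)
Fix the buyer profile $t$. The plan is to take an optimal matching-valued rule for the benchmark $\Pi_B^*(t)$, extract its ex-ante allocation probabilities, scale them by $1/2$, and check that the scaled vector is feasible for $\mathrm{REL}_{1/2}(t)$ with objective value at least half the benchmark.

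Let $M^*(c)$ be a matching maximizing $\sum_{(i,j)\in M}(v_{ij}(t_i)-\widetilde\psi_j(c_j))^+$, chosen so that it only includes pairs with positive term. Ties are broken so that, for each fixed $c_{-j}$, item $j$'s use is nonincreasing in $c_j$. This is possible because $\widetilde\psi_j$ is nondecreasing, so raising $c_j$ only lowers item $j$'s weights; a standard argument for max-weight matchings shows an item's inclusion is monotone in its own weight. Define $x_{ij}=\Pr_c[(i,j)\in M^*(c)]$ and $Q_j=\sum_i x_{ij}$. Since $M^*(c)$ is a matching for each $c$, we get $\sum_j x_{ij}\le 1$ and $\sum_i x_{ij}\le 1$. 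Then
\[
\Pi_B^*(t)=\sum_{i,j}x_{ij}v_{ij}(t_i)-\sum_j \E_c\bigl[\widetilde\psi_j(c_j)\mathbf 1[j\text{ used}]\bigr].
\]
Item $j$'s use probability is nonincreasing in $c_j$ (after averaging over $c_{-j}$, using independence of costs) and has ex-ante mass $Q_j$. The ironing inequality \cref{lem:1d-ironing} then gives
\[
\E_c\bigl[\widetilde\psi_j(c_j)\mathbf 1[j\text{ used}]\bigr]\ge \widehat R_j(Q_j),
\]
and hence $\Pi_B^*(t)\le \sum_{i,j}x_{ij}v_{ij}-\sum_j\widehat R_j(Q_j)$.

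Set $y=x/2$. Then $\sum_j y_{ij}\le 1/2$ and $\sum_i y_{ij}\le 1/2\le 1$, so $y$ is feasible for \eqref{eq:bo-half}, and $q_j(y)=Q_j/2$. Since $\widehat R_j$ is convex with $\widehat R_j(0)=0$ (a price below all costs pays nothing), we have $\widehat R_j(Q_j/2)\le \tfrac12\widehat R_j(Q_j)$. Therefore
\[
\mathrm{REL}_{1/2}(t)\ge \tfrac12\sum_{i,j}x_{ij}v_{ij}-\tfrac12\sum_j\widehat R_j(Q_j)\ge \tfrac12\Pi_B^*(t).
\]

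The main obstacle is the ironing step. It requires the benchmark's use of item $j$ to be monotone in $c_j$ and measurable, so that \cref{lem:1d-ironing} applies to the per-item interim use probability; this is where the tie-breaking choice matters. I would handle it by noting that the expected virtual-cost charge is unaffected by how ties are broken on ironed intervals, where $\widetilde\psi_j$ is constant, so the monotone tie-breaking can be imposed without loss. The other ingredient to confirm is $\widehat R_j(0)=0$, which follows from the support of $F_j$ being $[0,\bar c_j]$.
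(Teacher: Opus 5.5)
Your proposal is correct and matches the paper's proof step for step. You take an optimal matching selector that uses only strictly positive edges and uses each item monotonically in its cost, halve its ex-ante marginals to get a feasible point of \eqref{eq:bo-half}, and lower-bound each item's expected virtual-cost charge by $\widehat R_j$ of its use probability via \cref{lem:1d-ironing}. You then finish with $\widehat R_j(Q_j/2)\le \tfrac12\widehat R_j(Q_j)$, which follows from convexity and $\widehat R_j(0)=0$. The only difference is that the paper constructs the monotone selector explicitly (minimum cardinality, then a fixed item-priority tie-break; see \cref{rem:selector} and \cref{lem:selector-monotone}), whereas you appeal to a standard argument and to tie-breaking on ironed intervals.
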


\begin{proof}
Fix $t$, and for every seller-cost realization $c$, let $M^*(c)$ be the
monotone optimal matching selector from~\cref{lem:selector-monotone}, applied
to the weights
$
\bigl(v_{ij}(t_i)-\widetilde\psi_j(c_j)\bigr)^+.
$
Thus, $M^*(c)$ contains only strictly positive-weight edges, and, holding
$c_{-j}$ fixed, whether item $j$ is used by $M^*(c)$ is nonincreasing in
$c_j$.

Define the half-scaled ex-ante matching marginals
\[
y_{ij}
=
\frac12\Pr_c\!\bigl[(i,j)\in M^*(c)\bigr],
\qquad
q_j
=
\sum_i y_{ij}
=
\frac12\Pr_c\!\bigl[j\text{ is used by }M^*(c)\bigr].
\]
Because $M^*(c)$ is a matching for every $c$,
$
\sum_j y_{ij}\le\frac12$ for every buyer $i$
and 
$\sum_i y_{ij}=q_j\le\frac12\le1$
for every item $j$.
Hence $y$ is feasible for the relaxation~\eqref{eq:bo-half}.
Since every selected edge has strictly positive weight, the positive part
can be removed on the selected matching. Therefore,
\begin{align*}
\Pi_B^*(t)
&=
\E_c\!\left[
\sum_{(i,j)\in M^*(c)}
\bigl(v_{ij}(t_i)-\widetilde\psi_j(c_j)\bigr)
\right]\\
&=
2\sum_{i,j}y_{ij}v_{ij}(t_i)
-
\sum_{j=1}^m
\E_c\!\left[
\widetilde\psi_j(c_j)
\mathbf 1\!\left\{j\text{ is used by }M^*(c)\right\}
\right].
\end{align*}

For each item $j$, the conditional use probability
$
\Pr_{c_{-j}}\!\left[
j\text{ is used by }M^*(c_j,c_{-j})
\right]
$
is nonincreasing by~\cref{lem:selector-monotone}, and its expectation over
$c_j$ is
$
\Pr_c\!\left[j\text{ is used by }M^*(c)\right]
=
2q_j.
$
Applying the one-dimensional ironing inequality
\cref{lem:1d-ironing} therefore gives
\[
\E_c\!\left[
\widetilde\psi_j(c_j)
\mathbf 1\!\left\{j\text{ is used by }M^*(c)\right\}
\right]
\ge
\widehat R_j(2q_j).
\]
Summing over items yields
\[
\Pi_B^*(t)
\le
2\sum_{i,j}y_{ij}v_{ij}(t_i)
-
\sum_j\widehat R_j(2q_j).
\]

Finally, since $q_j\le1/2$, convexity of $\widehat R_j$ and
$\widehat R_j(0)=0$ imply
$
\widehat R_j(q_j)
\le
\frac12\,\widehat R_j(2q_j).
$
Evaluating the relaxation at the feasible point $y$, we obtain
\begin{align*}
\mathrm{REL}_{1/2}(t)
&\ge
\sum_{i,j}y_{ij}v_{ij}(t_i)
-
\sum_j\widehat R_j(q_j)\\
&\ge
\sum_{i,j}y_{ij}v_{ij}(t_i)
-
\frac12\sum_j\widehat R_j(2q_j) \ge
\frac12\,\Pi_B^*(t). \qedhere
\end{align*}
\end{proof}

\subsubsection{Implementing the relaxation with procurement menus}

We next formalize the restricted family of procurement menus over which the VCG mechanism will optimize. A procurement menu specifies a distribution over buyer designations and posted procurement prices. A realization of these random choices produces a deterministic collection of offers to the seller, whose subsequent choices determine the realized matching. 

\begin{definition}[Procurement menu] \label{def:procurement-menu} A procurement menu is a tuple $\ \rho=(x,a,b,\lambda) $ with the following components:
\begin{enumerate}[label=(\arabic*)] \item Let $x_{ij}$ be the probability that item $j$ is designated to buyer $i$.For every buyer-item pair $(i,j)$, $x_{ij}\ge0,$ and $\sum_{i=1}^n x_{ij}\le1 $ for every item $j$. 
\item For every item $j$, $ 0\le a_j\le b_j\le1$ and,$\lambda_j\in[0,1]$. The quantiles $a_j$ and $b_j$, together with $\lambda_j$, determine the randomized posted price for item $j$. 
\item Define $ q_j(\rho) = \lambda_j a_j+(1-\lambda_j)b_j. $ The buyer-side half-capacity constraints require \[ \sum_{j=1}^m x_{ij}q_j(\rho) \le \frac12 \qquad \text{for every buyer }i. \] \end{enumerate} Let $\mathcal P$ denote the family of all procurement menus satisfying these conditions. 
\end{definition}

Given a procurement menu $\rho$, \cref{alg:procurement-menu} executes it and output a matching and record the payments for each agents. We say that item $j$ is \emph{designated} to buyer $i$ when
$E_j=i$.
Designation does not itself allocate the item: it only makes buyer $i$
the unique possible recipient of $j$ in this realization of the menu.
The item trades only if the seller subsequently selects it.

\begin{algorithm}[t]
\caption{Execution of a procurement menu $\rho$}
\label{alg:procurement-menu}

\KwIn{A procurement menu
$\rho=(x,a,b,\lambda)\in\mathcal P$, with
$p_j^-=F_j^{-1}(a_j)$ and $p_j^+=F_j^{-1}(b_j)$ for every item $j$.}

\KwOut{A realized matching $M$, buyer trade payments
$(p_i^{\mathrm{tr}})_{i\in[n]}$, and seller trade payment
$p_S^{\mathrm{tr}}$.}

$M\gets\varnothing$\;
$p_i^{\mathrm{tr}}\gets 0$ for every $i\in[n]$\;
$p_S^{\mathrm{tr}}\gets 0$\;

\tcp*[l]{All label and price draws are mutually independent and
independent of the seller's costs.}

\ForEach{$j\in[m]$}{
    Draw a buyer label $E_j\in\{0,1,\dots,n\}$ according to
$\Pr[E_j=i]=x_{ij}$ for every $i \in [n]$ and $\Pr[E_j=0]=1-\sum_{i=1}^n x_{ij}$.
    \tcp*[r]{$E_j=0$ means that item $j$ is not offered for trade}

    Draw $P_j=p_j^-$ with probability $\lambda_j$, and
    $P_j=p_j^+$ with probability $1-\lambda_j$\;
}

Reveal the realized labels and prices $(E_j,P_j)_{j\in[m]}$ to the seller\;

\BlankLine

\ForEach{$i\in[n]$}{
    $\mathcal C_i
    \gets
    \{j\in[m]:E_j=i\}$
    \tcp*[r]{items designated to buyer $i$}

    $\mathcal A_i
    \gets
    \{j\in\mathcal C_i:P_j-c_j\ge0\}$
    \tcp*[r]{profitable designated items for buyer $i$}

    \If{$\mathcal A_i\neq\varnothing$}{
        Choose $j_i^*
            \in
            \argmax_{j\in\mathcal A_i}(P_j-\hat{c}_j),
        $
        using a fixed tie-breaking rule

        Allocate item $j_i^*$ to buyer $i$ and set
        $M\gets M\cup\{(i,j_i^*)\}$\;

        $p_i^{\mathrm{tr}}\gets P_{j_i^*}$\;

        $p_S^{\mathrm{tr}}
        \gets
        p_S^{\mathrm{tr}}+P_{j_i^*}$\;
    }
}

\Return{$\bigl(M,(p_i^{\mathrm{tr}})_{i\in[n]},
p_S^{\mathrm{tr}}\bigr)$}\;
\end{algorithm}

For buyer $i$, define her expected trade utility under menu $\rho$, before any VCG transfer, by \[ U_i(t_i,\rho) = \E\!\left[ \sum_{(i,j)\in M}v_{ij}(t_i) - p_i^{\mathrm{tr}} \right], \] where  $ \bigl(M,(p_k^{\mathrm{tr}})_{k\in[n]},p_S^{\mathrm{tr}}\bigr) $ is the output of \cref{alg:procurement-menu}, and the expectation is over the seller's costs and the menu's internal randomization. Because $M$ is a matching, the sum contains at most one term for buyer $i$. The next lemma has two roles. First, compactness and continuity ensure that the VCG optimization over $\mathcal P$ admits a measurable maximizing menu. Second, it records that every fixed procurement menu is already truthful and individually rational for the seller and strongly budget balanced with respect to its trade payments.

\begin{lemma}
\label{lem:menu-basic}
\label{lem:continuation-basic}
The family $\mathcal P$ is compact.
For every buyer $i$ and type $t_i$, the map
$
\rho\longmapsto U_i(t_i,\rho)
$
is continuous on $\mathcal P$.
For every fixed $\rho\in\mathcal P$, the map
$
t_i\longmapsto U_i(t_i,\rho)
$
is continuous on $T_i$.

Moreover, every fixed procurement menu has a direct implementation that is
DSIC and ex-post individually rational for the seller, and the trade
payments within the menu are ex-post strongly budget balanced.
\end{lemma}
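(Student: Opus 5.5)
The lemma has four separate claims, and I would prove them in order: compactness, continuity in $\rho$, continuity in $t_i$, and then the incentive and budget properties.

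\emph{Compactness.} I would view $\mathcal P$ as a subset of the finite-dimensional space $\mathbb R^{nm}\times\mathbb R^{3m}$ of tuples $(x,a,b,\lambda)$. It is bounded, since every coordinate lies in $[0,1]$. It is closed, because it is cut out by finitely many non-strict inequalities in continuous functions:
\begin{itemize}
\item $x_{ij}\ge 0$ and $\sum_i x_{ij}\le 1$;
\item $0\le a_j\le b_j\le 1$ and $\lambda_j\in[0,1]$;
\item the half-capacity constraints $\sum_j x_{ij}(\lambda_j a_j+(1-\lambda_j)b_j)\le \tfrac12$, which are polynomial in $\rho$.
\end{itemize}
Heine--Borel then gives compactness.

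\emph{Continuity of $\rho\mapsto U_i(t_i,\rho)$.} I would write $U_i$ explicitly as a finite sum. The label draws $E$ range over the finite set $\{0,\dots,n\}^m$, and the price choices range over $\{-,+\}^m$. Each pair (label vector, price-choice vector) has probability
\[
\prod_j x_{E_j j}\,\lambda_j^{[\cdot]}(1-\lambda_j)^{[\cdot]},
\]
which is polynomial in $\rho$. Conditional on such a pair, the prices are $P_j=F_j^{-1}(a_j)$ or $F_j^{-1}(b_j)$, and $F_j^{-1}$ is continuous on $[0,1]$ because $F_j$ is continuous and strictly increasing on a bounded interval. Buyer $i$'s conditional expected utility is then
\[
\sum_{j\in\mathcal C_i}\Pr_c\bigl[j_i^*=j\bigr]\,\bigl(v_{ij}(t_i)-P_j\bigr).
\]

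\emph{Main obstacle.} The hard step is showing that $\Pr_c[j_i^*=j]$ is continuous in the prices $P$. The event is $\{P_j\ge c_j\}$ intersected with $\{P_j-c_j\ge P_\ell-c_\ell \text{ for all profitable } \ell\}$. I would argue as follows:
\begin{itemize}
\item The costs are independent with atomless distributions, so every boundary event ($c_j=P_j$, or $P_j-c_j=P_\ell-c_\ell$) has probability zero.
\item Tie-breaking is therefore irrelevant almost surely.
\item The indicator of the event converges almost surely as $P$ varies, so dominated convergence yields continuity.
\end{itemize}
Two subtleties remain. A price may coincide with a quantile endpoint, which is again a null event. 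The label set also changes discontinuously, but this is harmless because labels enter only through the polynomial weights.

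\emph{Continuity of $t_i\mapsto U_i(t_i,\rho)$.} For fixed $\rho$, $U_i$ is affine in the vector $(v_{ij}(t_i))_j$, with coefficients independent of $t_i$. The seller's choice does not depend on the buyer's value, and $P$ is bounded. Continuity of each $v_{ij}$ under \cref{ass:standing} then gives the claim.

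\emph{Seller incentives and budget balance.} For the direct implementation, the seller reports $\hat c$ after the realized offers, and the mechanism runs \cref{alg:procurement-menu} with $\hat c$. For each buyer $i$, the seller's ex-post utility is
\[
\sum_i\bigl(P_{j_i^*}-c_{j_i^*}\bigr),
\]
and the choices for different buyers are independent because the sets $\mathcal C_i$ are disjoint. Truthful reporting selects a maximizer of $P_j-c_j$ over the profitable designated items, or nothing if there are none, and this is pointwise optimal. This gives DSIC, and every term is nonnegative, which gives ex-post IR. Finally, each accepted offer increments $p_i^{\mathrm{tr}}$ and $p_S^{\mathrm{tr}}$ by the same $P_{j_i^*}$, so $\sum_i p_i^{\mathrm{tr}}=p_S^{\mathrm{tr}}$ holds ex post, which is strong budget balance.
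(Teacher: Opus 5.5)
Your proposal is correct and follows essentially the same route as the paper. Both arguments use compactness as a closed subset of a bounded box and continuity in $\rho$ by conditioning on the finitely many label/price-branch realizations, with atomless independent costs and dominated convergence handling the seller's comparisons. Both then get continuity in $t_i$ from affinity in $(v_{ij}(t_i))_j$, and the seller's DSIC, ex-post IR, and strong budget balance from cluster-wise separability of her additive utility.
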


An arbitrary procurement menu could post a price exceeding the value of the
buyer to whom the item is designated.
For the menu constructed from an optimal relaxation solution, the next
lemma shows that every edge receiving positive fractional mass has buyer value
at least the marginal procurement cost of the corresponding item.
The two-price implementation lemma will then ensure that every posted price
used for that item is no larger than this marginal cost.

This property is important for the approximation analysis:
it allows us to lower-bound buyer utility using only realizations in which
an item is the unique profitable item designated to its buyer, while safely
discarding all other realizations.

\begin{lemma}[Marginal-value condition]
\label{lem:left-slope}
Fix $t$, and let $y^*(t)$ be an optimizer of~\eqref{eq:bo-half}.
For every item $j$, define
$
q_j
=
\sum_i y^*_{ij}(t).
$
If $q_j>0$, let
$
\gamma_j
=
\widehat R_{j,-}'(q_j)
$
be the left derivative of $\widehat R_j$ at $q_j$.
Then
\[
\gamma_j\in\partial\widehat R_j(q_j),
\]
and whenever $y^*_{ij}(t)>0$,
\[
v_{ij}(t_i)\ge\gamma_j.
\]
\end{lemma}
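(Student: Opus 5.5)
The approach is a KKT/perturbation argument on the concave program \eqref{eq:bo-half}. First, the subgradient claim. Since $\widehat R_j$ is convex on $[0,1]$ and $q_j>0$, the left derivative $\widehat R_{j,-}'(q_j)$ exists. It is finite for $q_j<1$, and it lies in $\partial\widehat R_j(q_j)=[\widehat R_{j,-}'(q_j),\widehat R_{j,+}'(q_j)]$ by the standard characterization of one-dimensional convex subdifferentials. At the endpoint $q_j=1$ we only need $\gamma_j$ to support $\widehat R_j$ from the left, which holds by convexity. Finiteness is guaranteed because $\widehat R_j$ is the lower convex envelope of $qF_j^{-1}(q)$, which is bounded by $\bar c_j$ on $[0,1]$ under \cref{ass:standing}. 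A convex function on $[0,1]$ bounded above has finite left derivative at every interior point and at $1$.

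For the marginal-value inequality, suppose $y^*_{ij}(t)>0$ but $v_{ij}(t_i)<\gamma_j$. I perturb $y^*$ by decreasing only the coordinate $y_{ij}$ by a small $\epsilon\in(0,y^*_{ij}]$. Decreasing a coordinate preserves both families of constraints in \eqref{eq:bo-half}: the buyer sum $\sum_j y_{ij}\le 1/2$ and the item sum $\sum_i y_{ij}\le 1$ only shrink, and nonnegativity holds by the choice of $\epsilon$. The objective changes by
\[
-\epsilon\, v_{ij}(t_i) + \bigl(\widehat R_j(q_j)-\widehat R_j(q_j-\epsilon)\bigr).
\]
By the definition of the left derivative, $\bigl(\widehat R_j(q_j)-\widehat R_j(q_j-\epsilon)\bigr)/\epsilon\to\gamma_j$ as $\epsilon\downarrow0$. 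For convex functions this difference quotient is monotone, converging to $\gamma_j$ from below. So for small $\epsilon$ the change equals $\epsilon(\gamma_j-v_{ij}(t_i))+o(\epsilon)$, which is strictly positive. This contradicts optimality of $y^*(t)$, so $v_{ij}(t_i)\ge\gamma_j$.

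The only delicate point is that the left derivative is the right quantity: I perturb downward precisely because a downward move is always feasible. An upward move might violate the capacity constraints, which is why the statement uses $\widehat R_{j,-}'$ rather than $\widehat R_{j,+}'$. I should also remark that an optimizer $y^*(t)$ exists, which holds because the feasible region is compact and the objective is continuous; the statement takes this as given. No use of the other buyers' variables is needed.
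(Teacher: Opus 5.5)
Your argument is the paper's own: decrease the single coordinate $y_{ij}$ by $\varepsilon\le y^*_{ij}$, note that feasibility is preserved, read off $v_{ij}(t_i)\ge(\widehat R_j(q_j)-\widehat R_j(q_j-\varepsilon))/\varepsilon$ from optimality, and let $\varepsilon\downarrow0$. The paper argues directly rather than by contradiction, but the content is identical.

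One side claim is false, however. A bounded convex function on $[0,1]$ need not have a finite left derivative at $1$, and this can happen in the present setting. Take $F_j(c)=1-(1-c)^2$ on $[0,1]$. Then $r_j(q)=q\bigl(1-\sqrt{1-q}\bigr)$, which is convex, so $\widehat R_j=r_j$ and $\widehat R'_{j,-}(1)=+\infty$. Your justification of $\gamma_j\in\partial\widehat R_j(q_j)$ at $q_j=1$ therefore does not work as written.

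The fix is to prove the second part first.
\begin{itemize}
\item Since $q_j>0$, some $y^*_{ij}>0$. Your perturbation inequality holds for every $\varepsilon\in(0,y^*_{ij}]$. The left difference quotients increase as $\varepsilon\downarrow0$, so $\gamma_j\le v_{ij}(t_i)<\infty$. Your contradiction step already goes through when $\gamma_j=+\infty$, because the quotient then eventually exceeds $v_{ij}(t_i)$.
\item With $\gamma_j$ finite, every left secant slope $(\widehat R_j(1)-\widehat R_j(u))/(1-u)$ is at most $\gamma_j$. This gives $\gamma_j\in\partial\widehat R_j(1)$ when $q_j=1$.
\end{itemize}
In other words, finiteness at the endpoint is a consequence of optimality, not of boundedness. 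The paper's one-line justification of the subgradient claim glosses over this same endpoint, so the repair applies there as well.
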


\begin{proof}
For a finite convex function on a closed interval, the left derivative at
every point in $(0,1]$ exists and belongs to the subdifferential at that
point.

Fix $(i,j)$ with $y^*_{ij}(t)>0$.
For every
$0<\varepsilon\le y^*_{ij}(t)$,
the point
$
y^*(t)-\varepsilon e_{ij}
$
is feasible for~\eqref{eq:bo-half}.
Optimality of $y^*(t)$ therefore implies
\[
0
\ge
-\varepsilon v_{ij}(t_i)
+
\widehat R_j(q_j)
-
\widehat R_j(q_j-\varepsilon) \quad \Rightarrow \qquad v_{ij}(t_i)
\ge
\frac{
\widehat R_j(q_j)-\widehat R_j(q_j-\varepsilon)
}{
\varepsilon
}.
\]
Letting $\varepsilon\downarrow0$ gives
$
v_{ij}(t_i)
\ge
\widehat R_{j,-}'(q_j)
=
\gamma_j.
$
\end{proof}

We now convert the fractional solution $y^*(t)$ into a procurement menu.
For each item $j$, the quantity
$
q_j^*=\sum_i y^*_{ij}(t)
$
is implemented as the probability that the posted price for item $j$
covers the seller's cost.
Conditional on $q_j^*>0$, we set
\[
x_{ij}=\frac{y^*_{ij}(t)}{q_j^*},
\]
so that $x_{ij}$ is the probability that item $j$ is designated to buyer
$i$.
The resulting identity
$
y^*_{ij}(t)=x_{ij}q_j^*
$
is exact: the target probability for the pair $(i,j)$ is decomposed into
the probability of designation and the probability that the item is
profitable for the seller.

The two-price construction implements both the target probability $q_j^*$
and its associated expected procurement payment exactly, so this
factorization introduces no approximation loss.
Within this implementation step, the only loss comes from the possibility
that several items are simultaneously profitable for the same buyer.
The half-capacity constraint guarantees that, conditional on a particular
item being profitable and designated to buyer $i$, it is the unique
profitable item in buyer $i$'s cluster with probability at least $1/2$.
On this event, \cref{alg:procurement-menu} necessarily allocates the item.

\begin{lemma}[Implementing a half-capacity solution]
\label{lem:menu-implementation}
\label{lem:candidate-outcome}
Fix a buyer profile $t$, let $y^*(t)$ optimize~\eqref{eq:bo-half}, and define $q_j= \sum_i y^*_{ij}(t).$
For each item with $q_j>0$, choose $\gamma_j$ as in
\cref{lem:left-slope}, and apply~\cref{lem:two-price} to
$(q_j,\gamma_j)$.
Let $(a_j,b_j,\lambda_j)$ be the resulting two-price parameters.
For every item with $q_j=0$, set
$
(a_j,b_j,\lambda_j)=(0,0,1).
$
Finally, define
\[
x_{ij}
=
\begin{cases}
\dfrac{y^*_{ij}(t)}{q_j},&q_j>0,\\[0.8em]
0,&q_j=0.
\end{cases}
\]

Let
$
\rho_t=(x,a,b,\lambda).
$
Then $\rho_t\in\mathcal P$, and
\[
\sum_{i=1}^nU_i(t_i,\rho_t)
\ge
\frac12\,\mathrm{REL}_{1/2}(t).
\]
\end{lemma}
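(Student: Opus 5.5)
Membership $\rho_t\in\mathcal P$ is immediate from the construction. We have $x_{ij}\ge 0$ and $\sum_i x_{ij}=1$ whenever $q_j>0$. The two-price parameters satisfy $0\le a_j\le b_j\le 1$ and $q_j(\rho_t)=\lambda_ja_j+(1-\lambda_j)b_j=q_j$ by \cref{lem:two-price}. Hence $\sum_j x_{ij}q_j(\rho_t)=\sum_j y^*_{ij}(t)\le \tfrac12$ by the buyer constraint of \eqref{eq:bo-half}. Items with $q_j=0$ contribute nothing, since $x_{ij}=0$ for them.

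For the utility bound, I fix a pair $(i,j)$ with $y^*_{ij}>0$ and define the event
\[
G_{ij}=\{E_j=i\}\cap\{P_j\ge c_j\}.
\]
Because labels, prices and costs are all independent, $\Pr[G_{ij}]=x_{ij}q_j=y^*_{ij}$. By \cref{lem:two-price}, the expected payment satisfies $\E[P_j\mathbf 1\{P_j\ge c_j\}]=\widehat R_j(q_j)$, and every price used satisfies $P_j\le\gamma_j\le v_{ij}(t_i)$ by \cref{lem:left-slope}. This bound on $P_j$ is key, because it makes buyer $i$'s realized utility from receiving item $j$, namely $v_{ij}-P_j$, nonnegative. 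Consequently, whenever buyer $i$ receives any item $j'$ with $y^*_{ij'}>0$, her utility is nonnegative. Items designated to $i$ always have $x_{ij'}>0$, which forces $y^*_{ij'}>0$. So I may lower bound $U_i$ by restricting to the events
\[
H_{ij}=G_{ij}\cap\{\text{no other }j'\text{ with }E_{j'}=i,\ P_{j'}\ge c_{j'}\}.
\]
On $H_{ij}$ the seller must select $j$ for buyer $i$, and these events are disjoint across $j$. This gives
\[
U_i(t_i,\rho_t)\ge\sum_j \E\bigl[(v_{ij}-P_j)\mathbf 1_{H_{ij}}\bigr].
\]

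The independence structure now gives the factorization. The event "no competing profitable item" depends only on $(E_{j'},P_{j'},c_{j'})_{j'\ne j}$, so it is independent of $(E_j,P_j,c_j)$. By a union bound its probability is at least $1-\sum_{j'\ne j}x_{ij'}q_{j'}\ge 1-\tfrac12=\tfrac12$. Therefore
\[
\E\bigl[(v_{ij}-P_j)\mathbf 1_{H_{ij}}\bigr]\ge \tfrac12\,\E\bigl[(v_{ij}-P_j)\mathbf 1_{G_{ij}}\bigr]=\tfrac12\bigl(y^*_{ij}v_{ij}-x_{ij}\widehat R_j(q_j)\bigr).
\]
Here I used that $\mathbf 1\{E_j=i\}$ is independent of $(P_j,c_j)$. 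Summing over $i$, and using $\sum_i x_{ij}=1$ for each item with $q_j>0$, yields
\[
\sum_i U_i\ge\tfrac12\Bigl(\sum_{i,j}y^*_{ij}v_{ij}-\sum_j\widehat R_j(q_j)\Bigr)=\tfrac12\,\mathrm{REL}_{1/2}(t).
\]

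The main obstacle is justifying that the dropped realizations contribute nonnegatively. This needs the nonnegativity of $v_{ij}-P_j$ on every possible trade, which must be extracted carefully from the subgradient property. Concretely, \cref{lem:two-price} must guarantee $F_j^{-1}(b_j)\le\gamma_j$ when $\gamma_j$ is chosen as the left derivative of $\widehat R_j$; any atom or boundary issue at $q_j=1$ would also need checking there.
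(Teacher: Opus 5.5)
Your proof is correct and follows essentially the same route as the paper: you factor each active-edge event into designation times profitability to get probability $y^*_{ij}$ and expected payment $\widehat R_j(q_j)$, use \cref{lem:left-slope} and \cref{lem:two-price} to get $P_j\le\gamma_j\le v_{ij}(t_i)$ so that discarded realizations have nonnegative buyer utility, and bound the probability of no competing profitable designated item below by $1/2$ via the half-capacity constraint. The differences are cosmetic: your weak-inequality event matches the algorithm's set $\mathcal A_i$ exactly, and you use a union bound where the paper writes $\prod_{k\ne j}(1-y^*_{ik})\ge 1-\sum_{k\ne j}y^*_{ik}$. The endpoint concern you flag is already handled by those two lemmas, and in particular optimality forces $\gamma_j\le v_{ij}(t_i)<\infty$ even when $q_j=1$.
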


\begin{proof}
For every buyer $i$,
\[
\sum_jx_{ij}q_j
=
\sum_jy^*_{ij}(t)
\le
\frac12.
\]
For every item $j$ with $q_j>0$,
$
\sum_i x_{ij}
=
1.
$
Thus $\rho_t\in\mathcal P$.

Fix a buyer $i$.
For every item $j$, define the active-edge event
$
E_{ij}
=
\{E_j=i\}\cap\{c_j<P_j\}.
$
Because designations, price draws, and costs are independent across items,
the events $\{E_{ij}\}_{j=1}^m$ are independent.
Since the seller-cost distributions are atomless, strict and weak
profitability have the same probability.
The two-price construction therefore gives
\[
\Pr[E_{ij}]
=
x_{ij}\Pr[c_j<P_j]
=
x_{ij}q_j
=
y^*_{ij}(t).
\]

Suppose $y^*_{ij}(t)>0$.
By~\cref{lem:left-slope},
$
v_{ij}(t_i)\ge\gamma_j.
$
By~\cref{lem:two-price}, both possible prices for item $j$ are at most
$\gamma_j$.
Hence
$
P_j\le v_{ij}(t_i)
$
whenever item $j$ is designated to buyer $i$ under the constructed menu.
In particular, every such trade gives the buyer nonnegative utility.

If $E_{ij}$ occurs and all other events $E_{ik}$ fail, then item $j$ is the
unique strictly profitable item designated to buyer $i$.
The seller must therefore select it.
Consequently,
\begin{align*}
U_i(t_i,\rho_t)
&\ge
\sum_j
\E\!\left[
\bigl(v_{ij}(t_i)-P_j\bigr)
\mathbf 1\{E_{ij}\}
\prod_{k\ne j}\mathbf 1\{E_{ik}^c\}
\right]\\
&=
\sum_j
\E\!\left[
\bigl(v_{ij}(t_i)-P_j\bigr)
\mathbf 1\{E_{ij}\}
\right]
\prod_{k\ne j}
\bigl(1-y^*_{ik}(t)\bigr),
\end{align*}
where the equality uses independence across items.
For $q_j>0$,
\begin{align*}
\E\!\left[
\bigl(v_{ij}(t_i)-P_j\bigr)
\mathbf 1\{E_{ij}\}
\right]
&=
x_{ij}
\E\!\left[
\bigl(v_{ij}(t_i)-P_j\bigr)
\mathbf 1\{c_j<P_j\}
\right]\\
&=
x_{ij}
\left(
q_jv_{ij}(t_i)
-
\E\!\left[
P_j\mathbf 1\{c_j<P_j\}
\right]
\right).
\end{align*}
The two-price construction implements the ironed payment curve, so
\[
\E\!\left[
P_j\mathbf 1\{c_j<P_j\}
\right]
=
\widehat R_j(q_j).
\]
When $q_j=0$, both sides below are zero.
Thus, for every $j$,
\[
\E\!\left[
\bigl(v_{ij}(t_i)-P_j\bigr)
\mathbf 1\{E_{ij}\}
\right]
=
x_{ij}
\bigl(
q_jv_{ij}(t_i)-\widehat R_j(q_j)
\bigr).
\]

The buyer-side capacity constraint gives
$
\sum_k y^*_{ik}(t)\le\frac12.
$
Therefore,
\[
\prod_{k\ne j}
\bigl(1-y^*_{ik}(t)\bigr)
\ge
1-\sum_{k\ne j}y^*_{ik}(t)
\ge
\frac12.
\]
Since the edge-surplus terms are nonnegative, we obtain
\[
U_i(t_i,\rho_t)
\ge
\frac12
\sum_j
x_{ij}
\bigl(
q_jv_{ij}(t_i)-\widehat R_j(q_j)
\bigr).
\]

Summing over buyers and using
$\sum_i x_{ij}=1$
whenever $q_j>0$ gives
\begin{align*}
\sum_iU_i(t_i,\rho_t)
&\ge
\frac12
\sum_{i,j}
x_{ij}
\bigl(
q_jv_{ij}(t_i)-\widehat R_j(q_j)
\bigr)\\
&=
\frac12
\left(
\sum_{i,j}
y^*_{ij}(t)v_{ij}(t_i)
-
\sum_j\widehat R_j(q_j)
\right)\\
&=
\frac12\,\mathrm{REL}_{1/2}(t). \qedhere
\end{align*}
\end{proof}

\subsubsection{VCG over procurement menus}

The procurement menu $\rho_t$ constructed in
\cref{lem:menu-implementation} depends on the buyers' true types and
therefore does not by itself provide truthful buyer incentives.
We address this by selecting the procurement menu that maximizes the
buyers' reported total expected utility and applying a VCG payment rule
\cite{vickrey1961,clarke1971,groves1973}.

Recall that $U_i(t_i,\rho)$ is buyer $i$'s expected utility from the
trade generated by procurement menu $\rho$, before any VCG charge.
The expectation is over the seller's costs and the internal randomization
of the procurement menu, assuming that the seller follows her truthful,
profit-maximizing action.

By~\cref{lem:menu-basic} and the measurable maximum theorem, we may fix a
measurable maximizer for every buyer-report profile.

\begin{definition}[VCG over procurement menus]
\label{def:vcg-procurement-menu}
For a buyer-report profile $r=(r_1,\dots,r_n)$, let
$$
\rho^*(r)
\in
\arg\max_{\rho\in\mathcal P}
\sum_{k=1}^n U_k(r_k,\rho).
$$
For each buyer $i$, define
$$
\mathcal P_{-i}
=
\left\{
\rho=(x,a,b,\lambda)\in\mathcal P:
x_{ij}=0\text{ for every }j
\right\}
\quad \text{and} \quad
H_i(r_{-i})
=
\max_{\rho\in\mathcal P_{-i}}
\sum_{k\ne i}U_k(r_k,\rho).
$$
The VCG charge to buyer $i$ is
$
T_i(r)
=
H_i(r_{-i})
-
\sum_{k\ne i}U_k(r_k,\rho^*(r)).
$ The mechanism executes
\cref{alg:procurement-menu} with menu $\rho^*(r)$ and the seller's reported
cost vector $\widehat c$.
Let
$
\left(
M,\,
(p_i^{\mathrm{tr}})_{i\in[n]},\,
p_S^{\mathrm{tr}}
\right)
$
be the output of the algorithm.
The payments are
$
p_i^{b}
=
p_i^{\mathrm{tr}}+T_i(r)$ for buyer $i$,
and
$
p^{s}=p_S^{\mathrm{tr}}
$
for the seller.
\end{definition}

The key property of the procurement-menu family is that removing one buyer
does not change the trade opportunities faced by any other buyer.

\begin{lemma}[Buyer-exclusion invariance]
\label{lem:buyer-exclusion}
Fix a procurement menu
$
\rho=(x,a,b,\lambda)\in\mathcal P
$
and a buyer $i$.
There exists a menu $\rho^{-i}\in\mathcal P_{-i}$ such that, for every
buyer $k\ne i$ and every type $t_k$,
\[
U_k(t_k,\rho^{-i})
=
U_k(t_k,\rho).
\]
\end{lemma}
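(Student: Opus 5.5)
The plan is to build $\rho^{-i}$ explicitly by zeroing out buyer $i$'s designations and leaving everything else unchanged. Define $\rho^{-i}=(x',a,b,\lambda)$ with $x'_{ij}=0$ for every $j$ and $x'_{kj}=x_{kj}$ for all $k\ne i$. The label of item $j$ previously assigned to $i$ is thereby moved to the ``not offered'' label $0$. The price parameters $(a_j,b_j,\lambda_j)$ are untouched.

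The first step is to check that $\rho^{-i}\in\mathcal P$, and hence $\rho^{-i}\in\mathcal P_{-i}$:
\begin{itemize}
\item nonnegativity of $x'$ holds trivially;
\item $\sum_k x'_{kj}\le\sum_k x_{kj}\le 1$;
\item $q_j(\rho^{-i})=q_j(\rho)$, since $q_j$ depends only on $(a_j,b_j,\lambda_j)$;
\item the half-capacity constraints for $k\ne i$ are literally unchanged, and for $i$ the left-hand side becomes $0\le\frac12$.
\end{itemize}

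The second step, which carries the content, is a coupling of the executions of \cref{alg:procurement-menu} under $\rho$ and $\rho^{-i}$. Use the same seller costs $c$ and the same price draws $P_j$. For each item $j$, generate $E_j$ under $\rho$ and set $E'_j=E_j$ if $E_j\ne i$, and $E'_j=0$ if $E_j=i$. Since the draws are independent across items and $\Pr[E'_j=k]=x_{kj}=x'_{kj}$ for $k\ne i$, the labels $E'_j$ have exactly the law prescribed by $\rho^{-i}$. Under this coupling, for every $k\ne i$ the cluster $\mathcal C_k=\{j:E_j=k\}$ is identical in both runs. The prices $P_j$ and costs $c_j$ agree as well, so $\mathcal A_k$ is identical too.

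The key structural observation is that the seller's choice for buyer $k$ in \cref{alg:procurement-menu} is made cluster by cluster: $j_k^*$ is the argmax over $\mathcal A_k$ alone, with a fixed tie-breaking rule. Clusters of distinct buyers are disjoint, so there is no cross-buyer contention, and the choice for buyer $k$ does not depend on $\mathcal C_i$. The same reasoning applies if the seller's truthful profit-maximizing behaviour is used in place of the algorithm's prescription. Her utility is additive across clusters and she may take at most one item per cluster, so her optimal choice in cluster $k$ is independent of other clusters. Hence, pointwise in the coupled randomness, buyer $k$ receives the same item, or none, and pays the same $p_k^{\mathrm{tr}}$.

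Taking expectations then gives $U_k(t_k,\rho^{-i})=U_k(t_k,\rho)$ for every $k\ne i$ and every $t_k$. The only point needing care, and the one I expect to be the mild obstacle, is arguing rigorously that the seller's decision genuinely decomposes by buyer, including under ties and under seller DSIC behaviour. I would appeal to the per-buyer structure of the algorithm and to the additivity of the seller's cost across items. This is what rules out the positive externality of \cref{obs:vcg-deficit}.
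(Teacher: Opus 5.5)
Your proposal is correct and follows the paper's proof essentially step for step. You use the same menu $\rho^{-i}$, obtained by moving buyer $i$'s designation mass to the null label, the same feasibility check, and the same coupling of costs, price draws, and relabeled designations, which leaves every other buyer's cluster, prices, and seller choice unchanged. Your extra detail on the law of the coupled labels and on the per-cluster decomposition of the seller's choice makes explicit what the paper leaves implicit.
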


\begin{proof}
Define
$
\rho^{-i}
=
(x^{-i},a,b,\lambda),
$
where $x^{-i}_{kj} = 0$ if $k =i$ and $x^{-i}_{kj} = x_{kj}$ otherwise.
Thus, the designation probability previously assigned to buyer $i$ is
transferred to the null label $E_j=0$, while all other designation
probabilities and all price distributions remain unchanged.
Deleting a row can only relax the half-capacity constraints, so
$\rho^{-i}\in\mathcal P_{-i}$.

Couple the two executions using the same seller costs and price draws.
Whenever $E_j=i$ under $\rho$, set $E_j=0$ under $\rho^{-i}$; leave every
other buyer label unchanged.
For each buyer $k\ne i$, the designated set
\[
\mathcal C_k=\{j:E_j=k\},
\]
the associated posted prices, and the seller's choices from this set are
therefore identical in the two executions.
Hence buyer $k$ receives the same item and makes the same trade payment,
which proves
\[
U_k(t_k,\rho^{-i})
=
U_k(t_k,\rho). \qedhere
\]
\end{proof}

Applying \cref{lem:buyer-exclusion} to the selected menu $\rho^*(r)$ gives
\[
H_i(r_{-i})
\ge
\sum_{k\ne i}U_k(r_k,\rho^*(r)).
\]
Consequently,
\begin{equation}
\label{eq:vcg-charge-nonnegative}
T_i(r)\ge0
\qquad
\text{for every buyer $i$ and report profile $r$}.
\end{equation}

\begin{lemma}
\label{lem:vcg-menu-properties}
The mechanism in \cref{def:vcg-procurement-menu} is BIC, interim IR for buyers, DSIC, ex-post IR for the seller and ex-post WBB. In addition, for every true buyer profile $t$
    \[
    \E[\GFT\mid t]
    \ge
    \sum_{i=1}^n U_i(t_i,\rho^*(t)).
    \]
\end{lemma}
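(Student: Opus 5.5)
The plan is to verify each property separately, using \cref{lem:menu-basic} for the seller side, the VCG structure for the buyers, \eqref{eq:vcg-charge-nonnegative} for budget balance, and a direct accounting identity for the GFT bound.

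\emph{Seller.} Once $\rho^*(r)$ is fixed, the seller faces a fixed procurement menu. By \cref{lem:menu-basic}, reporting $c$ truthfully, which makes \cref{alg:procurement-menu} pick the profit-maximizing profitable designated item for each buyer, is a dominant strategy. Participation is ex-post IR, since the seller only trades item $j$ at a price $P_j\ge c_j$.

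\emph{Buyers.} Buyer $i$'s interim utility from reporting $r_i$ is $\E_{t_{-i}}[U_i(t_i,\rho^*(r_i,t_{-i})) - T_i(r_i,t_{-i})]$. Here $U_i$ already averages over costs and menu randomness, computed under the truthful seller. This quantity equals $\E_{t_{-i}}\bigl[U_i(t_i,\rho^*) + \sum_{k\ne i}U_k(t_k,\rho^*) - H_i(t_{-i})\bigr]$. Since $H_i$ does not depend on $r_i$ and $\rho^*(t)$ maximizes the bracketed welfare, truthful reporting is optimal pointwise in $t_{-i}$; this gives DSIC for buyers and hence BIC. For IR, the truthful utility is $\max_{\rho\in\mathcal P}\sum_k U_k - \max_{\rho\in\mathcal P_{-i}}\sum_{k\ne i}U_k$. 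This is at least $0$ because $\mathcal P_{-i}\subseteq\mathcal P$ and $U_i(t_i,\rho)=0$ for $\rho\in\mathcal P_{-i}$: buyer $i$ is never designated an item, so she trades nothing and pays nothing. The compactness and continuity in \cref{lem:menu-basic} ensure that the maxima exist and that the charges are measurable.

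\emph{Budget balance.} The trade payments satisfy $\sum_i p_i^{\mathrm{tr}} = p_S^{\mathrm{tr}}$ ex post. Adding $T_i\ge 0$ from \eqref{eq:vcg-charge-nonnegative} gives $\sum_i p_i^b\ge p^s$ pointwise, which is ex-post WBB.

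\emph{GFT bound.} Fix $t$ and run the mechanism with $\rho^*(t)$. The realized GFT is $\sum_{(i,j)\in M}(v_{ij}(t_i)-c_j)$. Splitting each term as $(v_{ij}-P_j)+(P_j-c_j)$ and noting that the seller accepts only when $P_j-c_j\ge0$, I get $\GFT\ge\sum_i\bigl(\sum_{(i,j)\in M}v_{ij}-p_i^{\mathrm{tr}}\bigr)$. Taking the expectation over $c$ and the menu randomness then yields $\E[\GFT\mid t]\ge\sum_iU_i(t_i,\rho^*(t))$. The VCG transfers are internal money transfers and do not enter GFT.

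One point needs care here. The valuation for a traded singleton under the buyer's true XOS type is $v_i(t_i,\{j\})=v_{ij}(t_i)$, and $M$ is a matching, so the buyer value term is exactly the singleton value.

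I expect the \textbf{main obstacle} to be the measurability and integrability bookkeeping: ensuring that $\rho^*(\cdot)$ and $H_i(\cdot)$ are measurable selections so that the interim expectations are well defined. This follows from the measurable maximum theorem together with \cref{lem:menu-basic} and boundedness from \cref{ass:standing}. The remaining steps are routine VCG arguments.
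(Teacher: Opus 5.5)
Your proposal is correct and follows essentially the same route as the paper. You use the VCG rewriting $U_i - T_i = \sum_k U_k - H_i(r_{-i})$ for truthfulness, and interim IR via $\mathcal P_{-i}\subseteq\mathcal P$ with $U_i=0$ on $\mathcal P_{-i}$. You invoke \cref{lem:menu-basic} for the seller, and strong budget balance of the trade payments plus $T_i\ge 0$ for ex-post WBB. Finally, you split the realized GFT into buyer trade utilities plus a nonnegative seller surplus. The only cosmetic difference is that you get nonnegativity of that surplus item-by-item ($P_j\ge c_j$ on every accepted item), whereas the paper argues it from the seller's option to reject all offers.
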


\begin{proof}
Fix a buyer $i$, her true type $t_i$, and reports $r_{-i}$ of the other
buyers.
If buyer $i$ reports $r_i$, her expected utility after the VCG charge is
\begin{align*}
&U_i\bigl(t_i,\rho^*(r_i,r_{-i})\bigr)
-
T_i(r_i,r_{-i})\\
&\qquad=
U_i\bigl(t_i,\rho^*(r_i,r_{-i})\bigr)
+
\sum_{k\ne i}
U_k\bigl(r_k,\rho^*(r_i,r_{-i})\bigr)
-
H_i(r_{-i}).
\end{align*}
The final term is independent of $r_i$.
When $r_i=t_i$, the selected menu $\rho^*(t_i,r_{-i})$ maximizes the sum
of the first two terms over all $\rho\in\mathcal P$.
Truthful reporting therefore maximizes buyer $i$'s expected utility.
% This proves part~(i).

At the truthful report, buyer $i$'s expected utility equals
\[
\max_{\rho\in\mathcal P}
\left\{
U_i(t_i,\rho)
+
\sum_{k\ne i}U_k(r_k,\rho)
\right\}
-
H_i(r_{-i}).
\]
Restricting the maximum to $\mathcal P_{-i}$ and observing that
$
U_i(t_i,\rho)=0
$ for every $\rho\in\mathcal P_{-i}$
we have
\[
U_i(t_i,\rho^*(t_i,r_{-i}))
-
T_i(t_i,r_{-i})
\ge
H_i(r_{-i})-H_i(r_{-i})
=
0.
\]
This proves interim individual rationality.

The selected procurement menu and all VCG charges depend only on the buyer
reports.
The seller therefore faces exactly the direct procurement-menu mechanism
analyzed in~\cref{lem:menu-basic}.
It is DSIC and ex-post individually rational for the seller.

For every realization of the procurement-menu algorithm,
\[
p_S^{\mathrm{tr}}
=
\sum_{i=1}^n p_i^{\mathrm{tr}}.
\]
Together with~\eqref{eq:vcg-charge-nonnegative}, this gives
\begin{align*}
\sum_{i=1}^n p_i^b-p^s
&=
\sum_{i=1}^n
\bigl(p_i^{\mathrm{tr}}+T_i(r)\bigr)
-
p_S^{\mathrm{tr}} =
\sum_{i=1}^nT_i(r)
\ge0.
\end{align*}
Thus the complete mechanism is ex-post weakly budget balanced.

Finally, fix a true buyer profile $t$ and consider one realized execution
of $\rho^*(t)$ under truthful seller reporting.
Let
$
u_i^{\mathrm{tr}}
$
denote buyer $i$'s utility from her item trade before the VCG charge, and
let
\[
u_S^{\mathrm{tr}}
=
p_S^{\mathrm{tr}}
-
\sum_{(i,j)\in M}c_j
\]
denote the seller's utility from the selected trades.
Because the trade payments are strongly budget balanced,
\[
\GFT(M;t,c)
=
\sum_{i=1}^n u_i^{\mathrm{tr}}
+
u_S^{\mathrm{tr}}.
\]
The seller can always reject every item, so
$
u_S^{\mathrm{tr}}\ge0
$
pointwise.
Taking expectation over the seller's costs and the procurement menu's
randomization, and using
$
U_i(t_i,\rho^*(t))
=
\E[u_i^{\mathrm{tr}}\mid t],
$
gives
\[
\E[\GFT\mid t]
\ge
\sum_{i=1}^nU_i(t_i,\rho^*(t)). \qedhere
\]
\end{proof}

\begin{proof}[Proof of~\cref{thm:main_tailbuyer}]
Fix a true buyer profile $t$, and let $\rho_t$ be the procurement menu
constructed in~\cref{lem:menu-implementation}.
Because $\rho^*(t)$ maximizes total buyer utility over $\mathcal P$,
\begin{align*}
\E[\GFT\mid t]
&\ge
\sum_iU_i(t_i,\rho^*(t)) \tag{by \cref{lem:vcg-menu-properties} }\\
&\ge
\sum_iU_i(t_i,\rho_t)\\
&\ge
\frac12\,\mathrm{REL}_{1/2}(t) \tag{by \cref{lem:menu-implementation}}\\
&\ge
\frac14\,\Pi_B^*(t). \tag{by \cref{thm:bo-lower-bounds-side}}
\end{align*}

Taking expectation over $t$ gives
\[
\E[\GFT]
\ge
\frac14\,\E_t[\Pi_B^*(t)]
=
\frac14\,\Pi_B^*(\mathcal I^{\mathrm{copies}}).
\]
The incentive, individual-rationality, and budget-balance properties follow
from~\cref{lem:vcg-menu-properties}.
\end{proof}

\subsection{Approximating the Seller-Side Copy Benchmark} \label{sec:sellertail} 
Having approximated the buyer-side copy benchmark, we now turn to $\Pi_S^*(\mathcal I^{\mathrm{copies}})$. This direction is considerably simpler since there is only one seller.  We propose the following posted-price mechanism. This mechanism is adapted from \cite{CHMS10}, where it was originally designed for revenue maximization. 

For the analysis, let $\Pi_S^*(c)$ denote the seller-side copy benchmark conditional on $c$, so that \[ \Pi_S^*(\mathcal I^{\mathrm{copies}}) = \E_c[\Pi_S^*(c)]. \] Fix a virtual-surplus-maximizing matching rule for this conditional benchmark, and let \[ q_{ij}^*(c) = \Pr_t\!\left[(i,j)\text{ is selected by this matching rule}\right] \] be the resulting ex-ante allocation probability of copy $(i,j)$. Because every realized allocation is a matching, \[ \sum_j q_{ij}^*(c)\le 1 \qquad\text{and}\qquad \sum_i q_{ij}^*(c)\le 1. \] The posted-price mechanism scales each of these probabilities by a factor of three and offers the corresponding quantile price. Buyers are visited in an arbitrary order, and each buyer chooses her utility-maximizing item among those that remain available. The scaling controls both possible sources of contention: several items may be attractive to the same buyer, and several buyers may be interested in the same item. Let $G_{ij}$ denote the CDF of the singleton value $v_{ij}(t_i)$.

\begin{definition}[Sequential Virtual Quantile Mechanism] Let $q_{ij}^*$ denote the optimal allocation probability of item $j$ to agent $i$ in the optimal matching that induces $\Pi^*_S(\mathcal{I}^{\mathrm{copies}})$.
The mechanism, $\cM^{\text{svq}}_S$ is defined by a price for each pair $(i, j)$ and an arbitrary sequence $\sigma$ over the buyers.
\begin{enumerate}
\item  For all $i, j$, set $\theta_{ij}(c) = G_{ij}^{-1}\left(1 - \frac{q_{ij}^*}{3}\right)$\footnote{This displays the construction for regular, atomless value distributions. For a general irregular distribution, we use the standard possibly randomized posted price from~\cite{CHMS10}}.
\item  Buyers arrive according to $\sigma$.
\item  When approached, buyer $i$ is presented with a menu of currently available items $J'_i$. She selects the item $j \in J'_i$ that maximizes her utility $v_{ij} - \theta_{ij}(c)$, provided $v_{ij} \ge \theta_{ij}(c)$.
\end{enumerate}
\end{definition}

We now prove that this simple, order-oblivious mechanism achieves a constant-factor approximation to the seller's optimal expected profit benchmark in the copies instance, $\Pi^*_S(\mathcal{I}^{\text{copies}})$.

\begin{proposition}\label{thm:sellertailmain}
The Sequential Virtual Quantile Mechanism $\cM^{\text{svq}}_S$ is Dominant Strategy Incentive Compatible for Buyers (DSIC-B), Strongly Budget Balanced (SBB), and achieves a $6.75$-approximation to the seller's optimal expected profit in the single-dimensional copies instance; that is:$$\Pi_S(\cM^{\text{svq}}_S) \ge \frac{1}{6.75} \Pi^*_S(\mathcal{I}^{\text{copies}})$$
\end{proposition}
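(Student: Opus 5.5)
The plan is to prove the three claims separately, with the approximation following the \cite{CHMS10} template: scale the benchmark's ex-ante probabilities by $1/3$, then lose a constant for each of the two sources of contention.

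\textbf{Incentives and budget balance.} These are immediate.
\begin{itemize}
\item[(i)] Conditional on $c$ and on the set of remaining items $J'_i$, buyer $i$ faces a take-it-or-leave-it menu $(\theta_{ij}(c))_{j\in J'_i}$ that does not depend on her report. She picks a utility-maximizing item or nothing, so the mechanism is DSIC-B and ex-post IR.
\item[(ii)] The buyer's payment $\theta_{ij}(c)$ is transferred directly to the seller, which gives SBB.
\end{itemize}

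\textbf{Upper bound on the benchmark.} Fix $c$ and work with the quantile revenue curves $R_{ij}$ of $G_{ij}$, ironed so that they are concave with $R_{ij}(0)=0$. By the standard Myerson/ironing argument for single-parameter copies (as in \cref{lem:warmup_tail_seller_threshold}), the conditional benchmark satisfies
\[
\Pi_S^*(c)\le \sum_{i,j}\bigl(R_{ij}(q^*_{ij})-c_jq^*_{ij}\bigr).
\]
Optimality of the matching rule means only pairs with ironed virtual value at least $c_j$ are ever selected. Hence at quantile $q^*_{ij}$ the marginal revenue is at least $c_j$. The price $\theta_{ij}$ sits at the smaller quantile $q^*_{ij}/3$, so $\theta_{ij}\ge\tilde\varphi_{ij}(\theta_{ij})\ge c_j$, and every sale yields nonnegative seller profit. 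In addition, $q\mapsto R_{ij}(q)-c_jq$ is concave and vanishes at $0$, which gives
\[
R_{ij}(q^*_{ij}/3)-c_jq^*_{ij}/3\ \ge\ \tfrac13\bigl(R_{ij}(q^*_{ij})-c_jq^*_{ij}\bigr).
\]
This is where the first factor $1/3$ is lost.

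\textbf{Lower bound on mechanism profit.} Since all sales are profitable, I lower-bound profit by counting, for each pair $(i,j)$, only realizations where three things hold.
\begin{itemize}
\item[(a)] Item $j$ is still available when buyer $i$ arrives. Each other buyer $i'$ takes $j$ with probability at most $q^*_{i'j}/3$. A union bound with $\sum_{i'}q^*_{i'j}\le1$ gives availability probability at least $2/3$. This event depends only on other buyers' types, so it is independent of $t_i$.
\item[(b)] $v_{ij}\ge\theta_{ij}$, which has probability $q^*_{ij}/3$.
\item[(c)] For no other item $k$ is $v_{ik}\ge\theta_{ik}$. By item independence and $\sum_kq^*_{ik}\le1$, this has probability at least $2/3$.
\end{itemize}
On this joint event buyer $i$ must buy $j$ at price $\theta_{ij}$. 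The events (a), (b), (c) are independent of each other, since they involve disjoint coordinates of types and $c$ is fixed. Taking expectations, the revenue-minus-cost from $(i,j)$ is at least $\tfrac49\bigl(R_{ij}(q^*_{ij}/3)-c_jq^*_{ij}/3\bigr)$. The ironed case uses the randomized price from \cite{CHMS10}, which realizes the ironed curve exactly.

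Summing over pairs and combining with the previous step gives $\tfrac{4}{27}\Pi_S^*(c)$. Averaging over $c$ yields the stated $27/4=6.75$ factor.

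\textbf{Main obstacle.} I expect step (a) to be the hardest. Buyer $i'$ may take $j$ even when $v_{i'j}<\theta_{i'j}$ fails to be the reason, and a buyer's utility-maximizing choice couples items. The bound I would use is that $i'$ takes $j$ only if $v_{i'j}\ge\theta_{i'j}$, an event of probability $q^*_{i'j}/3$ regardless of order or of the other items.

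A second point needs care: the independence between availability and buyer $i$'s own events. It holds because earlier buyers' choices depend only on their own types and the fixed prices.
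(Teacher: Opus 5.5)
Your proposal is correct and follows essentially the same route as the paper: scale the benchmark quantiles by $1/3$, then use two independent contention events, each of probability at least $2/3$, to obtain the $4/27$ factor. Your concavity argument for the $1/3$ loss is equivalent to the paper's price-monotonicity step $\theta_{ij}\ge p^*_{ij}$. Your explicit check that every margin $\theta_{ij}-c_j$ is nonnegative supplies a step the paper uses only implicitly when it discards all other realizations.
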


\begin{proof}First, we establish the incentive and budget balance properties. Because $\cM^{\text{svq}}_S$ is a sequential posted-price mechanism, each arriving buyer $i$ faces a menu of available items with fixed prices $\theta_{ij}(c)$ that are independent of her own reported valuation. Since the buyer simply selects the utility-maximizing item from this static menu, truthful reporting (or straightforward utility maximization) is a dominant strategy, making the mechanism DSIC-B. Additionally, every transaction consists of a direct payment from the buyer to the seller equal to the posted price. Thus, the mechanism is trivially SBB. Since buyers has the options to opt-out (which result in $0$ utility), they are ex-post IR.

Next, we bound the approximation ratio. Let $p_{ij}^* = G_{ij}^{-1}(1 - q_{ij}^*)$ be the price corresponding to the optimal allocation probability $q_{ij}^*$ in the single-dimensional benchmark. We first note that the expected profit of the optimal mechanism for any agent is bounded above by selling to that agent at probability $q_{ij}^*$ and charging the corresponding price $p_{ij}^*$. Therefore, the optimal profit benchmark is bounded by:$$\Pi^*_S(\mathcal{I}^{\text{copies}}) \le \sum_{i,j} q_{ij}^* (p_{ij}^* - c_j)$$Because the cumulative distribution function $G_{ij}$ is non-decreasing, demanding a strictly lower acceptance probability $\frac{q_{ij}^*}{3} \le q_{ij}^*$ ensures that our posted price is at least the benchmark price: $\theta_{ij}(c) \ge p_{ij}^*$. Consequently, the profit margin weakly improves:$$\theta_{ij}(c) - c_j \ge p_{ij}^* - c_j$$A transaction between $i$ and $j$ will successfully execute if two independent events hold:\begin{itemize}\item $\mathcal{E}_i$: Buyer $i$ does not desire any other item $k \neq j$.\item $\mathcal{E}_j$: Item $j$ is not desired by any other buyer $\ell \neq i$.\end{itemize}If both occur, item $j$ is available when buyer $i$ arrives, and it is her unique utility-maximizing choice.Feasibility in the partition matroids requires $\sum_k q_{ik}^* \le 1$ and $\sum_\ell q_{\ell j}^* \le 1$.The expected number of other items desired by $i$ is $\sum_{k \neq j} \frac{q_{ik}^*}{3} \le \frac{1}{3}$. By Markov's inequality, the probability $i$ desires another item is $\le \frac{1}{3}$, thus $\Pr(\mathcal{E}_i | D_{ij}=1) \ge \frac{2}{3}$.Similarly, the expected number of other buyers desiring $j$ is $\sum_{\ell \neq i} \frac{q_{\ell j}^*}{3} \le \frac{1}{3}$, thus $\Pr(\mathcal{E}_j) \ge \frac{2}{3}$.Since valuations are independent in $\mathcal{I}^{\text{copies}}$, the probability $(i, j)$ is available and selected is at least:$$\Pr[\mathcal{E}_i \cap \mathcal{E}_j] = \Pr[\mathcal{E}_i] \cdot \Pr[\mathcal{E}_j] \ge \left(\frac{2}{3}\right) \left(\frac{2}{3}\right) = \frac{4}{9}$$The total expected profit is the sum of contributions from all successful trades:$$\Pi_S(M^{\text{svq}}_S) \ge \sum_{i,j} \frac{4}{27} q_{ij}^* (\theta_{ij}(c) - c_j) \ge \frac{4}{27} \sum_{i,j} q_{ij}^* (p_{ij}^* - c_j) \ge \frac{4}{27} \Pi^*_S(\mathcal{I}^{\text{copies}}) \qedhere$$\end{proof}
\section{Final Assembly}\label{sec:final}
We are now ready to combine the core and tail analyses to prove the main result for the multiple-buyer setting. Let $\mathcal I$ denote the original multi-buyer trade instance. We define our final mechanism, $\mathcal M^{\mathrm{final}}$, as a randomized delegation protocol: with probability $p$, we delegate pricing power to the seller; with probability $1-p$, we deploy the randomized buyer-side mechanism, denoted as $\mathcal M_B^{\mathrm{copy}}$, from \cref{thm:main_tailbuyer}. As in the bilateral trade setting, we first state a bound assuming the seller computes their profit-maximizing BIC and IIR mechanism, $\mathcal{M}_S^\star$.

\begingroup
\renewcommand{\thetheorem}{\ref{thm:main}}
\begin{theorem}
\mainresult
\end{theorem}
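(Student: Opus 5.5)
The plan is to follow the template of \cref{thm:BTmainoptimal}. I would first handle the incentive and budget properties, then prove the approximation by a linear combination of the core and tail bounds, optimizing over the delegation probability $p$.

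\textbf{Incentives and budget balance.} With probability $p$ the seller runs her profit-maximizing BIC and IIR mechanism $\mathcal M_S^\star$; this is BIC, IIR and WBB by definition. With probability $1-p$ we run $\mathcal M_B^{\mathrm{copy}}$, which has these properties by \cref{thm:main_tailbuyer}. The coin is independent of types and reports, so the mixture inherits BIC, IIR and ex-ante WBB. Since both components are WBB, each one's GFT is at least the total profit it generates. So
\[
\gft(\mathcal M^{\mathrm{final}})\ \ge\ p\,\Pi^*_S(\mathcal I) + (1-p)\,\E[\gft(\mathcal M_B^{\mathrm{copy}})].
\]

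\textbf{Lower-bounding the seller's optimal profit.} The seller observes $c$ and can deploy $\ASPE(c)$ or $\cM_S^{\mathrm{svq}}$. Hence, pointwise in $c$, her optimal conditional profit is at least any convex combination of the two. Averaging over $c$ and using \cref{prop:multi-core-aspe} together with $\E_c[\gft^{\mathrm{ud}}_{\mathcal F^m}(c)]=\gft^{*,\mathrm{ud}}_{\mathcal F^m}$, I get
\[
\Pi^*_S(\mathcal I)\ \ge\ \max\Bigl\{\tfrac18\mu-\tfrac{57}{8}\gft^{*,\mathrm{ud}}_{\mathcal F^m},\ \tfrac{4}{27}\Pi^*_S(\mathcal I^{\mathrm{copies}})\Bigr\},
\]
where the second term comes from \cref{thm:sellertailmain}. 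Nonnegativity of profit lets me take the max rather than requiring the first term to be positive.

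\textbf{Tail and upper bound.} From \cref{lem:global_split,lem:global_tail_bound},
\[
\gft^*\le \mu+\gamma\,\gft^{*,\mathrm{ud}}_{\mathcal F^m}, \qquad \gamma=\frac{1}{1-\ln 2}.
\]
Applying \cref{cor:copies_profit_approx} with $\mathcal F^m$ gives
\[
\tfrac12\Pi^*_S(\mathcal I^{\mathrm{copies}})+\tfrac12\Pi^*_B(\mathcal I^{\mathrm{copies}})\ \ge\ \tfrac{1}{3.15}\,\gft^{*,\mathrm{ud}}_{\mathcal F^m}.
\]
\Cref{thm:main_tailbuyer} gives $\E[\gft(\mathcal M_B^{\mathrm{copy}})]\ge \tfrac14\Pi^*_B(\mathcal I^{\mathrm{copies}})$.

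\textbf{Case analysis.} Split on whether $\mu\ge\kappa\,\gft^{*,\mathrm{ud}}_{\mathcal F^m}$ for a constant $\kappa>57$.

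In the large-core case, the ASPE bound gives $\Pi^*_S\ge \tfrac18(1-57/\kappa)\mu$, and $\gft^*\le(1+\gamma/\kappa)\mu$. So the GFT is at least
\[
p\cdot\frac{1-57/\kappa}{8(1+\gamma/\kappa)}\,\gft^*.
\]

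In the small-core case, $\gft^*\le(\kappa+\gamma)\gft^{*,\mathrm{ud}}_{\mathcal F^m}$. I would lower-bound the mechanism's GFT by
\[
p\,\tfrac{4}{27}\Pi^*_S(\mathcal I^{\mathrm{copies}})+(1-p)\,\tfrac14\Pi^*_B(\mathcal I^{\mathrm{copies}}).
\]
Balancing the two coefficients, i.e.\ choosing $p$ with $\tfrac{4p}{27}=\tfrac{1-p}{4}$ so that $p=27/43$, this is at least
\[
\tfrac{4p}{27}\bigl(\Pi^*_S(\mathcal I^{\mathrm{copies}})+\Pi^*_B(\mathcal I^{\mathrm{copies}})\bigr)\ \ge\ \tfrac{8p}{27\cdot 3.15}\,\gft^{*,\mathrm{ud}}_{\mathcal F^m}.
\]
The paper's $16/43$ for the buyers' side matches $1-p=16/43$. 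Finally, I would choose $\kappa$ to equalize the two cases' ratios and check numerically that the resulting $\alpha\approx1/1033$.

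\textbf{Main obstacle.} The delicate step is the large-core case, because the additive loss $\tfrac{57}{8}\gft^{*,\mathrm{ud}}_{\mathcal F^m}$ in \cref{prop:multi-core-aspe} forces $\kappa$ to be large. That in turn inflates $\kappa+\gamma$ in the small-core case. The work is the one-variable optimization of $\kappa$ under the fixed $p=27/43$, with the first step being to write the two case ratios as explicit functions of $\kappa$ and equalize them. A secondary point to verify is that the seller's optimal mechanism may condition on her own $c$, so that taking the pointwise max over $c$ and then expectations is legitimate.
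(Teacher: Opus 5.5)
Your proposal is correct and follows the paper's proof essentially step for step. It uses the same inheritance argument for BIC, IIR and ex-ante WBB and the master bound $\gft^*\le\mu+\gamma\,\gft^{*,\mathrm{ud}}_{\mathcal F^m}$. It also makes the same choice $p=27/43$, which equalizes $p\cdot\frac{4}{27}$ with $(1-p)\cdot\frac14$. Finally, equalizing the two case ratios gives exactly the paper's core/tail threshold $\theta=57+\frac{16}{3.15}\cdot\frac{4}{27}\approx 57.75$, and hence $\alpha\approx 1/1033$. The statement is made under Assumption~\ref{ass:standing}, which you need in order to invoke \cref{thm:main_tailbuyer}.
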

\addtocounter{theorem}{-1}
\endgroup

% \begin{theorem}\label{thm:main}Consider any Bayesian multidimensional instance with $n$ XOS buyers and one additive
% seller over independent items. the mechanism $\mathcal M^{\mathrm{final}}$ is BIC, interim IR, and ex-ante weakly budget balanced. Moreover, it satisfies:$$\gft(\mathcal M^{\mathrm{final}}) \ge \alpha\,\gft^*,$$where $\alpha \approx 1/1033$.\end{theorem}

\begin{proof}The incentive and budget properties are immediate.
By definition, $\mathcal M_S^\star$ is BIC, interim IR, and ex-ante WBB.
By \cref{thm:main_tailbuyer}, $\mathcal M_B^{\mathrm{copy}}$ is DSIC (hence BIC), every buyer is interim IR, the seller is ex-post IR, and the mechanism is ex-post WBB (hence ex-ante WBB).
Since the coin flip defining $\mathcal M^{\mathrm{final}}$ is independent of the agents' reports, $\mathcal M^{\mathrm{final}}$ inherits BIC, interim IR, and ex-ante WBB.

Because every weakly budget-balanced mechanism has expected GFT at least its expected profit, we lower bound the performance of $\mathcal M^{\mathrm{final}}$ by its expected profit. Let $\eta_S = 1/6.75$ and $\eta_B = 1/4$. By randomizing with $p = \frac{\eta_B}{\eta_S+\eta_B} = 27/43$, we balance the tail contributions of both sides. Let $U = \gft^{*,\mathrm{ud}}_{\mathcal F^m}$ and $\gamma = (1-\ln 2)^{-1}$. By \cref{lem:global_split,lem:global_tail_bound}, the first-best gains from trade is upper bounded by:\begin{equation}\label{eq:master_upper_bound}\gft^* \le \mu + \gamma U.\end{equation}
This inequality dictates our case analysis: either the core term $\mu$ is large, or the matching benchmark $U$ dominates the instance. Let $\theta = 57 + \frac{16\eta_S}{3.15} \approx 57.75$.

\smallskip
\noindent\emph{Case 1: Core-dominated ($\mu \ge \theta U$).}In this case, the core surplus is large enough for the entry-fee mechanism to extract significant profit. By \cref{prop:multi-core-aspe}, the seller's optimal profit is at least $\Pi_S^*(\mathcal I) \ge \frac{1}{8}\mu - \frac{57}{8}U$. The expected profit of our protocol is at least:$$\gft(\mathcal M^{\mathrm{final}}) \ge p \left( \frac{1}{8}\mu - \frac{57}{8}U \right).$$Substituting the case assumption $U \le \mu/\theta$ and the upper bound \cref{eq:master_upper_bound}, we obtain:$$\gft(\mathcal M^{\mathrm{final}}) \ge \frac{p}{8} \left( 1 - \frac{57}{\theta} \right) \mu \ge \frac{p(\theta-57)}{8(\theta+\gamma)} \gft^* = \alpha \gft^*.$$
\smallskip
\noindent\emph{Case 2: Tail-dominated ($\mu < \theta U$).}When the core is small, the matching benchmark $U$ controls the GFT. We lower bound the profit using the tail benchmarks $S = \Pi_S^*(\mathcal I^{\mathrm{copies}})$ and $B = \Pi_B^*(\mathcal I^{\mathrm{copies}})$. By randomizing between the optimal benchmarks, we have:$$\gft(\mathcal M^{\mathrm{final}}) \ge p \eta_S S + (1-p) \eta_B B.$$By our choice of $p$, we have $p\eta_S = (1-p)\eta_B = 4/43$. Applying the benchmark comparison $\frac{1}{2}(S+B) \ge \frac{1}{3.15}U$, we get:$$\gft(\mathcal M^{\mathrm{final}}) \ge \frac{4}{43}(S+B) \ge \frac{8}{43 \cdot 3.15} U.$$Substituting the GFT bound $\gft^* \le (\theta + \gamma)U$ yields:$$\gft(\mathcal M^{\mathrm{final}}) \ge \frac{8}{43 \cdot 3.15 (\theta + \gamma)} \gft^* = \alpha \gft^*.$$In both cases, the protocol guarantees at least an $\alpha$ fraction of the first-best GFT.\end{proof}

\paragraph{AI Disclosure.}
We used ChatGPT and Gemini during the preparation of this manuscript to assist with writing and presentation, and to help brainstorm and formulate some proof approaches and intermediate arguments. These tools materially influenced parts of the technical sections. The authors verified the correctness and originality of all content including references.

\bibliographystyle{alpha}
\bibliography{biblio}

\appendix
\crefalias{section}{appendix}
\crefalias{subsection}{appendix}
\section{Virtual Value Function}\label{app:virtualvalue}
\begin{definition}[Buyer Virtual Value]
    The \emph{Myerson virtual value} $\varphi_i(b_i)$ for a buyer $i$ with value $b_i$ drawn from distribution $D_i$ (with CDF $D_i$ and PDF $d_i$) is $\varphi_i(b_i) = b_i - \frac{1-D_i(b_i)}{d_i(b_i)}$. Let $\tilde{\varphi}_i(b_i)$ be the \emph{ironed virtual value}, which is the monotonized version of $\varphi_i(b_i)$.
\end{definition}

\begin{definition}[Seller Virtual Value]
    The \emph{Myerson virtual value} (or \emph{virtual cost}) $\psi_j(s_j)$ for a seller $j$ with cost $s_j$ drawn from distribution $F_j$ (with CDF $F_j$ and PDF $f_j$) is $\psi_j(s_j) = s_j + \frac{F_j(s_j)}{f_j(s_j)}$. Let $\tilde{\psi}_j(s_j)$ be the \emph{ironed virtual value}, which is the monotonized version of $\psi_j(s_j)$.
\end{definition}

\begin{definition}[Virtual Surplus]
    Given buyer values $\mathbf{v}$ (with virtual values $\boldsymbol{\varphi}(\mathbf{v})$) and seller costs $\mathbf{c}$ (with virtual costs $\boldsymbol{\psi}(\mathbf{c})$), the \emph{virtual surplus} of an allocation $\mathbf{x}$ (represented by a matching $A$) is defined from the perspective of the side whose profit is being maximized:
    \begin{itemize}
        \item for maximizing sellers' total profit (extracting from buyers): $\sum_{(i,j) \in M} (\varphi_i(v_i) - c_j)$;
        \item for maximizing buyers' total profit (extracting from sellers): $\sum_{(i,j) \in M} (v_i - \psi_j(c_j))$.
    \end{itemize}
    We use the ironed virtual values ($\tilde{\varphi}_i, \tilde{\psi}_j$) when the original virtual values are not monotone.
\end{definition}

Myerson's key insight relates the expected profit of any BIC and IR auction to the expected virtual surplus it generates.

\begin{lemma}[Myerson's Profit Equivalence \cite{Myerson81}]
    For any BIC 
    % \Aviad{DO we need DSIC? or is BIC enough?} 
    and IR auction:
    \begin{itemize}
        \item The expected total seller profit is equal to the expected virtual surplus $\E[\sum_{(i,j) \in M} (\tilde{\varphi}_i(b_i) - s_j)]$.
        \item The expected total buyer profit is equal to the expected virtual surplus $\E[\sum_{(i,j) \in M} (b_i - \tilde{\psi}_j(s_j))]$.
    \end{itemize}
    Therefore, an auction maximizes the expected profit for one side if and only if it maximizes the corresponding expected virtual surplus.
\end{lemma}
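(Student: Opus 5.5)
The plan is to reduce to the standard single-parameter payment identity agent by agent, and then sum over the matching. I treat the seller-profit case first; the buyer-profit case is the mirror image with values and costs interchanged.

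\emph{Seller-profit case.} Fix a BIC and IR auction in the single-dimensional matching market and fix buyer (pseudo-buyer) $i$ with value $b_i\sim D_i$.
\begin{enumerate}
\item Define the interim allocation $x_i(b_i)=\E_{-i}[\Pr[i \text{ is matched}]]$ and the interim payment $P_i(b_i)$.
\item BIC for a single-parameter agent with linear utility $b_i x_i(b_i)-P_i(b_i)$ gives two facts by the usual envelope argument: $x_i$ is nondecreasing, and
\[
P_i(b_i)=b_i x_i(b_i)-\int_{\underline b}^{b_i}x_i(z)\,dz - u_i(\underline b).
\]
\item IR gives $u_i(\underline b)\ge 0$.
\item Take expectation over $b_i$ and integrate by parts (Fubini). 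This yields $\E[P_i]\le \E[\varphi_i(b_i)x_i(b_i)]$, with equality when the lowest type's utility is zero.
\item Replace $\varphi_i$ by $\tilde\varphi_i$. In quantile space, write the revenue curve $R_i$ and its concave hull $\bar R_i$. Then
\[
\E[\varphi_i x_i]=\int R_i'\,dx=\int \bar R_i'\,dx+\int (R_i-\bar R_i)\,dx'.
\]
The correction term is $\le 0$, because $R_i\le\bar R_i$ and $x_i$ is monotone. It vanishes when $x_i$ is constant on ironed intervals.
\item The seller side is symmetric. The seller's cost $s_j$ enters only as the transfer $-\sum_j s_j y_j$, so it subtracts $s_j$ per matched pair.
\item Summing over matched pairs gives seller profit $=\E[\sum_{(i,j)\in M}(\tilde\varphi_i(b_i)-s_j)]$ for the auctions in question. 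More generally the left side is $\le$ the right side.
\end{enumerate}

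\emph{Buyer-profit case.} Run the same computation on each seller/item $j$, who is a single-parameter agent with cost $s_j\sim F_j$.
\begin{enumerate}
\item The interim probability of selling, $y_j(s_j)$, is nonincreasing in $s_j$.
\item The payment identity reads
\[
P_j(s_j)=s_j y_j(s_j)+\int_{s_j}^{\bar s}y_j(z)\,dz+u_j(\bar s),
\]
with $u_j(\bar s)\ge 0$ by IR.
\item Integrating by parts gives $\E[P_j]\ge \E[\psi_j(s_j)y_j(s_j)]$ with $\psi_j=s_j+F_j/f_j$.
\item The convex-envelope argument (exactly the one behind \cref{lem:1d-ironing}) upgrades this to $\E[\tilde\psi_j y_j]$, with the inequality in the favorable direction.
\item Buyer profit is $\sum_i b_i x_i-\sum_j P_j$. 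So it equals, or is bounded by, $\E[\sum_{(i,j)\in M}(b_i-\tilde\psi_j(s_j))]$.
\end{enumerate}

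The ``therefore'' clause follows from two facts. The virtual surplus bound holds for every BIC/IR mechanism. It is attained with equality by the pointwise virtual-surplus-maximizing matching, which is monotone and hence implementable with threshold payments. So profit maximization and virtual-surplus maximization coincide.

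The main obstacle is the ironing step. With ironed virtual values, the statement ``profit equals ironed virtual surplus'' holds with equality only for allocation rules that are constant on ironed intervals; in general one only gets an inequality. I would handle this by stating the lemma as an upper bound for arbitrary BIC/IR auctions. Equality then holds for the optimal (ironed, monotone-selector) mechanism, and I would check that every later use (\cref{lem:warmup_tail_seller_threshold}, \cref{thm:bo-lower-bounds-side}) needs only these directions. The BIC-versus-DSIC question is harmless: only interim allocations enter the identity.
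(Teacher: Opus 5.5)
Your argument is the standard Myerson derivation: the interim payment identity from BIC, integration by parts against the rent of the extreme type, and then the hull of the revenue/procurement curve for ironing. That is exactly the classical proof the paper relies on when it cites \cite{Myerson81} without giving a proof, so your approach matches.

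You are also right that the lemma as literally stated is too strong. For an arbitrary BIC and IR auction one only gets profit $\le$ ironed virtual surplus, with equality when the extreme type earns zero rent and the interim allocation is constant on ironed intervals. Of the ``if and only if,'' only the direction from profit-optimality to virtual-surplus-optimality survives, together with equality of the two optimal values. This corrected form is all the paper needs: its actual uses are the equalities for the boundary-randomized posted-price mechanisms in \cref{lem:warmup_tail_seller_threshold,lem:warmup_tail_buyer_threshold}. By contrast, \cref{thm:bo-lower-bounds-side}, which you listed, goes through \cref{lem:1d-ironing} rather than this lemma.
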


\section{Missing Proof from \cref{sec:technical}}\label{app:overview}

\begin{proof}[Proof of \cref{obs:vcg-deficit}]
Consider a market with two unit-demand buyers ($n=2$) and an additive seller owning two items ($m=2$). The seller's costs $c_1$ and $c_2$ are drawn independently from a uniform distribution $U[0,1]$. Buyer 1 desires only item 1 with a deterministic value $v_{11} = 1$, and Buyer 2 desires only item 2 with a deterministic value $v_{22} = 1$.

Let $\Pi^*(S)$ denote the maximum expected profit the mechanism can extract from the seller on behalf of a coalition of buyers $S \subseteq \{1, 2\}$.

First, we establish the single-buyer baselines. If the mechanism represents only Buyer 1, the optimal procurement auction posts a single price $p_1$ for item 1. The seller accepts if $c_1 \le p_1$, yielding an expected profit of $p_1(1-p_1)$. This is maximized at $p_1 = 0.5$, giving $\Pi^*(\{1\}) = 0.25$. Symmetrically, the optimal expected profit for Buyer 2 alone is $\Pi^*(\{2\}) = 0.25$.

Next, we lower bound the joint optimal expected profit $\Pi^*(\{1, 2\})$. When representing both buyers, the total value for procuring both items is $v_{11} + v_{22} = 2$. The mechanism can choose to offer a single bundle price $P \in [1, 2]$ for both items. The seller accepts the bundle if $c_1 + c_2 \le P$. Because $c_1, c_2 \sim U[0,1]$, the probability of acceptance for $P \in [1, 2]$ is $1 - \frac{(2-P)^2}{2}$. The expected profit from this pure bundle is:$$\Pi_{\mathrm{bundle}}(P) = (2 - P) \left( 1 - \frac{(2 - P)^2}{2} \right)$$By substituting $x = 2 - P$ and optimizing $x - x^3/2$, the maximum occurs at $x = \sqrt{2/3}$ (or $P \approx 1.184$). Plugging this back in yields an expected profit of $\frac{2}{3}\sqrt{\frac{2}{3}} \approx 0.544$. Because the true optimal joint mechanism must perform at least as well as a simple bundle, we have:$$\Pi^*(\{1, 2\}) \ge \frac{2}{3}\sqrt{\frac{2}{3}} \approx 0.544 > 0.50 = \Pi^*(\{1\}) + \Pi^*(\{2\})$$

Finally, we analyze the mechanism's budget under VCG payments. Let $U_i$ denote the realized gross utility of buyer $i$, and $P_{\mathrm{seller}}$ denote the realized payment made to the seller. By definition, $\Pi^*(\{1, 2\}) = \mathbb{E}[U_1 + U_2 - P_{\mathrm{seller}}]$. To guarantee truthfulness, the VCG payment $T_i$ charged to buyer $i$ ensures their net utility equals their marginal contribution to the system:$$\mathbb{E}[U_i - T_i] = \Pi^*(\{1, 2\}) - \Pi^*(\{ -i \})$$Rearranging this gives $\mathbb{E}[T_i] = \mathbb{E}[U_i] - \Pi^*(\{1, 2\}) + \Pi^*(\{ -i \})$. The mechanism is ex-ante WBB if the expected transfers collected from the buyers exceed the expected payment to the seller. The ex-ante net budget is:\begin{align*}\mathbb{E}[\text{Net Budget}] &= \mathbb{E}[T_1] + \mathbb{E}[T_2] - \mathbb{E}[P_{\mathrm{seller}}] \\
&= \Big( \mathbb{E}[U_1] - \Pi^*({1,2}) + \Pi^*({2}) \Big) + \Big( \mathbb{E}[U_2] - \Pi^*({1,2}) + \Pi^*({1}) \Big) - \mathbb{E}[P_{\mathrm{seller}}] \\
&= \Big( \mathbb{E}[U_1 + U_2 - P_{\mathrm{seller}}] \Big) - 2\Pi^*({1,2}) + \Pi^*({1}) + \Pi^*({2}) \\
&= \Pi^*({1, 2}) - 2\Pi^*({1,2}) + \Pi^*({1}) + \Pi^*({2}) \\
&= \Pi^*({1}) + \Pi^*({2}) - \Pi^*({1, 2})\end{align*}
Applying our bounds to this exact budget equation yields:$$\mathbb{E}[\text{Net Budget}] \le 0.25 + 0.25 - 0.544 = -0.044$$Because the expected net budget is strictly negative, the mechanism runs an ex-ante deficit, proving it fails to be Weakly Budget Balanced.\end{proof}
\section{Missing Proofs from \cref{sec:warmup_xos_bilateral}}\label{app:warmup}

\paragraph{Proof of \cref{lem:tail_expectation_bound}}
\begin{proof}
We begin by establishing a pointwise bound for any realization of the variables. Consider the sum $S = \sum_{i=1}^n X_i$. We can separate the maximum variable from the rest of the sum using an indicator function:
$$ \sum_{i=1}^n X_i \le \max_i X_i + \sum_{i=1}^n X_i \cdot \mathbf{1}_{\{\exists j \neq i \text{ s.t. } X_j > 0\}} $$
This inequality is true for all realizations:
\begin{itemize}
    \item If $X_i = 0$ for all $i$, both sides are $0$.
    \item If exactly one variable $X_k > 0$, then for that variable, the indicator $\mathbf{1}_{\{\exists j \neq k \text{ s.t. } X_j > 0\}} = 0$. The sum evaluates to $0$, and the right-hand side is simply $X_k$, making the two sides equal.
    \item If two or more variables are strictly positive, the summation term on the right evaluates to the full sum $S$. Since $\max_i X_i \ge 0$, the right-hand side is $S + \max_i X_i$, which is trivially greater than or equal to $S$.
\end{itemize}

Taking the expectation of both sides yields:
$$ \mathbb{E}\left[\sum_{i=1}^n X_i\right] \le \mathbb{E}\left[\max_i X_i\right] + \sum_{i=1}^n \mathbb{E}\left[X_i \cdot \mathbf{1}_{\{\exists j \neq i \text{ s.t. } X_j > 0\}}\right] $$

Because the variables are drawn independently, $X_i$ is independent of the event that some other variable $X_j > 0$. Therefore, the expectation of the product splits into the product of expectations:
$$ \mathbb{E}\left[\sum_{i=1}^n X_i\right] \le \mathbb{E}\left[\max_i X_i\right] + \sum_{i=1}^n \Big( \mathbb{E}[X_i] \cdot \Pr[\exists j \neq i \text{ s.t. } X_j > 0] \Big) $$

By the union bound, the probability that at least one other variable is strictly positive is upper-bounded by the sum of their individual probabilities:
$$ \Pr[\exists j \neq i \text{ s.t. } X_j > 0] \le \sum_{j \neq i} \Pr[X_j > 0] $$

Substituting this into our inequality and rearranging the summations, we get:
\begin{align*}
    \mathbb{E}\left[\sum_{i=1}^n X_i\right] &\le \mathbb{E}\left[\max_i X_i\right] + \sum_{i=1}^n \left( \mathbb{E}[X_i] \sum_{j \neq i} \Pr[X_j > 0] \right) \\
    &= \mathbb{E}\left[\max_i X_i\right] + \sum_{j=1}^n \left( \Pr[X_j > 0] \sum_{i \neq j} \mathbb{E}[X_i] \right) \\
    &= \mathbb{E}\left[\max_i X_i\right] + \sum_{j=1}^n \left( \Pr[X_j > 0] \cdot \mathbb{E}\left[\sum_{i \neq j} X_i\right] \right)
\end{align*}

Finally, since all variables are non-negative, $\mathbb{E}[\sum_{i \neq j} X_i] \le \mathbb{E}[\sum_{i=1}^n X_i]$. Factoring this complete expectation out of the summation provides the final bound:
\[ \mathbb{E}\left[\sum_{i=1}^n X_i\right] \le \mathbb{E}\left[\max_i X_i\right] + \left( \sum_{j=1}^n \Pr[X_j > 0] \right) \cdot \mathbb{E}\left[\sum_{i=1}^n X_i\right] \qedhere \]
\end{proof}

\paragraph{Proof of \cref{thm:warmup_tail_buyer_menu}}

\begin{lemma}\label{lem:warmup_tail_buyer_threshold}
For every realized singleton-value vector $v=(v_1,\dots,v_m)$, there exists a common threshold $\gamma(v)\ge 0$ and a possibly randomized procurement-price vector
$\mathbf q(v)=(q_1(v),\dots,q_m(v))$
such that the induced order-oblivious posted-price mechanism in $\mathcal I^{\mathrm{copies}}_{1}$ earns conditional expected buyer profit at least
\[
\frac12 \,\E_c\!\left[\max_{j\in [m]} \bigl(v_j-\tilde\psi_j(c_j)\bigr)^+ \,\middle|\, v\right].
\]
In the regular case one may take
\[
q_j(v)=\sup\{x : v_j-\tilde\psi_j(x)\ge \gamma(v)\}.
\]
In the general ironed case, one again randomizes on the boundary plateau of the ironed virtual-cost function.
\end{lemma}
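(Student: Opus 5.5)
This is the procurement mirror of \cref{lem:warmup_tail_seller_threshold}, and I would follow that proof with values and costs interchanged. Fix the realized singleton-value vector $v$. For each item $j$ set
\[
Z_j = \bigl(v_j-\tilde\psi_j(c_j)\bigr)^+ .
\]
Conditional on $v$, each $Z_j$ is a function of $c_j$ alone. The costs are independent across items and independent of $v$, so $Z_1,\dots,Z_m$ are independent nonnegative random variables.

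\textbf{Step 1 (threshold).} Apply Samuel-Cahn's prophet inequality \cite{Cahn84} to $Z_1,\dots,Z_m$. This gives a threshold $\gamma(v)\ge 0$, which we may take to be the median of $\max_j Z_j$. The rule that stops at the first $j$ with $Z_j\ge\gamma(v)$ then earns at least $\tfrac12\E_c[\max_j Z_j\mid v]$ in every arrival order. The order-obliviousness is what we need later, when the posted-price menu is passed to the original additive seller through the \cite{CHMS10} reduction.

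\textbf{Step 2 (threshold to prices).} Convert the virtual-cost threshold into procurement prices. Since $\tilde\psi_j$ is nondecreasing, the event $Z_j\ge\gamma(v)$ (when $\gamma(v)>0$) is exactly $\tilde\psi_j(c_j)\le v_j-\gamma(v)$, i.e.\ $c_j\le q_j(v)$ with
\[
q_j(v)=\sup\{x: v_j-\tilde\psi_j(x)\ge\gamma(v)\}.
\]
So offering price $q_j(v)$ to pseudo-seller $j$ makes it accept exactly on the threshold event. If the threshold level $v_j-\gamma(v)$ falls on a flat piece of $\tilde\psi_j$, I would randomize between the two endpoint prices of the plateau. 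The mixing weight is chosen so the acceptance probability equals that of the threshold rule, with ties broken in whichever way the prophet bound requires.

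\textbf{Step 3 (profit equals virtual surplus).} Each pseudo-seller in $\mathcal I^{\mathrm{copies}}_1$ is single-parameter, and a posted price is a monotone (nonincreasing-in-cost) allocation with threshold payment. By Myerson's identity for procurement, the buyer's expected profit from pseudo-seller $j$ equals $\E[(v_j-\psi_j(c_j))\mathbf 1\{\text{accept}\}]$. Replacing $\psi_j$ by $\tilde\psi_j$ is exact provided the allocation is constant on ironed intervals. That holds because each price either lies outside the interior of any ironing interval or is the plateau-boundary randomization. Summed over the stopping rule, the expected profit is the expected accepted virtual surplus, which is at least $\tfrac12\E_c[\max_j Z_j\mid v]$.

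\textbf{Main obstacle.} The delicate point is Step 3 together with the ironed case in Step 2. I must check that the boundary randomization makes the allocation constant on each ironing interval, so that the ironed and unironed virtual-surplus expressions coincide. I must also check that the acceptance probabilities match the prophet threshold rule exactly, including at atoms of $\tilde\psi_j(c_j)$ at level $v_j-\gamma(v)$, where one needs the randomized tie-breaking version of Samuel-Cahn. I would handle both by working in quantile space with the convex envelope $\widehat R_j$, as in \cref{lem:two-price}: the plateau corresponds to a linear segment of $\widehat R_j$, and mixing its endpoints yields exactly the ironed payment.
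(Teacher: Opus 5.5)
Your proposal is correct and follows essentially the same route as the paper's proof. Both apply Samuel-Cahn to the independent variables $(v_j-\tilde\psi_j(c_j))^+$, convert the virtual threshold into procurement prices (randomizing on plateau endpoints in the ironed case), and use Myerson's procurement identity to equate buyer profit with the accepted ironed virtual surplus. Your extra attention to ironing and tie-breaking fills in details the paper leaves implicit; you could avoid the tie-breaking and plateau randomization entirely by taking $\gamma(v)=\tfrac12\E_c[\max_j Z_j\mid v]$ instead of the median, which works with the weak-inequality rule and deterministic prices.
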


\begin{proof}
Fix the buyer's singleton-value vector $v$, and for each item $j$ define $Y_j = (v_j-\tilde\psi_j(c_j))^+$.
Because the seller's costs are independent across items, the random variables $Y_1,\dots,Y_m$ are independent and nonnegative.
Samuel-Cahn's prophet inequality \cite{Cahn84} gives a threshold $\gamma(v)$ such that stopping at the first $Y_j$ that exceeds $\gamma(v)$ yields expected reward at least
\[
\frac12 \,\E_c\!\left[\max_{j\in [m]} Y_j \,\middle|\, v\right].
\]

We translate this threshold back into procurement prices.
For regular distributions, the price
\[
q_j(v)=\sup\{x : v_j-\tilde\psi_j(x)\ge \gamma(v)\}
\]
makes seller $j$ accept exactly when $Y_j\ge \gamma(v)$.
For ironed distributions, we randomize on the threshold plateau so that the acceptance probability matches the prophet-threshold rule.

As before, each seller in the copies instance is single-parameter.
Therefore Myerson's procurement identity implies that the expected buyer profit of the resulting posted-price mechanism equals its expected ironed virtual-cost surplus.
Consequently, the expected buyer profit of this order-oblivious posted-price mechanism is at least
\[
\frac12 \,\E_c\!\left[\max_{j\in [m]} \bigl(v_j-\tilde\psi_j(c_j)\bigr)^+ \,\middle|\, v\right].
\qedhere
\]
\end{proof}

\begin{proof}[Proof for \cref{thm:warmup_tail_buyer_menu}]
The proof is the exact analogue of \cref{thm:warmup_tail_seller_menu}, with values and costs interchanged.
For each realized singleton-value vector $v$, \cref{lem:warmup_tail_buyer_threshold} gives a procurement-price vector $\mathbf q(v)$ that is a $2$-approximate order-oblivious posted-price mechanism in the seller copies instance.
The same reduction argument applies verbatim after viewing the original seller as the unit-supply agent who chooses which item, if any, to trade.
In the present warm-up instance, the resulting mechanism is simply a simultaneous procurement menu, and it is DSIC and IR because the seller either rejects all offers or chooses the item that maximizes her utility $q_j(v)-c_j$.
Taking expectation over $v$ yields
\[
\Pi_B(\cM_B^{\mathrm{tail}})
\ge
\frac12\,\Pi_B^*(\mathcal I^{\mathrm{copies}}_{1}).
\qedhere
\]
\end{proof}
\section{Missing Proofs from \cref{sec:buyercore}}\label{app:buyercore}

\paragraph{Relationship to Cai--Zhao.}
After fixing $c$, the valuations $v_i^C$ form a standard one-sided XOS
instance over independent item coordinates. We apply the ASPE analysis in
Section~7.2 of~\cite{CZ17} to this induced instance, with their benchmark
allocation identified with $\sigma^C$. Under this identification, their first
buyer-specific cutoff is our $\chi_i$, their second cutoff is our $\eta_i$,
their final truncated valuation is our $\widehat v_i^C$, and their
residual-surplus function is our $\widetilde u_i$.

We use the parts of their analysis that depend only on this
one-sided structure. Specifically, Lemmas~14--18 of~\cite{CZ17} control the
contribution removed by the first truncation, and Lemmas~19--28 give the
supporting-price, concentration, and entry-fee analysis. The additional work
needed here is to: (i) identify fixed-cost core GFT with welfare in the induced
one-sided instance; (ii) compare the posted-price losses to the
fixed-cost unit-demand matching benchmark; and (iii) show that the residual
surplus used to set the entry fees is attainable in the original market at
prices $c_j+Q_j$. We prove these points below.

\subsection{Reduction to a One-Sided XOS Instance}

\begin{lemma}[Fixed-cost welfare identity]
\label{lem:core-aspe-reduction}
For every realization $t$,
\[
\gft^C(t,c)
=
\max_{\substack{(S_1,\dots,S_n)\\ \mathrm{feasible}}}
\sum_{i=1}^n v_i^C(t_i,S_i).
\]
Consequently,
\[
\mu(c)
=
\E_t\!\left[
\sum_{i=1}^n v_i^C\bigl(t_i,S_i^C(t)\bigr)
\right].
\]
\end{lemma}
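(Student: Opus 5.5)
The identity is essentially definitional, so the plan is to unfold both sides and match them term by term for a fixed realization $t$ (with $c$ fixed throughout). By \cref{def:global-core-tail},
\[
\gft^C(t,c)=\max_{(S_1,\dots,S_n)}\max_{k_1,\dots,k_n}\sum_{i=1}^n\sum_{j\in S_i}(w_{ij}^{k_i})^C .
\]

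The first step is to swap the order of the two maxima. Once the partition is fixed, the clause choices $k_1,\dots,k_n$ decouple across buyers, because the objective is a sum of terms each depending only on its own $k_i$. Hence
\[
\max_{k_1,\dots,k_n}\sum_{i}\sum_{j\in S_i}(w_{ij}^{k_i})^C=\sum_i\max_{k}\sum_{j\in S_i}(w_{ij}^{k})^C=\sum_i v_i^C(t_i,S_i),
\]
by the definition of the induced core valuation $v_i^C$. Taking the maximum over feasible partitions gives the first claim. I will also note that each $v_i^C(t_i,\cdot)$ is a legitimate XOS valuation over independent items for fixed $c$: it is a max of nonnegative additive clauses, and each coefficient $(w_{ij}^k)^C=\min\{(v_{ij}^{(k)}(t_{ij})-c_j)^+,\tau\}$ is a measurable function of $t_{ij}$ alone. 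This is what justifies treating $v^C$ as a one-sided XOS instance later.

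The second step is to derive the expectation identity. Since $\sigma^C(t)=(S_1^C(t),\dots,S_n^C(t))$ is chosen as a maximizer of $\sum_i v_i^C(t_i,S_i)$ for every $t$, the first claim gives $\gft^C(t,c)=\sum_i v_i^C(t_i,S_i^C(t))$ pointwise. Taking $\E_t$ and recalling $\mu(c)=\E_t[\gft^C(t,c)]$ yields the result.

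The only potential obstacle is measurability of the selector $\sigma^C$, which is needed for the expectation to make sense. The set of feasible partitions is finite, so I will use a fixed tie-breaking order over partitions. This makes $\sigma^C$ measurable, because each $v_i^C(t_i,S)$ is measurable in $t$ under the standing measurability assumption on $t_i\mapsto v_i(t_i,S)$ and the clause functions.
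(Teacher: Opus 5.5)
Your proof is correct and follows the paper's argument: the paper likewise notes that, for a fixed feasible allocation, the inner maximum over clauses is by definition $\sum_i v_i^C(t_i,S_i)$ (your decoupling of the $k_i$ across buyers just makes this explicit), and it obtains the expectation identity from the fact that $\sigma^C$ is a pointwise maximizer. Your measurability remark is extra care that the paper leaves implicit; it relies on the clause functions $v_{ij}^{(k)}$ being measurable, which the paper assumes without stating.
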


\begin{proof}
For every feasible allocation $(S_1,\dots,S_n)$, the capped core contribution
of buyer $i$ is, by definition, $v_i^C(t_i,S_i)$. Maximizing over feasible
allocations gives the first identity. The second follows from the definition
of the welfare-maximizing allocation rule $\sigma^C$.
\end{proof}

The next elementary observation verifies the only properties of the baseline
thresholds used in the Cai--Zhao analysis.

\begin{lemma}[Baseline-threshold properties]
\label{lem:core-aspe-beta}
For every buyer $i$ and item $j$, under the threshold-splitting convention,
\[
\Pr_{t_{ij}}\!\left[V_{ij}(t_{ij})\ge \beta_{ij}\right]
=
\frac14 q_{ij},
\]
and
\[
\sum_{k\ne i}
\Pr_{t_{kj}}\!\left[V_{kj}(t_{kj})\ge \beta_{kj}\right]
\le
\frac14.
\]
\end{lemma}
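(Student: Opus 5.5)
The first claim follows directly from the definition of $\beta_{ij}$ as an infimum, combined with the threshold-splitting convention. Set $h(x)=\Pr_{t_{ij}}[V_{ij}(t_{ij})\ge x]$, which is nonincreasing and left-continuous in $x$. By definition, $\beta_{ij}=\inf\{x\ge 0: h(x)\le \tfrac14 q_{ij}\}$.

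We split into three cases.
\begin{itemize}
\item If $h$ crosses the level $\tfrac14 q_{ij}$ continuously at $\beta_{ij}$, then left-continuity gives $h(\beta_{ij})=\tfrac14 q_{ij}$ directly.
\item If $V_{ij}$ has an atom at $\beta_{ij}$, the level $\tfrac14 q_{ij}$ lies strictly between $h(\beta_{ij}^+)$ and $h(\beta_{ij})$. Threshold splitting as in \cite{CZ17} handles this by treating the event $\{V_{ij}\ge\beta_{ij}\}$ as including the atom only with the appropriate independent probability. After this splitting, the tail probability is exactly $\tfrac14 q_{ij}$.
\item If $\beta_{ij}=0$ with $h(0)<\tfrac14 q_{ij}$, the same convention applies: we randomize on the atom at $0$ (or pad with a fictitious atom). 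Equality then holds, or else the inequality $\le$ holds, which is all the second claim needs.
\end{itemize}
Note that $q_{ij}\le 1$, so the target level $\tfrac14 q_{ij}$ lies in $[0,1]$ and is always attainable.

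For the second claim, sum the first claim over buyers $k\ne i$:
\[
\sum_{k\ne i}\Pr_{t_{kj}}\!\left[V_{kj}(t_{kj})\ge\beta_{kj}\right]=\frac14\sum_{k\ne i}q_{kj}\le\frac14\sum_{k}q_{kj}\le\frac14 .
\]
The last step uses feasibility of $\sigma^C$. For every $t$, item $j$ lies in at most one $S_k^C(t)$, so taking expectations gives $\sum_k q_{kj}=\Pr_t[\,j\text{ is allocated by }\sigma^C]\le 1$.

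The only delicate point is making the threshold-splitting convention precise in the atom case and at $\beta_{ij}=0$. Once that convention is fixed, both claims are immediate.
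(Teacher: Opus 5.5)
Your proof is correct and follows the paper's argument. The first identity comes from the definition of $\beta_{ij}$ plus the threshold-splitting convention, and the second from summing over $k\ne i$ and using $\sum_k q_{kj}\le 1$, which follows from feasibility of $\sigma^C$. Your treatment of the convention is more detailed than the paper's one-line appeal to it, though your third case cannot occur: $V_{ij}\ge 0$ forces $h(0)=1\ge \tfrac14 q_{ij}$.
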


\begin{proof}
The first identity follows from the definition of $\beta_{ij}$. Summing it
over buyers and using $\sum_k q_{kj}\le1$ gives
\[
\sum_k
\Pr\!\left[V_{kj}\ge\beta_{kj}\right]
=
\frac14\sum_kq_{kj}
\le
\frac14.
\]
Dropping the $i$th term proves the second claim.
\end{proof}

\subsection{Charging the First-Truncation Loss}

For each buyer $i$, let $f_i(t_i)=0$ if
$
\max_{j\in[m]}\{V_{ij}(t_{ij})-\beta_{ij}\}<0.
$
Otherwise, let $f_i(t_i)$ be a maximizer of
$V_{ij}(t_{ij})-\beta_{ij}$, using a fixed tie-breaking rule. Define
\[
F_i(t)
=
\begin{cases}
V_{i f_i(t_i)}\bigl(t_{i f_i(t_i)}\bigr),
& f_i(t_i)>0\text{ and }f_i(t_i)\in S_i^C(t),\\
0,&\text{otherwise},
\end{cases}
\]
and
\[
N_i(t)
=
\begin{cases}
v_i^C\bigl(t_i,S_i^C(t)\setminus\{f_i(t_i)\}\bigr),
& f_i(t_i)>0\text{ and }f_i(t_i)\in S_i^C(t),\\
v_i^C\bigl(t_i,S_i^C(t)\bigr),&\text{otherwise}.
\end{cases}
\]
Let
\[
\operatorname{Fav}(c)=\E_t\!\left[\sum_iF_i(t)\right],
\qquad
\operatorname{NonFav}(c)=\E_t\!\left[\sum_iN_i(t)\right].
\]

\begin{lemma}[The favorite contribution is unit-demand]
\label{lem:core-aspe-favorite}
We have
$
\mu(c)
\le
\operatorname{Fav}(c)+\operatorname{NonFav}(c)
$
and
$
\operatorname{Fav}(c)
\le
\gft^{\mathrm{ud}}_{\mathcal F^m}(c).
$
\end{lemma}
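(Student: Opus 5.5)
The first inequality is a pointwise split, and I will prove it realization by realization. Fix $t$ and write $S=S_i^C(t)$ and $f=f_i(t_i)$. If $f=0$ or $f\notin S$, then $N_i(t)=v_i^C(t_i,S)$ and $F_i(t)=0$, so $v_i^C(t_i,S)=F_i+N_i$. Otherwise $f\in S$, and I use subadditivity of the XOS function $v_i^C(t_i,\cdot)$. Take the maximizing clause $k$ for $S$; then
\[
v_i^C(t_i,S)=\sum_{j\in S}(w_{ij}^k)^C\le (w_{if}^k)^C+v_i^C\bigl(t_i,S\setminus\{f\}\bigr).
\]
Since $(w_{if}^k)^C\le \min\{(v_i(t_i,\{f\})-c_f)^+,\tau\}=V_{if}(t_{if})$, this gives $v_i^C(t_i,S)\le F_i(t)+N_i(t)$. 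Summing over $i$, taking expectation over $t$, and invoking \cref{lem:core-aspe-reduction} yields $\mu(c)\le\operatorname{Fav}(c)+\operatorname{NonFav}(c)$.

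For the second inequality, I will exhibit, for each $t$, a unit-demand matching whose GFT dominates $\sum_iF_i(t)$. Let
\[
M(t)=\{(i,f_i(t_i)) : f_i(t_i)>0,\ f_i(t_i)\in S_i^C(t)\}.
\]
Each buyer contributes at most one pair. Each item appears at most once, because the sets $S_i^C(t)$ are disjoint. So $M(t)$ is a matching in the sense of $\gft^{\mathrm{ud}}_{\mathcal F^m}(c)$ (one item per buyer, one buyer per item). Its weight is
\[
\sum_{(i,j)\in M(t)}\bigl(v_i(t_i,\{j\})-c_j\bigr)^+\ \ge\ \sum_{(i,j)\in M(t)}V_{ij}(t_{ij})=\sum_iF_i(t),
\]
since $V_{ij}$ is the $\tau$-capped version of the positive part. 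Taking the maximum over matchings and then the expectation over $t$ with $c$ fixed gives $\operatorname{Fav}(c)\le\gft^{\mathrm{ud}}_{\mathcal F^m}(c)$.

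The only point needing care is the XOS step in the first part: removing $f$ can lower $v_i^C$ by at most the clause coefficient, and that coefficient is bounded by the singleton capped value $V_{if}$. Both facts are immediate from the max-of-sums form of $v_i^C$. The favorite selector $f_i$ and the benchmark-relative thresholds $\beta_{ij}$ play no role in either inequality. They matter only later, in the non-favorite analysis.
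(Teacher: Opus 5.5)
Your proof is correct and follows the same approach as the paper. For the first inequality, the paper also splits on whether buyer $i$ receives her favorite item and applies subadditivity of $v_i^C$; you make that step explicit through the maximizing clause and the bound $(w_{if}^k)^C\le V_{if}(t_{if})$. For the second inequality, the paper likewise observes that the favorite pairs form a matching and that $V_{ij}(t_{ij})\le\bigl(v_i(t_i,\{j\})-c_j\bigr)^+$ on each such pair.
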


\begin{proof}
If buyer $i$ does not receive her favorite item under $\sigma^C(t)$, then
$F_i(t)+N_i(t)=v_i^C(t_i,S_i^C(t))$. If she does receive it, subadditivity of
$v_i^C$ gives
\[
v_i^C\bigl(t_i,S_i^C(t)\bigr)
\le
F_i(t)+N_i(t).
\]
Summing over buyers and taking expectations proves the first inequality.

For the second, the pairs
\[
M^F(t)
=
\left\{
(i,f_i(t_i)):
 f_i(t_i)>0,\ f_i(t_i)\in S_i^C(t)
\right\}
\]
form a matching. Moreover, for every $(i,j)\in M^F(t)$,
\[
F_i(t)=V_{ij}(t_{ij})
\le
\bigl(v_i(t_i,\{j\})-c_j\bigr)^+.
\]
The pointwise contribution of $M^F(t)$ is therefore at most the optimal
unit-demand matching GFT. Taking expectations proves the claim.
\end{proof}

Define the contribution retained by the first buyer-specific cutoff as
\[
\operatorname{Bounded}(c)
=
\E_t\!\left[
\sum_{i=1}^n
v_i^0\bigl(t_i,S_i^C(t)\bigr)
\right].
\]

We also use the following auxiliary one-sided benchmark. Consider the induced
unit-demand instance in which buyer $i$ has singleton value
$V_{ij}(t_{ij})$ for item $j$. Let $\operatorname{PostRev}(c)$ be the supremum expected revenue of a
rationed sequential posted-price mechanism in this instance: buyers are
visited sequentially, nonnegative prices may depend on the buyer--item pair,
and each buyer may purchase at most one available item.

\begin{lemma}[Cai--Zhao first-truncation bound]
\label{lem:core-aspe-cz-first}
\[
\operatorname{NonFav}(c)
\le
\operatorname{Bounded}(c)
+
\frac83\operatorname{PostRev}(c).
\]
\end{lemma}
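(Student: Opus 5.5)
Fix $c$ and argue pointwise in $t$ for each buyer $i$, following the Cai--Zhao first-truncation analysis (their Lemmas~14--18). The plan is to show that whatever the second truncation $C_i(t_i)$ discards from the non-favorite part $N_i(t)$ can be recovered by a rationed sequential posted-price mechanism in the induced unit-demand instance with singleton values $V_{ij}$.

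\emph{Step 1: pointwise split of the non-favorite part.} Write $B_i = S_i^C(t)\setminus\{f_i(t_i)\}$, or $B_i = S_i^C(t)$ if the favorite is not allocated. By subadditivity of the XOS function $v_i^C$,
\[
N_i(t)\le v_i^C\bigl(t_i,B_i\cap C_i(t_i)\bigr)+\sum_{j\in B_i\setminus C_i(t_i)}V_{ij}(t_{ij}).
\]
By monotonicity, the first term is at most $v_i^0(t_i,S_i^C(t))$. Summing over $i$ and taking expectations, this term contributes at most $\operatorname{Bounded}(c)$. It remains to bound
\[
\E_t\Bigl[\sum_i\sum_{j\in B_i\setminus C_i(t_i)}V_{ij}\Bigr].
\]
These are pairs $(i,j)$ where $j$ is not buyer $i$'s favorite but still satisfies $V_{ij}\ge\beta_{ij}+\chi_i$. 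Split each such term as $V_{ij}=(\beta_{ij}+\chi_i)+(V_{ij}-\beta_{ij}-\chi_i)$.

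\emph{Step 2: bound each piece by a posted-price revenue.}
\begin{itemize}
\item \textbf{Excess part.} Since $j\neq f_i(t_i)$, there is another item, the favorite, with $V_{if}-\beta_{if}\ge V_{ij}-\beta_{ij}\ge\chi_i$. Using independence across items and the definition of $\chi_i$ (the total probability mass above $\beta+\chi_i$ is at most $1/2$), the excess part is bounded by a term like $\E[\max_j(V_{ij}-\beta_{ij}-\chi_i)^+]$ times a probability at most $1/2$. In turn this is bounded by the revenue of posting price $\beta_{ij}+\chi_i$ to buyer $i$ alone.
\item \textbf{Base part.} The expected number of items $j$ with $V_{ij}\ge\beta_{ij}+\chi_i$ is at most $1/2$, so the base part is dominated by the revenue from selling to buyer $i$ at prices $\beta_{ij}+\chi_i$. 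Conditional on $i$ wanting $j$, she is the only one demanding it with probability at least $3/4$, by \cref{lem:core-aspe-beta}. She also buys $j$ rather than another item with constant probability, again because the total tail mass is at most $1/2$.
\end{itemize}

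\emph{Step 3: assemble the mechanism.} Combine the buyer-specific pieces into one rationed sequential posted-price mechanism with prices $\beta_{ij}+\chi_i$. The item-contention factor $3/4$ comes from \cref{lem:core-aspe-beta}, and the buyer-side factor $1/2$ from $\chi_i$. Together these produce the constant $8/3$: the discarded mass is at most $\tfrac{1}{(3/4)(1/2)}=\tfrac83$ times $\operatorname{PostRev}(c)$. Atoms are handled by the threshold-splitting convention.

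\emph{Main obstacle.} The hardest step is the careful accounting in Step 2. The excess $(V_{ij}-\beta_{ij}-\chi_i)^+$ must be charged via the favorite-dominance argument without double counting against the base revenue, and conditioning on $j\in S_i^C(t)$ must be replaced by unconditional tail probabilities. Here I would follow Cai--Zhao's Lemmas~16--18 essentially verbatim. The only adaptation is to check that their arguments use nothing beyond independence of $V_{ij}$ across $(i,j)$ and $\sum_i q_{ij}\le 1$, both of which hold in our fixed-$c$ induced instance.
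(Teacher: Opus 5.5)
Your overall route is the paper's. Step~1 is exactly the Lemma~14 split of \cite{CZ17}: subadditivity and monotonicity of $v_i^C$ give $\operatorname{Bounded}(c)$ plus a high-singleton overlap term, and that step is correct. The paper's proof then just invokes Lemmas~15--18 of \cite{CZ17} with $b=1/4$ to get $2/(1-b)=8/3$. Insofar as you also defer to those lemmas, the proposal stands. However, the self-contained accounting you give in Steps~2--3 fails for the excess piece, so it cannot be offered as the source of the constant.

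The excess $(V_{ij}-\beta_{ij}-\chi_i)^+$ is buyer surplus above the price $\beta_{ij}+\chi_i$. A quantity like $\E[\max_j(V_{ij}-\beta_{ij}-\chi_i)^+]$, times a fixed probability, is not bounded by the revenue of the fixed prices $\beta_{ij}+\chi_i$: with equal-revenue-type tails, surplus at a fixed price exceeds its revenue by an unbounded factor. Passing to that maximum throws away exactly what makes the term revenue-like. What does work is to keep the realized level $y_j=V_{ij}-\beta_{ij}\ge\chi_i$ and use independence across items:
\[
\E\Bigl[(V_{ij}-\beta_{ij}-\chi_i)^+\,\mathbf 1\{j\ne f_i(t_i)\}\Bigr]
\le
\E\Bigl[y_j\,\mathbf 1\{y_j\ge\chi_i\}\,
\Pr_{t_{i,-j}}\bigl[\exists k\ne j:\ V_{ik}\ge\beta_{ik}+y_j\bigr]\Bigr].
\]
Moreover, $y\Pr[\exists k: V_{ik}\ge\beta_{ik}+y]$ is at most $\frac{1}{1-b}$ times buyer $i$'s revenue under the \emph{shifted} prices $\beta_{ik}+y$. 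In other words, the excess of a non-favorite item is paid for by selling the favorite at a price that depends on $y$. Summing over $j$ with $\sum_j\Pr[y_j\ge\chi_i]\le\frac12$, and fixing a near-optimal shift $y_i^*$ for each buyer, gives a second rationed posted-price mechanism. Its prices satisfy $\beta_{ik}+y_i^*\ge\beta_{ik}$, so the $3/4$ availability bound from \cref{lem:core-aspe-beta} still applies, and it yields excess $\le\frac{1}{2(1-b)}\operatorname{PostRev}(c)$.

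Your base piece also needs the favorite-dominance factor $\Pr[\exists k\ne j: V_{ik}\ge\beta_{ik}+\chi_i]\le\frac12$. Without it, as in your Step~2, the base alone already consumes the full $\frac{1}{(3/4)(1/2)}=\frac83$ and leaves no room for the excess. With it, the base is at most $\frac{1}{1-b}$ times the revenue of the prices $\beta_{ij}+\chi_i$. Adding the two bounds gives at most $\frac{2}{1-b}=\frac83$. So the constant comes from two posted-price mechanisms whose bounds are added, not from the single mechanism with prices $\beta_{ij}+\chi_i$ asserted in Step~3.
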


\begin{proof}
Apply Lemmas~14--18 of~\cite{CZ17} with $b=1/4$. Their allocation
$\sigma^{(\beta)}$ is our feasible allocation rule $\sigma^C$, their
singleton values are $V_{ij}$, their first buyer-specific cutoff is $\chi_i$,
and their truncated valuation is $v_i^0$. The two threshold hypotheses used
in those lemmas are exactly \cref{lem:core-aspe-beta}. Lemma~14 decomposes the
nonfavorite contribution into the retained contribution above and a
high-singleton overlap term, while Lemmas~15--18 bound that term by
$2/(1-b)=8/3$ times the RSPM benchmark. No property of the dual construction
used elsewhere in~\cite{CZ17} enters these arguments.
\end{proof}

The comparison of this auxiliary revenue benchmark with GFT is specific to
our fixed-cost reduction.

\begin{lemma}[The auxiliary posted-price loss is unit-demand]
\label{lem:core-aspe-postrev-ud}
\[
\operatorname{PostRev}(c)
\le
\gft^{\mathrm{ud}}_{\mathcal F^m}(c).
\]
\end{lemma}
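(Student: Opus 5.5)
We show that the revenue of any rationed sequential posted-price mechanism (RSPM) in the induced unit-demand instance is at most the gains from trade it realizes there, and that this realized gain is dominated by the fixed-cost unit-demand matching benchmark. We argue pointwise in $t$ and then take the supremum over mechanisms.

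Fix $c$ and any RSPM with nonnegative pair-specific prices $\pi_{ij}\ge 0$. Fix a realization $t$. Buyers are visited in order; each buys at most one available item, and only if her singleton value covers the price. So the realized sale set $M(t)$ is a matching, in which each buyer appears at most once and each item at most once. By individual rationality of the buyer's purchase decision, every $(i,j)\in M(t)$ satisfies $\pi_{ij}\le V_{ij}(t_{ij})$. Summing, the realized revenue is
\[
\sum_{(i,j)\in M(t)}\pi_{ij}\le \sum_{(i,j)\in M(t)} V_{ij}(t_{ij}).
\]

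Next we use the definition $V_{ij}(t_{ij})=\min\{(v_i(t_i,\{j\})-c_j)^+,\tau\}\le (v_i(t_i,\{j\})-c_j)^+$. Since $M(t)$ is a matching, it is feasible for the fixed-cost benchmark. Therefore
\[
\sum_{(i,j)\in M(t)}V_{ij}(t_{ij})\le \max_{M\text{ matching}}\sum_{(i,j)\in M}\bigl(v_i(t_i,\{j\})-c_j\bigr)^+.
\]
Taking $\E_t$ bounds the expected revenue of this RSPM by $\gft^{\mathrm{ud}}_{\mathcal F^m}(c)$. Since the mechanism was arbitrary, the same bound holds for the supremum $\operatorname{PostRev}(c)$.

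The only point needing care is the randomness inside a mechanism, such as randomized prices or tie-breaking. We handle it by conditioning on the internal randomness: each realization is a deterministic RSPM, so the pointwise bound above applies to it. There is no real obstacle here. The key observation is that, in the induced instance, prices are nonnegative markups over cost. Because $V_{ij}$ is already net of $c_j$, the revenue can never exceed the net surplus of the realized matching.
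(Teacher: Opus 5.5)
Your proposal is correct and matches the paper's proof: pointwise in $t$, the sold pairs form a matching, and every sale satisfies $p_{ij}\le V_{ij}(t_{ij})\le (v_i(t_i,\{j\})-c_j)^+$, so revenue is at most the unit-demand matching GFT; taking expectations and then the supremum gives the claim. Your extra step of conditioning on the mechanism's internal randomness is a harmless refinement that the paper leaves implicit.
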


\begin{proof}
Fix any rationed sequential posted-price mechanism in the auxiliary instance
and any realization $t$. Its sold buyer--item pairs form a matching. If buyer
$i$ purchases item $j$ at auxiliary price $p_{ij}$, then
\[
p_{ij}
\le
V_{ij}(t_{ij})
\le
\bigl(v_i(t_i,\{j\})-c_j\bigr)^+.
\]
Thus, pointwise, the mechanism's revenue is at most the singleton GFT of the
same matching, and hence at most the optimal unit-demand matching GFT. Taking
expectations and then the supremum over auxiliary posted-price mechanisms
proves the claim.
\end{proof}

\subsection{Supporting Prices and the Fixed-Cost ASPE}

Because the supporting prices $\gamma_j^S(t_i)$ are exact, the bounded
contribution is represented exactly by the anonymous markups.

\begin{lemma}[Supporting-price identity]
\label{lem:core-aspe-support}
$
\operatorname{Bounded}(c)
=
2\sum_{j=1}^m Q_j.
$
\end{lemma}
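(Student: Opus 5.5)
We must show $\operatorname{Bounded}(c)=2\sum_j Q_j$. The plan is to expand the definition of $Q_j$, exchange the sum over items with the sum over buyers and the expectation, and then apply exactness of the supporting prices.

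First, substitute the definition of the markups:
\[
2\sum_{j=1}^m Q_j
=
\sum_{j=1}^m\sum_{i=1}^n \E_t\!\left[\mathbf 1\!\bigl[j\in S_i^C(t)\bigr]\,\gamma_j^{S_i^C(t)}(t_i)\right].
\]
All terms are finite, since $v_i^0$ is bounded by $m\tau$. The supporting prices may be taken nonnegative, because $v_i^0$ is XOS with nonnegative clause coefficients. So linearity of expectation (or Tonelli) lets us swap the finite sums with the expectation. This gives
\[
2\sum_j Q_j=\E_t\!\left[\sum_{i=1}^n\sum_{j\in S_i^C(t)}\gamma_j^{S_i^C(t)}(t_i)\right].
\]

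Next, fix $t$ and a buyer $i$, and apply exactness of the supporting prices with $S=S_i^C(t)$. This yields $\sum_{j\in S_i^C(t)}\gamma_j^{S_i^C(t)}(t_i)=v_i^0(t_i,S_i^C(t))$. Summing over $i$ and taking expectations gives exactly $\operatorname{Bounded}(c)$.

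The only point needing care is measurability. The map $t\mapsto \gamma_j^{S_i^C(t)}(t_i)$ must be measurable for the expectation defining $Q_j$ to make sense. I would handle this by choosing the supporting prices concretely as the coefficients of a maximizing clause. For each $S$, let
\[
k^\star(t_i,S)\in\arg\max_k\sum_{j\in S\cap C_i(t_i)}(w_{ij}^k)^C
\]
be selected with a fixed tie-breaking rule over the finite clause index set, and set
\[
\gamma_j^S(t_i)=(w_{ij}^{k^\star})^C\,\mathbf 1[j\in C_i(t_i)].
\]
This choice is exact, and it supports every $T\subseteq S$, since $v_i^0(t_i,T)\ge\sum_{j\in T\cap C_i}(w_{ij}^{k^\star})^C$. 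It is measurable because it is built from finitely many measurable comparisons. The welfare-maximizing rule $\sigma^C$ is likewise chosen with a fixed tie-breaking rule over the finitely many allocations, so it is measurable. With these choices the identity is a direct computation, and I expect no substantive obstacle beyond this bookkeeping.
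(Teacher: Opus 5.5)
Your proposal is correct and follows the paper's proof: substitute the definition of $Q_j$, interchange the finite sums with the expectation, and apply exactness of the supporting prices at $S=S_i^C(t)$ to recover $v_i^0(t_i,S_i^C(t))$. The paper's proof omits your extra bookkeeping: the boundedness by $m\tau$, and the measurable choice of supporting prices via a maximizing clause with fixed tie-breaking. That bookkeeping is valid and harmless.
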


\begin{proof}
By the definition of $v_i^0$ and the exact supporting-price property,
\begin{align*}
\operatorname{Bounded}(c)
&=
\sum_{i=1}^n
\E_t\!\left[
 v_i^0\bigl(t_i,S_i^C(t)\bigr)
\right]\\
&=
\sum_{i=1}^n
\E_t\!\left[
 \sum_{j\in S_i^C(t)}
 \gamma_j^{S_i^C(t)}(t_i)
\right]
=
2\sum_{j=1}^mQ_j. \qedhere
\end{align*}
\end{proof}

The next lemma is the main new bridge from the auxiliary one-sided instance
to the original two-sided market.

\begin{lemma}[Feasibility of the residual-surplus proxy]
\label{lem:core-aspe-proxy}
For every buyer $i$, type $t_i$, and available set $S$,
\[
\widetilde u_i(t_i,S)
\le
\max_{T\subseteq S}
\left(v_i(t_i,T)-c(T)-Q(T)
\right).
\]
\end{lemma}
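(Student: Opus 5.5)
The plan is to prove the inequality pointwise, by chaining three monotonicity comparisons that all start from a maximizing bundle $T^\star\subseteq S$ for the left-hand side. If $\widetilde u_i(t_i,S)=0$ (attained by $T=\emptyset$), the claim is immediate, because the right-hand side is at least $v_i(t_i,\emptyset)-0=0$. Otherwise, fix $T^\star\subseteq S$ with $\widetilde u_i(t_i,S)=\widehat v_i^C(t_i,T^\star)-Q(T^\star)$. It suffices to show
\[
\widehat v_i^C(t_i,T^\star)\le v_i(t_i,T')-c(T')+Q(T^\star)-Q(T')
\]
for a suitable subset $T'\subseteq T^\star$. Indeed, $T'\subseteq S$ is then a feasible choice on the right-hand side.

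\textbf{Step 1: strip the truncations.} By definition,
\[
\widehat v_i^C(t_i,T^\star)=v_i^C\bigl(t_i,T^\star\cap C_i(t_i)\cap Y_i(t_i)\bigr).
\]
Set $T''=T^\star\cap C_i(t_i)\cap Y_i(t_i)$. Let $k$ be the clause attaining
\[
v_i^C(t_i,T'')=\sum_{j\in T''}(w_{ij}^k)^C.
\]
Let $T'=\{j\in T'':w_{ij}^k>0\}$, so the items with zero capped surplus are dropped; this does not change the sum. Since the markups $Q_j\ge 0$ (they are averages of nonnegative supporting prices, because $v_i^0$ is a nonnegative XOS function and its exact supporting prices can be taken to be clause coefficients), we get $Q(T')\le Q(T^\star)$. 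Hence
\[
\widehat v_i^C(t_i,T^\star)-Q(T^\star)\le v_i^C(t_i,T')-Q(T').
\]

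\textbf{Step 2: uncap and remove the positive part.} For $j\in T'$ we have $w_{ij}^k=a_{ij}^k-c_j>0$. Therefore
\[
(w_{ij}^k)^C=\min\{w_{ij}^k,\tau\}\le a_{ij}^k-c_j.
\]
Summing over $T'$ gives
\[
v_i^C(t_i,T')\le \sum_{j\in T'}a_{ij}^k-c(T')\le v_i(t_i,T')-c(T'),
\]
where the last inequality uses the XOS representation: $v_i(t_i,T')$ is a maximum over clauses, so it is at least the clause-$k$ sum. Combining Steps 1 and 2 yields
\[
\widetilde u_i(t_i,S)\le v_i(t_i,T')-c(T')-Q(T')\le\max_{T\subseteq S}\bigl(v_i(t_i,T)-c(T)-Q(T)\bigr).
\]

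\textbf{Main obstacle.} The delicate point is the sign of the markups, $Q_j\ge 0$, which Step 1 uses when it shrinks $T^\star$ to $T'$. I will justify it by choosing the exact supporting prices $\gamma_j^S(t_i)$ as the coefficients $(w_{ij}^{k^\star})^C\mathbf 1[j\in C_i(t_i)]\ge 0$ of a maximizing clause $k^\star$ for $v_i^0(t_i,S)$. One can then check directly that these are exact and supporting: they sum to $v_i^0(t_i,S)$, and for any $T\subseteq S$ the clause-$k^\star$ sum over $T$ is at most $v_i^0(t_i,T)$. If one prefers not to depend on that choice, there is an alternative route: take $T'=T''$ itself and bound $\widehat v_i^C(t_i,T^\star)=v_i^C(t_i,T'')$ directly by $v_i(t_i,T'')-c(T'')$. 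This is valid because clause-$k$ items with $w_{ij}^k=0$ satisfy $a_{ij}^k-c_j\le 0$, so they only need to be dropped from the clause-$k$ sum, not from the bundle; we still have $Q(T'')\le Q(T^\star)$, which again needs $Q\ge 0$. So the nonnegativity of the supporting prices is the one structural fact the argument relies on.
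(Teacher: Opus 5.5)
Your main argument (Steps 1 and 2) is correct and is essentially the paper's proof. Fix a clause $k$ attaining the truncated value, restrict to items whose clause-$k$ surplus is strictly positive so that $(w_{ij}^k)^C\le a_{ij}^k-c_j$, and apply the XOS lower bound. The paper keeps $R=\{j\in T:\widehat w_{ij}^k>Q_j\}$ rather than your $T'$, and it too relies on $Q_j\ge 0$, to get $\widehat w_{ij}^k>0$ on $R$.

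Two small points on the main argument:
\begin{itemize}
\item In Step 2 you should note that clause $k$ still attains $v_i^C(t_i,T')$. This holds because $v_i^C(t_i,T')\le v_i^C(t_i,T'')=\sum_{j\in T'}(w_{ij}^k)^C$ by monotonicity.
\item $Q_j\ge 0$ holds for \emph{any} exact supporting prices, not just the clause-coefficient choice. Testing the supporting inequality on $S\setminus\{j\}$ and using monotonicity of $v_i^0(t_i,\cdot)$ gives $\gamma_j^S(t_i)\ge v_i^0(t_i,S)-v_i^0(t_i,S\setminus\{j\})\ge 0$.
\end{itemize}

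The ``alternative route'' in your last paragraph is wrong and should be dropped. Items of $T''$ with $a_{ij}^k<c_j$ are charged in $c(T'')$ but need not be compensated in $v_i(t_i,T'')$. For example, take a single clause and an item with $a_{ij}^k=0<c_j$ and $Q_j=0$; a maximizer $T^\star$ may contain it. Such an item leaves $v_i^C(t_i,T'')$ unchanged but lowers $v_i(t_i,T'')-c(T'')$ by $c_j$. So these items must be removed from the bundle itself, not just from the clause-$k$ sum, which is exactly what your Step 1 does.
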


\begin{proof}
Fix $T\subseteq S$ and choose an XOS clause $k$ attaining
$\widehat v_i^C(t_i,T)$. Since $\widehat v_i^C$ is obtained from $v_i^C$ by
deleting item coefficients, we may write
\[
\widehat v_i^C(t_i,T)
=
\sum_{j\in T}\widehat w_{ij}^k(t,c),
\qquad
0\le \widehat w_{ij}^k(t,c)
\le
\bigl(v_{ij}^{(k)}(t_{ij})-c_j\bigr)^+.
\]
Let
$
R
=
\left\{
 j\in T:\widehat w_{ij}^k(t,c)>Q_j
\right\}.
$
Then
\begin{align*}
\widehat v_i^C(t_i,T)-Q(T)
&=
\sum_{j\in T}
\bigl(\widehat w_{ij}^k(t,c)-Q_j\bigr)\\
&\le
\sum_{j\in R}
\bigl(\widehat w_{ij}^k(t,c)-Q_j\bigr)\\
&\le
\sum_{j\in R}
\bigl(v_{ij}^{(k)}(t_{ij})-c_j-Q_j\bigr)\\
&\le
v_i(t_i,R)-c(R)-Q(R)\\
&\le
\max_{T'\subseteq S}
\left(v_i(t_i,T')-c(T')-Q(T')
\right).
\end{align*}
The third line is valid because $j\in R$ implies
$\widehat w_{ij}^k(t,c)>0$ and hence
$v_{ij}^{(k)}(t_{ij})>c_j$. Maximizing the left-hand side over
$T\subseteq S$ proves the lemma.
\end{proof}

\begin{lemma}[Fixed-cost specialization of the Cai--Zhao ASPE bound]
\label{lem:core-aspe-extraction}
The expected seller profit of $\ASPE(c)$ satisfies
\[
\Pi(c)
\ge
\frac14\sum_{j=1}^mQ_j
-
\frac{20}{3}\operatorname{PostRev}(c).
\]
\end{lemma}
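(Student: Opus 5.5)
We plan to prove \cref{lem:core-aspe-extraction} by following the Cai--Zhao ASPE accounting (Lemmas~19--28 of~\cite{CZ17}) on the induced one-sided instance, and re-deriving the two places where the original two-sided market enters. Fix $c$ and write the seller's profit of $\ASPE(c)$ as the sum of two parts. The first is markup revenue $\sum_j Q_j\Pr[j\text{ sold}]$. This holds because each sold item pays $c_j+Q_j$ and the seller incurs $c_j$. The second is entry-fee revenue $\sum_i\E[\delta_i(S_i)\mathbf 1\{i\text{ accepts}\}]$, where $S_i$ is the random set remaining when buyer $i$ arrives. Both parts are nonnegative, since $Q_j\ge0$ and $\delta_i(S)\ge0$ (medians of nonnegative quantities, as $T=\emptyset$ is allowed).

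For the entry-fee part, note that $S_i$ depends only on $t_{<i}$ and is independent of $t_i$. By \cref{lem:core-aspe-proxy}, whenever $\widetilde u_i(t_i,S_i)\ge\delta_i(S_i)$, the buyer's true utility at prices $c_j+Q_j$ covers the fee, so she accepts. By the median definition this happens with probability at least $1/2$. Hence the entry-fee revenue is at least $\frac12\sum_i\E_{S_i}[\delta_i(S_i)]$.

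Next we import Cai--Zhao's concentration step. The second cutoff $\eta_i$ bounds each item's marginal effect on $\widetilde u_i$ by $\eta_i$, and the usual self-bounding/Talagrand-type argument gives $\E[\widetilde u_i]\le 2\,\mathrm{Median}[\widetilde u_i]+O(\eta_i)$. The $\eta_i$ terms are charged to $\operatorname{PostRev}(c)$, exactly as in their Lemmas on $\eta_i$. This works because $\eta_i$ equals the revenue of a single-buyer posted price at level $\max\{\beta_{ij},Q_j+\eta_i\}$ sold with probability $1/2$, which is a rationed posted-price mechanism in the auxiliary unit-demand instance with values $V_{ij}$.

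For the markup part, and to connect $\E[\widetilde u_i(t_i,S_i)]$ to $\sum_j Q_j$, we use the supporting-price argument. For each $i$, let $t_i'$ be a fresh copy and consider the bundle $S_i^C(t_i',t_{-i}')\cap S_i$. By the supporting-price inequality for $v_i^0$, and by passing to $\widehat v_i^C$ (the loss from deleting items outside $Y_i$ is charged again to $\operatorname{PostRev}$), we get
\[\sum_i \E[\widetilde u_i(t_i,S_i)]\ \gtrsim\ \sum_j \Pr[j\text{ unsold}]\cdot 2Q_j - \sum_j Q_j\Pr[j\text{ unsold}]-(\text{PostRev terms}).\]
Together with \cref{lem:core-aspe-support}, this lets markup revenue and residual surplus together cover a constant fraction of $\sum_j Q_j$, whichever way the sale probabilities fall. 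We then match constants to Cai--Zhao's $b=1/4$ bookkeeping: $\frac14\sum_jQ_j$ total, with all truncation and concentration losses summing to $\frac{20}{3}\operatorname{PostRev}(c)$. Each invocation of a $\operatorname{PostRev}$-type bound in their proof is valid here because it is stated for the induced values $V_{ij}$, which is precisely our auxiliary instance.

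The main obstacle is the acceptance step. Cai--Zhao's guarantee assumes the buyer's utility equals $\widetilde u_i$. In our setting the buyer's true utility differs: it involves $v_i$, $c$ and $Q$, is larger, and is computed with true rather than truncated values. A larger true utility could change which items she buys and so alter future inventories. The plan is to observe that their analysis of future inventories only uses the fact that $S_i$ is independent of $t_i$, and that each item is sold with some probability. It never uses which items are taken, since markups are collected whenever an item sells. So \cref{lem:core-aspe-proxy} alone should suffice to transfer their bound. If it does not, the fallback is to lower bound markup revenue using only the probability of sale, which is monotone under larger purchases.
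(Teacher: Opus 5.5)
Your proposal is essentially the paper's proof: both apply the Cai--Zhao accounting (their Lemmas~19--28, in particular Lemma~27, with $b=1/4$ and exact supporting prices) to the induced instance and add the same three bridges---\cref{lem:core-aspe-proxy} so the median fee is accepted with probability at least $1/2$, independence of the available set from $t_i$, and margin exactly $Q_j$ on every sale---and your remark that the accounting never depends on which items are bought is exactly why the paper can say it ``applies unchanged'' (the gap between true and proxy utility is already present in Cai--Zhao, whose fees are also set from a truncated valuation). If you write out these imported internal steps yourself, fix two slips: the supporting-price coupling must use the bundle $S_i^C(t_i,t'_{-i})$ with the buyer's real type $t_i$ (a fully fresh profile breaks the correlation between bundle and supporting prices that defines $Q_j$), and charging $\sum_i\eta_i$ to $\operatorname{PostRev}(c)$ requires the multi-buyer rationed posted-price argument, where the $\beta_{ij}$ floors of \cref{lem:core-aspe-beta} keep each item available with probability at least $3/4$, rather than a single-buyer posting that sells with probability $1/2$.
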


\begin{proof}
Apply the supporting-price and entry-fee analysis in Lemmas~19--28 of
\cite{CZ17}, and in particular their Lemma~27, to the induced valuation
$v_i^C$, the benchmark allocation $\sigma^C$, and $b=1/4$. Their second
buyer-specific cutoff $\tau_i$, final truncated valuation $\widehat v_i$,
and residual-surplus function $\mu_i$ are exactly our $\eta_i$,
$\widehat v_i^C$, and $\widetilde u_i$, respectively. The hypotheses of
that analysis hold because $v_i^C$ is XOS over independent items,
\cref{lem:core-aspe-beta} supplies the required baseline-threshold
properties, and the supporting prices are exact (so their supporting-price
parameter is $\alpha=1$).

It remains only to connect their one-sided revenue accounting to our market.
By \cref{lem:core-aspe-proxy}, a buyer's true utility from the
cost-plus-markup menu pointwise dominates the residual-surplus proxy used to
set her entry fee. Hence, with ties at the entry decision resolved in favor of entry, the median
fee is accepted with probability at least $1/2$, exactly as in the proof of
Lemma~27 of~\cite{CZ17}. Moreover, the set
available when buyer $i$ is visited depends only on earlier buyers' types and
is independent of $t_i$. Finally, whenever item $j$ sells at price $c_j+Q_j$,
the seller earns margin exactly $Q_j$. Thus every step of the sold-versus-
unsold accounting in that lemma applies unchanged. Substituting $b=1/4$ into
its bound gives the stated coefficient $20/3$.
\end{proof}

\subsection{Completing the Fixed-Cost Bound}

\begin{proof}[Proof of~\cref{prop:multi-core-aspe}]
Conditional on the remaining set of items, each buyer faces a fixed entry fee
and fixed item prices, all independent of her report. Truthful utility
maximization is therefore a dominant strategy, and rejecting the menu gives
utility zero. Thus $\ASPE(c)$ is buyer-DSIC and buyer ex-post IR.

For the profit guarantee, combine
\cref{lem:core-aspe-favorite,lem:core-aspe-cz-first,lem:core-aspe-postrev-ud}:
\begin{align*}
\mu(c)
\le
\operatorname{Fav}(c)+\operatorname{NonFav}(c)
\le
\gft^{\mathrm{ud}}_{\mathcal F^m}(c)
+
\operatorname{Bounded}(c)
+
\frac83\operatorname{PostRev}(c)
\le
\operatorname{Bounded}(c)
+
\frac{11}{3}\gft^{\mathrm{ud}}_{\mathcal F^m}(c).
\end{align*}
By \cref{lem:core-aspe-support},
\[
\sum_{j=1}^mQ_j
\ge
\frac12\mu(c)
-
\frac{11}{6}\gft^{\mathrm{ud}}_{\mathcal F^m}(c).
\]
Substituting this inequality into \cref{lem:core-aspe-extraction} and using
\cref{lem:core-aspe-postrev-ud} once more gives
\begin{align*}
\Pi(c)
&\ge
\frac14\sum_jQ_j
-
\frac{20}{3}\operatorname{PostRev}(c)\\
&\ge
\frac18\mu(c)
-
\frac{11}{24}\gft^{\mathrm{ud}}_{\mathcal F^m}(c)
-
\frac{20}{3}\gft^{\mathrm{ud}}_{\mathcal F^m}(c)\\
&=
\frac18\mu(c)
-
\frac{57}{8}\gft^{\mathrm{ud}}_{\mathcal F^m}(c). \qedhere
\end{align*}
\end{proof}

\section{Missing Proofs for~\cref{sec:buyertail}}
\label{app:buyertail}

\subsection{Seller-Side Ironing Facts}
\begin{lemma}
\label{lem:convex-envelope}
The function
$
\widehat R:[0,1]\to\R_+
$
is continuous and convex, satisfies $\widehat R(0)=0$, and is absolutely
continuous. Its right derivative
$
g=\widehat R'_+
$
is nondecreasing and Borel measurable on $[0,1)$, and
\[
\widehat R(q)
=
\int_0^q g(u)\,du
\qquad
\text{for every }q\in[0,1].
\]
Moreover, if $q\in(0,1]$ and $\gamma\in\partial\widehat R(q)$, then
\[
\widehat R(q)\le q\gamma.
\]
\end{lemma}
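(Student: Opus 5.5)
The statement concerns $\widehat R=\widehat R_j$, the lower convex envelope of $r(q)=qF^{-1}(q)$ on $[0,1]$. The plan is to establish the properties in order: finiteness and continuity of $r$; convexity, continuity and $\widehat R(0)=0$ for the envelope; absolute continuity together with the integral representation through the right derivative; and finally the subgradient inequality.

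\textbf{Regularity of $r$ and basic envelope properties.} Under \cref{ass:standing}, $F$ is continuous and strictly increasing on $[0,\bar c]$. Hence $F^{-1}:[0,1]\to[0,\bar c]$ is continuous and nondecreasing. It follows that $r(q)=qF^{-1}(q)$ is continuous, satisfies $0\le r(q)\le \bar c\,q$, and has $r(0)=0$.

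The lower convex envelope $\widehat R$ is the supremum of all affine minorants of $r$, so it is convex and satisfies $0\le\widehat R\le r$. The lower bound holds because the zero function is an affine minorant. At the endpoints, $0\le \widehat R(0)\le r(0)=0$ gives $\widehat R(0)=0$, and $\widehat R(1)=r(1)$, since a convex envelope of a continuous function on a compact interval agrees with it at the extreme points.

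\textbf{Continuity and absolute continuity.} A convex function is continuous on the open interval $(0,1)$. At $q=0$, continuity follows from the squeeze $0\le\widehat R(q)\le \bar c\,q$. At $q=1$, convexity gives upper semicontinuity along the segment, namely $\widehat R(q)\le (1-q)\widehat R(0)+q\widehat R(1)$. Lower semicontinuity comes from $\widehat R$ being a supremum of continuous affine functions (equivalently, the envelope of a continuous function is closed).

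For absolute continuity, the key step is a uniform Lipschitz bound, which I plan to get from the slope bound $\widehat R(q)\le \bar c\, q$ together with convexity. On any interval $[0,1-\epsilon]$, the slopes are bounded above by the chord slope to $1$ and below by $0$. Indeed, the derivative is nonnegative, since the envelope of a nonnegative function vanishing at $0$ is nondecreasing: any decrease would force a negative value by convexity, contradicting $\widehat R(0)=0$. So $\widehat R$ is Lipschitz on each $[0,1-\epsilon]$, and the monotone limit together with continuity at $1$ gives absolute continuity on $[0,1]$. I expect this endpoint regularity near $q=1$ to be the only delicate step.

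The right derivative $g$ of a convex function exists everywhere on $[0,1)$. At $0$ it exists as a finite value, since it is bounded by $\bar c$ (from $\widehat R(q)\le \bar c q$) and by $0$ below. It is nondecreasing, and it is Borel measurable, being monotone (alternatively, a pointwise limit of continuous difference quotients). The fundamental theorem of calculus for absolutely continuous functions, combined with $g=\widehat R'$ almost everywhere (a convex function is differentiable off a countable set), yields $\widehat R(q)=\widehat R(0)+\int_0^q g=\int_0^q g(u)\,du$.

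\textbf{Subgradient inequality.} For $q\in(0,1]$ and $\gamma\in\partial\widehat R(q)$, apply the subgradient inequality at the point $0$:
\[
0=\widehat R(0)\ge \widehat R(q)+\gamma(0-q).
\]
Rearranging gives $\widehat R(q)\le q\gamma$.
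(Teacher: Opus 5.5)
Your proposal is correct and follows the same route as the paper: the basic envelope properties from $0\le\widehat R\le r$, then continuity and absolute continuity of a finite convex function, then the fundamental theorem of calculus for the right derivative $g$, and finally the supporting-line inequality evaluated at $0$. You also justify several steps the paper only asserts, notably that monotonicity of $\widehat R$ together with continuity at $q=1$ gives absolute continuity on all of $[0,1]$, even though the left slope of $\widehat R$ at $1$ may be unbounded.
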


\begin{proof}
Since $r$ is continuous on the compact interval $[0,1]$, its lower convex
envelope $\widehat R$ is finite and convex.
The zero function is convex and lies below $r$, while $r(0)=0$.
Therefore,
\[
\widehat R\ge0
\qquad\text{and}\qquad
\widehat R(0)=0.
\]
The lower convex envelope of a continuous function on a compact interval is
continuous. A finite convex function that is continuous at the endpoints is
absolutely continuous on the interval. Hence its right derivative exists
except possibly at the endpoint, is nondecreasing and integrable, and
\[
\widehat R(q)
=
\widehat R(0)+\int_0^q\widehat R'_+(u)\,du
=
\int_0^q g(u)\,du.
\]

If $\gamma\in\partial\widehat R(q)$, the supporting-line inequality,
evaluated at $0$, gives
\[
0
=
\widehat R(0)
\ge
\widehat R(q)+\gamma(0-q)
=
\widehat R(q)-q\gamma.
\]
Therefore we have 
$
\widehat R(q)\le q\gamma.
$
\end{proof}

\begin{lemma}[Two-price implementation of an ironed point]
\label{lem:two-price}
Fix $q\in(0,1]$ and $\gamma\in\partial\widehat R(q)$.
There exist
$
0\le a\le q\le b\le1$
 and 
$\lambda\in[0,1]$
such that
\[
q=\lambda a+(1-\lambda)b,
\qquad 
\widehat R(q)
=
\lambda r(a)+(1-\lambda)r(b),
\]
and
\[
F^{-1}(a)\le\gamma,
\qquad
F^{-1}(b)\le\gamma.
\]
For $q=0$, the trivial choice
$
a=b=0,$ and $ \lambda=1
$
implements the point $(0,\widehat R(0))$.
\end{lemma}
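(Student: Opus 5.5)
We will argue directly from the geometry of the lower convex envelope, using \cref{lem:convex-envelope} to convert subgradient information into bounds on prices. Write $F^{-1}$ and $r$ for the item's quantile function and payment curve, and fix $q\in(0,1]$ and $\gamma\in\partial\widehat R(q)$.

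First we record that $\gamma\ge0$. Indeed, the supporting-line inequality at $0$, used in \cref{lem:convex-envelope}, gives $q\gamma\ge\widehat R(q)\ge0$.

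The proof then splits into two cases according to whether $q$ is a contact point, i.e.\ whether $\widehat R(q)=r(q)$.

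\emph{Case 1: $\widehat R(q)=r(q)$.} Take $a=b=q$ and $\lambda=1$, so both identities are trivial. By \cref{lem:convex-envelope},
\[
qF^{-1}(q)=r(q)=\widehat R(q)\le q\gamma ,
\]
and dividing by $q>0$ gives $F^{-1}(q)\le\gamma$.

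\emph{Case 2: $\widehat R(q)<r(q)$.} The set $\{u\in[0,1]:\widehat R(u)<r(u)\}$ is relatively open, since $r$ and $\widehat R$ are continuous. It excludes $0$, because $r(0)=0=\widehat R(0)$. It also excludes $1$, because the lower convex envelope of a continuous function on a compact interval agrees with it at the endpoints. Let $(a,b)$ be the connected component containing $q$; then $0\le a<q<b\le1$ and $r(a)=\widehat R(a)$, $r(b)=\widehat R(b)$.

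The key step is to show that $\widehat R$ is affine on $[a,b]$. The standard argument: if it were not, replacing $\widehat R$ on $[a,b]$ by the chord through $(a,\widehat R(a))$ and $(b,\widehat R(b))$ would yield a convex function that is strictly larger somewhere on $(a,b)$. It would still lie below $r$, using continuity of $r-\widehat R>0$ on compact subintervals and a small-perturbation argument near the endpoints. This contradicts maximality of the envelope. I expect this affineness claim, and the endpoint-contact claim at $u=1$, to be the main technical point; the rest is bookkeeping.

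Given affineness with slope $s$, the point $q$ is interior to the affine piece, so $\partial\widehat R(q)=\{s\}$ and hence $\gamma=s$. Set $\lambda=(b-q)/(b-a)$. Then $q=\lambda a+(1-\lambda)b$, and linearity gives
\[
\widehat R(q)=\lambda\widehat R(a)+(1-\lambda)\widehat R(b)=\lambda r(a)+(1-\lambda)r(b).
\]

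It remains to check the price bounds. The line of slope $s$ through the affine piece supports the convex function $\widehat R$ everywhere, so $s\in\partial\widehat R(a)\cap\partial\widehat R(b)$.
\begin{itemize}
\item Since $b>0$, \cref{lem:convex-envelope} gives $bF^{-1}(b)=r(b)=\widehat R(b)\le b\gamma$, hence $F^{-1}(b)\le\gamma$.
\item If $a>0$, the same argument gives $F^{-1}(a)\le\gamma$.
\item If $a=0$, then $F^{-1}(0)=0\le\gamma$, because the cost support is $[0,\bar c_j]$ and $\gamma\ge0$.
\end{itemize}

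Finally, the case $q=0$ is immediate from $r(0)=0=\widehat R(0)$.
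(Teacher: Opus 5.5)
Your proposal is correct and follows the paper's proof essentially step for step. It uses the same split into contact and non-contact points, the same affine segment $[a,b]$ with contact endpoints, and $\gamma$ forced to equal its slope. The price bounds also go the same way, via $\widehat R(u)\le u\gamma$ and $F^{-1}(0)=0\le\gamma$; you apply \cref{lem:convex-envelope} at $a$ and $b$ using $s\in\partial\widehat R(a)\cap\partial\widehat R(b)$, whereas the paper reads the same fact off the nonpositive intercept of the supporting line.

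The paper simply asserts the affine-segment structure, so your extra justification is a bonus. As written, though, the full chord through $(a,\widehat R(a))$ and $(b,\widehat R(b))$ is not a priori below $r$. Either of two fixes makes the step rigorous: lower the chord so that only a thin cap away from the endpoints rises above $\widehat R$, or use a minimizing two-point representation $\widehat R(q)=\lambda r(u_1)+(1-\lambda)r(u_2)$, which exists by compactness.
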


\begin{proof}
Suppose first that
$
\widehat R(q)=r(q).
$
Set $a=b=q$ and $\lambda=1$.
By~\cref{lem:convex-envelope},
$
qF^{-1}(q)
=
r(q)
=
\widehat R(q)
\le
q\gamma,
$
and therefore $F^{-1}(q)\le\gamma$.

Now suppose that
$
\widehat R(q)<r(q).
$
Because $r$ is continuous and $\widehat R$ is its lower convex envelope,
$q$ lies in the interior of an interval $[a,b]$ on which $\widehat R$ is
affine and whose endpoints are contact points:
\[
\widehat R(a)=r(a),
\qquad
\widehat R(b)=r(b).
\]
Since $q\in(a,b)$, every subgradient of $\widehat R$ at $q$ equals the
common slope of this affine segment. In particular, $\gamma$ is the slope
of the line
\[
L(u)
=
\widehat R(q)+\gamma(u-q).
\]

Choose $\lambda\in[0,1]$ such that
$
q=\lambda a+(1-\lambda)b.
$
Since $\widehat R$ is affine on $[a,b]$,
\[
\widehat R(q)
=
\lambda\widehat R(a)+(1-\lambda)\widehat R(b)
=
\lambda r(a)+(1-\lambda)r(b).
\]

By~\cref{lem:convex-envelope},
$\widehat R(q)-q\gamma\le0,$
so the intercept of $L$ is nonpositive. Hence
\[
r(a)=L(a)\le\gamma a,
\qquad
r(b)=L(b)\le\gamma b.
\]
If $a>0$, this gives
\[
F^{-1}(a)=\frac{r(a)}a\le\gamma.
\]
If $a=0$, then $F^{-1}(a)=0\le\gamma$.
The same argument gives
\[
F^{-1}(b)\le\gamma.
\]
\end{proof}

\begin{lemma}[One-dimensional ironing inequality]
\label{lem:1d-ironing}
Let $c\sim F$, let
$
x:[0,\bar c]\to[0,1]
$
be nonincreasing, and let
$
q=\E[x(c)].
$
Then
\[
\E\!\left[\widetilde\psi(c)x(c)\right]
\ge
\widehat R(q).
\]
\end{lemma}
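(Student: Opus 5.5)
We pass to the quantile variable and then integrate by parts against the derivative $g$ of $\widehat R$. Let $u=F(c)$. Since $F$ is continuous and strictly increasing, $u$ is uniform on $[0,1]$. Define $\xi(u)=x(F^{-1}(u))$; this is nonincreasing in $u$, and $\int_0^1\xi(u)\,du=q$. From the representation $\widetilde\psi(c)=g(F(c))$ stated before \cref{lem:1d-ironing}, we get
\[
\E\!\left[\widetilde\psi(c)x(c)\right]=\int_0^1 g(u)\,\xi(u)\,du ,
\]
so the claim reduces to the purely analytic inequality $\int_0^1 g\,\xi\ge \widehat R(q)$. This must hold for every nonincreasing $\xi:[0,1]\to[0,1]$ with $\int\xi=q$, where $g=\widehat R'_+$ is nondecreasing and $\widehat R(s)=\int_0^s g$ by \cref{lem:convex-envelope}.

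The first route is a layer-cake decomposition. Because $\xi$ is nonincreasing with values in $[0,1]$, we can write $\xi(u)=\int_0^1 \mathbf 1\{u< s(\theta)\}\,d\theta$, where $s(\theta)=\sup\{u:\xi(u)>\theta\}$ is a threshold quantile. This gives
\[
\int_0^1 g\,\xi=\int_0^1 \widehat R\bigl(s(\theta)\bigr)\,d\theta
\qquad\text{and}\qquad
\int_0^1 s(\theta)\,d\theta=q .
\]
Jensen's inequality for the convex function $\widehat R$ then yields $\int_0^1\widehat R(s(\theta))\,d\theta\ge \widehat R(q)$. This argument is clean and uses only monotonicity of $\xi$ together with convexity and the integral representation of $\widehat R$.

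An alternative check is a rearrangement argument. Among all $\xi\in[0,1]$ with $\int\xi=q$, the step function $\mathbf 1\{u<q\}$ minimizes $\int g\xi$, because $g$ is nondecreasing. For that step function the integral is exactly $\widehat R(q)$. Note that this second route does not even need monotonicity of $\xi$.

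The main obstacle is justifying the identity $\widetilde\psi=g\circ F$ and making sure the integrals are finite. The paper takes the identity as the definition of the ironed virtual cost in this section, so I will cite it directly. For integrability, $g$ may be unbounded only near $u=1$, but $\widehat R(1)=r(1)=\bar c$ is finite and $g$ is bounded below by $g(0)\ge0$. Hence $g\in L^1[0,1]$, and Fubini's theorem applies to the layer-cake interchange.
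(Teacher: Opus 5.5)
Your proof is correct. The change of variables $u=F(c)$ and the reduction to $\int_0^1 g\,\xi\ge\widehat R(q)$ match the paper exactly. Your ``alternative check'' is precisely the paper's argument: because $g$ is nondecreasing, among all $\phi:[0,1]\to[0,1]$ with $\int\phi=q$ the integral $\int g\phi$ is minimized by $\mathbf 1\{u\le q\}$, which gives $\int_0^q g=\widehat R(q)$.

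Your primary layer-cake route is genuinely different. It writes the monotone allocation as a mixture of threshold allocations $\mathbf 1\{u<s(\theta)\}$, that is, as a lottery over posted prices. It evaluates each threshold exactly as $\widehat R(s(\theta))$ via the integral representation, and then applies Jensen's inequality to the convex $\widehat R$ using $\int_0^1 s(\theta)\,d\theta=q$. This makes the economic content transparent: the expected ironed virtual cost of a monotone rule is an average over posted-price rules, hence at least the convexified cost at the average quantile. It uses convexity of $\widehat R$ directly, but it needs monotonicity of $x$ to get interval level sets.

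The paper's bathtub argument is shorter and, as you correctly observe, holds for arbitrary $x$ with values in $[0,1]$. Its one-line ``moving mass'' justification becomes fully rigorous with the computation $\int g\phi-\int_0^q g=\int_0^q g(\phi-1)+\int_q^1 g\phi\ge g(q)\bigl(\int\phi-q\bigr)=0$ for $q<1$; the case $q=1$ is trivial. On integrability, since $g\ge g(0)\ge 0$, Tonelli already justifies your interchange without needing $g\in L^1$.
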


\begin{proof}
Let
$
U=F(c)\sim\mathrm{Unif}[0,1]
$
and define
$
\widetilde x(u)=x(F^{-1}(u)).
$
The function $\widetilde x:[0,1]\to[0,1]$ is nonincreasing and satisfies
\[
\int_0^1\widetilde x(u)\,du
=
\E[x(c)]
=
q.
\]
Moreover,
\[
\E\!\left[\widetilde\psi(c)x(c)\right]
=
\int_0^1g(u)\widetilde x(u)\,du.
\]

Because $g$ is nondecreasing, among all functions
$\phi:[0,1]\to[0,1]$ with integral $q$, the integral
\[
\int_0^1g(u)\phi(u)\,du
\]
is minimized by placing all mass on the smallest quantiles, namely by
\[
\phi(u)=\mathbf 1\{u\le q\}.
\]
Indeed, moving any amount of mass from a larger quantile to a smaller one
cannot increase the integral against the nondecreasing function $g$.
Therefore,
\[
\E\!\left[\widetilde\psi(c)x(c)\right]
\ge
\int_0^qg(u)\,du
=
\widehat R(q),
\]
where the last equality follows from~\cref{lem:convex-envelope}.
\end{proof}

\subsection{A Monotone Optimal-Matching Selector}

The proof that the half-capacity relaxation preserves a constant fraction
of the buyer-side benchmark uses an optimal matching whose use of each item
is monotone in that item's cost. We next specify a deterministic selector
with this property.

\begin{remark}[A concrete monotone selector]
\label{rem:selector}
Fix a buyer profile $t$ and define the edge weights
\[
w_{ij}(c)
=
\bigl(v_{ij}(t_i)-\widetilde\psi_j(c_j)\bigr)^+.
\]
Among all maximum-weight matchings for these weights, select a matching
according to the following deterministic tie-breaking rule:
\begin{enumerate}[label=(\roman*)]
    \item minimize the cardinality of the matching;
    \item subject to that, maximize
    \[
    B(M)
    =
    \sum_{j=1}^m
    2^{-j}\,
    \mathbf 1\{\text{item $j$ is used by }M\};
    \]
    \item subject to both preceding criteria, use an arbitrary fixed
    lexicographic order over matchings.
\end{enumerate}
Let $M^*(c)$ denote the selected matching.

The final lexicographic rule only makes the selector single-valued.
The first two rules are the ones used in the monotonicity argument below.
\end{remark}

\begin{lemma}[Properties of the selector]
\label{lem:selector-monotone}
Fix $t$ and let $M^*(c)$ be the selector from~\cref{rem:selector}.
Then:
\begin{enumerate}[label=(\alph*)]
    \item every edge in $M^*(c)$ has strictly positive weight;
    \item for every item $j$ and every fixed $c_{-j}$, the indicator
    \[
    x_j(c_j,c_{-j})
    =
    \mathbf 1\{\text{item $j$ is used by }M^*(c_j,c_{-j})\}
    \]
    is nonincreasing in $c_j$.
\end{enumerate}
\end{lemma}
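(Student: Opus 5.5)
Both claims follow from the tie-breaking rules in \cref{rem:selector}. For (a) the plan is a one-step deletion argument. For (b) it is a standard exchange argument: compare the selected matchings at two cost levels of item $j$, and use the fact that the encoding $B(M)$ determines exactly which items $M$ uses.

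For (a), all weights $w_{ij}(c)=\bigl(v_{ij}(t_i)-\widetilde\psi_j(c_j)\bigr)^+$ are nonnegative. Suppose $M^*(c)$ contained an edge of weight zero. Deleting that edge leaves a matching with the same (maximum) weight and strictly smaller cardinality. This contradicts rule (i), so every selected edge has strictly positive weight.

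For (b), fix $c_{-j}$ and costs $c_j<c_j'$. Write $W$ and $W'$ for the total matching weight under $(c_j,c_{-j})$ and $(c_j',c_{-j})$, and let $M=M^*(c_j,c_{-j})$ and $M'=M^*(c_j',c_{-j})$. Since $\widetilde\psi_j$ is nondecreasing, each edge incident to item $j$ has weight under $c_j'$ at most its weight under $c_j$. All other edge weights are identical in the two instances. Suppose, for contradiction, that $j$ is not used by $M$ but is used by $M'$. Then $W(M)=W'(M)$ and $W(M')\ge W'(M')$, so optimality in each instance gives
\[
W(M)\ge W(M')\ge W'(M')\ge W'(M)=W(M).
\]
All these quantities are therefore equal. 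Hence $M$ and $M'$ are both maximum-weight matchings in both instances, and the selector compares them directly in each instance.

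Now apply the tie-breaking rules. In the first instance the selector prefers $M$, which under rule (i) forces $|M|\le|M'|$. In the second instance it prefers $M'$, forcing $|M'|\le|M|$. So $|M|=|M'|$, and rule (ii) gives both $B(M)\ge B(M')$ and $B(M')\ge B(M)$, i.e.\ $B(M)=B(M')$. Since $B(M)$ is the binary expansion of the indicator vector of items used by $M$, equal values of $B$ mean exactly the same set of used items. This contradicts the assumption that $j$ is used by $M'$ but not by $M$.

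The only delicate point is the comparison step. One must check that $M$ and $M'$ lie in both optimal sets, which the chain of equalities above ensures, so that the preference relations from rules (i) and (ii) can be applied to the same pair of matchings in both instances.
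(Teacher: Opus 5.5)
Your proof is correct. Part (a) is the same deletion argument the paper uses; for part (b) you take a more direct route.

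\textbf{The paper's route for (b).} It introduces $W_{\mathrm{use}}(c_j)$ and $W_{\mathrm{omit}}$ and observes that raising $c_j$ can only lower the former, while the latter is fixed. The strict case is immediate, since every optimal matching then uses $j$. In the equality case the paper invokes part (a) to argue three things: the selected matching $M^{\mathrm{hi}}$ stays optimal at the lower cost; the optimal matchings omitting $j$ are unchanged; and the optimal set using $j$ can only grow, so the use-$j$ side still wins the tie-break.

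\textbf{Your route for (b).} You compare the two selected matchings directly. Your chain $W(M)\ge W(M')\ge W'(M')\ge W'(M)=W(M)$ shows that $M$ and $M'$ are optimal in both instances. Because the selector's preference does not depend on costs, each must weakly beat the other. Rules (i) and (ii) then force $|M|=|M'|$ and $B(M)=B(M')$, hence identical used-item sets.

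\textbf{Comparison.} Your version avoids the case split and does not need part (a). It also isolates what (b) really relies on: ties are broken by a fixed, cost-independent order. Using the full order (i)--(iii), you would even get $M=M'$ outright. This makes precise the paper's somewhat informal ``tie-breaking priority'' reasoning in its equality case. The paper's framing, in turn, makes the comparative-statics intuition more visible: only $W_{\mathrm{use}}$ moves with $c_j$.
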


\begin{proof}
For part~(a), suppose that a selected edge had weight zero.
Removing that edge would preserve the total weight and strictly reduce the
cardinality of the matching, contradicting the first tie-breaking rule.

For part~(b), fix $j$ and $c_{-j}$.
Because $\widetilde\psi_j$ is nondecreasing, every edge weight incident to
item $j$ is nonincreasing in $c_j$.
Let
\[
W_{\mathrm{use}}(c_j)
\]
be the maximum total weight among matchings that use item $j$, and let
\[
W_{\mathrm{omit}}
\]
be the maximum total weight among matchings that omit item $j$.
Then $W_{\mathrm{use}}(c_j)$ is nonincreasing in $c_j$, whereas
$W_{\mathrm{omit}}$ does not depend on $c_j$.

Suppose that the selected matching uses item $j$ at a higher cost
$c_j^{\mathrm{hi}}$. Then
\[
W_{\mathrm{use}}(c_j^{\mathrm{hi}})
\ge
W_{\mathrm{omit}}.
\]
For every lower cost $c_j^{\mathrm{lo}}<c_j^{\mathrm{hi}}$,
\[
W_{\mathrm{use}}(c_j^{\mathrm{lo}})
\ge
W_{\mathrm{use}}(c_j^{\mathrm{hi}})
\ge
W_{\mathrm{omit}}.
\]
If either inequality is strict, every maximum-weight matching at
$c_j^{\mathrm{lo}}$ uses item $j$, and the claim follows.

It remains to consider the equality case
\[
W_{\mathrm{use}}(c_j^{\mathrm{lo}})
=
W_{\mathrm{use}}(c_j^{\mathrm{hi}})
=
W_{\mathrm{omit}}.
\]
Let $M^{\mathrm{hi}}$ be the selected matching at
$c_j^{\mathrm{hi}}$.
By part~(a), its edge incident to item $j$ has strictly positive weight.
Lowering $c_j$ can only increase that edge's weight.
If the weight of $M^{\mathrm{hi}}$ increased strictly, then
\[
W_{\mathrm{use}}(c_j^{\mathrm{lo}})
>
W_{\mathrm{use}}(c_j^{\mathrm{hi}}),
\]
contradicting the equality above.
Thus, $M^{\mathrm{hi}}$ remains a maximum-weight matching at
$c_j^{\mathrm{lo}}$.

The family of matchings omitting $j$, including all of their weights and
tie-breaking priorities, is unchanged when $c_j$ decreases.
Meanwhile, $M^{\mathrm{hi}}$ remains available on the use-$j$ side with
the same tie-breaking priority that caused the use-$j$ side to be selected
at $c_j^{\mathrm{hi}}$.
Any additional maximum-weight matchings that appear after lowering $c_j$
can only improve the best tie-breaking priority on the use-$j$ side.
Therefore, the selected matching at $c_j^{\mathrm{lo}}$ also uses item
$j$.
\end{proof}

\subsection{Proof of~\cref{lem:menu-basic}}
\label{proof:menu-basic}

\begin{lemma}[Restatement of~\cref{lem:menu-basic}]
The procurement-menu family $\mathcal P$ is compact.
For every buyer $i$ and type $t_i$, the map
$
\rho\longmapsto U_i(t_i,\rho)
$
is continuous on $\mathcal P$.
For every fixed procurement menu $\rho\in\mathcal P$, the map
$
t_i\longmapsto U_i(t_i,\rho)
$
is continuous on $T_i$.

Moreover, every fixed procurement menu admits an equivalent direct
implementation that is DSIC and ex-post individually rational for the
seller, and the trade payments within the menu are ex-post strongly budget
balanced.
\end{lemma}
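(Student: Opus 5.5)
Compactness comes first. A menu $\rho=(x,a,b,\lambda)$ lies in a finite-dimensional Euclidean space. The defining constraints are $x_{ij}\ge0$, $\sum_i x_{ij}\le 1$, $0\le a_j\le b_j\le 1$, $\lambda_j\in[0,1]$, and $\sum_j x_{ij}(\lambda_j a_j+(1-\lambda_j)b_j)\le\frac12$. All of these are non-strict inequalities on continuous (polynomial) functions. Hence $\mathcal P$ is a closed subset of the bounded box $[0,1]^{nm+3m}$, and Heine--Borel gives compactness.

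For continuity I will write $U_i(t_i,\rho)$ in closed form. The seller's choice for buyer $i$ depends only on the items designated to $i$, their prices, and their costs. Choices made for different buyers are therefore decoupled. Designations, prices and costs are independent across items. So $U_i$ equals a finite sum over subsets $D$ of items and price patterns $\pi\in\{-,+\}^D$. Each term is the probability $\prod_{j\in D}x_{ij}\prod_{j\notin D}(1-x_{ij})$ times the product of the $\lambda_j$/$(1-\lambda_j)$ factors. That is multiplied by $\E_c[\,v_{ij^*}(t_i)-P_{j^*}\,]$, where $j^*$ is the seller's tie-broken argmax of $P_j-c_j$ over the profitable $j\in D$.

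This conditional expectation is an integral over $c_D\sim\prod_{j\in D}F_j$. The prices are $p_j^\pm=F_j^{-1}(a_j),F_j^{-1}(b_j)$, and these are continuous in $a_j,b_j$ because $F_j$ is continuous and strictly increasing on a bounded interval. Ties occur only on the null set where two $P_j-c_j$ coincide or some $c_j=P_j$, since the costs are atomless. So the integrand depends continuously on the prices for almost every $c$. It is also bounded by $\max_j v_{ij}(t_i)+\max_j\bar c_j$, so dominated convergence gives continuity in $\rho$.

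The same formula gives continuity in $t_i$. The seller's selection does not depend on $t_i$, and $t_i$ enters only linearly through $v_{ij}(t_i)$, which is continuous and bounded by Assumption~\ref{ass:standing}.

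The main obstacle is the measure-zero tie argument. I would handle it by noting that for a fixed price vector, the set of $c$ at which the argmax is discontinuous lies in a finite union of hyperplanes $\{c_j-c_k=P_j-P_k\}$ and $\{c_j=P_j\}$. These are null under the product of atomless marginals, and pointwise convergence of the integrand holds off them.

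For incentives, the direct implementation asks the seller to report $\hat c$ and runs Algorithm~\ref{alg:procurement-menu} with the realized labels and prices. Her utility is $\sum_i (P_{j_i^*}-c_{j_i^*})$. For each buyer she separately receives at most one designated item of her choosing, and only among items with $P_j-\hat c_j\ge0$. Truthful reporting selects, for each $i$ independently, the item maximizing her true profit, or nothing if no item is profitable. Any misreport yields a feasible alternative choice per buyer, which is weakly worse for her. This holds for every realization of labels and prices, hence DSIC. Each accepted trade has $P_j\ge c_j$ under truthful reporting, which gives ex-post IR. Strong budget balance of the trade payments is immediate, since each accepted trade adds the same $P_{j_i^*}$ to $p_i^{\mathrm{tr}}$ and to $p_S^{\mathrm{tr}}$.
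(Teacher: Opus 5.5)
Your proposal is correct and follows the paper's proof essentially step for step: you show compactness by exhibiting $\mathcal P$ as a closed subset of a box, continuity in $\rho$ by a finite decomposition over designation and price-branch realizations plus an almost-sure tie-free argument and dominated convergence, continuity in $t_i$ by linearity in $v_{ij}(t_i)$, and seller DSIC, ex-post IR, and strong budget balance by the per-cluster best-response argument over disjoint clusters. Your explicit product formula for the realization probabilities and your identification of the tie set as a finite union of null hyperplanes are only a slightly more concrete rendering of the same argument.
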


\begin{proof}
A procurement menu is parametrized by
$
\rho=(x,a,b,\lambda).
$
All coordinates lie in a finite-dimensional bounded box.
The constraints
\[
x_{ij}\ge0,
\qquad
\sum_i x_{ij}\le1,
\qquad
0\le a_j\le b_j\le1,
\qquad
\lambda_j\in[0,1],
\]
and
\[
\sum_jx_{ij}
\bigl(\lambda_ja_j+(1-\lambda_j)b_j\bigr)
\le\frac12
\]
are closed.
Therefore, $\mathcal P$ is a closed subset of a compact box and is itself
compact.

Fix a buyer $i$ and a type $t_i$.
There are finitely many possible designation vectors and finitely many
possible choices of the lower or upper posted price for each item.
Fix one such designation vector and one such price-branch vector.
The probability of this realization is a continuous function of
$(x,\lambda)$.

Conditional on the fixed realization, every posted price is a continuous
function of $(a,b)$ because every $F_j^{-1}$ is continuous under
\cref{ass:standing}.
The seller's selected item in a buyer cluster is determined by comparisons
of the form
\[
P_j-c_j\ge P_k-c_k
\qquad\text{and}\qquad
P_j-c_j\ge0.
\]
For the limiting menu, the set of cost vectors on which any such comparison
is an equality has probability zero: the cost distributions are atomless
and independent.
Outside this null set, all relevant comparisons are strict, so the selected
items and payments are unchanged under sufficiently small perturbations of
the menu parameters.
The corresponding buyer payoff therefore converges pointwise almost surely.

Buyer values and posted prices are uniformly bounded under
\cref{ass:standing}.
Dominated convergence implies that the buyer's expected payoff,
conditional on the fixed designation and price-branch vectors, is
continuous in the menu parameters.
Since there are only finitely many such realizations and their probabilities
are themselves continuous, the map
\[
\rho\longmapsto U_i(t_i,\rho)
\]
is continuous.

Now fix a procurement menu $\rho$.
The random allocation and trade-payment rule induced by $\rho$ does not
depend on buyer $i$'s type.
Hence there exist trade probabilities
\[
\pi_{ij}(\rho)\in[0,1]
\]
and an expected trade payment $m_i(\rho)$, both independent of $t_i$, such
that
\[
U_i(t_i,\rho)
=
\sum_{j=1}^m
\pi_{ij}(\rho)v_{ij}(t_i)
-
m_i(\rho).
\]
Since each map $t_i\mapsto v_{ij}(t_i)$ is continuous, so is
$t_i\mapsto U_i(t_i,\rho)$.

It remains to verify the seller-side properties.
Fix a realization of the menu's designations and posted prices.
An equivalent direct implementation asks the seller to report a cost
vector $\widehat c$.
For each buyer cluster, the mechanism selects at most one item, choosing
an item that maximizes
\[
P_j-\widehat c_j
\]
whenever this maximum is nonnegative, and selecting no item otherwise.
Ties are resolved according to a fixed rule.

Under a truthful report $\widehat c=c$, this rule selects, in every cluster,
the option maximizing the seller's true profit.
Because the buyer clusters are disjoint and the seller's costs are
additive, it also maximizes her total true utility over all feasible
selections.
Any misreport can only induce another feasible selection and therefore
cannot increase her true utility.
Truthful reporting is consequently a dominant strategy.
The seller may always induce the empty selection and receives nonnegative
utility under truthful reporting, establishing ex-post individual
rationality.

Finally, whenever item $j$ trades, its designated buyer pays $P_j$ and the
seller receives exactly $P_j$.
Thus, for every realization of the menu, reports, and costs, the total
trade payment collected from the buyers equals the payment made to the
seller.
The trade component of every procurement menu is therefore ex-post
strongly budget balanced.
\end{proof}

\end{document}